\newcommand*{\ms}[1]{\mathsf{#1}}
\newcommand{\N}{\mathbb{N}}
\newcommand{\R}{\mathbb{R}}
\newcommand{\lcb}{\left\lbrace} 
\newcommand{\rcb}{\right\rbrace} 
\newcommand{\cb}[1]{\lcb #1 \rcb} 
\newcommand{\lb}{\left(} 
\newcommand{\rb}{\right)} 
\newcommand{\br}[1]{\lb #1 \rb} 
\newcommand{\brOf}[1]{\!\br{#1}} 
\newcommand{\euclOf}[1]{\left\Vert#1\right\Vert_2}
\newcommand{\sizedMid}[2]{#1 \, \kern-\nulldelimiterspace\mathopen{}\left| \vphantom{#1}\,#2\right.\mathclose{}\kern-\nulldelimiterspace}
\newcommand{\setByEle}[2]{\cb{\sizedMid{#1}{#2}}}
\newcommand{\tr}{^{\!\top}\!} 
\newcommand{\eqcm}{\,,} 
\newcommand{\eqfs}{\,.}
\newcommand{\dl}{\mathrm{d}} 
\newcommand{\stepsize}{\Delta\!t}
\newcommand{\vpt}[1]{\mathsf{VPT}_{#1}}
\newcommand{\labelSinglePrec}{\colorbox{red!30}{32bit}}
\newcommand{\labelDoublePrec}{\colorbox{blue!30}{64bit}}
\newcommand{\labelMultiPrec}{\colorbox{green!30}{512bit}}
\newcommand{\selectLabel}[2]{%
	\ifx#1s
	\def#2{\labelSinglePrec}%
	\else
	\ifx#1d
	\def#2{\labelDoublePrec}%
	\else
	\ifx#1m
	\def#2{\labelMultiPrec}%
	\else
	\def#2{Unknown}%
	\fi
	\fi
}
\newcommand{\dataflowdiagram}[3]{%
	\begingroup
	\renewcommand{\baselinestretch}{1}\normalsize
	\selectLabel{#1}{\labelSolver}%
	\selectLabel{#2}{\labelData}%
	\selectLabel{#3}{\labelEstimator}%
	\begin{tikzpicture}[every node/.style={font=\sffamily}]%
		\node[align=center, anchor=north west] (Aa) at (-2.1,0.9) {\textbf{System}};
		\node[align=center, draw=black] (A) at (-1,-0.5) {RK4 Solver\\\labelSolver};
		\draw[very thick, rounded corners=2ex] (-2.2,1) rectangle (0.2,-1.5);
		\node[align=center, draw=black, thick] (B) at (4.25,-0.5) {Train Data\\\labelData};
		\node[align=center, draw=black, thick] (C) at (3.2,-4) {Test Data\\\labelData};
		\node[align=center, draw=black, thick] (D) at (5.4,-4) {Forecast\\\labelData};
		\draw[very thick, rounded corners=2ex] (1.7,1) rectangle (6.8,-7);
		\node[align=center, anchor=north west] (Ba) at (1.8,0.9) {\textbf{Data}};
		\draw[very thick, rounded corners=2ex] (8.3,1) rectangle (12.7,-5);
		\node[align=center, anchor=north west] (E) at (8.4,0.9) {\textbf{Method}};
		\node[align=center, draw=black] (F) at (10.5,-0.5) {Polynomial Features\\\labelEstimator};
		\node[align=center, draw=black] (G) at (10.5,-2.25) {Coefficients\\\labelEstimator};
		\node[align=center, draw=black] (H) at (10.5,-4) {Forecast\\\labelEstimator};
		\node[align=center, draw=black, thick] (I) at (4.25,-6) {Error and VPT\\\labelData};
		\ifx#1#2
			\draw[->,thick] (A) --node[midway,above,xshift=-7mm]{copy} (B);
			\draw[->,thick] (A) --node[midway,above,sloped,xshift=-5mm]{copy} (C);
		\else
			\draw[->,thick,color=red] (A) --node[midway,above,xshift=-7mm]{round} (B);
			\draw[->,thick,color=red] (A) --node[midway,above,sloped,xshift=-5mm]{round} (C);
		\fi
		\draw[->,thick] (B) -- (F);
		\ifx#3#2
			\draw[->,thick] (H) --node[midway,above,xshift=-5mm]{copy} (D);
		\else
			\draw[->,thick,color=red] (H) --node[midway,above,xshift=-4mm]{round} (D);
		\fi
		\draw[->,thick] (F) --node[midway,right]{fit} (G);
		\draw[->,thick] (G) --node[midway,right]{predict} (H);		
		\draw[->,thick] (D) -- (I);		
		\draw[->,thick] (C) -- (I);		
	\end{tikzpicture}%
	\endgroup
}
\newcommand{\DefineSupp}[3]{%
	\expandafter\gdef\csname supp@#1\endcsname{#2}%
	\expandafter\gdef\csname supp@lbl@#1\endcsname{#3}%
}
\newcommand{\SuppSec}[1]{%
	\ifcsname supp@#1\endcsname%
	\hyperref[\csname supp@lbl@#1\endcsname]{Supplementary Section~\csname supp@#1\endcsname}%
	\else%
	Supplementary Section~\textbf{??}%
	\PackageWarning{SuppSec}{Undefined supplementary reference: #1}%
	\fi%
}
\newbox{\myorcidthanksbox}
\sbox{\myorcidthanksbox}{\large\includegraphics[height=1.8ex]{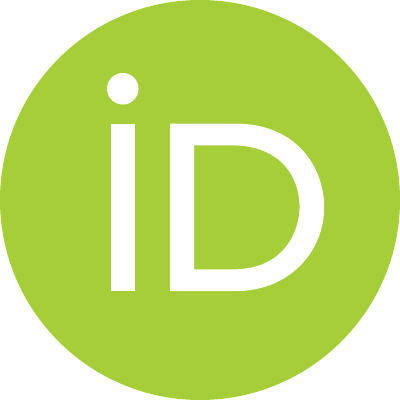}}
\newcommand{\orcidthanks}[1]{%
    \href{https://orcid.org/#1}{\raisebox{-0.5ex}{\usebox{\myorcidthanksbox}}\,#1}}
\def\postBoxSkip{1.0ex}
\def\postBoxSkipCmd{\vskip\postBoxSkip}
\def\preBoxSkip{1.0ex}
\def\preBoxSkipCmd{\vskip\preBoxSkip}
\declaretheoremstyle[
	bodyfont=\normalfont,
	postfoothook={\postBoxSkipCmd},
	preheadhook={\preBoxSkipCmd},
	mdframed={
		backgroundcolor = black!2,
		startcode={},
}]{ruledBoxStyle}
\declaretheoremstyle[
	bodyfont=\normalfont,
	postfoothook={\postBoxSkipCmd},
	preheadhook={\preBoxSkipCmd},
	mdframed={
		backgroundcolor=white,
}]{ruledBoxStyleWhite}
\declaretheoremstyle[
	bodyfont=\normalfont,
	postfoothook={\postBoxSkipCmd},
	preheadhook={\preBoxSkipCmd},
	mdframed={
		backgroundcolor=black!2,
		linecolor = black!2,
		tikzsetting = {
			draw = black,
			line width = 2pt,%
			dashed,%
			dash pattern = on 10pt off 3pt
		},
}]{dashedBoxStyle}
\declaretheoremstyle[
	bodyfont=\normalfont,
	postfoothook={\postBoxSkipCmd},
	preheadhook={\preBoxSkipCmd},
	mdframed={
		linecolor = white,
		startcode={},
		tikzsetting = {
			draw = black,
			line width = 1pt,%
			loosely dotted,
		},
	}
]{dashedStyle}
\declaretheoremstyle[
	bodyfont=\normalfont,
	postfoothook={\postBoxSkipCmd},
	preheadhook={\preBoxSkipCmd},
	mdframed={
		linecolor = black,
		innerlinewidth=1pt,outerlinewidth=1pt,
		middlelinewidth=1pt,
		linecolor=black,middlelinecolor=white,
		startcode={},
	}
]{doubleStyle}
\declaretheoremstyle[
	bodyfont=\normalfont,
	postfoothook={\postBoxSkipCmd},
	preheadhook={\preBoxSkipCmd},
	mdframed={
		backgroundcolor = black!4,
		linecolor = black!4,
		startcode={},
}]{boxStyle}
\declaretheoremstyle[
	headfont=\normalfont\itshape,
	notefont=\normalfont\itshape,
	notebraces={}{},
	bodyfont=\normalfont,
	qed=\qedsymbol,
	numbered=no,
	headindent=0pt,
	postheadspace=1ex,
	name={Proof},
	postheadhook={},
	mdframed={
		hidealllines = true,
		innerrightmargin = 0pt,
		innerleftmargin = 0pt,
		innertopmargin = 0pt,
		innerbottommargin = 0pt,
		leftmargin = 0pt,
		rightmargin = 0pt,
	}
]{proofStyle}
\declaretheoremstyle{basic}
\declaretheoremstyle[
	bodyfont=\normalfont,
	postfoothook={\postBoxSkipCmd},
	preheadhook={\preBoxSkipCmd},
	mdframed={
		backgroundcolor = white,
		linecolor = black,
		startcode={},
		leftline = false,
		rightline = false,
}]{tobBottomStyle}
\declaretheoremstyle[
bodyfont=\normalfont,
]{standardStyle}
\declaretheorem[style=basic,name=Definition, numberwithin=section]{definition}
\declaretheorem[style=basic,name=Theorem,numberlike=definition]{theorem}
\title{Machine-Precision Prediction of Low-Dimensional Chaotic Systems from Noise-Free Data}
\date{}
\author[1,2]{Christof Sch\"otz\thanks{christof.schoetz@tum.de, \orcidthanks{0000-0003-3528-4544}, corresponding author}}
\author[1,2,3]{Niklas Boers\thanks{n.boers@tum.de, \orcidthanks{0000-0002-1239-9034}}}
\affil[1]{Technical University of Munich, Germany; Munich Climate Center; TUM School of Engineering and Design, Department of Aerospace and Geodesy, Earth System Modelling Group}
\affil[2]{Potsdam Institute for Climate Impact Research, Germany}
\affil[3]{University of Exeter, UK; Department of Mathematics}
\begin{document}
\maketitle
\begin{abstract}
	Low-dimensional chaotic systems such as the Lorenz-63 model are commonly used to benchmark system-agnostic methods for learning dynamics from data. This study shows that learning from noise-free observations in such systems can be achieved up to machine precision: using ordinary least squares regression on high-degree polynomial features with 512-bit arithmetic, a system-agnostic method is introduced that matches the accuracy of standard 64-bit numerical ODE solvers using the systems' governing equations. For the Lorenz-63 system, the method achieves valid prediction times of 36 Lyapunov times, and even up to 105 Lyapunov times with favorable precision configurations, dramatically outperforming prior work, which reaches 13 Lyapunov times at most. The results are further validated on Thomas' Cyclically Symmetric Attractor, a non-polynomial chaotic system that is considerably more complex than the Lorenz-63 model, and similar results extend to higher dimensions using the spatiotemporally chaotic Lorenz-96 model. These findings suggest that forecasting low-dimensional chaotic systems from noise-free data is effectively a solved problem.
\end{abstract}
\section{Introduction}
Chaotic dynamical systems represent a wide range of complex phenomena across many natural systems and scientific fields \citep{Ott1993, Strogatz2024}. While predicting their behavior is crucial for understanding these systems, it remains challenging due to their sensitivity to initial conditions. Recently, data-driven, system-agnostic methods using machine learning (ML) have emerged as promising approaches for forecasting chaotic dynamics, ranging from low-dimensional systems such as the Lorenz-63 model \citep{Pathak2017,Pathak2018} to global weather forecasting \citep{Lam2023,Price2024}.
Researchers typically evaluate such methods using synthetic data generated from known chaotic systems. A commonly used metric to assess prediction quality is the valid prediction time (VPT) \citep{Ren2009, Pathak2018}, which measures how long a forecast stays close to the reference trajectory serving as ground truth.  Throughout the article, the term \textit{ground truth} denotes a numerical approximation of a system's analytical solution.

The Lorenz-63 (L63) system \citep{Lorenz1963}---a three-dimensional autonomous system of first-order ordinary differential equations (ODEs) exhibiting chaotic behavior---serves as the most widely used benchmark in this field. 
Prior results on this system are summarized in \cref{tbl:L63:vpt:lit}, where the state-of-the-art VPT is approximately 13 Lyapunov times (about 14 system time units). Crucially, these results were obtained using noise-free training data and initial conditions.

This study introduces \textit{PolyProp}, a system-agnostic method to learn and predict chaotic dynamics from noise-free data. It relies on linear regression to fit high-order polynomials with high numerical precision (512 bit in the main setting). On low-dimensional systems, it is computationally extremely efficient compared to other state-of-the-art machine learning architectures. Despite its simplicity and system-agnostic nature, \textit{PolyProp} not only vastly outperforms previous data-driven approaches but even matches the precision of numerical ODE solvers that utilize the system's governing equations. On the 3-dimensional L63 model, it achieves VPT values of 36 Lyapunov times (and up to 105 with favorable precision configurations). 

To contextualize this performance, the previous state-of-the-art VPT of 13 Lyapunov times corresponds roughly to a normalized root mean squared error (nRMSE) between $10^{-7}$ and $10^{-6}$ during the first Lyapunov time unit of the forecast, with normalization by the system's standard deviation. In the same interval, \textit{PolyProp} achieves an nRMSE of $10^{-16}$ to $10^{-15}$. This is roughly equivalent to 64-bit machine precision, $2^{-52}$, and represents an improvement over the state of the art by a factor of $10^9$.

To confirm that these findings are not artifacts of the numerical setup, the experiments are successfully replicated on Thomas' Cyclically Symmetric Attractor \citep{Thomas99}, a 3-dimensional chaotic system with considerably more complex dynamics than the L63 model. Furthermore, machine precision results are achieved also for higher dimensions, as exemplified with the spatiotemporally chaotic Lorenz-96 model \citep{Lorenz1996}.

These findings can be generalized to data with measurement noise, viewing noise as a precision limitation. In general, PolyProp trained on reduced-precision data yields a forecast accuracy indistinguishable from that of an ODE solver starting from the same reduced-precision initial condition---regardless of whether the reduction stems from rounding, bit-depth, or noise. This implies that PolyProp approximates the underlying dynamics so well that the forecast error is dominated by the initial condition error (and not by the dynamics error), which affects the numerical ODE solver equally. In this sense, PolyProp yields an optimal forecast (among methods that do not apply data assimilation to denoise the initial conditions).

\begin{table}
	\centering
	\renewcommand{\arraystretch}{0.73}
	\setlength{\tabcolsep}{0.3em}
	\rowcolors{2}{gray!20}{white}
	\begin{tabular}{llrrr}
		\toprule
		\textbf{Reference} & \textbf{Family} & \textbf{$n$} & \textbf{$\stepsize$} & \textbf{$\vpt{}$} \\
		\midrule
		Elsner and Tsonis 1992 \citep{Elsner1992} & MLP & 1{,}000 & 0.01 & 1--2 \\
		Dubois et al.\ 2020 \citep{Dubois2020} & LSTM & 15{,}000 & 0.005 & 3 \\
		Jaurigue 2024 \citep{Jaurigue2024} & RC & 10{,}000 & 0.1 & 3 \\
		Griffith et al.\ 2019 \citep{Griffith2019} & RC & 10{,}000 & 0.01 & 4 \\
		K\"oster et al.\ 2023 \citep{Koster2023} & RC & 1{,}000 & 0.1 & 4 \\
		Roberts 2020 \citep{Roberts2020} & LSTM & 1{,}000 & 0.01 & 4 \\
		Yu et al.\ 2019 \citep{Yu2019} & LSTM & & & 4 \\
		Viehweg et al.\ 2023 \citep{Viehweg2023} & RC & 2{,}000 & 0.01 & 5 \\
		Gauthier et al.\ 2021 \citep{Gauthier2021} & RC & 400 & 0.025 & 6 \\
		Pathak et al.\ 2018 \citep{Pathak2018} & RC & 1{,}000 & 0.1 & 6 \\
		Wang et al.\ 2019 \citep{Wang2019} & LSTM & & 0.05 & 6 \\
		Pathak et al.\ 2017 \citep{Pathak2017} & RC & 5{,}000 & 0.02 & 7 \\
		Roque dos Santos and Bollt 2025 \citep{Santos2025} & RC & 5{,}000 & 0.01 & 8 \\
        Silva et al.\ 2020 \citep{deSilva2020} & SINDy & 5{,}000 & 0.002 & 8 \\
        Lu et al.\ 2018 \citep{Lu2018} & RC & 3{,}000 & 0.02 & 8 \\
		Akiyama and Tanaka 2022 \citep{Akiyama2022} & RC & 5{,}000 & 0.02 & 9 \\
		Li et al.\ 2024  \citep{Li2024} & RC & 3{,}000 & 0.02 & 9 \\
		Sch\"otz et al.\ 2025 \citep{Schotz2025} & misc & 10{,}000 & 0.01 & $>$9 \\
		Steinegger and R\"ath 2025  \citep{Steinegger2025} & RC & 2{,}100 & 0.02 & 12 \\
		Brunton et al.\ 2016  \citep{Brunton2015} & SINDy & 100{,}000 & 0.001 & 13 \\
		Mandal and Gottwald 2025	\citep{Mandal2025} & RC & 50{,}000 & 0.01 & 13 \\
		Platt et al.\ 2022 \citep{Platt2022} & RC & 50{,}000 & 0.01 & 13 \\
		RK4 (64 bit) & ODE solver & 1 & 0.001 & 32 \\
		RK4 (512 bit) & ODE solver & 1 & 0.001 & 35 \\
        This Study & PolyProp & 16{,}000 & 0.03 & 36 \\
		This Study & PolyProp & ${}^\ast$33{,}000 & 0.002 & 105
		 \\
		\bottomrule
    	\multicolumn{4}{l}{${}^\ast$ data is provided at 512-bit precision instead of 64-bit}
	\end{tabular}
\caption{\textbf{Valid prediction times in the literature for the Lorenz-63 system.} 
	Forecasting performance on L63 is reported as the valid prediction time (VPT) with threshold parameter $0.5$, $\vpt{0.5}$, given in units of Lyapunov time. For consistency, we estimate $\vpt{0.5}$ from potentially different error measures in the reference. The \textit{Family} column indicates the algorithmic class to which each prediction method belongs. The earliest publication uses a multi-layer perceptron (MLP). Reservoir computing (RC) methods include the Echo State Network \citep{jaeger2001echo}, Nonlinear Vector Autoregression \citep{Gauthier2021}, and related variants. LSTM refers to the Long Short-Term Memory network \citep{Hochreiter1997}. SINDy denotes the Sparse Identification of Nonlinear Dynamical Systems framework \citep{Brunton2015}, which in contrast to the other methods is not fully system-agnostic, as it assumes the system dynamics to be a sparse degree 5 polynomial. For comparison, we show the VPT of a numerical solver (Runge Kutta of order 4, RK4) with access to the system's governing equations. The method we present in this study is PolyProp, short for \textit{Polynomial Propagator}. The number of training data points is given in column $n$, and the temporal resolution of the data is specified by the time step $\stepsize$. Missing entries reflect cases where these details were not clearly reported in the original sources.}		
\label{tbl:L63:vpt:lit}
\end{table}

Contrary to common belief, these findings suggest that chaotic systems can, in principle, be predicted over arbitrarily long timescales---given measurements that are both sufficiently numerous and precise, and a suitable learning algorithm executed with sufficient numerical precision. If the state dimension is moderate, these requirements can be fulfilled in practice to the extent that the system-agnostic, data-driven approach presented here performs at least as well as a numerical integrator with access to the system's governing equations. Thus, the potential disadvantage of not knowing these equations can be overcome with data. In this sense, the problem of learning low-dimensional chaotic systems from noise-free data can be considered solved.
\section{Method and Experimental Setup}\label{sec:setup}

This section describes the methodology underlying the results of the following section. After formalizing the prediction problem and evaluation metric, the three chaotic benchmark systems used in the experiments are introduced, followed by a specification of the PolyProp method, a description of the numerical precision choices central to its performance, and a summary of the experimental design.
\subsection{Problem Setup and Evaluation Metric}\label{ssec:problem}
Consider a dynamical system described by a $d$-dimensional, first order, autonomous ordinary differential equation 
\begin{equation}\label{eq:ode}
    \dot u(t) = f(u(t))
    \eqcm
\end{equation}
where $u \colon \R\to\R^d$ is a \textit{solution} and $f \colon \R^d \to \R^d$ is the vector field describing the dynamics of the system. Given initial conditions $u(0) = u_0$ for $u_0 \in \R^d$ and assuming $f$ is sufficiently smooth (e.g., globally Lipschitz continuous), the solution $u$ of \cref{eq:ode} exists and is unique. In practice, the analytical solution $u$ is typically not available and is approximated numerically as described in \cref{ssec:numerics}.

Assume $n$ noise-free observations $y_i = u(t_i)$, $i = 1,\dots,n$, of the trajectory $u$ are made at time points $t_i$. The observation times start at a negative value and extend up to $t_n = 0$. These times are assumed to be equally spaced. This means $t_i := (i-n)\stepsize$ for a fixed step size $\stepsize > 0$.

For the prediction task, $f$ and $u$ are assumed to be unknown, but the observation time series $(y_i)_{i=1,\dots,n}$ and the step size $\stepsize$ are available. The goal is to predict the future values $z_j := u(j\stepsize)$ for $j \in \N$ given $z_0 = y_n$. In this case the train and test data are \textit{sequential}. Alternatively, we also consider the \textit{random} test mode, in which $z_j := \tilde u(j\stepsize)$ for $j \in \N \cup \{0\}$ for a new solution $\tilde u$ of \cref{eq:ode}, where $\tilde u(0) = z_0$ is chosen randomly from the system's attractor. Note that, after training on $(y_i)_{i=1,\dots,n}$, only a single state $z_0$ is available to initialize the prediction in the \textit{random} test mode.

To evaluate a given forecast $(\hat z_j)_{j=1, \dots, m}$, we define the Valid Prediction Time (VPT) as the maximum time duration over which the normalized Euclidean distance between the prediction and the ground truth remains below a given threshold $\varepsilon>0$. Formally,
\begin{align}
    &\vpt\varepsilon\brOf{(z_j)_{j=1, \dots, m}, (\hat z_j)_{j=1, \dots, m}}
    \\&:= 
    \stepsize \max\setByEle{J \in \{1, \dots, m\}}{\forall j\in\{1,\dots, J\}\colon \frac{\euclOf{z_j - \hat z_j}}{\sigma} \leq \varepsilon}
    \eqcm
\end{align}
where $\sigma$ is the standard deviation of the system, defined as
\begin{align}
    \sigma  &:= \sqrt{\lim_{T\to\infty}\frac{1}{2T}\int_{-T}^{T} \euclOf{u(t) - \mu}^2 \,\dl t}
    \\\text{with}\qquad
    \mu &:= \lim_{T\to\infty}\frac{1}{2T}\int_{-T}^{T} u(t) \,\dl t \eqfs
\end{align}
In practice, $\sigma$ and $\mu$ are approximated empirically from a long ground truth trajectory,
\begin{equation}
    \bar\sigma := \sqrt{\frac1\ell\sum_{j=1}^\ell\euclOf{z_j - \bar\mu}^2}\eqcm\qquad\text{and}\qquad
    \bar\mu := \frac1\ell\sum_{j=1}^\ell z_j \eqfs
\end{equation}
Note that the VPT in this definition is translation and scale invariant.

If the system dynamics are recreated well enough, the forecast error roughly increases by a factor of Euler's number each Lyapunov time. This can be used to translate between VPT values for different thresholds:
Assuming all VPT values are given in units of Lyapunov time,
\begin{equation}
    \vpt{\varepsilon_2} \approx \vpt{\varepsilon_1} + \log\brOf{\frac{\varepsilon_2}{\varepsilon_1}}
    \eqcm
\end{equation}
where $\log$ is the natural logarithm and $\varepsilon_1,\varepsilon_2 \in(0, 1]$. Empirically, in the simulation study presented here, this formula is a rough but valid approximation (showing at least one significant digit to be correctly approximated for well performing systems; see also the slope in \cref{fig:perfect}).
\subsection{Chaotic Benchmark Systems}
\begin{figure}%
    \centering%
    \includegraphics[width=0.45\textwidth]{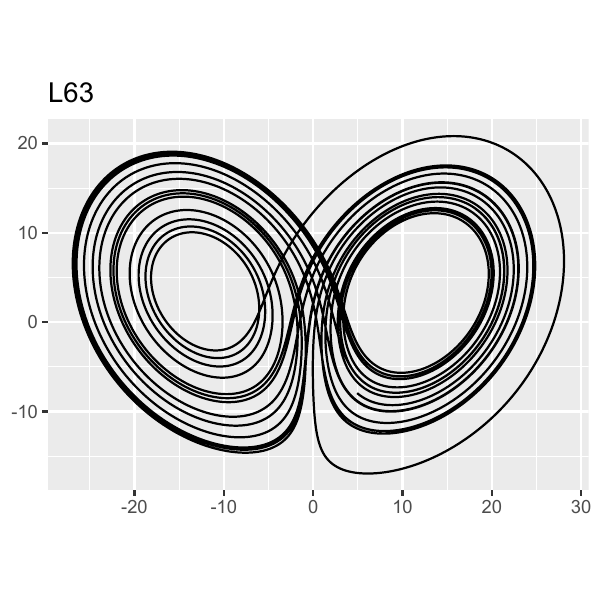}%
    \includegraphics[width=0.45\textwidth]{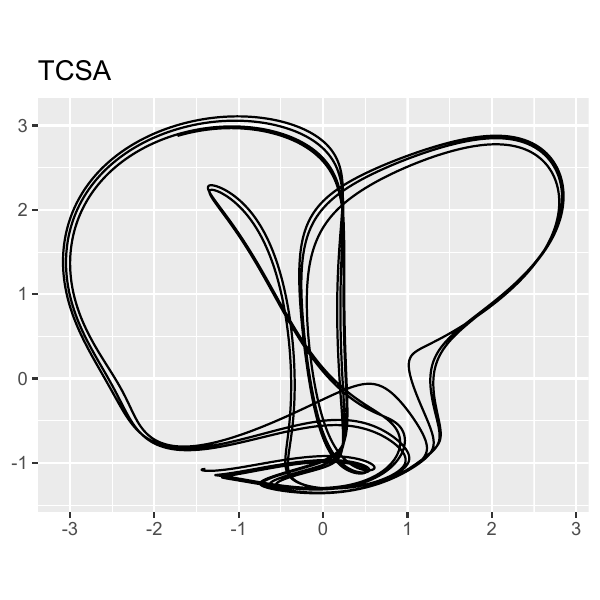}\\%
    \includegraphics[width=0.45\textwidth]{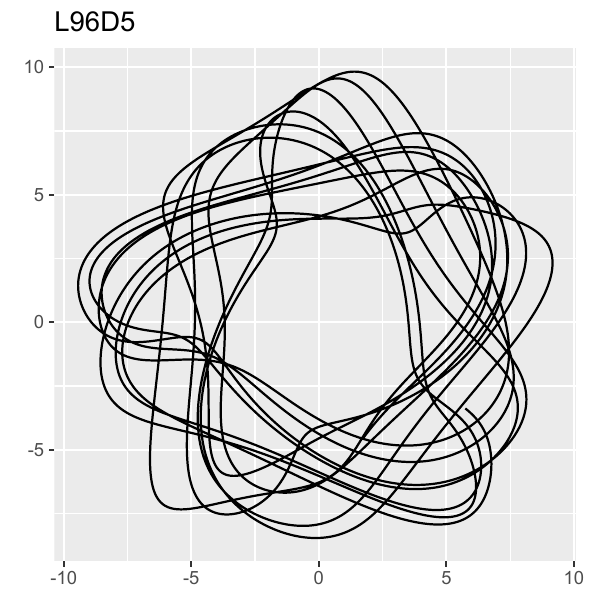}%
    \includegraphics[width=0.45\textwidth]{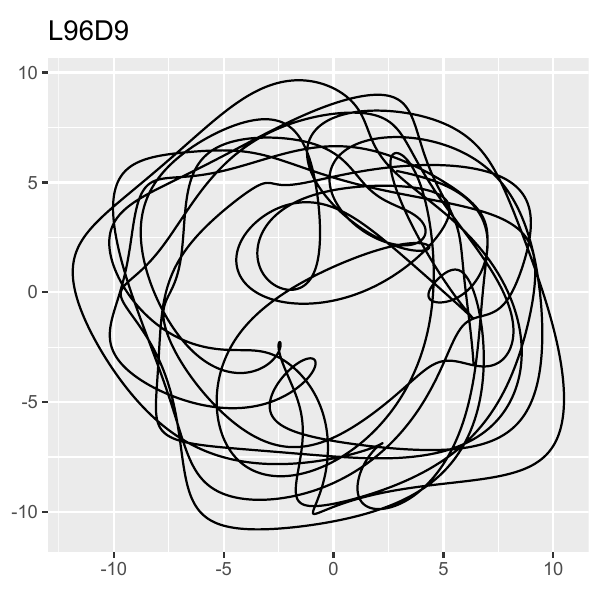}%
    \caption{\textbf{State space view of the Lorenz-63 (L63), Thomas' Cyclically Symmetric Attractor (TCSA), as well as 5-dimensional and 9-dimensional versions of the Lorenz-96 system (L96D5 and L96D9).} L63, L96D5, and L96D9 are integrated for 20 and TCSA for 200 system time units, and the state space is projected to 2 dimensions using principal component analysis.}\label{fig:view}
\end{figure}
We focus on three chaotic dynamical systems: the Lorenz-63 system (L63) \citep{Lorenz1963}, Thomas' cyclically symmetric attractor (TCSA) \citep{Thomas99}, and the Lorenz-96 system (L96) \citep{Lorenz1996}, see \cref{fig:view}. Further systems are shown in \SuppSec{additional}. For L63, the standard parameters are used, so that the dynamics are described by the vector field
\begin{equation}\label{eq:lorenz}
    f\brOf{\begin{bmatrix}
            u_1\\u_2\\u_3
    \end{bmatrix}} =
    \begin{bmatrix}
        10(u_2-u_1)\\
        u_1(28 - u_3) - u_2\\
        u_1 u_2 - \frac{8}{3} u_3
    \end{bmatrix}
    \eqfs
\end{equation}
For TCSA,
\begin{equation}\label{eq:thomas}
    f\brOf{\begin{bmatrix}
            u_1\\u_2\\u_3
    \end{bmatrix}} =
    \begin{bmatrix}
        \sin(u_2) - bu_1\\
        \sin(u_3) - bu_2\\
        \sin(u_1) - bu_3
    \end{bmatrix}
    \eqcm
\end{equation}
with $b = 0.208$.
For L96, the standard forcing value $F=8$ is used, yielding
\begin{equation}
    f\brOf{\begin{bmatrix}
            u_1\\\vdots\\u_i\\\vdots\\u_d
    \end{bmatrix}} =
    \begin{bmatrix}
        (u_{2} - u_{d-1})u_{d} - u_{1} + 8\\
        \vdots\\
        (u_{i+1} - u_{i-2})u_{i-1} - u_{i} + 8\\
        \vdots\\
        (u_{1} - u_{d-2})u_{d-1} - u_{d} + 8\\
    \end{bmatrix}
    \eqfs
\end{equation}
The name L96D$d$, with $d$ replaced by $5,6,7,8$ or $9$, indicates the $d$-dimensional version of L96.

L63 and TCSA are $(d=3)$-dimensional; for L96, $d$ can be chosen freely, with chaotic behavior obtained for $d\geq 5$. The vector fields for L63 and L96 are (sparse) polynomials of degree $2$; for TCSA the vector field is non-polynomial but smooth. The largest Lyapunov exponent of L63 is about $0.906$ \citep{Viswanath04}; for TCSA we estimate it to be $0.0153$. For L96 with $d=5,6,7,8,9$, we estimate the largest Lyapunov exponents to be $0.467$, $0.946$, $1.265$, $1.592$, $1.216$, respectively. See \SuppSec{lyapunov} for details. Time values in units of Lyapunov time are calculated as system time multiplied by the largest Lyapunov exponent.
\subsection{PolyProp}\label{ssec:propagator}
To create a forecast $(\hat z_j)_{j=1, \dots, m}$ from the observations $(y_i)_{i=1,\dots,n}$, the propagator $\Phi_{\stepsize}\colon \R^d\to\R^d,\,u(t) \mapsto u(t + \stepsize)$ is first estimated. The learned propagator is then applied repeatedly starting from a given state $z_0$, which is either the last observed state $z_0 = y_n$ (sequential test mode) or a new randomly chosen state on the attractor of the system (random test mode). 

The propagator is estimated using ordinary least squares (OLS) linear regression with multivariate monomials as features. That is, for a given state vector $s = (s_1, \dots, s_d) \in \R^d$ and a fixed maximal degree $p\in\N$, a feature vector $x \in\R^D$ is created using the feature function $\xi_p\colon\R^d\to\R^D$ such that
\begin{equation}
    x = \xi_p(s) := \br{\prod_{k=1}^ds_k^{\alpha_k} \ \Bigg\vert\ \alpha_1, \dots, \alpha_d\in \{0, \dots, p\},\, \sum_{k=1}^d\alpha_k \leq p}
    \eqfs
\end{equation}
For example, 
\begin{equation}
    \xi_2(s_1, s_2, s_3) = \br{1, s_1, s_2, s_3, s_1s_2, s_2s_3, s_1s_3, s_1^2, s_2^2, s_3^2}
    \eqfs
\end{equation}
The number of features is 
\begin{equation}
    D = 
    \binom{d+p}{d} = 
    \frac{(d+p)!}{d!\,p!}
    \eqfs
\end{equation}
For example, in $d=3$ dimensions, $D = 165$ features are obtained using maximal degree $p = 8$, and $D = 969$ using $p = 16$.
The feature function is applied to the observations $y_1, \dots, y_{n-1} \in \R^d$ to obtain $x_1, \dots, x_{n-1} \in\R^D$. Then OLS linear regression is applied to the data $(x_i, y_{i+1})_{i=1,\dots, n-1}$ to obtain the regression coefficients $\hat\beta\in\R^{D\times d}$. 
By stacking the feature vectors and targets row-wise into matrices $X \in\R^{(n-1)\times D}$ and $Y \in\R^{(n-1)\times d}$, respectively, the coefficients can be expressed as 
\begin{equation}\label{eq:beta}
    \hat\beta = (X\tr X)^{-1}X\tr Y
\end{equation}
assuming the matrix $X\tr X \in \R^{D\times D}$ is invertible. The estimated propagator is 
\begin{equation}
    \hat\Phi_{\stepsize} \colon \R^d \to \R^d,\, s\mapsto \hat \beta\tr\xi_p(s) 
    \eqfs
\end{equation}
The forecast $(\hat z_j)_{j=1, \dots, m}$ is defined by
\begin{equation}
    \hat z_{j} := \hat\Phi_{\stepsize}(\hat z_{j-1})
\end{equation}
with a known initial state $\hat z_{0} := z_0$.

While PolyProp is conceptually simple, executing it for high polynomial degrees presents a numerical challenge, as systems of linear equations given by potentially ill-conditioned matrices must be solved. We address this issue through appropriate data normalization and high precision arithmetic, as described in \cref{ssec:numerics} below.

Polynomial regression for learning dynamical systems has recently been employed \citep{Gauthier2021} with a focus on time delay embedding with relatively low-degree polynomials and Ridge regression (i.e., Tikhonov regularization). In contrast, PolyProp emphasizes high-degree polynomials without time delay embedding and achieves its results without regularization, i.e., using OLS. Note that regularization is typically used to reduce variance of parameters estimated from noisy data at the cost of an induced bias; here the data is noise-free, and we have found that using regularization makes the results worse due to the introduced bias. A predecessor to this method appears in a recent simulation study \citep{Schotz2025} (called \texttt{LinPo6} therein), but it is limited to degree $6$ polynomials and standard 64-bit precision.

The backbone of PolyProp, OLS with polynomial features, can be traced back to the 19th century \citep{Gergonne1815,Stigler1974}. Machine learning for L63 was first explored in the early 90s \citep{Elsner1992}.
Given that 64-bit computing architectures became widely available in the mid-2000s, it is remarkable that---even though the standard-precision version of this method (VPT of $21$ Lyapunov times on L63) was both conceptually and technologically available---it has not been discovered and presented in the literature for nearly two decades.
\subsection{Numerical Precision and Implementation}\label{ssec:numerics}
Executing the procedures described above numerically is necessarily an approximation to the mathematically exact calculations. Its precision is influenced by implementation choices, such as data normalization, the algorithm used for solving linear equations, and the amount of information used for storing a single number. 

On a standard machine with standard operations, a number is represented with 64 bit, which roughly translates to 15 significant digits. By using the C libraries MPLAPACK \citep{mplapack} and MPFR \citep{mpfr}, numbers are represented with 512 bit of information, which allows for more than 150 significant digits. The specific value of 512 bit is not a tight requirement; it is simply the power of 2 closest to a tenfold increase in information over 64 bit, chosen to ensure operation well above the precision threshold at which ill-conditioning of the linear systems involved in polynomial fitting dominates. Any sufficiently high precision would yield qualitatively equivalent results. We run different experiments in which different processing steps are executed with either single precision (32 bit), double precision (64 bit) or multi precision (512 bit).

First, a Runge Kutta ODE solver of order 4 (RK4) is used with a solver time step of $\stepsize_0 = 2^{-10} \approx 0.001$ for L63 and L96 to create a long trajectory of the system. For TCSA, $\stepsize_0 = 2^{-6} \approx 0.016$ is used, as it has a larger characteristic timescale. A starting point is then randomly chosen on the long trajectory, and temporal sub-sampling (taking every $k$-th element of the trajectory) is performed to create the training and test data with the desired step size $\stepsize$, number of observations $n$, and a sufficiently large number $m$ of states in the test set. The term \textit{ground truth} denotes a numerical approximation of this type to the system's analytical solution.

When using single or double precision for the polynomial forecaster, the results depend on whether and how the data is normalized. The training data $y_i$ may first be transformed to $\tilde y_i$ by normalizing it to mean $0$ and covariance matrix equal to the identity matrix, i.e., 
\begin{equation}
    \tilde y_i = \hat\Sigma^{-\frac12}(y_i - \hat\mu)
    ,\ 
    \hat\Sigma = \frac1{n\!-\!1}\sum_{i=1}^n(y_i - \hat\mu)(y_i - \hat\mu)\tr
    ,\ 
    \hat\mu = \frac1n\sum_{i=1}^ny_i
    ,
\end{equation}
where $\hat\Sigma^{\frac12}$ is the matrix square root of the symmetric positive definite matrix $\hat\Sigma$ and $\hat\Sigma^{-\frac12}$ is its inverse.
As this \textit{full} normalization is not common in the literature, we also compare it with the more common \textit{diagonal} normalization, where each dimension is scaled by the inverse of its standard deviation individually without changing correlations between variables, i.e.,
\begin{equation}
    \tilde y_i = \hat S^{-\frac12}(y_i - \hat\mu)
    ,\ 
    \hat S = \begin{pmatrix}
        \hat\sigma_1^2 &&\\
        &\ddots&\\
        &&\hat\sigma_d^2
    \end{pmatrix}
    ,\ 
    \hat \sigma_k^2 = \frac{1}{n\!-\!1}\sum_{i=1}^n \br{y_{i,k} - \hat\mu_k}^2
    .
\end{equation}
When normalizing the training data, the output of the estimated propagator has to be scaled back to the original scale before comparing it with the test data. When using multi-precision arithmetic (512 bit), we do not normalize the data, as numerical instability does not arise in this case.

We implement the simulation study in the R programming language \citep{R}. To evaluate \cref{eq:beta}, systems of linear equations need to be solved. If the matrix $X\tr X$ is numerically close to being singular (not invertible), numerical instabilities can cause inaccurate results. Thus, this is a crucial part of the implementation. While the standard approach in R relies on the LAPACK routine \texttt{DGESV}, we found improved numerical performance by using the Armadillo library \citep{Armadillo} for our single and double precision implementations. Specifically, we employ Armadillo's \texttt{solve} function with the options \texttt{refine}, \texttt{equilibrate}, \texttt{allow\_ugly}, and \texttt{likely\_sympd}, which typically yields more accurate results in this context. For the multi-precision implementation (512 bit), we use the MPLAPACK \citep{mplapack} routine \texttt{RGESV}.
\subsection{Experimental Design}
If not mentioned otherwise, we run experiments with training set size $n = 2^3, 2^4, \dots, 2^{15}$ and polynomial degrees $p = 1, 2, \dots, 16$. For L63, we use the time step sizes $\stepsize = 2^{-10}, 2^{-9}, \dots, 2^{-3}$; for L96, we use $\stepsize = 2^{-9}, \dots, 2^{-5}$ and restrict to $p = 1, 2, \dots, 8$; for TCSA, we use $\stepsize = 2^{-6}, 2^{-5}, \dots, 2^1$. Additionally, L96 is run with $n=2^{16}, 2^{17}$, and TCSA with $\stepsize = 2^{-2}$, $p = 23, 24, 25$, and $n = 2^{16}, 2^{17}$. Each experiment is repeated 100 times with a different random trajectory sample. In each case, $\vpt{0.5}$ is calculated. In the results section, we report the mean over the resulting 100 VPT values.

For reference, we estimate the accuracy of the RK4 solver in terms of $\vpt{0.5}$, starting from differently rounded initial conditions and using calculations of different precision internally. Averaging over $10^4$ repetitions with randomly chosen initial conditions on the systems' attractor yields the values given in \SuppSec{vptref}.

Computational resources used in this study are described in \SuppSec{compute}. Details of all experimental results are given in \SuppSec{details}.
\section{Forecasting Results}\label{sec:results}

This section evaluates PolyProp on the three chaotic benchmark systems introduced in \cref{sec:setup}. PolyProp consistently matches the accuracy of the ODE solvers used to generate the data, thereby surpassing previous limits on prediction accuracy in low-dimensional chaotic systems.
\subsection{Lorenz-63 with Default Setup}
For the L63 model, the dynamics of the three state dimensions $x(t)$, $y(t)$, and $z(t)$ depending on time $t$ are given by the ordinary differential equation (ODE)
\begin{equation}\label{eq:l63}
    \dot x = 10(y - x)
    \eqcm\quad
    \dot y = x(28 - z) - y
    \eqcm\quad
    \dot z = xy - \frac{8}{3} z
    \eqcm
\end{equation}
where the dot above a variable denotes its temporal derivative.

Since the L63 system models a physical process, it is appropriate to use the most accurate available approximation to the exact solution for data generation. Here, this corresponds to the output of a high-precision (512-bit) fourth-order Runge-Kutta (RK4) ODE solver. To reflect typical computational practice, the data is assumed to be stored in double precision (64 bit). The dataset is then split into train and test sets. The train set is used to fit the PolyProp forecasting model.
Given that PolyProp is not computationally intensive, there is no need to artificially limit its computational precision. Therefore, a 512-bit implementation of PolyProp is used for prediction and evaluated on the test set. This default configuration is illustrated in \cref{fig:dataflow:hlh}.

\begin{figure}
    \centering
    \includegraphics[width=0.8\textwidth]{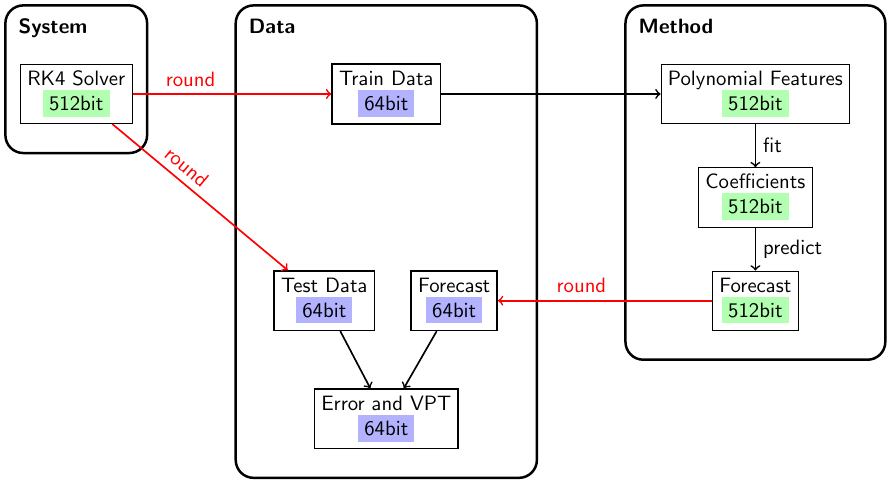}
    \caption{\textbf{Data processing diagram for the default setup.} The ODE system is solved with a multi precision (512-bit) RK4 solver. The train and test data are then stored at double precision (64 bit). The polynomial propagator method internally calculates with multi precision. The evaluation and calculation of the error metric is executed at standard double precision.}\label{fig:dataflow:hlh}
\end{figure}

\begin{figure*}
    \begin{center}
        \includegraphics[width=1.0\textwidth]{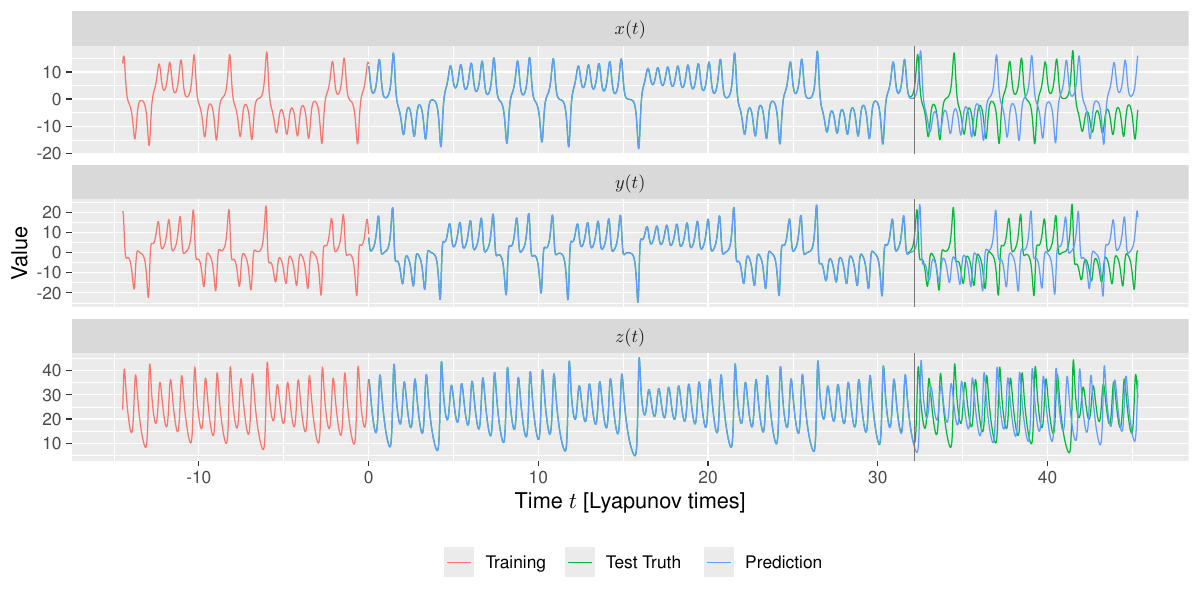}
    \end{center}
    \caption{\textbf{An example illustrating the forecast abilities of the polynomial propagator introduced in this study.} The L63 system is solved with an RK4 ODE solver calculating at 512-bit precision with time step $2^{-10}\approx 0.001$ (system time). The result is stored only at 64-bit precision. It is sub-sampled to obtain a time series with time step $\stepsize = 2^{-6} \approx 0.016$. The first $n = 2^{10} = 1024$ samples (red) are used as training data ($16.0$ system time units or $14.5$ Lyapunov times) to fit a polynomial of degree $9$ to the propagator using 512-bit arithmetic. The resulting prediction (blue) is compared with the ground truth of the ODE solver (green). The prediction is valid for about $32$ Lyapunov times (dark gray vertical line), more than twice the duration of the training data. Further examples, including one with explicit regression coefficients, can be found in \SuppSec{examples}.}
    \label{fig:example}
\end{figure*}

An example trajectory is shown in \cref{fig:example}, while results for the best-performing polynomial degree $p \in \{1, \dots, 16\}$ across different numbers of observations $n$ and time steps $\stepsize$ are presented in \cref{fig:best:hlh}.
Note that this section shows the results for a sequential test mode, i.e., the forecast is started from the last training observation. The results for a random test mode, in which the test set is a new randomly chosen time series on the attractor, are almost identical, see \SuppSec{testmode}.
\begin{figure*}
    \includegraphics[width=\textwidth]{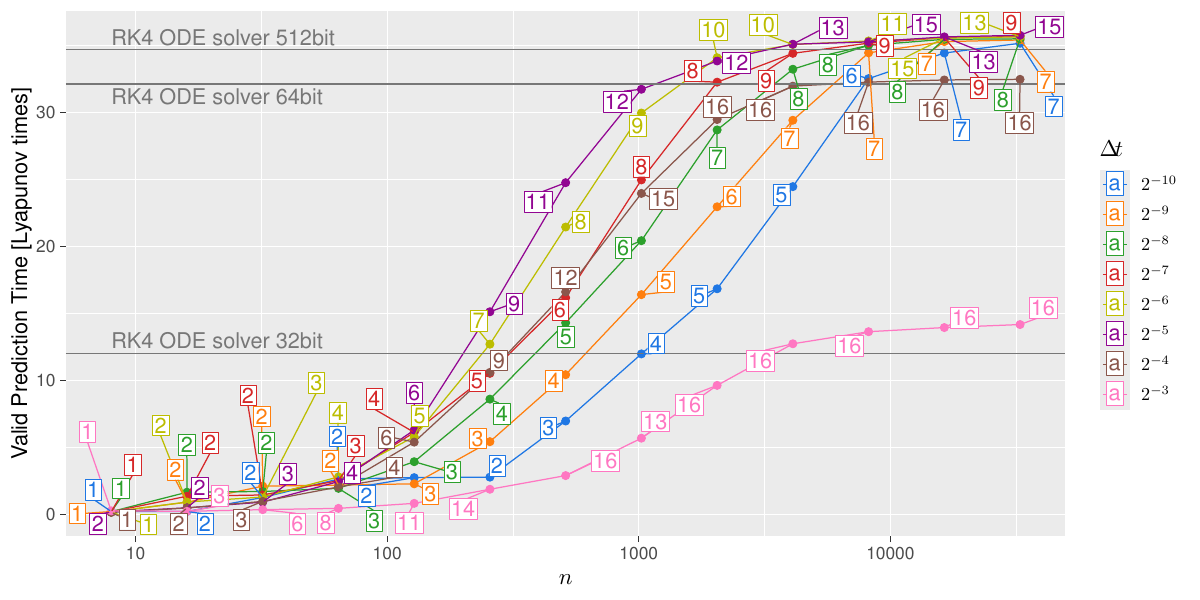}
    \caption{\textbf{Valid prediction times (VPT) for the L63 system using PolyProp with optimal degree under the default setup.} The setup uses a multi-precision solver (512-bit), double-precision data storage (64-bit), and a multi-precision implementation of PolyProp. The data time step $\stepsize$ is indicated by color. Label boxes within the plot denote the degree (between 1 and 16) of the optimal polynomial for each case. Gray horizontal lines show the performance of ODE solvers at various precisions, using the system's governing equations and (rounded) 64-bit initial conditions. The VPT (with threshold $0.5$) values on the vertical axis are given in multiples of the Lyapunov time.}
    \label{fig:best:hlh}
\end{figure*}

The optimal polynomial degree is observed to increase with the number of observations $n$ and step size $\stepsize$. This is expected, as with larger $n$ more complex models can be fitted, and with larger $\stepsize$ the true propagator function increases in complexity. 

There seems to be a trade-off regarding data efficiency in the step size, with the medium value $\stepsize = 2^{-5} \approx 0.03$ showing the best performance for most values of $n$. We interpret this as follows: for smaller step sizes, the target function of the regression---the propagator map---becomes simpler, allowing a precise fit with a small amount of data. In general, this is counteracted to some extent by the need for more prediction steps to achieve the same forecast time. But potentially more importantly, the state space is not sufficiently explored for a fixed $n$ if the step size is too small.

Recall that the ground truth is generated by a 512-bit RK4 solver. Due to the sensitive dependence on initial conditions, a 512-bit RK4 prediction initialized with data rounded to 64-bit precision inevitably diverges from this ground truth. A value of $\vpt{0.5} = 34.7$ Lyapunov times is measured for this effect. Given $n \geq 2^{12}$ noise-free observations of the ground truth trajectory rounded to 64-bit with a suitable time step, PolyProp matches and even slightly exceeds this value. Only the largest time steps do not reach this precision, likely because higher polynomial degrees would be required to fit the higher complexity of the propagator that comes with taking larger steps. In the best case, PolyProp reaches $\vpt{0.5} = 35.7$ on average for $n=2^{15} \approx 33{,}000$ observations, a time step of $\stepsize\leq 2^{-5} \approx 0.03$, and polynomial degree $p = 15$. This suggests that PolyProp successfully learns the underlying dynamics from the rounded data and compensates for the initialization error more effectively than the RK4 ODE solver.

\begin{figure*}
    \includegraphics[width=\textwidth]{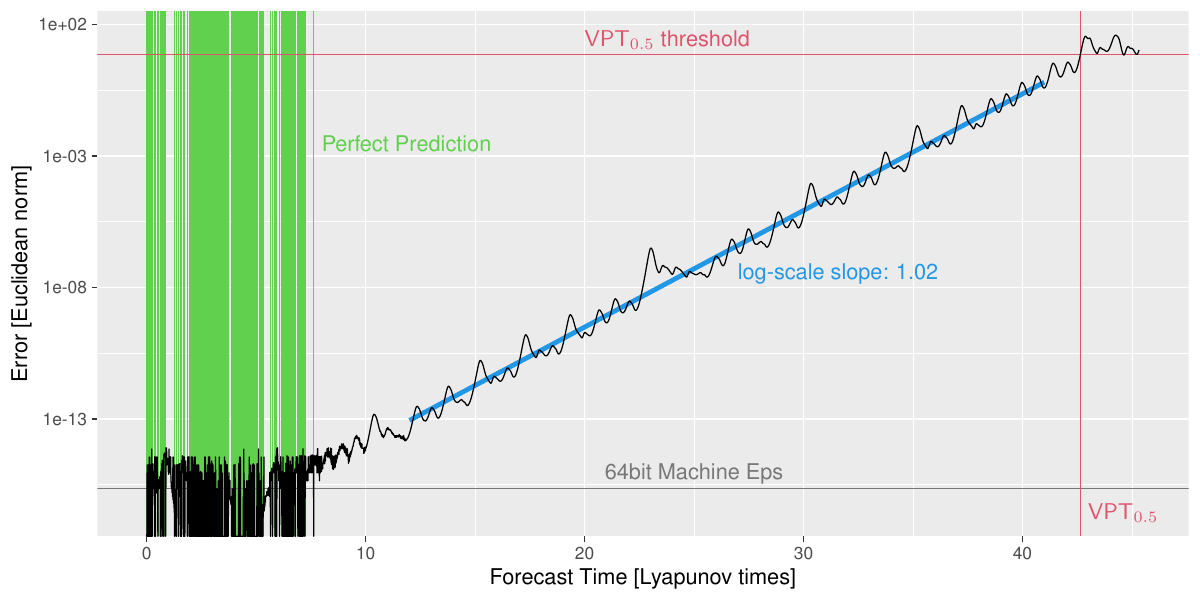}
    \caption{\textbf{Euclidean distance between forecast and ground truth of a single L63 run.} This is the best-performing repetition of the default setup (L63, 512-bit system, 64-bit data, 512-bit method, sequential test mode, $n=2^{15}$, $\stepsize=2^{-7}$, $p=9$). The horizontal locations of the vertical green lines mark time points of the forecast with numerically perfect prediction ($\hat u(t) = u(t)$ in 64-bit arithmetic). The blue line is the linear fit between forecast time in Lyapunov times and the natural logarithm of the Euclidean distance; it has almost the theoretically predicted slope $1$ of a perfectly emulated L63 system. The red lines show the VPT calculation; here, $\vpt{0.5} = 42.6$ is obtained, which is a positive outlier compared to 90\% of runs in this setting yielding $\vpt{0.5}$ between $33.0$ and $39.0$.}\label{fig:perfect}
\end{figure*}

An example of the evolution of the error between prediction and ground truth is shown in \cref{fig:perfect}. In contrast to the defining property of chaos, a constant error is observed for about 8 Lyapunov times when rounded to 64 bit. This, of course, does not break the chaotic behavior. It rather shows that, internally, the polynomial forecaster has a high-precision representation of the system state that is more accurate than the 64 bit at which the train and test data are stored. After the initial period of seemingly constant error, the Euclidean distance grows by a factor of Euler's number each Lyapunov time---the same as it would when using the exact L63 system (by definition of Lyapunov time and Lyapunov exponent).
\subsection{Variations on Numerical Precision}
We explore the effect of single (\textsf{s}, 32-bit), double (\textsf{d}, 64-bit), and multi (\textsf{m}, 512-bit) precision in the system- (the RK4 ODE solver), data- (storage of train and test data), and method- (PolyProp) parts of the processing pipeline for L63. The results are shown in \cref{fig:bbest:L63}.
\begin{figure*}
    \includegraphics[width=\textwidth]{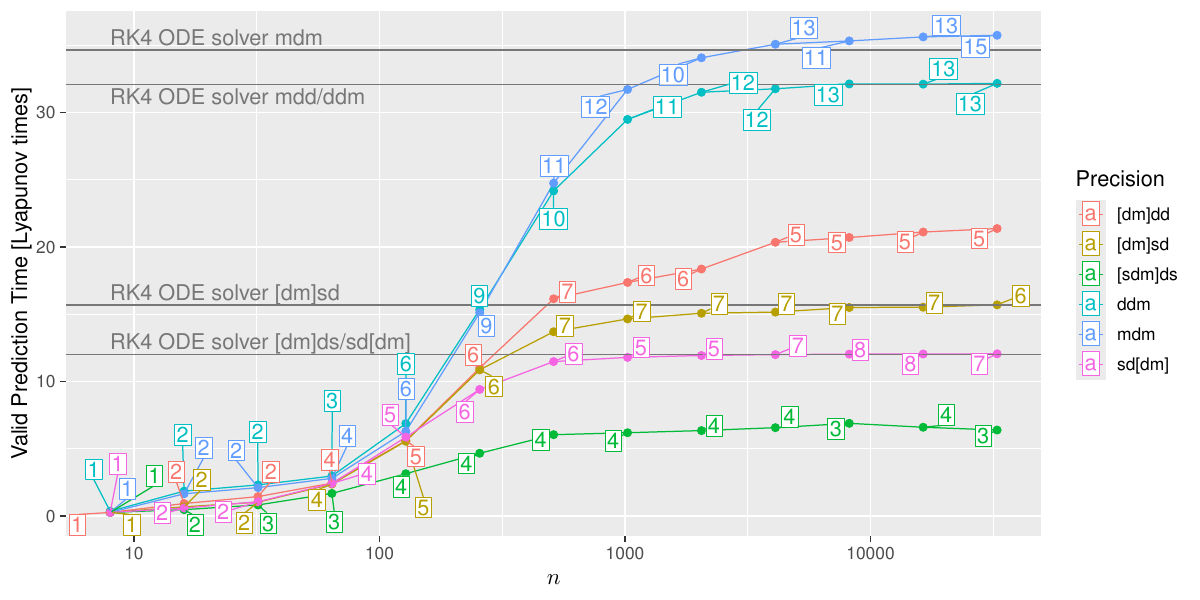}
    \caption{\textbf{Valid prediction time (VPT) for L63 using PolyProp with best polynomial degree and best time step in different precision settings.} The precision is encoded by three letters. The first letter represents the system (ODE solver), the second the data storage, and the third the forecasting method. For each position, s, d, and m indicate single (32-bit), double (64-bit), and multi (512-bit) precision, respectively. Multiple letters in brackets indicate that the choice of any of these precisions for the given part of the processing pipeline yields almost the same results. For each precision setting and value of $n$, the $\vpt{0.5}$ (in Lyapunov times) averaged over 100 repetitions is taken and maximized over the polynomial degree $p$ (given in label boxes) and the data time step $\stepsize \geq 2 \stepsize_0$ (not shown). The gray horizontal lines show the accuracy of RK4 ODE solvers in different precision setups for reference.}
    \label{fig:bbest:L63}
\end{figure*}
If the source data lacks precision, the prediction accuracy is also limited. To achieve the best results for given data, the method precision needs to be larger than the effective data precision (the minimum of system and data precision). This is because solving systems of linear equations to fit polynomial propagators---especially for higher-degree polynomials---can involve ill-conditioned matrices, where numerical errors amplify and lead to a loss of accuracy beyond the nominal precision of the arithmetic used.

\cref{fig:bbest:L63} shows that PolyProp achieves predictive fidelity comparable to, or better than, numerical integration with RK4. Based on 95\% confidence intervals for PolyProp's VPT value from 100 repetitions, we highlight the following specific comparisons:
\begin{enumerate}[label=(\alph*)]
    \item The prediction of double- or multi-precision PolyProp trained on data ($n \geq 2^{10}$) produced by a single-precision RK4 ODE solver diverges at about the same time as the trajectories produced by a single- and a double-precision RK4 ODE solver. (Precision setting \textsf{sdd} or \textsf{sdm}, $\vpt{0.5} = 12.0$ Lyapunov times).
    \item 
    Consider ground truth data produced by a double- or multi-precision RK4 ODE solver. 
    The prediction of double-precision PolyProp trained on ground truth data ($n \geq 2^{11}$) rounded to single precision is about as accurate as that of an RK4 ODE solver starting from single-precision initial conditions.
    (Precision setting \textsf{dsd} or \textsf{msd}, $\vpt{0.5} = 15.7$ Lyapunov times).
    \item 
    The prediction of multi-precision PolyProp trained on data ($n \geq 2^{13}$) produced by a double-precision RK4 ODE solver diverges at about the same time as the trajectories produced by a double- and a multi-precision RK4 ODE solver.
    (Precision setting \textsf{ddm}, $\vpt{0.5} = 32.1$ Lyapunov times).
    \item 
    Consider ground truth data produced by a multi-precision RK4 ODE solver. 
    The prediction of multi-precision PolyProp trained on ground truth data ($n \geq 2^{12}$) rounded to double precision is at least as accurate as that of an RK4 ODE solver starting from double-precision initial conditions.
    (Precision setting \textsf{mdm}, $\vpt{0.5} = 34.7$ Lyapunov times for the ODE solver and $\vpt{0.5} = 35.7$ Lyapunov times for PolyProp).
\end{enumerate}

The impact of numerical accuracy in solving linear systems is also evident when comparing different data normalization schemes for single- and double-precision methods (see \SuppSec{normalization}). In general, normalization enhances results, with more sophisticated schemes outperforming default approaches.

Without artificially limiting the precision of the training data, much higher VPT values can be achieved, see \cref{fig:L63:mmm}. The limit of about $105$ Lyapunov times shown in the plot is likely due to the constraint $p \leq 16$, with even higher VPT values possible for higher degrees. Note that with the data time step equal to the solver time step, i.e., $\stepsize = 2^{-10} = \stepsize_0$, extremely high VPT values are achieved with degree $8$ polynomials (more than $322$ Lyapunov times for $n=2^{15}$; not shown in the plot). 

This can be explained as follows: applying $k$ steps of the RK4 ODE solver to the L63 model yields a polynomial propagator of degree $F_{4k+2}$, where $F_\ell$ is the $\ell$-th Fibonacci number. This is shown in \SuppSec{rk4l63poly}. Thus, one RK4 solver step of the L63 system amounts to a polynomial propagator of degree $F_{6}=8$. Hence, the fits can become near perfect in this case if $p \geq 8$. This is not a concern for larger time steps, as two RK4 solver steps for L63 already require degree $F_{10} = 55$ to be fully represented as a polynomial.

Further experiments using high-precision ODE solvers are detailed in \SuppSec{xprec}. Notably, when using a Taylor Integrator with extreme precision (absolute tolerance $10^{-60}$) as the ground truth, PolyProp---trained and initialized with rounded 64-bit data---achieves a VPT of $35.6$ Lyapunov times. This performance is statistically indistinguishable from that of the Taylor Integrator itself when initialized with a rounded 64-bit state ($35.8$ Lyapunov times).

Finally, \SuppSec{noise} investigates PolyProp's performance on noisy data, viewing observational noise as a stochastic limitation on precision analogous to deterministic rounding. Consistent with the findings in \cref{fig:bbest:L63}, PolyProp's forecast accuracy matches that of the ODE solver when both are initialized with the same noisy state. This demonstrates that PolyProp achieves the optimal predictive fidelity possible when no data assimilation is applied to denoise the initial state.
\begin{figure*}
    \includegraphics[width=\textwidth]{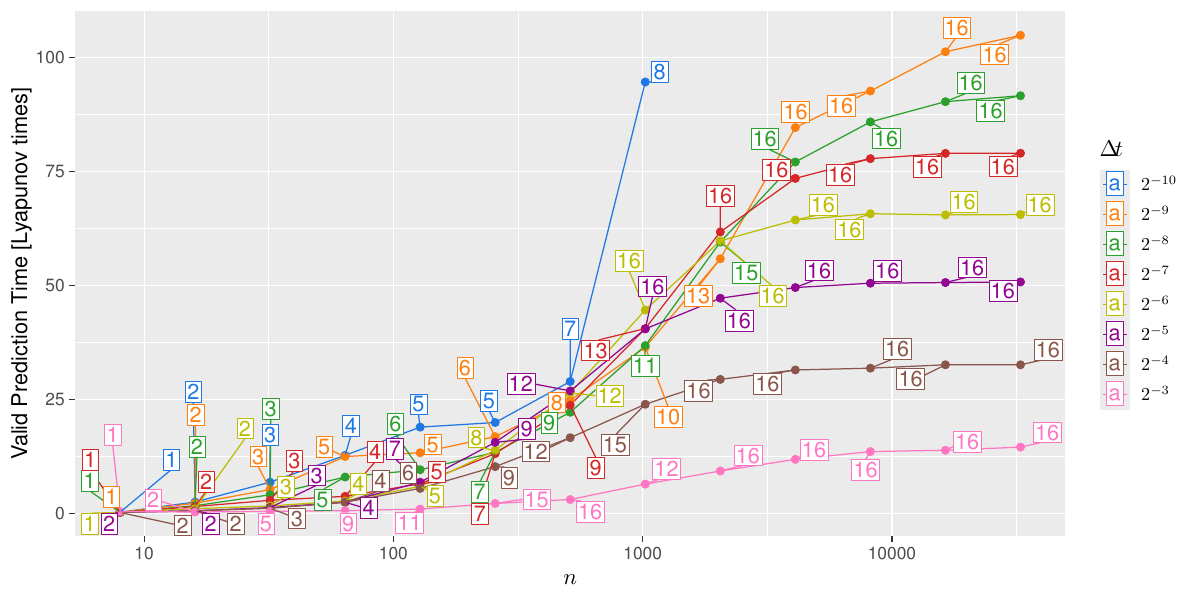}
    \caption{\textbf{Valid Prediction Time (VPT) for L63 using PolyProp with best polynomial degree for maximum precision throughout the processing pipeline.} The setup uses multi-precision (512-bit) for the system, data, and method. For each value of $\stepsize$ (indicated by color) and $n$, the $\vpt{0.5}$ (in Lyapunov times) averaged over 100 repetitions is taken and maximized over the polynomial degree $p$ (given in the label boxes). For the setting $\stepsize = \stepsize_0$ (blue), only results for $n\leq 2^{10}$ are shown. Note that the polynomial degree $p$ is limited to $p \leq 16$.}\label{fig:L63:mmm}
\end{figure*}
\subsection{Thomas' Cyclically Symmetric Attractor}
One may still wonder whether the success of PolyProp is due to the polynomial nature of the L63 vector field and the corresponding RK4 solution (even if the degree in that case is extremely high). To remove such doubts, the machine precision results shown for L63 in the standard setting (512-bit system, 64-bit data, 512-bit method) are replicated for Thomas' Cyclically Symmetric Attractor (TCSA), a chaotic dynamical system, governed by non-polynomial dynamics given by
\begin{equation}
    \dot x = \sin(y) - bx
    \eqcm\quad
    \dot y = \sin(z) - by
    \eqcm\quad
    \dot z = \sin(x) - bz
    \eqcm
\end{equation}
where $b = 0.208$. 

The results are presented in \cref{fig:TCSA:best:hlh}. Using the standard setup with up to $n=2^{15}$ observations and a polynomial degree of up to $p=16$, $\vpt{0.5}=25.9$ Lyapunov times is achieved. This is lower than the reference value of 37.4 obtained from a 512-bit ODE solver initialized with data rounded to 64 bit. By extending the setup to include polynomial degrees $p=23,24,25$ and observation counts $n = 2^{16}, 2^{17}$ for the time step $\stepsize = 2^{-2}$, the result improves to $\vpt{0.5} = 38.0$, with a 95\% confidence interval of $[36.8, 39.3]$, thereby matching the performance of the multi-precision ODE solver and outperforming the double-precision solver ($\vpt{0.5} = 32.2$) when starting on initial conditions rounded to double precision.

\begin{figure*}
    \includegraphics[width=\textwidth]{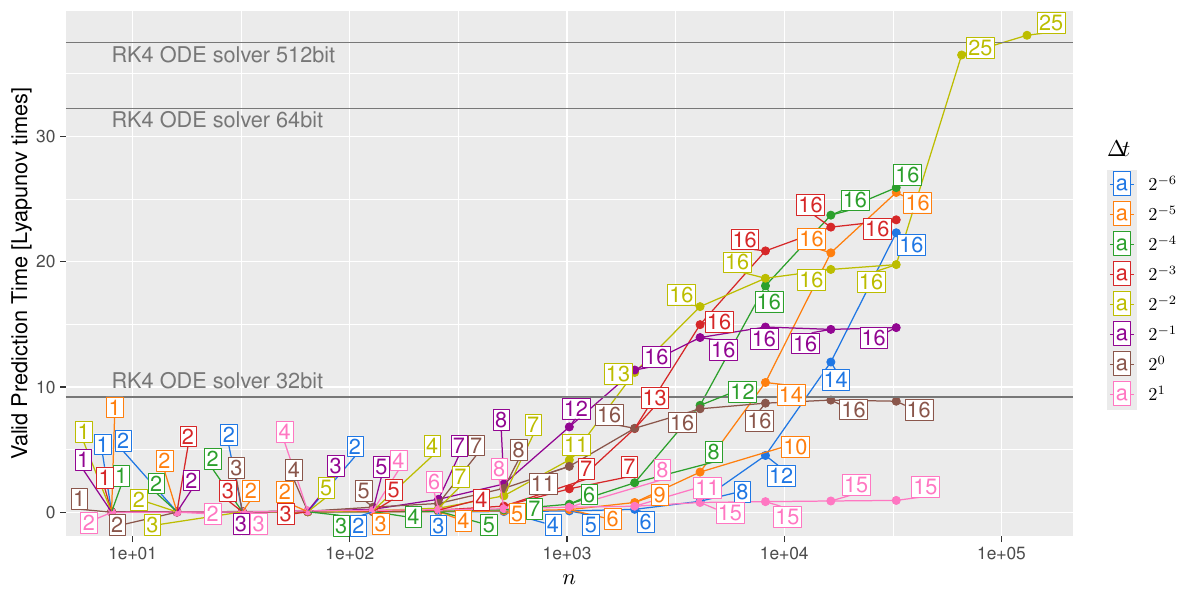}
    \caption{\textbf{Valid prediction times (VPTs) for the TCSA system using PolyProp with optimal degree under the default setup.} The setup uses a multi-precision solver (512-bit), double-precision data storage (64-bit), and a multi-precision (512-bit) implementation of PolyProp. The data time step $\stepsize$ is indicated by color. Label boxes within the plot denote the degree of the optimal polynomial for each case. Gray horizontal lines show the performance of ODE solvers at various precisions, using the system's governing equations and (rounded) 64-bit initial conditions. The $\vpt{0.5}$ values on the vertical axis are given in Lyapunov times. In addition to the default settings ($n \leq 2^{15}$, $p \leq 16$), $p = 23,24,25$ and $n = 2^{16}, 2^{17}$ are added for $\stepsize = 2^{-2}$.}
    \label{fig:TCSA:best:hlh}
\end{figure*}
\subsection{Lorenz-96 System}
To investigate how PolyProp performs in higher dimensions, consider the Lorenz-96 (L96) system \citep{Lorenz1996},
\begin{equation}
    \dot x_i = (x_{i+1} - x_{i-2})x_{i-1} - x_{i} + 8
\end{equation}
for $i=1,\dots,d$ with $x_{-1} = x_{d-1}$, $x_{0} = x_d$, $x_{d+1} = x_1$.
Instead of using 512-bit arithmetic to achieve double-precision (64-bit) accuracy as in the default setting, the goal here is single-precision (32-bit) machine accuracy on the Lorenz-96 system, using standard double-precision computations for PolyProp. 

Single-precision machine accuracy is demonstrated in two ways: 1) a double-precision ODE solver is used to generate the ground truth, and the data is rounded to single precision before training; and 2) the data is generated directly using a single-precision solver. In both cases, the training data is single-precision, but the underlying dynamics are computed at either single or double precision. 

To evaluate prediction quality, the VPT of PolyProp (executed in double precision) is compared with that of a double-precision ODE solver initialized with the single-precision data. In all tested settings, the best results from PolyProp are statistically indistinguishable from the solver's, as shown in \cref{fig:L96:n17} and \SuppSec{l96summary}. This demonstrates that PolyProp reaches single-precision machine accuracy.

This holds across all tested dimensions $d=5,\dots,9$. Higher dimensions require more training data, but no consistent trend in the required polynomial degree is observed. Notably, for fixed polynomial degree, higher dimensions naturally yield more input features.

\begin{figure}
    \centering
    \includegraphics[width=0.95\textwidth]{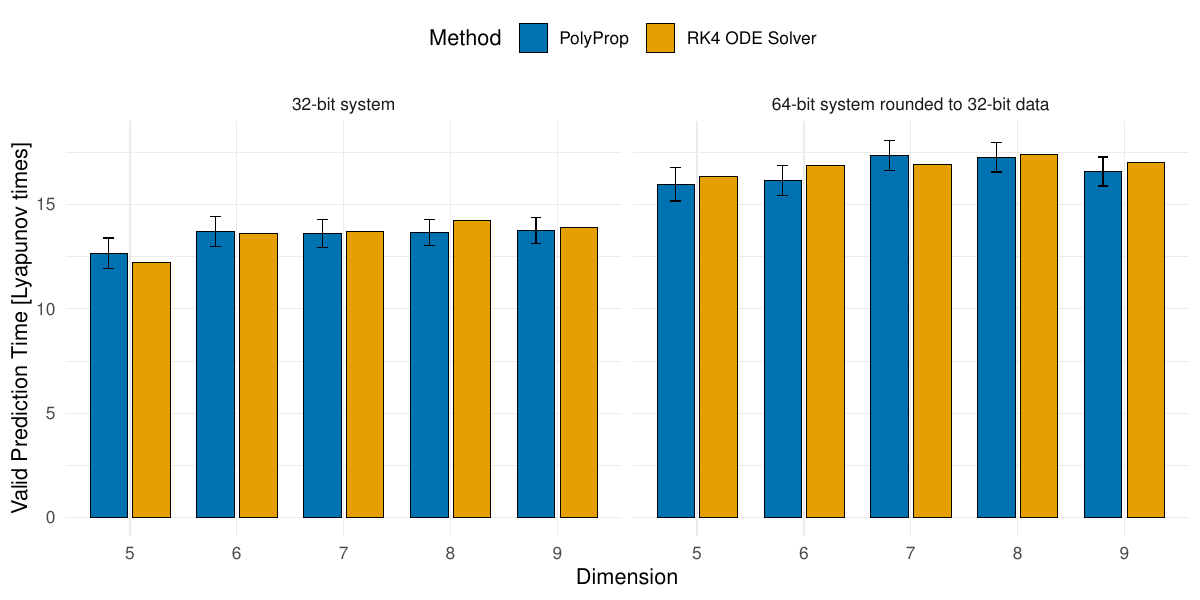}
    \caption{\textbf{Valid prediction time (VPT) for the Lorenz-96 (L96) system using PolyProp and RK4 ODE solvers.} For the two precision settings and dimensions $d=5, \dots, 9$ for L96, the VPT (averaged over 100 repetitions) of PolyProp trained on $n=2^{17}$ observations and time step $\stepsize = 2^{-7}$ is reported. The mean and 95\% confidence interval for the optimal polynomial degree $p \in \{1, \dots, 8\}$ are shown. These are compared to the mean VPT from $10^4$ trajectories computed with a 64-bit RK4 solver initialized from 32-bit initial conditions randomly sampled from the system's attractor. The RK4 mean VPT values have 95\% confidence intervals all smaller than 0.3 Lyapunov times and are not shown. All VPT values are expressed in units of the system's Lyapunov time.}
    \label{fig:L96:n17}
\end{figure}
\subsection{Additional Results}
We evaluate PolyProp also on seven additional chaotic dynamical systems to illustrate the generality of our results, see \SuppSec{additional}. For each setting we achieve machine precision in the same precision setups as previously used for the Lorenz-96 system.
%
\section{Enabling Factors, Comparisons, and Limitations}\label{sec:discussion}

The results in the previous section raise three natural questions. Why do existing data-driven methods not reach machine precision? What allows PolyProp to achieve it? And how far does the approach scale? These are addressed in turn.

\subsection{Limitations of Existing Methods}

To place PolyProp in the context of existing work, three classes of approaches for forecasting low-dimensional chaotic dynamical systems that have been widely used in the literature are identified: Sparse Identification of Nonlinear Dynamical Systems (SINDy) \citep{Brunton2015}, Long Short-Term Memory networks (LSTMs) \citep{Hochreiter1997}, and Reservoir Computers \citep{jaeger2001echo}.

To predict a system $\dot u = f(u)$, the SINDy algorithm uses estimates of the derivative $\dot u(t)$ to fit the model function $f$ and then uses an ODE solver to create a forecast. The model function $f$ is assumed to be a sparse linear combination of a predefined set of features. For example, in the standard version of SINDy \citep{Brunton2015}, this set of features is the monomials up to degree $5$. Thus, the algorithm is not fully system-agnostic, as some information about the target system must be encoded in the feature set. (In contrast, PolyProp fits the typically non-polynomial propagator $\Phi_{\stepsize}$, rather than fitting the model function $f$, and is effective for any sufficiently smooth ODE.) SINDy achieves a VPT of up to 13 Lyapunov times for L63 (\cref{tbl:L63:vpt:lit}), but would not be able to accurately recover systems such as TCSA, where $f$ includes non-polynomial components like sine functions. Furthermore, the necessity of estimating derivatives from data introduces error (increasing with larger time steps $\stepsize$). Even if the feature set is sufficiently expressive, these derivative estimation errors prevent SINDy from achieving forecasts at or near machine precision.

LSTMs, a class of recurrent neural networks, are trained using gradient-based optimization to minimize prediction error over sequences of data. While they can, in principle, approximate complex temporal dynamics, matching the precision of ODE solvers would require their parameters to be tuned with near-exact accuracy---a level of precision that is unrealistic for gradient descent methods, which are inherently approximate and prone to local minima, saddle points, and vanishing gradients. Accordingly, the best VPT that has been achieved with LSTMs for the L63 system is 6 Lyapunov times (\cref{tbl:L63:vpt:lit}). 

Unlike LSTMs, most Reservoir Computers, such as the Echo State Network (ESN) \citep{jaeger2001echo}, are typically trained by solving a linear system of equations, which allows for more precise weight estimation. Similar to LSTMs, Reservoir Computers predict the next state $u(t+\stepsize)$ based on the current state $u(t)$ and an internal memory state $r(t)$, known as the reservoir. Up to 13 Lyapunov times VPT have been achieved with Reservoir Computers (\cref{tbl:L63:vpt:lit}). However, in the setup considered here, where the system evolves deterministically as $u(t+\stepsize)=\Phi_{\stepsize}(u(t))$, the use of an auxiliary memory is unnecessary and complicates the learning task.
Moreover, RCs commonly use regularization during training to reduce variance and improve numerical stability, although some can be trained without regularization in well-conditioned regimes \citep{Santos2025}. While beneficial with noisy data, regularization introduces bias that limits forecasting accuracy, preventing performance at machine precision.
Finally, some RCs such as the ESN rely on randomly initialized neural networks to construct features. Compared to structured polynomial bases, this random approach is less efficient for precise function approximation and, without regularization, would lead to numerical instability when the resulting features are nearly linearly dependent.

\subsection{Factors Enabling Machine Precision Forecasting}\label{ssec:enabling_factors}

The propagator $\Phi_{\stepsize}$ of a smooth ODE is itself a smooth function, and smooth functions can be approximated by polynomials---locally via Taylor's theorem, and uniformly on compact sets via the Weierstrass approximation theorem. The best-fit polynomial of a given degree can be found by least squares. Hence, with sufficient data and sufficient polynomial degree, the propagator of any smooth ODE can be approximated to arbitrary accuracy. Two key ingredients are needed to realize this in practice: high-degree polynomials and multi-precision arithmetic.

High-degree polynomials are traditionally viewed with skepticism in machine learning, often associated with overfitting. However, in the absence of noise, this concern does not apply: since the training data is noise-free, there is no noise to overfit to. In the simulation study presented here, polynomials of degree up to 25 are used effectively.

Fitting such high-degree polynomials, in turn, requires multi-precision arithmetic. To exploit the full range of significant digits provided by noise-free data, the linear systems involved in polynomial fitting must be solved in a numerically stable way. For high-degree polynomials, the matrix $X^\top X$ in \eqref{eq:beta} becomes severely ill-conditioned \citep{Santos2025}, so that standard double-precision solvers fail to preserve these digits. Using 512-bit arithmetic provides enough margin that, after the conditioning-induced loss, a useful number of digits remains. To a smaller but non-negligible extent, 512-bit arithmetic also reduces the per-step rounding error in iterated state propagation, which contributes to the near-perfect short-term predictions visible in \cref{fig:perfect}. The specific value 512 is not a tight requirement---it is simply the power of 2 closest to a tenfold increase in precision over 64 bits. Any sufficiently high precision would yield qualitatively equivalent results.

Note that modern machine learning frameworks, such as PyTorch and TensorFlow, do not support numerical precision beyond double precision (64 bit) on either CPU or GPU. Moreover, they are typically optimized for single precision (32 bit) on GPU, rendering them unsuitable for the multi-precision (512-bit) experiments conducted in this study.

\subsection{Scalability and Robustness}

PolyProp faces challenges related to the curse of dimensionality. For a system with dimension $d$, the number of parameters $D$ in a multivariate polynomial of degree $p$ scales as $D = \mathcal{O}(d^p)$. Since fitting PolyProp requires solving a linear system with a computational complexity of $\mathcal{O}(D^3)$, the method becomes computationally prohibitive for high-dimensional settings, even with moderate polynomial degrees. Dimension reduction techniques may help to alleviate this scalability issue. Alternatively, in the context of Partial Differential Equations, one may apply localization strategies---using only local spatial patches as input---to reduce the effective input dimension \citep{Mandal2025}.

While this study demonstrates how extremely accurate predictions of chaotic systems can be achieved, the main simulation setup is clearly artificial. In practical scenarios, noise-free observations with the 15-digit precision characteristic of 64-bit floating-point arithmetic are rare, found only in specialized domains such as atomic clock measurements (up to $10^{-18}$ relative precision) \cite{Nicholson2015}. Far more commonly, measurements are corrupted by noise, rendering even leading digits unreliable. 

In such settings, PolyProp yields an optimal result among non-denoising methods, matching the forecast accuracy of numerical ODE solvers initialized with noisy data, as shown in \SuppSec{noise}. Yet, despite this relative optimality, forecast accuracy deteriorates rapidly in the presence of noise---a limitation shared by many other learning techniques for dynamical systems \citep{Schotz2025}.
\section{Conclusion}

This study has shown that system-agnostic polynomial regression, trained with suitable numerical precision on sufficiently accurate data, can match the accuracy of ODE solvers when predicting chaotic dynamics from the same initial state. This result was explicitly demonstrated for the 3-dimensional polynomial Lorenz-63 system, the 3-dimensional non-polynomial Thomas' Cyclically Symmetric Attractor, and the Lorenz-96 system in dimensions up to $d=9$. These findings suggest that PolyProp generalizes to a broad class of dynamical systems governed by autonomous first-order ODEs. From this perspective, the problem of forecasting chaotic systems from noise-free data may be considered effectively solved, and standard benchmarks such as the noise-free Lorenz-63 model are no longer adequate for distinguishing methods---they have become trivial under these conditions.

The cornerstones of PolyProp---multi-precision arithmetic and high-degree polynomials---are well suited to the noise-free setting but ill-suited to overcome noise, which remains the dominant obstacle in practical applications. Matching forecast accuracy under noise requires denoising of the initial conditions rather than higher precision or higher polynomial degree. Given that noise is ubiquitous in real-world data, future research on learning dynamical systems should focus on methods that inherently integrate denoising capabilities.
%
%
%
%
%
\paragraph{Data and Code Availability}
The simulation results and source code supporting the findings of this study are archived at Zenodo (\url{https://doi.org/10.5281/zenodo.15863304}). The source code is also available via GitHub at \url{https://github.com/chroetz/PolyProp}.
\printbibliography
\paragraph*{Acknowledgments:}
The authors gratefully acknowledge the Ministry of Research, Science and Culture (MWFK) of Land Brandenburg for supporting this project by providing resources on the high performance computer system at the Potsdam Institute for Climate Impact Research. N.B.\ acknowledges funding by the Volkswagen Foundation. This is ClimTip contribution \#184; the ClimTip project has received funding from the European Union's Horizon Europe research and innovation programme under grant agreement No. 101137601. 

\paragraph{Author Contribution}
C.S.\ developed the methodology, designed and carried out the simulation study, analyzed the simulation results, and wrote the original draft of the manuscript. N.B.\ contributed to the design of the simulation study, to structuring the presentation of the method and simulation results, and revised the manuscript. Both authors discussed the results.

\paragraph{Competing interests}
The authors declare no competing interests.
\clearpage
\section*{Supplementary Information} 
The Supplementary Information provides further details on the simulation experiments and their results as well as a mathematical proof on the polynomial degree of the RK4 ODE solver applied to the L63 model.


\clearpage

\pagenumbering{arabic} 
\setcounter{page}{1}

\vspace*{2cm}
\begin{center}
    \begin{minipage}{\textwidth}
        \centering\LARGE\bfseries
        \setlength{\baselineskip}{1.25\baselineskip}
        Supplementary Information for:\\[0.3em]
        Machine-Precision Prediction of Low-Dimensional Chaotic Systems from Noise-Free Data
    \end{minipage}\\[1.5em]
    {\large Christof Sch\"otz\textsuperscript{1,2} and Niklas Boers\textsuperscript{1,2,3}}\\[1em]
    \begin{minipage}{0.9\textwidth}
        \centering\small
        \textsuperscript{1}Technical University of Munich, Germany; Munich Climate Center; TUM School of Engineering and Design, Department of Aerospace and Geodesy, Earth System Modelling Group\\
        \textsuperscript{2}Potsdam Institute for Climate Impact Research, Germany\\
        \textsuperscript{3}University of Exeter, UK; Department of Mathematics\\[0.8em]
        christof.schoetz@tum.de, \orcidthanks{0000-0003-3528-4544}, corresponding author\\
        n.boers@tum.de, \orcidthanks{0000-0002-1239-9034}
    \end{minipage}
\end{center}
\vspace{2em}

\renewcommand{\theequation}{S\arabic{equation}}
\renewcommand{\thefigure}{S\arabic{figure}}
\renewcommand{\thetable}{S\arabic{table}}
\renewcommand{\thesection}{S\arabic{section}}
\renewcommand{\thesubsection}{\thesection.\arabic{subsection}}
\renewcommand{\thetheorem}{\thesection.\arabic{theorem}}
\renewcommand{\thepage}{S\arabic{page}}

\renewcommand{\theHequation}{S\arabic{equation}}
\renewcommand{\theHfigure}{S\arabic{figure}}
\renewcommand{\theHtable}{S\arabic{table}}
\renewcommand{\theHsection}{S\arabic{section}}
\renewcommand{\theHsubsection}{\thesection.\arabic{subsection}}
\renewcommand{\theHtheorem}{\thesection.\arabic{theorem}}
\def\theHpage{S\arabic{page}}

\setcounter{equation}{0}
\setcounter{figure}{0}
\setcounter{table}{0}
\setcounter{section}{0}
\setcounter{subsection}{0}
\setcounter{theorem}{0}

\makeatletter
\renewcommand*\l@section[2]{%
  \ifnum \c@tocdepth >\z@
    \addpenalty\@secpenalty
    \addvspace{1.0em \@plus\p@}%
    \setlength\@tempdima{2.6em}
    \begingroup
      \parindent \z@ \rightskip \@pnumwidth
      \parfillskip -\@pnumwidth
      \leavevmode \bfseries
      \advance\leftskip\@tempdima
      \hskip -\leftskip
      #1\nobreak\hfil \nobreak\hb@xt@\@pnumwidth{\hss #2}\par
    \endgroup
  \fi}
\renewcommand*\l@subsection{\@dottedtocline{2}{2.6em}{3.5em}}
\renewcommand*\l@subsubsection{\@dottedtocline{3}{6.1em}{4.4em}}
\renewcommand\@pnumwidth{2.2em}
\renewcommand\@tocrmarg{3.2em}
\makeatother
\tableofcontents
\clearpage
\section{Random Test Mode}\label{app:sec:randVsSeq}
As shown in \cref{fig:bbest:L63:randVsSeq}, there is virtually no difference between testing on a time series that continues from the training data and testing with randomly chosen initial conditions on the attractor. From a theoretical standpoint, this is expected: the system's propagator, which we estimate, is a function $\Phi_{\stepsize}$ satisfying $u(t + \stepsize) = \Phi_{\stepsize}(u(t))$, meaning it only requires the current state to predict the next one.

As seen in Table 1 of the main text, Reservoir Computers and Recurrent Neural Networks (such as LSTMs) are commonly used in the field. These methods effectively implement a function $\Psi_{\stepsize}$ of the form $[u(t+\stepsize), r(t+\stepsize)] = \Psi_{\stepsize}(u(t), r(t))$, where $r$ represents the reservoir or memory. While the initial state $u(0)$ is typically given, there is no straightforward way to set $r(0)$. In sequential test settings, $r$ is often initialized using its final value from the training phase. In random test settings starting from an arbitrary initial state, it is usually assumed that a longer segment in the immediate past of the test data is available to "warm up" the reservoir.

In contrast, the polynomial propagator method does not require such a warm-up period. It performs optimally using only a single state vector as the initial condition.

\begin{figure}[b!]
	\includegraphics[width=\textwidth]{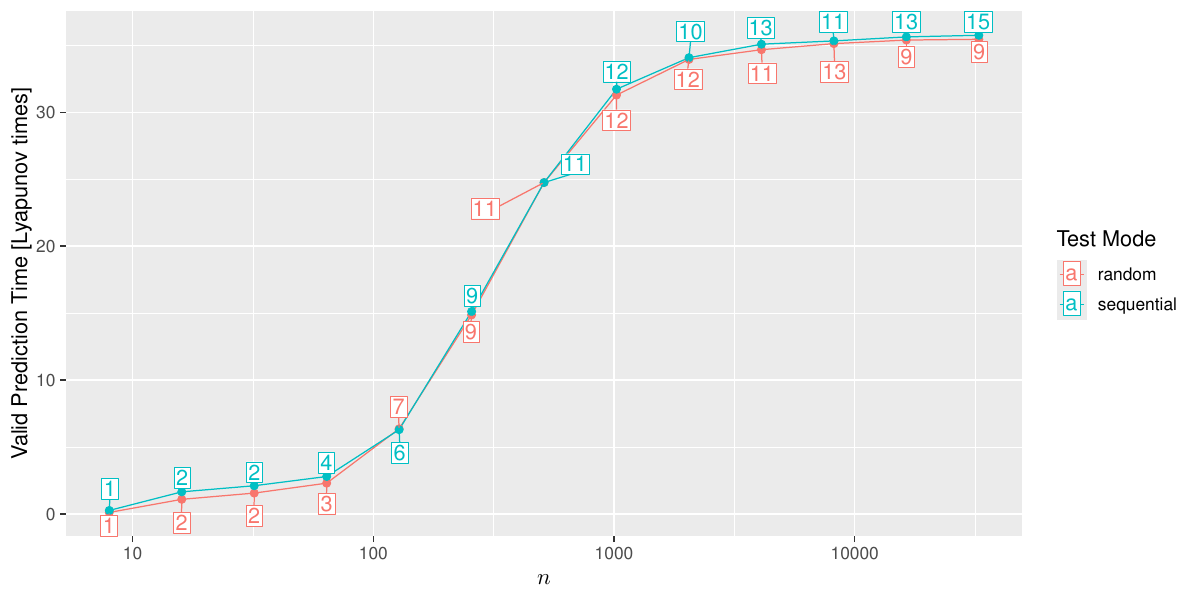}
	\caption{\textbf{Valid prediction time for L63 with best polynomial degree and best time step in different test modes.} We use the default setting of multi-precision system, double-precision data, and multi-precision method and compare random (start of test set is randomly chosen on the attractor) and sequential test mode (test set is continuation of train set). For both settings and each value of $n$, we take the $\vpt{0.5}$ (in Lyapunov times) averaged over 100 repetitions and maximize over the polynomial degree $p$ (given in label boxes) and the data time step $\stepsize \geq 2 \stepsize_0$ (not shown).}
	\label{fig:bbest:L63:randVsSeq}
\end{figure}

\clearpage
\section{Explicit Examples}\label{app:sec:example}

We show further examples of explicit training trajectories, prediction and reference trajectory (test truth) on L63. In \cref{fig:example:512:n1024,fig:example:512:n32768} we illustrate the results in the pure multi-precision (512-bit) setup. In \cref{fig:example:64:n8192}, we see the pure double-precision (64-bit) setup. In the latter case, we use a degree $p=5$ polynomial, which is a model with only 56 parameters per output dimension. To demonstrate the relative simplicity of this model, we write it out in full in \cref{tbl:example:64:n8192}.

\begin{figure}[ht!]
	\begin{center}
		\includegraphics[width=1.0\textwidth]{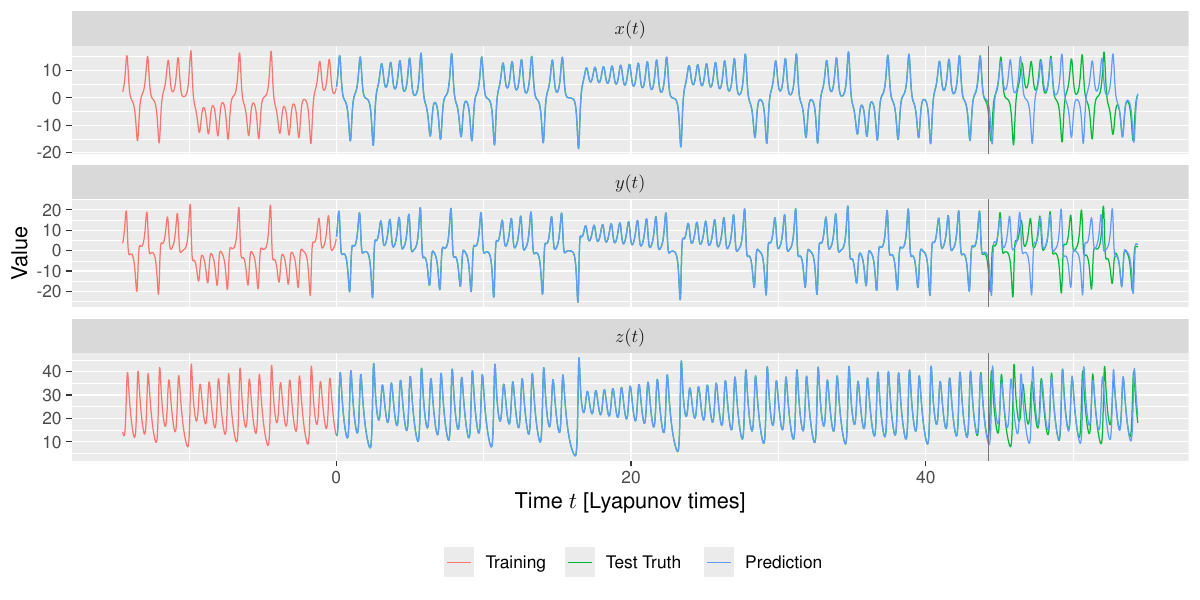}
	\end{center}
	\caption{\textbf{Example trajectory for PolyProp with degree $p = 16$ fitted via $n = 2^{10} = 1024$ observations of L63 with time step $\stepsize = 2^{-6}$.} The ODE solver, the data storage, and the implementation of PolyProp are all in multi precision (512-bit). Normalization was turned off (\textit{none}). The gray vertical line marks the valid prediction time, which is $\vpt{0.5} \approx 44$ Lyapunov times.}
	\label{fig:example:512:n1024}
\end{figure}

\begin{figure}[ht!]
	\begin{center}
		\includegraphics[width=1.0\textwidth]{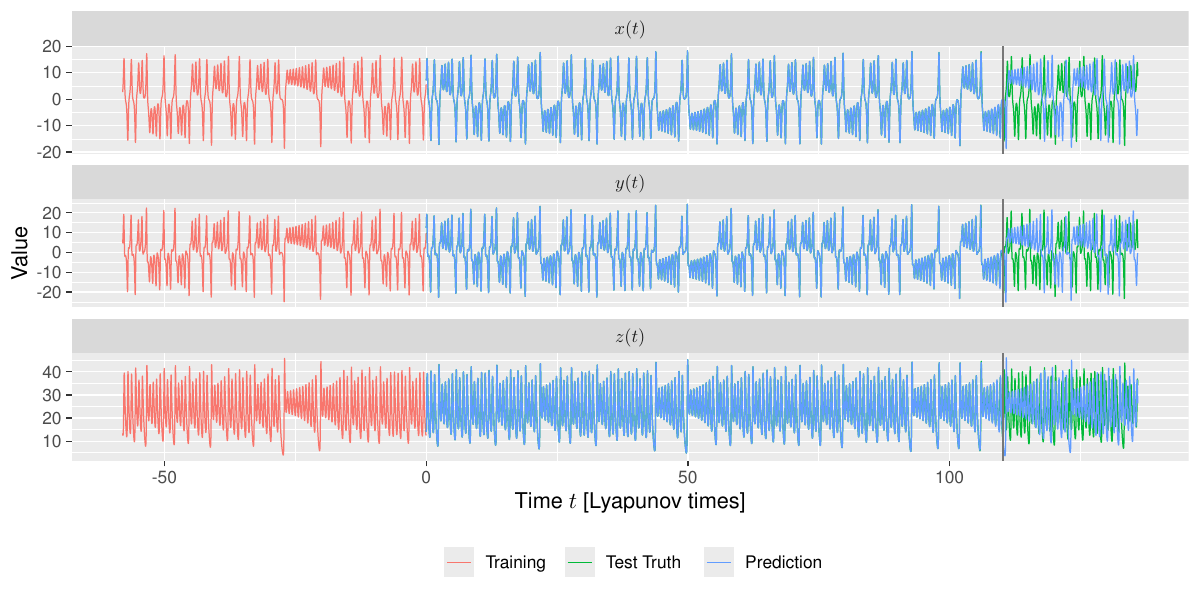}
	\end{center}
	\caption{\textbf{Example trajectory for PolyProp with degree $p = 16$ fitted via $n = 2^{15} = 32768$ observations of L63 with time step $\stepsize = 2^{-9}$.} The ODE solver, the data storage, and the implementation of PolyProp are all in multi precision (512-bit). Normalization was turned off (\textit{none}). The gray vertical line marks the valid prediction time, which is $\vpt{0.5} \approx 110$ Lyapunov times.}
	\label{fig:example:512:n32768}
\end{figure}

\begin{figure}[ht!]
    \begin{center}
        \includegraphics[width=1.0\textwidth]{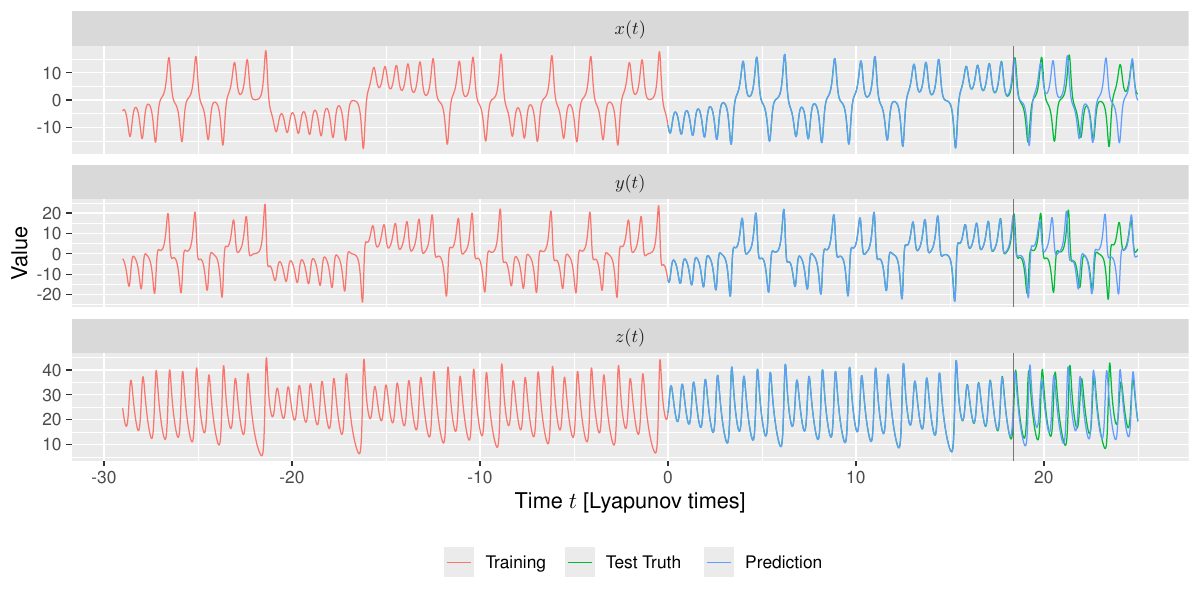}
    \end{center}
    \caption{\textbf{Example trajectory for PolyProp with degree $p = 5$ fitted via $n = 2^{13} = 8192$ observations of L63 with time step $\stepsize = 2^{-8}$.} Normalization was turned off (\textit{none}) to obtain a direct transformation. The ODE solver, the data storage, and the implementation of PolyProp are all in double precision (64-bit). The gray vertical line marks the valid prediction time, which is $\vpt{0.5} \approx 18.4$ Lyapunov times.}
    \label{fig:example:64:n8192}
\end{figure}

\begin{table}[ht!]
    \input{tbl/fit_L63_dd_n_n8192_step4_deg5_table.tex}
    \caption{\textbf{Coefficients for PolyProp with degree $p = 5$ fitted via $n = 2^{13} = 8192$ observations of L63 with time step $\stepsize = 2^{-8}$.} Normalization was turned off (\textit{none}) to obtain a direct transformation. The ODE solver, the data storage, and the implementation of PolyProp are all in double precision (64-bit). This setting yields a valid prediction time $\vpt{0.5} \approx 18$ Lyapunov times on average.}
    \label{tbl:example:64:n8192}
\end{table}

\clearpage
\section{Effect of Data Normalization}\label{app:sec:normalize}
We compare different data normalization schemes in \cref{fig:bbest:L63:normalization}. See \cref{ssec:numerics} of the main text for a description of the different normalization schemes. Full normalization---where data is linearly transformed to obtain an identity covariance matrix---yields the best performance. Since the underlying function being approximated remains the same, normalization affects only the numerical stability of the linear system solved during the least squares fitting of the propagator. Given that numerical accuracy appears to be the primary limiting factor in this study, the benefits of improved stability through better normalization are clearly reflected in the results. However, when using multi-precision arithmetic with double-precision input data, the increased numerical accuracy inherently stabilizes the computations, making normalization unnecessary.

\begin{figure}[b!]
	\includegraphics[width=\textwidth]{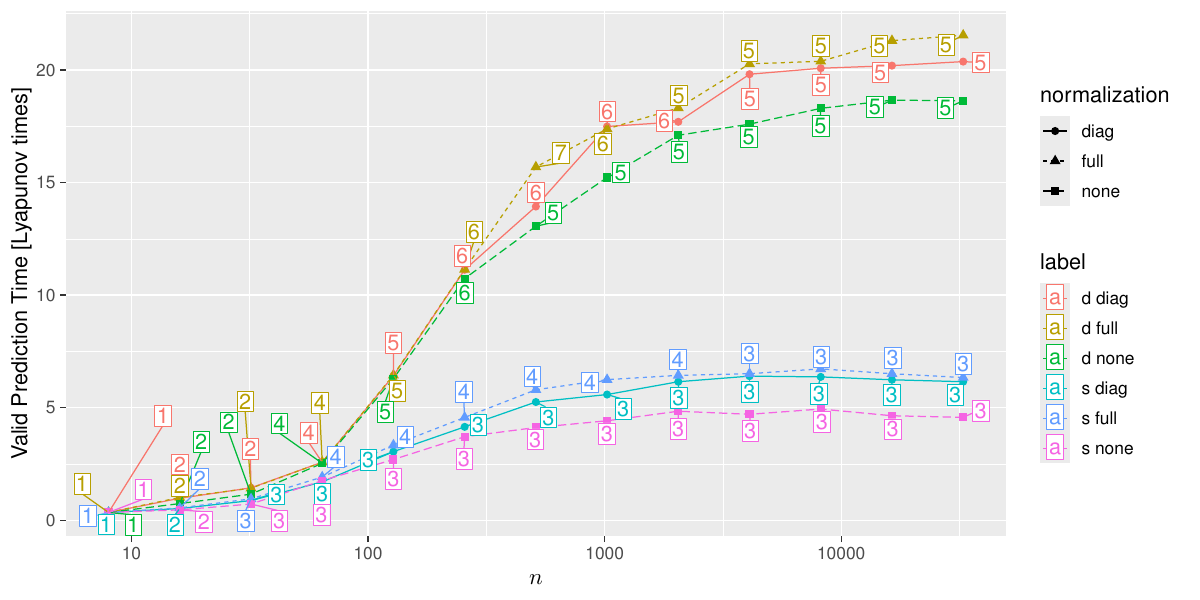}
	\caption{\textbf{Valid prediction time for L63 with best polynomial degree and best time step for single and double precision as well as different normalizations.} We use the setting of multi-precision system and double-precision data and sequential test mode. We compare the polynomial propagator method with single (\textsf{s}, 32-bit) and double (\textsf{d}, 64-bit) arithmetic and different data normalizations. For all settings and each value of $n$, we take the $\vpt{0.5}$ (in Lyapunov times) averaged over 100 repetitions and maximize over the polynomial degree $p$ (given in label boxes) and the data time step $\stepsize \geq 2 \stepsize_0$ (not shown).}\label{fig:bbest:L63:normalization}
\end{figure}

\clearpage
\section{Polynomial ODE Solution}\label{app:sec:poly}

In this section, we show that applying an RK4 ODE solver to the L63 system for $k$ steps yields a polynomial of degree $F_{4k+2}$ in the initial state, where $F_\ell$ is the $\ell$-th Fibonacci number. In particular, one RK4 step is a degree $F_{6} = 8$ polynomial and two steps result in degree $F_{10} = 55$.

Define a single step in the 4th order Runge--Kutta procedure $\mathsf{RK4}\brOf{f, u_0, h, 1}$ for an ODE of the form $\dot u = f(u)$ with initial conditions $u_0\in \R^d$ and time step $h := \stepsize\in\R$ as
\begin{equation*}
	\begin{aligned}
		w_1 &= f(u_0), \\
		w_2 &= f\brOf{u_0 + \frac{1}{2}h w_1}, \\
		w_3 &= f\brOf{u_0 + \frac{1}{2}h w_2}, \\
		w_4 &= f\brOf{u_0 + h w_3}, \\
		\mathsf{RK4}\brOf{f, u_0, h, 1} &= u_0 + \frac{1}{6} h (w_1 + 2w_2 + 2w_3 + w_4)
		\eqfs
	\end{aligned}
\end{equation*}
The $k$-th step in the RK4 procedure is defined recursively by $\mathsf{RK4}\brOf{f, u_0, h, 0} = u_0$ and
\begin{equation*}
	\mathsf{RK4}\brOf{f, u_0, h, k} = \mathsf{RK4}\brOf{f, \mathsf{RK4}\brOf{f, u_0, h, k-1}, h, 1}
	\eqfs
\end{equation*}

The L63 system is described by the ODE $\dot u = f_{\ms{L63}}(u)$, where
\begin{equation*}
	f_{\mathsf{L63}}\brOf{\begin{bmatrix}
			x\\y\\z
	\end{bmatrix}} :=
	\begin{bmatrix}
		a(y-x)\\
		x(b - z) - y\\
		xy - c z
	\end{bmatrix}
\end{equation*}
with $a,b,c\in\R$. The default parameter values are $a = 10$, $b = 28$, $c = \frac83$.

The following theorem presents explicit expressions for the leading terms of the polynomial obtained by applying one RK4 step to the L63 system.
\begin{theorem}\label{thm:rk4l63:one}
	Let
	\begin{equation*}
		\begin{bmatrix}
			\zeta_{1}\\
			\zeta_{2}\\
			\zeta_{3}
		\end{bmatrix}
		:=
		\mathsf{RK4}\brOf{f_{\mathsf{L63}}, \begin{bmatrix}
				x\\
				y\\
				z
			\end{bmatrix}, h, 1}
		\eqfs
	\end{equation*}
	Define $\alpha := 2h^{-1}a^{-1} - 1$.
	Then
	\begin{align*}
		\zeta_{1} &= -\frac{h^7a^3}{192}x^2z^2(y+\alpha x) + q_1(x,y,z)\eqcm\\
		\zeta_{2} &= \frac{h^{11}a^4}{1536}x^3y^2z(y+\alpha x)^2 + q_2(x,y,z)\eqcm\\
		\zeta_{3} &= \frac{h^{11}a^4}{1536}x^3yz^2(y+\alpha x)^2 + q_3(x,y,z)\eqcm\\
	\end{align*}
	where $q_1, q_2, q_3$ are polynomials with $\deg(q_1) \leq 4$ and $\deg(q_2), \deg(q_3) \leq 7$.
\end{theorem}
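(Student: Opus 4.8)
\emph{Proof strategy.}
The plan is to follow how the total polynomial degree of each coordinate grows through the four RK4 stages and then compute explicitly only the top-degree homogeneous part of each expression that contributes to $\zeta$, pushing everything of lower degree into $q_1,q_2,q_3$. For a polynomial $g\in\R[x,y,z]$ write $\langle g\rangle_k$ for its homogeneous component of degree $k$ and $\deg g$ for its total degree. Since the three components of $f_{\mathsf{L63}}$ have degrees $1,2,2$ and a substitution $v\mapsto v+\tfrac h2 w$ cannot raise a degree above the coordinatewise maximum, a short induction over the stages bounds the degrees of the three components of $w_1,w_2,w_3,w_4$ by
\begin{equation*}
	w_1:(1,2,2),\qquad w_2:(2,3,3),\qquad w_3:(3,5,5),\qquad w_4:(5,8,8)\eqcm
\end{equation*}
these bounds requiring no non-cancellation hypothesis. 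Hence $\zeta_1$ has degree $\le 5$ and $\zeta_2,\zeta_3$ degree $\le 8$, and since among $w_1,\dots,w_4$ only $w_4$ can reach the top degree in a given coordinate, $\langle\zeta_1\rangle_5=\tfrac h6\langle w_4^{(1)}\rangle_5$ and $\langle\zeta_i\rangle_8=\tfrac h6\langle w_4^{(i)}\rangle_8$ for $i=2,3$. Setting $q_i:=\zeta_i-\langle\zeta_i\rangle_{\deg\zeta_i}$ then yields the asserted degree bounds on the $q_i$, and the theorem reduces to computing $\langle w_4^{(1)}\rangle_5$, $\langle w_4^{(2)}\rangle_8$, $\langle w_4^{(3)}\rangle_8$.

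I would obtain these by propagating leading parts multiplicatively, using that $f_1(v)=a(v_2-v_1)$ is linear --- so its top part is $a$ times the top part of $v_2$, as $\deg v_2>\deg v_1$ at every stage --- and that for $f_2(v)=bv_1-v_1v_3-v_2$ and $f_3(v)=v_1v_2-cv_3$ the maximal-degree term is always the quadratic one, $v_1v_3$ resp.\ $v_1v_2$ (the other summands have strictly smaller degree in each stage encountered). Concretely: first compute $p:=u_0+\tfrac h2 w_1$ exactly, using $1-\tfrac{ha}{2}=\tfrac{ha}{2}\alpha$ (immediate from $\alpha=2h^{-1}a^{-1}-1$) to get $p_1=\tfrac{ha}{2}(y+\alpha x)$, $\langle p_2\rangle_2=-\tfrac h2 xz$, $\langle p_3\rangle_2=\tfrac h2 xy$; then evaluate $f_{\mathsf{L63}}$ at $p$, at $u_0+\tfrac h2 w_2$, and at $u_0+h w_3$, multiplying the relevant leading parts at each step, which gives
\begin{align*}
	\langle w_2^{(1)}\rangle_2 &= -\tfrac{ah}{2}xz, &
	\langle w_2^{(2)}\rangle_3 &= -\tfrac{ah^2}{4}(y+\alpha x)xy,\\
	\langle w_2^{(3)}\rangle_3 &= -\tfrac{ah^2}{4}(y+\alpha x)xz, &
	\langle w_3^{(1)}\rangle_3 &= -\tfrac{a^2h^3}{8}(y+\alpha x)xy,\\
	\langle w_3^{(2)}\rangle_5 &= -\tfrac{a^2h^5}{32}(y+\alpha x)x^2z^2, &
	\langle w_3^{(3)}\rangle_5 &= \tfrac{a^2h^5}{32}(y+\alpha x)x^2yz,
\end{align*}
and finally $\langle w_4^{(1)}\rangle_5 = ah\,\langle w_3^{(2)}\rangle_5$, $\langle w_4^{(2)}\rangle_8 = -h^2\langle w_3^{(1)}\rangle_3\langle w_3^{(3)}\rangle_5$, $\langle w_4^{(3)}\rangle_8 = h^2\langle w_3^{(1)}\rangle_3\langle w_3^{(2)}\rangle_5$. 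Multiplying these three by $\tfrac h6$ and collecting the scalar prefactors ($6\cdot 32=192$ and $6\cdot 8\cdot 32=1536$) reproduces exactly the stated formulas for $\zeta_1,\zeta_2,\zeta_3$.

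I expect the only real obstacle to be the degree bookkeeping in the two middle evaluations: at each call of $f_2$ or $f_3$ one must verify that the product $v_iv_j$ genuinely attains degree $\deg v_i+\deg v_j$, so that its leading part equals the product of the leading parts of $v_i$ and $v_j$ with the correct sign, and that neither the other summands of $f_2,f_3$ nor the subtracted term $v_1$ in $f_1$ climbs to that degree and perturbs the leading coefficient. Because every leading part listed above is a nonzero scalar multiple of a monomial in $x,y,z$ for $a,h\neq 0$, those degrees are indeed attained, no hidden cancellation occurs, and the factor $y+\alpha x$ survives to the end simply because it already sits in $p_1$ and every subsequent leading part is $p_1$ times a monomial. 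Once this is pinned down, the remainder is a routine substitution chain.
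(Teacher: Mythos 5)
Your proposal is correct and follows essentially the same route as the paper's proof: a stage-by-stage propagation of the leading part of each RK4 intermediate quantity, with everything of lower degree absorbed into a remainder (the paper's $P_\ell$ notation plays the role of your homogeneous-component bookkeeping). All of your leading-part formulas and the final prefactors $192 = 6\cdot 32$ and $1536 = 6\cdot 256$ agree with the paper's computation.
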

\begin{proof}
	We calculate the polynomial results of each intermediate step of the RK4 procedure. We use $P_\ell$ in an expression for an arbitrary polynomial in $x,y,z$ of degree at most $\ell$. Each occurrence of an expression $P_\ell$ may refer to a different polynomial. Set $u_0 := (x,y,z)$. We calculate:
	\begin{equation*}
		w_1 = f_{\mathsf{L63}}(u_0) = \begin{bmatrix}
			a(y-x)\\
			x(b - z) - y\\
			xy - c z
		\end{bmatrix} =
		\begin{bmatrix}
			a(y-x)\\
			-xz + P_1\\
			xy + P_1
		\end{bmatrix}
		\eqcm
	\end{equation*}
	\begin{equation*}
		w_2 = f_{\mathsf{L63}}\brOf{u_0 + \frac{1}{2}h w_1}
		=
		f_{\mathsf{L63}}\brOf{\begin{bmatrix}
				2^{-1}ha(y + \alpha x) \\
				-2^{-1}hxz + P_1\\
				2^{-1}hxy + P_1
		\end{bmatrix}}
		=
		\begin{bmatrix}
			-2^{-1}haxz + P_1\\
			-2^{-2}h^2axy(y+\alpha x) + P_2\\
			-2^{-2}h^2axz(y+\alpha x) + P_2
		\end{bmatrix}
		\eqcm
	\end{equation*}
	\begin{equation*}
		w_3 = f_{\mathsf{L63}}\brOf{u_0 + \frac{1}{2}h w_2}
		=
		f_{\mathsf{L63}}\brOf{
		\begin{bmatrix}
			-2^{-2}h^2axz + P_1\\
			-2^{-3}h^3axy(y+\alpha x) + P_2\\
			-2^{-3}h^3axz(y+\alpha x) + P_2
		\end{bmatrix}
		}
		=
		\begin{bmatrix}
			-2^{-3}h^3a^2xy(y+\alpha x) + P_2\\
			-2^{-5}h^5a^2x^2z^2(y+\alpha x) + P_4\\
			2^{-5}h^5a^2x^2yz(y+\alpha x) + P_4
		\end{bmatrix}
		\eqcm
	\end{equation*}
	\begin{equation*}
		w_4 = f_{\mathsf{L63}}\brOf{u_0 + h w_3}
		=
		f_{\mathsf{L63}}\brOf{
			\begin{bmatrix}
				-2^{-3}h^4a^2xy(y+\alpha x) + P_2\\
				-2^{-5}h^6a^2x^2z^2(y+\alpha x) + P_4\\
				2^{-5}h^6a^2x^2yz(y+\alpha x) + P_4
			\end{bmatrix}
		}
		=
		\begin{bmatrix}
			-2^{-5}h^6a^3x^2z^2(y+\alpha x) + P_4\\
			2^{-8}h^{10}a^4x^3y^2z(y+\alpha x)^2 + P_7\\
			2^{-8}h^{10}a^4x^3yz^2(y+\alpha x)^2 + P_7
		\end{bmatrix}
		\eqcm
	\end{equation*}
	\begin{equation*}
		\mathsf{RK4}\brOf{f_{\mathsf{L63}}, u_0, h, 1} = u_0 + \frac{1}{6} h (w_1 + 2w_2 + 2w_3 + w_4)
		=
		\frac{1}{6} h
		\begin{bmatrix}
			-2^{-5}h^6a^3x^2z^2(y+\alpha x) + P_4\\
			2^{-8}h^{10}a^4x^3y^2z(y+\alpha x)^2 + P_7\\
			2^{-8}h^{10}a^4x^3yz^2(y+\alpha x)^2 + P_7
		\end{bmatrix}
		\eqfs
	\end{equation*}
	With $2^5 \cdot 6 = 192$ and $2^8 \cdot 6 = 1536$, we have shown the theorem.
\end{proof}

Using \cref{thm:rk4l63:one}, we can give explicit expressions in terms of Fibonacci numbers for the degree of the polynomial obtained by applying $k$ RK4 steps to the L63 equations. 
Denote the Fibonacci sequence as $(F_k)_{k\in\N_0}$, where $F_{k+1} = F_k + F_{k-1}$ with $F_0 = 0$ and $F_1 = 1$. 

\begin{theorem}
	Let $k \in \N$.
	Let
	\begin{equation*}
		\begin{bmatrix}
			\zeta_1\\
			\zeta_2\\
			\zeta_3
		\end{bmatrix}
		:=
		\mathsf{RK4}\brOf{f_{\mathsf{L63}}, \begin{bmatrix}
				x\\
				y\\
				z
			\end{bmatrix}, h, k}
		\eqfs
	\end{equation*}
	Assume $a,h\neq 0$.
	Then
	\begin{equation*}
		\deg(\zeta_1) = F_{4k+1}
		\qquad\text{and}\qquad
		\deg(\zeta_2) = \deg(\zeta_3) = F_{4k+2}
		\eqfs
	\end{equation*}
\end{theorem}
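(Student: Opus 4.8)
The plan is to prove the two degree formulas simultaneously by induction on $k$, carrying a mild strengthening of the statement: after $k$ RK4 steps the triple $(\zeta_1,\zeta_2,\zeta_3)$ consists of nonzero polynomials with $1\le\deg\zeta_1<\deg\zeta_2=\deg\zeta_3$, and these degrees equal $F_{4k+1}$, $F_{4k+2}$, $F_{4k+2}$. The base case $k=1$ is exactly \cref{thm:rk4l63:one}: the displayed leading terms have nonzero coefficients (since $a,h\neq 0$) and degrees $5=F_5$, $8=F_6$, $8=F_6$, while the remainders $q_i$ have strictly smaller degree, so $\deg\zeta_1=F_5<F_6=\deg\zeta_2=\deg\zeta_3$. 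Note that this base case genuinely cannot be folded into the inductive step, because the seed $(x,y,z)$ has all three degrees equal — the degenerate configuration responsible for the $(y+\alpha x)$ factors — which is precisely why \cref{thm:rk4l63:one} needs its own computation.

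The one structural fact doing all the work is that $\Q(a,b,c,h)[x,y,z]$ is an integral domain, so $\deg(PQ)=\deg P+\deg Q$ exactly and a product of nonzero polynomials is nonzero. I would first isolate a degree lemma for the vector field: if $(A,B,C)$ are nonzero with $1\le\deg A<\deg B=\deg C$, then in $f_{\mathsf{L63}}(A,B,C)=(a(B-A),\,A(b-C)-B,\,AB-cC)$ each component has a \emph{unique} monomial of top degree, of degree $\deg B$, $\deg A+\deg C$, and $\deg A+\deg B$ respectively, so nothing cancels; here $\deg A\ge 1$ is precisely what forces $\deg A+\deg C>\deg B=\deg C$ and $\deg A+\deg B>\deg C$. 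The output is again a nonzero triple in the same invariant class, and the degree map on that class is simply $(d,D,D)\mapsto(D,\,d+D,\,d+D)$.

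Next I would push one RK4 step through this lemma. If $u_0$ has degrees $(p_1,p,p)$ with $1\le p_1<p$, the lemma gives $\deg w_1=(p,\,p_1+p,\,p_1+p)$; since each component of $w_1$ strictly exceeds the corresponding component of $u_0$, the addition in $s_1=u_0+\tfrac h2 w_1$ cannot disturb the top term, so $s_1$ has the same degrees and stays in the invariant class. Iterating, $\deg w_2=\deg s_2=(p_1+p,\,p_1+2p,\,p_1+2p)$, $\deg w_3=\deg s_3=(p_1+2p,\,2p_1+3p,\,2p_1+3p)$, and $\deg w_4=(2p_1+3p,\,3p_1+5p,\,3p_1+5p)$; since $\deg w_4$ strictly dominates the degrees of $u_0,w_1,w_2,w_3$ in every component, the unique top-degree monomial of $u_0+\tfrac h6(w_1+2w_2+2w_3+w_4)$ comes from $w_4$ and survives. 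So one RK4 step sends degrees $(p_1,p,p)$ to $(2p_1+3p,\,3p_1+5p,\,3p_1+5p)$, still in the invariant class. Substituting the inductive values $p_1=F_{4(k-1)+1}=F_{4k-3}$, $p=F_{4(k-1)+2}=F_{4k-2}$ and using the elementary identities $3F_n+2F_{n-1}=F_{n+3}$ and $5F_n+3F_{n-1}=F_{n+4}$ with $n=4k-2$ turns this into $(F_{4k+1},F_{4k+2},F_{4k+2})$, which closes the induction.

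The only real difficulty is the bookkeeping: one must check at each of the seven intermediate evaluations that the claimed top-degree term is unique and strictly dominates whatever is added to it, so that it cannot cancel. The integral-domain remark covers "a nonzero product never drops degree or vanishes," and the strict componentwise inequalities $\deg w_{i+1}>\deg w_i$ — which follow from $p_1\ge 1$ — cover the additions $u_0+c\,w_i$ as well as the final linear combination. Everything else (the Fibonacci arithmetic, the verification that the invariant $1\le d<D$ is reproduced at each stage) is routine.
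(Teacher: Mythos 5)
Your proposal is correct, and its induction step takes a genuinely different route from the paper's. Both arguments induct on $k$, invoke \cref{thm:rk4l63:one} for the base case $k=1$ (rightly so: the seed $(x,y,z)$ has all degrees equal, which is exactly the degenerate configuration that produces the $(y+\alpha x)$ factors), and close with the same Fibonacci identities $2F_{n-1}+3F_n=F_{n+3}$ and $3F_{n-1}+5F_n=F_{n+4}$. The paper, however, performs the induction step by substituting $(\zeta_{k-1,1},\zeta_{k-1,2},\zeta_{k-1,3})$ into the explicit leading-term formula of \cref{thm:rk4l63:one}, so its whole step rests on the assertion that the remainders $\tilde q_i$ stay of lower degree after this substitution; because the substitution is weighted (a monomial of total degree $7$ in $q_2$ consisting only of $y$'s and $z$'s would acquire degree $7F_{4k-2}$, which exceeds $3F_{4k-3}+5F_{4k-2}$ already for $k=2$), that assertion really depends on the monomial structure of the $q_i$ and not merely on the total-degree bounds recorded in \cref{thm:rk4l63:one}. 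Your version replaces this with a degree lemma for $f_{\mathsf{L63}}$ on triples with $1\le\deg A<\deg B=\deg C$ and then tracks the degree vector $(d,D,D)\mapsto(D,d+D,d+D)\mapsto\dots\mapsto(2d+3D,3d+5D,3d+5D)$ through $w_1,\dots,w_4$ and the final linear combination, using only that $\R[x,y,z]$ is an integral domain and that each sum is dominated by a single summand; this is slightly longer but self-contained and sidesteps the weighted-substitution subtlety entirely. The one imprecision is your claim that each component of $f_{\mathsf{L63}}(A,B,C)$ has a \emph{unique top-degree monomial}: the top part of $a(B-A)$, for instance, is $a$ times the full top homogeneous part of $B$ and need not be a single monomial. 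What your argument actually needs, and uses, is that the top-degree part arises from a single product of nonzero polynomials and is therefore nonzero; with that rewording the proof is complete.
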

\begin{proof}
	First note that, by \cref{thm:rk4l63:one}, one RK4 step has the form
	\begin{equation*}
		\mathsf{RK4}\brOf{f_{\mathsf{L63}}, \begin{bmatrix}
				x\\
				y\\
				z
			\end{bmatrix}, h, 1} = 
		\begin{bmatrix}
			\gamma_1 x^2 y z^2 + q_1(x,y,z)\\
			\gamma_2 x^3 y^4 z + q_2(x,y,z)\\
			\gamma_3 x^3 y^3 z^2 + q_3(x,y,z)
		\end{bmatrix}
	\end{equation*}
	with $\gamma_1,\gamma_2, \gamma_3 \in\R$ and polynomials $q_1$, $q_2$, $q_3$ that cannot cancel their respective first term as they contain different monomials. As we assume $h,a\neq 0$, we have $\gamma_1,\gamma_2, \gamma_3\neq0$.
	Now, we prove the statement of the theorem by induction over $k$:
	The induction base with $k=1$ follows directly from \cref{thm:rk4l63:one} with the arguments given above:
	\begin{align*}
		\deg(\gamma_1 x^2 y   z^2 + q_1(x,y,z))  &= 5 = F_{5}\eqcm\\
		\deg(\gamma_2 x^3 y^4 z   + q_2(x,y,z))  &= 8 = F_{6}\eqcm\\
		\deg(\gamma_3 x^3 y^3 z^2 + q_3(x,y,z))  &= 8 = F_{6}\eqfs\\
	\end{align*} 
	For the induction step, let
	\begin{equation*}
		\begin{bmatrix}
			\zeta_{k,1}\\
			\zeta_{k,2}\\
			\zeta_{k,3}
		\end{bmatrix}
		:=
		\mathsf{RK4}\brOf{f_{\mathsf{L63}}, \begin{bmatrix}
				x\\
				y\\
				z
			\end{bmatrix}, h, k}
	\end{equation*}
	and assume $\deg(\zeta_{k-1,1}) = F_{4(k-1)+1}$ and $\deg(\zeta_{k-1,2}) = \deg(\zeta_{k-1,3}) = F_{4(k-1)+2}$.
	Applying \cref{thm:rk4l63:one} to the $k$-th RK4 solver step yields
	\begin{equation*}
		\mathsf{RK4}\brOf{f_{\mathsf{L63}}, \begin{bmatrix}
				x\\
				y\\
				z
			\end{bmatrix}, h, k} = 
		\mathsf{RK4}\brOf{f_{\mathsf{L63}}, \begin{bmatrix}\zeta_{k-1,1}\\\zeta_{k-1,2}\\\zeta_{k-1,3}\end{bmatrix}, h, 1} = 
		\begin{bmatrix}
			\gamma_1 \zeta_{k-1,1}^2\zeta_{k-1,2}\zeta_{k-1,3}^2 + \tilde q_1(x,y,z)\\
			\gamma_2 \zeta_{k-1,1}^3\zeta_{k-1,2}^4\zeta_{k-1,3} + \tilde q_2(x,y,z)\\
			\gamma_3 \zeta_{k-1,1}^3\zeta_{k-1,2}^3\zeta_{k-1,3}^2 + \tilde q_3(x,y,z)
		\end{bmatrix}
		\eqfs
	\end{equation*}
	The polynomials $\tilde q_1, \tilde q_2, \tilde q_3$ are of lower degree than their respective $\zeta$-terms and, thus, cannot cancel the leading monomials in the $\zeta$-terms.
	Therefore, using the induction hypothesis,
	\begin{align*}
		\deg(\zeta_{k,1}) 
		&= 
		2\deg(\zeta_{k-1,1}) + \deg(\zeta_{k-1,2}) + 2 \deg(\zeta_{k-1,3})
		\\&=
		2  F_{4(k-1)+1} + 3  F_{4(k-1)+2}
		\\&=
		2  F_{4(k-1)+3} +  F_{4(k-1)+2}
		\\&=
		F_{4(k-1)+3} +  F_{4(k-1)+4}
		\\&=
		F_{4k+1}
	\end{align*}
	and
	\begin{align*}
		\deg(\zeta_{k,2}) 
		&= 
		3\deg(\zeta_{k-1,1}) + 4\deg(\zeta_{k-1,2}) +  \deg(\zeta_{k-1,3})
		\\&=
		3  F_{4(k-1)+1} + 5 F_{4(k-1)+2}
		\\&=
		3  F_{4(k-1)+3} + 2 F_{4(k-1)+2}
		\\&=
		F_{4(k-1)+3} + 2 F_{4(k-1)+4}
		\\&=
		F_{4(k-1)+4} + F_{4(k-1)+5}
		\\&=
		F_{4k+2}
		\eqfs
	\end{align*}
	The calculation for $\deg(\zeta_{k,3}) = F_{4k+2}$ is almost the same as for $\deg(\zeta_{k,2})$.
\end{proof}
\clearpage
\section{Extreme Precision ODE Solvers}\label{app:sec:xprec}

In this section, we compare the forecasting accuracy of PolyProp against standard numerical ODE solvers on the Lorenz-63 (L63) system. Since L63 lacks an analytical solution, we must rely on high-precision numerical approximations to serve as the ground truth. The following ODE solver configurations are considered:
\begin{itemize}
	\item \textbf{RK4-m:} RK4 with a time step of $\stepsize_0 = 2^{-10}$ using 512-bit multiprecision arithmetic (same as in the main text).
	\item \textbf{RK4-y:} RK4 using 512-bit arithmetic with a finer time step of $\stepsize_{\mathsf{y}} = 2^{-24}$.
	\item \textbf{TI-x:} A 28th-order Taylor Integration solver using 256-bit arithmetic with an absolute tolerance of $10^{-60}$. This is implemented in the Julia programming language using the \texttt{TaylorIntegration.jl} package (\url{https://perezhz.github.io/TaylorIntegration.jl}).
\end{itemize}

We treat the numerical solutions of L63 from these solvers as ``ground truth'' trajectories. For the comparison, we initialize the target models (PolyProp or a numerical ODE solver) with initial conditions rounded to 64-bit precision. Consequently, the divergence between the ground truth and the forecast stems from two sources:
\begin{enumerate}
	\item \textbf{Initial Condition Error:} The deviation caused by rounding the initial state to 64-bit.
	\item \textbf{Dynamics Error:} The deviation caused by discrepancies in the state update algorithm.
\end{enumerate}

For PolyProp executed at 512-bit precision (denoted PP-m), we use polynomial degree $p=15$, $n=2^{15}$ observations, and observation time step $\stepsize = 2^{-5}$. We employ the Valid Prediction Time (VPT) with a threshold of $0.5$, i.e., $\vpt{0.5}$, as our metric.
The results of the comparison are detailed in \cref{tbl:xprec}.

\begin{table}[b!]
	\centering
	\input{tbl/L63_xprec_table.tex}
	\caption{\textbf{Comparison of Valid Prediction Times (VPT) for ODE Solvers and PolyProp on L63.} Values represent the mean VPT (specifically $\vpt{0.5}$) given in units of Lyapunov time, averaged over the specified number of repetitions, with 95\% confidence intervals.}\label{tbl:xprec}
\end{table}

By comparing the high-precision solvers TI-x and RK4-y each against themselves (with rounded initial conditions), we observe a mean VPT of $\approx 35.8$ Lyapunov times. 
As the dynamics in these comparisons are identical, only Initial Condition Error is present. 
Comparing TI-x and RK4-y yields a similar VPT of $\approx 35.8$ Lyapunov times. Thus, we can assume that the RK4-y dynamics are similar enough to the TI-x dynamics that again the Initial Condition Error is dominating. 

When comparing the highest precision solutions (TI-x and RK4-y) to RK4-m, the VPT drops substantially to $\approx 21.6$ Lyapunov times, which is also much less than the $\approx 34.6$ Lyapunov times VPT of RK4-m against itself (with rounded initial conditions). This indicates that the Dynamics Error dominates in the comparison of RK4-m to TI-x and RK4-y.

In all cases PolyProp, when trained on data from the ground truth system rounded to 64-bit, achieves at least the Initial Condition Error limit (up to some statistical uncertainty). This demonstrates the high precision and adaptability of PolyProp.
\clearpage
\section{Noisy Measurements}\label{app:sec:noise}

This study primarily focuses on noise-free data, where we have shown that PolyProp trained on data with limited precision achieves the same accuracy as a numerical ODE solver initialized with values of the same precision. Interpreting finite precision as a form of observational noise, we extend this finding to stochastic noise by replacing the rounding operation with additive Gaussian noise. Specifically, we utilize training data $y_i = u(t_i) + \xi_i$ on L63, where $\xi_i$ are independent and identically distributed centered Gaussian random vectors with covariance matrix $\sigma_{\ms{noise}}^2 I_3$ with the $3\times 3$ identity matrix $I_3$. The results, presented in \cref{fig:bbest:L63:noise}, show that while PolyProp's performance deteriorates with increasing noise levels, it remains equivalent---given sufficient data---to that of an ODE solver started from the same noisy initial conditions. This means that the Initial Condition Error that the numerical solver shares with PolyProp dominates PolyProp's Dynamics Error so that both methods yield an indistinguishable total forecast error, see also Supplementary \cref{app:sec:xprec}.

\begin{figure}[b!]
	\includegraphics[width=\textwidth]{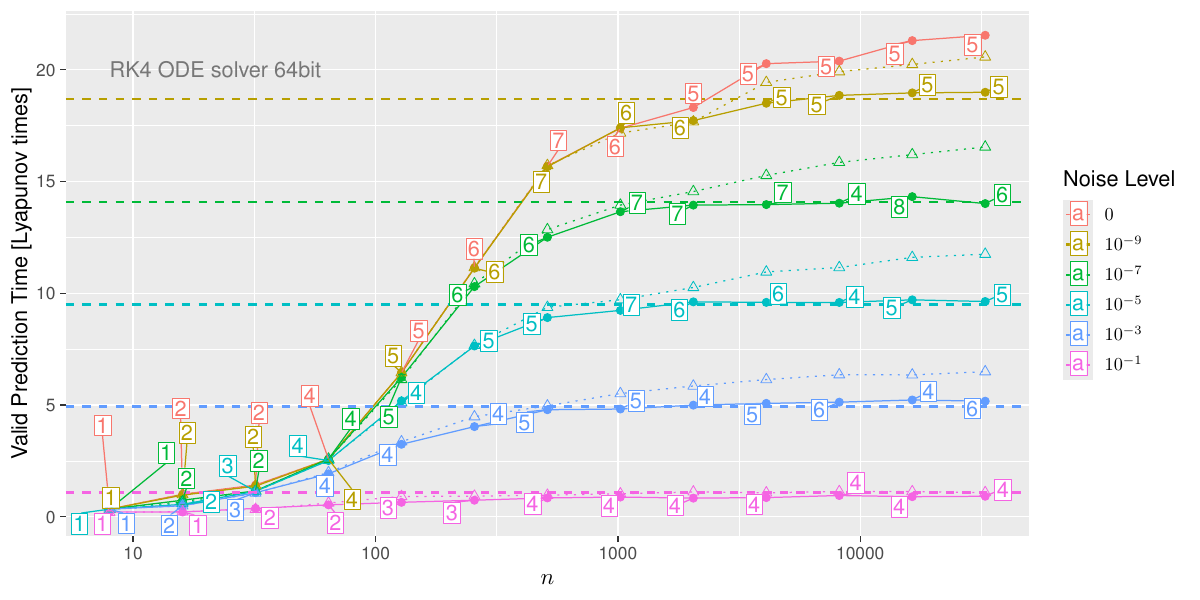}
	\caption{\textbf{Valid prediction time for L63 with best polynomial degree and best time step for different measurement noise levels.} We generate data from the L63 system using an RK4 ODE solver with time step $\stepsize_0 = 2^{-10}$. We use the setting of double-precision system, double-precision data, and double-precision method with full normalization and polynomial degree $p \in\{1,\dots, 8\}$. We add centered Gaussian noise with different variances $\sigma_{\ms{noise}}^2$ to the training data. The \textit{noise level} indicated by color in the plot is the ratio of the standard deviation of the noise and the system,  $\sigma_{\ms{noise}} / \sigma_{\ms{L63}}$, where the standard deviation of the system is calculated as described in \cref{ssec:problem} of the main text yielding $\sigma_{\ms{L63}} \approx 14.78$. The test mode is sequential with the forecast starting either from noisy initial conditions with the same noise level as for training (large dots with solid thin lines for PolyProp; dashed horizontal lines for the ODE solver) or from noise-free initial conditions (dotted lines with triangle marks---only for PolyProp). 
	For each noise level and each value of $n$, we take the $\vpt{0.5}$ (in Lyapunov times) averaged over 100 repetitions and maximize over the polynomial degree $p$ of PolyProp (given in label boxes) and the data time step $\stepsize \in \{2^{-9}, 2^{-8}, \dots, 2^{-3}\}$ (not shown). For comparison, the horizontal dashed lines mark the VPT value, obtained as the mean of $10^4$ repetitions, of the RK4 ODE solver used for generating the ground truth but started at the noised initial conditions.}
	\label{fig:bbest:L63:noise}
\end{figure}

\clearpage
\section{L96 Summary Plots}\label{app:sec:L96:best}

\begin{figure}[b!]
	\includegraphics[width=\textwidth]{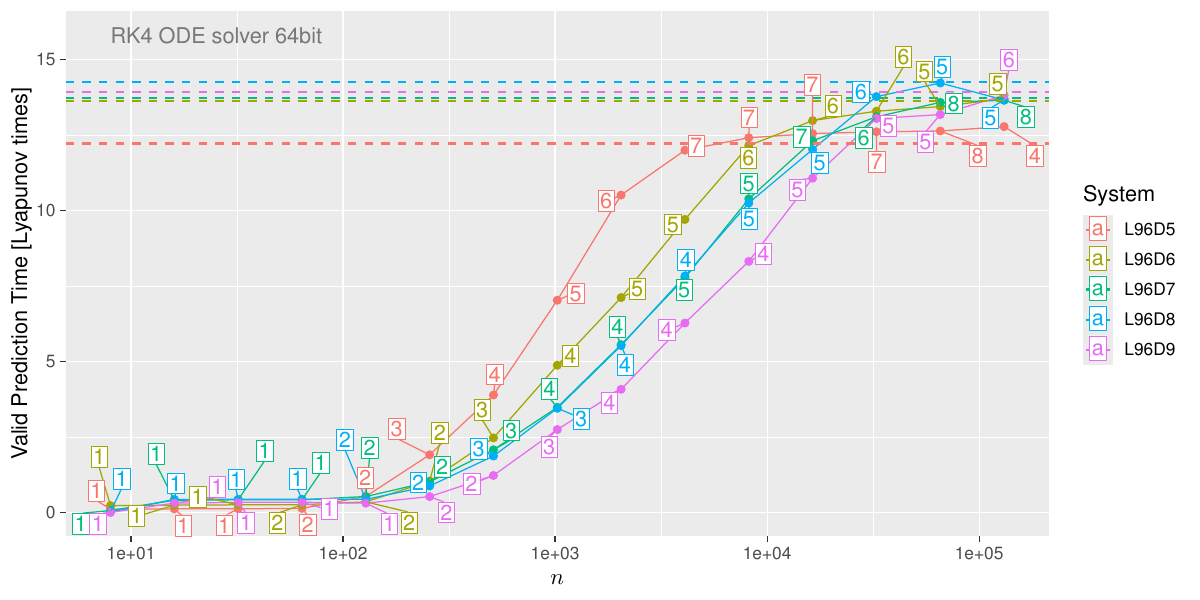}
	\caption{\textbf{Valid prediction time for the L96 model with single precision solver optimized over polynomial degree and time step.} We use a single-precision RK4 ODE solver (32-bit), double-precision data storage (64-bit), and a double-precision polynomial propagator. The system dimension of L96 is indicated by color. Label boxes within the plot denote the degree of the optimal polynomial for each case. The results are also optimized over the time step $\stepsize\geq 2\stepsize_0$ and only the best are depicted. Colored dashed horizontal lines show the performance of ODE solvers with access to the system's governing equations for the respective dimension in the same precision setting. The $\vpt{0.5}$ values on the vertical axis are given in units of Lyapunov time.}
	\label{fig:L96:best:sdd}
\end{figure}

\begin{figure}[b!]
	\includegraphics[width=\textwidth]{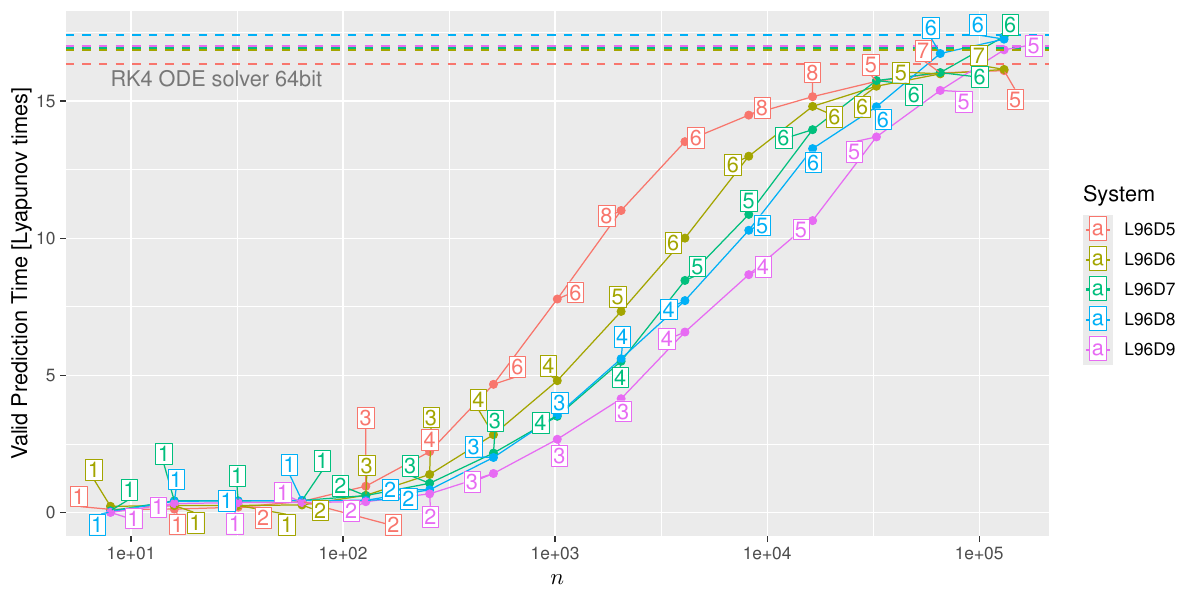}
	\caption{\textbf{Valid prediction time for the L96 model with double precision solver rounded to single precision optimized over polynomial degree and time step.}  We use a double-precision RK4 ODE solver (64-bit), single-precision data storage (32-bit), and a double-precision polynomial propagator. Otherwise, the plot is created with the same setup as in \cref{fig:L96:best:sdd}.}
	\label{fig:L96:best:dsd}
\end{figure}
\clearpage
\section{Results on Further Systems}\label{app:sec:additional}

We perform the experiments listed in \cref{tbl:add:experis} on the additional systems in \cref{tbl:add:systems} with Lyapunov times given in \cref{tbl:add:lyapunov}. A state space view of one trajectory for each system is shown in \cref{fig:add:attr}. The results, see \cref{fig:add:n15}, show that we can achieve machine precision on these systems, supporting the claim that the results in the main text focusing on L63, L96, and TCSA generalize.

\begin{table}[b!]
    \begin{center}
        \fontsize{8.0pt}{10pt}\selectfont
        \fontfamily{phv}\selectfont
        \renewcommand{\arraystretch}{1.05}
        \setlength{\tabcolsep}{0.5em}
        \rowcolors{3}{gray!20}{white}
        \begin{tabular}{llcl}
            \hline
            Shorthand & System & $d$ & Reference \\
            \hline
            AIZ   & Aizawa                             & $3$   & \cite{Sprott2003} \\
            CHEN  & Chen                               & $3$   & \cite{Chen1999} \\
            HALV  & Halvorsen                          & $3$   & \cite{Sprott2003} \\
            HCHEN & Hyperchaotic Chen                  & $4$   & \cite{Li2005} \\
            HROSS & Hyperchaotic Rössler               & $4$   & \cite{Rossler1979} \\
            LU    & Lü                                 & $3$   & \cite{Lu2002} \\
            ROSS  & Rössler                            & $3$   & \cite{Rossler1976} \\
            \hline
        \end{tabular}
    \end{center}
    \caption{\textbf{Additional Dynamical Systems.}}
    \label{tbl:add:systems}
\end{table}

\begin{table}[b!]
    \input{tbl/Lyapunov_table_additional.tex}
    \caption{\textbf{Estimates for the largest Lyapunov exponent for additional systems.} See \cref{app:sec:lyapunov} for more details.}
    \label{tbl:add:lyapunov}
\end{table}

\begin{table}[b!]
    \begin{center}
        \fontsize{8.0pt}{10pt}\selectfont
        \fontfamily{phv}\selectfont
        \renewcommand{\arraystretch}{1.05}
        \setlength{\tabcolsep}{0.5em}
        \rowcolors{3}{gray!20}{white}
        \begin{tabular}{lrrrccccccc}
            \toprule
            System & \multicolumn{3}{c}{Precision} & $\stepsize_0$ & $\stepsize$ & $n$ & normalization & $p$ & test mode & Reps.\\
            \cmidrule(lr){2-4}
            & system & data & method & & & & & & & \\
            \midrule
            AIZ & 32 bit & 64 bit & 64 bit & $2^{-9}$ & $2^3 \cdot \stepsize_0$ & $2^{15}$ & full & $1, \dots, 8$ & sequential & 100\\
            CHEN & 32 bit & 64 bit & 64 bit & $2^{-12}$ & $2^3 \cdot \stepsize_0$ & $2^{15}$ & full & $1, \dots, 8$ & sequential & 100\\
            HALV & 32 bit & 64 bit & 64 bit & $2^{-10}$ & $2^3 \cdot \stepsize_0$ & $2^{15}$ & full & $1, \dots, 8$ & sequential & 100\\
            HCHEN & 32 bit & 64 bit & 64 bit & $2^{-12}$ & $2^3 \cdot \stepsize_0$ & $2^{15}$ & full & $1, \dots, 8$ & sequential & 100\\
            HROSS & 32 bit & 64 bit & 64 bit & $2^{-8}$ & $2^3 \cdot \stepsize_0$ & $2^{15}$ & full & $1, \dots, 8$ & sequential & 100\\
            LU & 32 bit & 64 bit & 64 bit & $2^{-12}$ & $2^3 \cdot \stepsize_0$ & $2^{15}$ & full & $1, \dots, 8$ & sequential & 100\\
            ROSS & 32 bit & 64 bit & 64 bit & $2^{-7}$ & $2^3 \cdot \stepsize_0$ & $2^{15}$ & full & $1, \dots, 8$ & sequential & 100\\
            AIZ & 64 bit & 32 bit & 64 bit & $2^{-9}$ & $2^3 \cdot \stepsize_0$ & $2^{15}$ & full & $1, \dots, 8$ & sequential & 100\\
            CHEN & 64 bit & 32 bit & 64 bit & $2^{-12}$ & $2^3 \cdot \stepsize_0$ & $2^{15}$ & full & $1, \dots, 8$ & sequential & 100\\
            HALV & 64 bit & 32 bit & 64 bit & $2^{-10}$ & $2^3 \cdot \stepsize_0$ & $2^{15}$ & full & $1, \dots, 8$ & sequential & 100\\
            HCHEN & 64 bit & 32 bit & 64 bit & $2^{-12}$ & $2^3 \cdot \stepsize_0$ & $2^{15}$ & full & $1, \dots, 8$ & sequential & 100\\
            HROSS & 64 bit & 32 bit & 64 bit & $2^{-8}$ & $2^3 \cdot \stepsize_0$ & $2^{15}$ & full & $1, \dots, 8$ & sequential & 100\\
            LU & 64 bit & 32 bit & 64 bit & $2^{-12}$ & $2^3 \cdot \stepsize_0$ & $2^{15}$ & full & $1, \dots, 8$ & sequential & 100\\
            ROSS & 64 bit & 32 bit & 64 bit & $2^{-7}$ & $2^3 \cdot \stepsize_0$ & $2^{15}$ & full & $1, \dots, 8$ & sequential & 100\\
            \bottomrule
        \end{tabular}
    \end{center}
    \caption{\textbf{Settings of additional noise-free experiments.} See \cref{tbl:experiments} for a description.}
    \label{tbl:add:experis}
\end{table}

\begin{figure}%
    \centering%
    \begin{minipage}{0.45\textwidth}
        \includegraphics[width=0.95\textwidth]{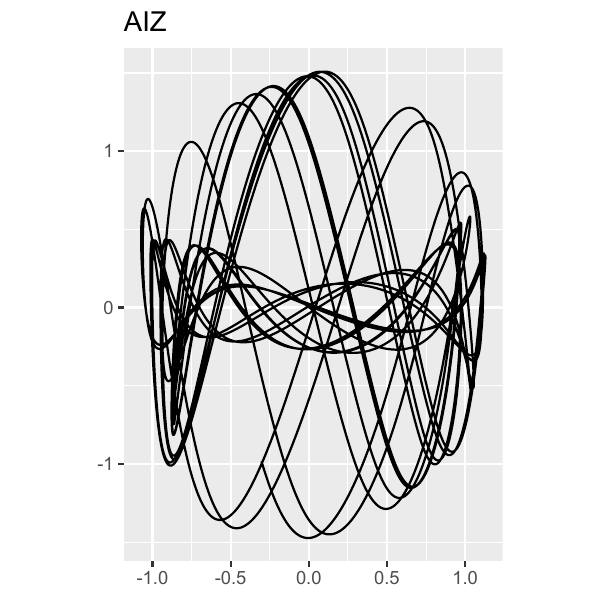}\\%
        \includegraphics[width=0.95\textwidth]{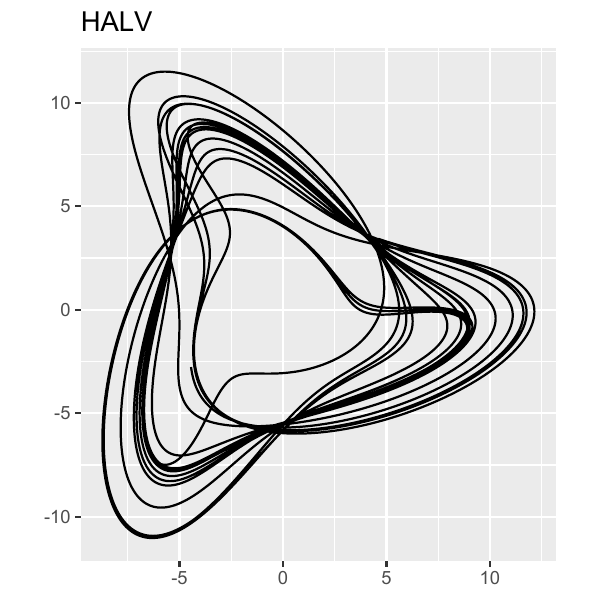}\\%
        \includegraphics[width=0.95\textwidth]{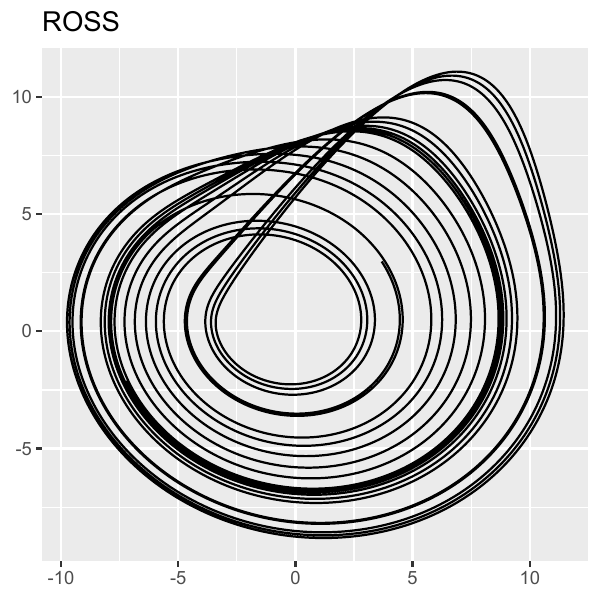}%
    \end{minipage}%
    \begin{minipage}{0.45\textwidth}
        \vspace*{-0.5cm}
        
        \includegraphics[width=0.95\textwidth]{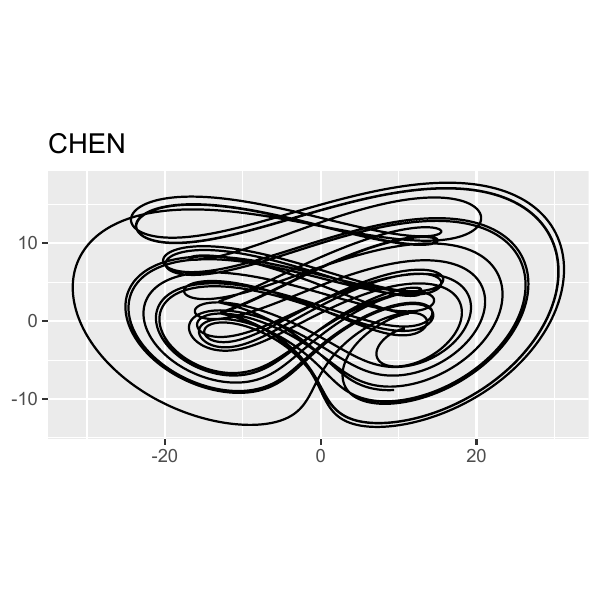}\\%
        
        \vspace*{-2.5cm}
        
        \includegraphics[width=0.95\textwidth]{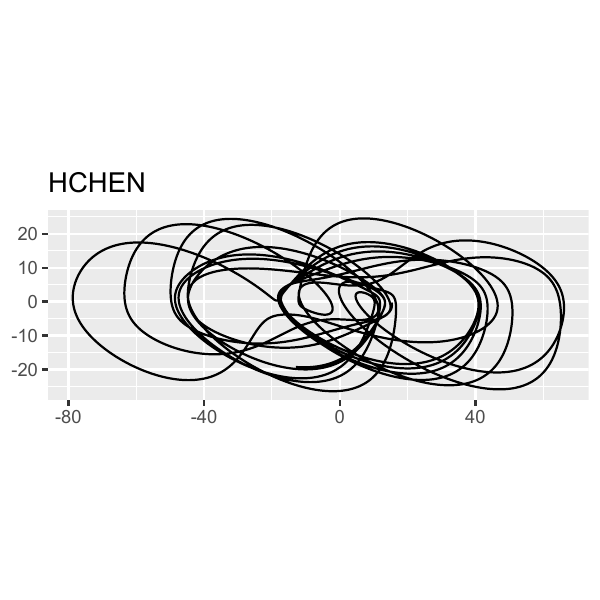}\\%
        
        \vspace*{-2.5cm}
        
        \includegraphics[width=0.95\textwidth]{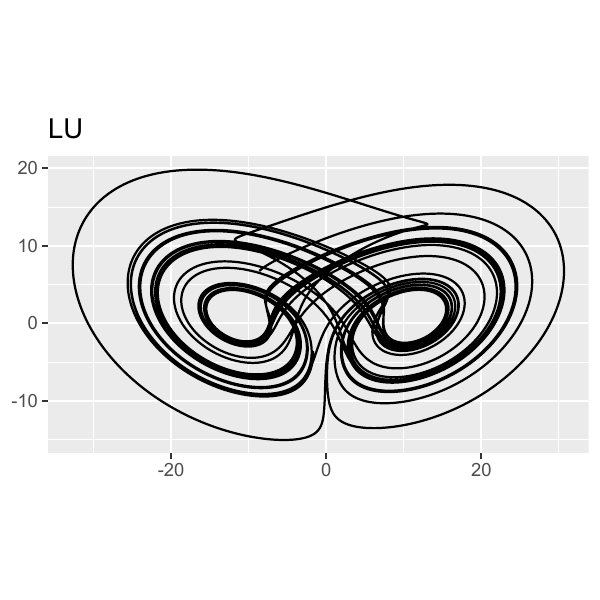}\\%
        
        \vspace*{-1.5cm}
        
        \includegraphics[width=0.95\textwidth]{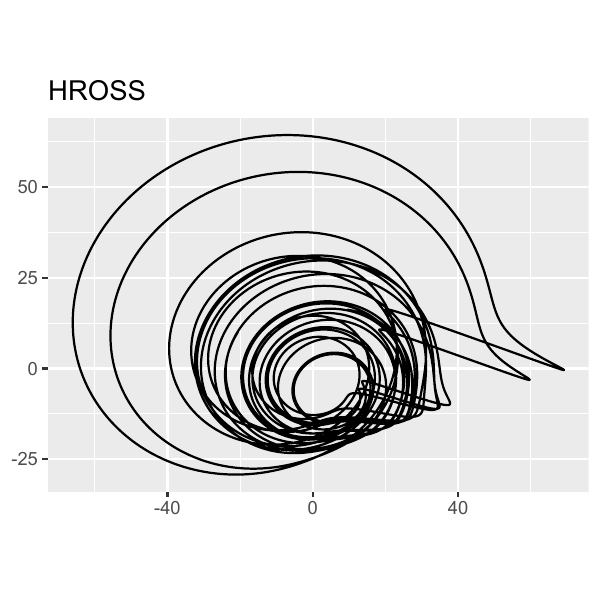}%
    \end{minipage}%
    \caption{\textbf{State space view of the systems in the additional experiments.}}
    \label{fig:add:attr}
\end{figure}

\begin{figure}
    \centering
    \includegraphics[width=0.95\textwidth]{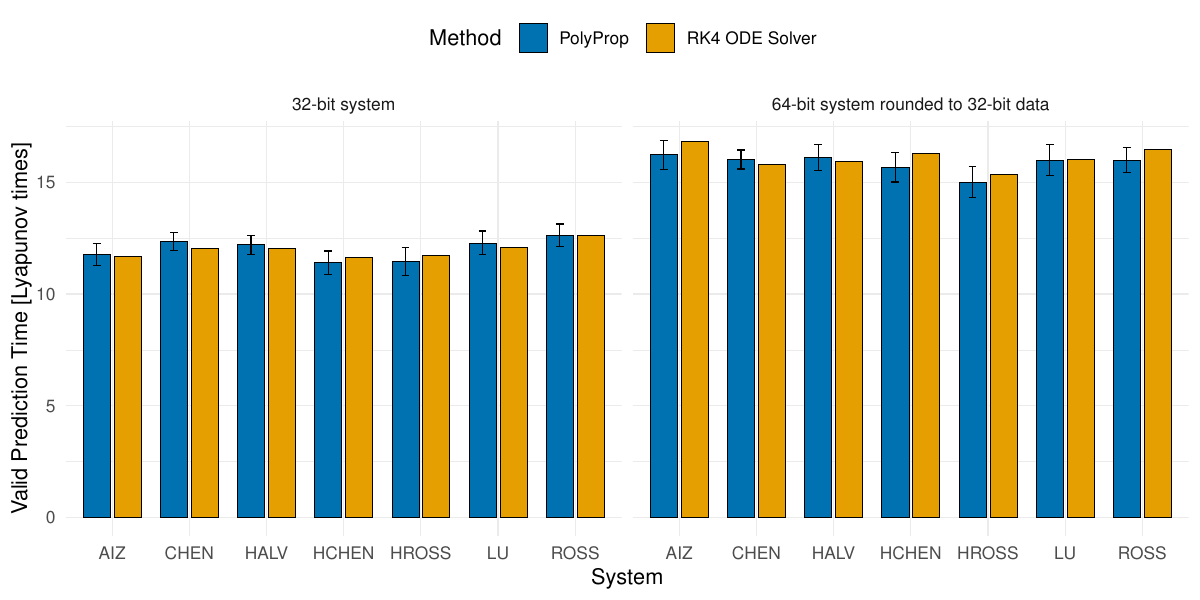}
    \caption{\textbf{Valid prediction time (VPT) for further chaotic dynamical systems using PolyProp and RK4 ODE solvers.} For the two precision settings and different chaotic dynamical systems, the VPT (averaged over 100 repetitions) of PolyProp trained on $n=2^{15}$ observations and time step $2^3 \cdot \stepsize_0$ is reported. The mean and 95\% confidence interval for the optimal polynomial degree $p \in \{1, \dots, 8\}$ are shown. These are compared to the mean VPT from $10^4$ trajectories computed with a 64-bit RK4 solver initialized from initial conditions of the respective data precision, randomly sampled from the system's attractor. The RK4 mean VPT values have 95\% confidence intervals all smaller than 0.3 Lyapunov times and are not shown. All VPT values are expressed in units of the system's Lyapunov time.}
    \label{fig:add:n15}
\end{figure}

\clearpage
\section{Lyapunov Exponent}\label{app:sec:lyapunov}

We convert system time to Lyapunov time by multiplying the time values by the estimated largest Lyapunov exponent of the respective system. To estimate the largest Lyapunov exponent for TCSA with $b=0.208$, for L63 with default parameters, and for L96 with constant forcing $F=8$ in dimensions $d=5,6,7,8,9$, we follow the method of \citet{Benettin1980a, Benettin1980b}.

Estimates are performed in double precision, using the same ODE solver time steps as those used to generate the train and test data: $\stepsize_0 = 2^{-10}$ for L63 and L96, and $\stepsize_0 = 2^{-6}$ for TCSA. To measure error growth, we maintain a perturbed state trajectory at a fixed separation of $10^{-8}$ from the reference trajectory and record the error evolution by an RK4 solver. Each estimate consists of $10^5$ integration steps and is additionally averaged over $10^5$ repetitions starting from randomly chosen states on the system's attractor.

Using this method, we obtain a largest Lyapunov exponent of $0.90642$ for L63, with a 95\% confidence interval of $[0.90628, 0.90655]$, which is close to values reported in the literature (e.g., $0.905630$ in \citet{Viswanath04}). For the estimates in other systems, see \cref{tbl:lyapunov}.

\begin{table}[b!]
	\input{tbl/Lyapunov_table.tex}
	\caption{\textbf{Estimates for the largest Lyapunov exponent.} The table shows the mean estimate and the respective 95\% confidence interval for the largest Lyapunov exponent $\lambda_{\mathsf{max}}$.}
	\label{tbl:lyapunov}
\end{table}

\clearpage
\section{Valid Prediction Time Reference}\label{app:sec:vptref}
For reference, we estimate the accuracy, in terms of $\vpt{0.5}$, of the RK4 solver starting from differently rounded initial conditions and using calculations of different precision internally. Averaging over $10^4$ repetitions with randomly chosen initial conditions on the system's attractor, we arrive at the values given in \cref{tbl:vpt:solver}.

The reference values in \cref{tbl:vpt:solver} can also be approximated analytically. Assuming an initial error $\epsilon_0$ that grows by a factor of Euler's number per Lyapunov time, the valid prediction time for a threshold error $\varepsilon$ can be estimated as $\vpt{\varepsilon} = \log(\varepsilon \sigma / \epsilon_0)$, where $\sigma$ is the standard deviation of the system. In the L63 system, state values range up to about $48$ in absolute value. The spacing between adjacent double-precision floating-point numbers in the range $2^5 = 32$ to $2^6 = 64$ is $2^{-47}$. Using $\varepsilon = 0.5$, $\sigma \approx 14.8$ (an empirical estimate of the L63 standard deviation), and $\epsilon_0 = 2^{-48}$ (the average rounding error in the $[2^5, 2^6]$ range for double precision), we compute $\vpt{0.5} \approx \log(0.5 \cdot 14.8 \cdot 2^{48}) \approx 35.3$. This closely matches the empirical value of $34.7$ obtained using multi-precision RK4 solvers with initial conditions rounded to double precision (\cref{tbl:vpt:solver}). The same approach for rounding to single precision yields $\vpt{0.5} \approx \log(0.5 \cdot 14.8 \cdot 2^{19}) \approx 15.2$, which is close to the empirical value $15.6$. In the TCSA system, state values range up to about $4$ in absolute value. The spacing between adjacent double-precision floating-point numbers in the range $2^1 = 2$ to $2^2 = 4$ is $2^{-51}$. Using $\varepsilon = 0.5$, $\sigma \approx 2.1$ (an empirical estimate of the TCSA standard deviation), and $\epsilon_0 = 2^{-52}$ (the average rounding error in the $[2^1, 2^2]$ range for double precision), we compute $\vpt{0.5} \approx \log(0.5 \cdot 2.1 \cdot 2^{52}) \approx 36.1$. The empirically determined value is $37.4$.

\begin{table}
	\begin{center}
		\input{tbl/Solver_VPT_table.tex}
	\end{center}
	\caption{\textbf{Valid Prediction Times achieved by RK4 ODE Solvers.} We generate ground truth data using the same procedure as in \cref{tbl:experiments}. Starting from initial conditions specified at the precision level indicated in the \textit{Precision-data} column, we use an RK4 ODE solver with time step $\stepsize_0$---identical to that used for the ground truth generation---to produce a forecast. This RK4 solver uses the precision level indicated in the column \textit{Precision-method}. The forecast is evaluated using the $\vpt{0.5}$ metric.}\label{tbl:vpt:solver}
\end{table}
\clearpage
\section{Computational Resources}\label{app:sec:compute}
All simulations were conducted on AMD EPYC 9354 CPUs. The average total computation time---including both training (polynomial feature generation and least squares fitting) and inference (forecasting time series)---is shown in \cref{fig:compute:s,fig:compute:d,fig:compute:m,fig:compute:L96}. For the L63 and L96 systems, forecasts span 50 system time units, except in the case of multi-precision data storage, where 500 time units are used. For TCSA, forecasts cover 5000 system time units.

Almost all experiments run on 6\,GByte of DDR5 memory. A few configurations are allocated more memory; these allocations are upper bounds rather than minimal requirements. Specifically, L63 with 512-bit input data and a 512-bit PolyProp forecaster is allocated 10\,GByte. L96 in dimensions $8$ and above, as well as dimension $7$ with $2^{16}$ or more observations is allocated more memory---in the extreme case L96D9 with $2^{17}$ observations is allocated 50\,GByte. The high-degree TCSA experiments are allocated more memory---the extreme case of degree $25$ and $2^{17}$ observations is allocated 50\,GByte.

Multi-precision computations (using MPLAPACK and MPFR) are substantially more expensive, taking up to one hour for configurations with $d=3$, $p \leq 16$, and $n \leq 2^{15}$. In comparison, single- and double-precision computations using OpenBLAS and Armadillo typically complete in under one second for the same parameter ranges.

The larger variability observed in some single- and double-precision compute times is likely due to Armadillo internally attempting to approximate a solution when the matrix involved in the least squares problem is singular. In such cases, additional computations are triggered, leading to occasional increases in runtime.
 
\begin{figure}[b!]
	\includegraphics[width=\textwidth]{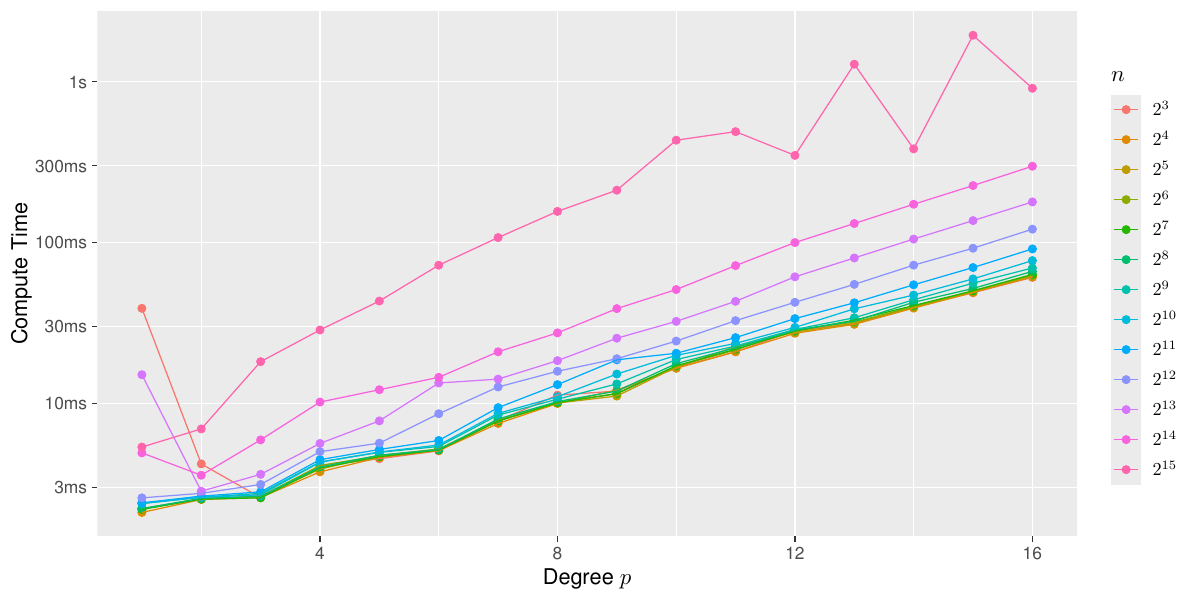}
	\caption{Average compute time for single-precision polynomial propagator estimation in dimension $d=3$.}
	\label{fig:compute:s}
\end{figure}

\begin{figure}[b!]
	\includegraphics[width=\textwidth]{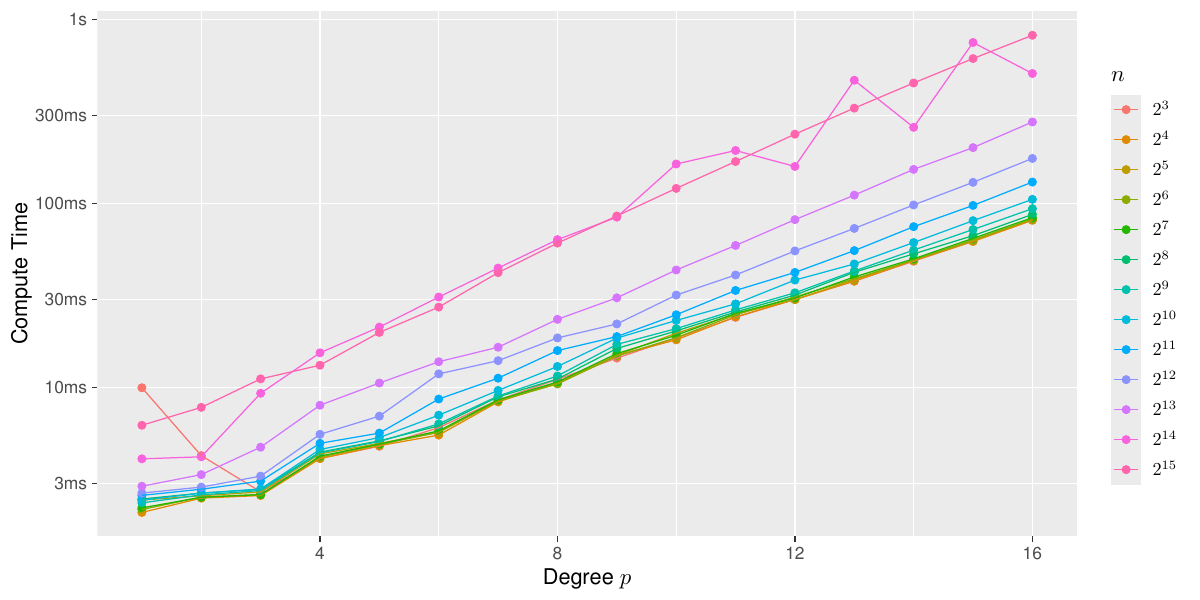}
	\caption{Average compute time for double-precision polynomial propagator estimation in dimension $d=3$.}
	\label{fig:compute:d}
\end{figure}

\begin{figure}[b!]
	\includegraphics[width=\textwidth]{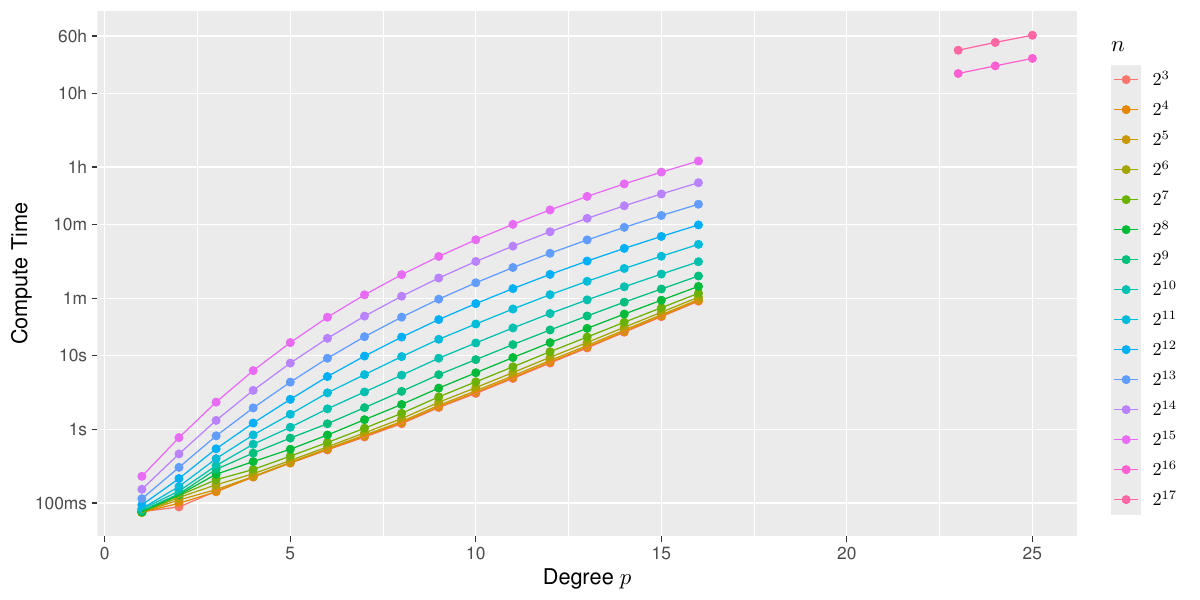}
	\caption{Average compute time for multi-precision polynomial propagator estimation in dimension $d=3$.}
	\label{fig:compute:m}
\end{figure}

\begin{figure}[b!]
	\includegraphics[width=\textwidth]{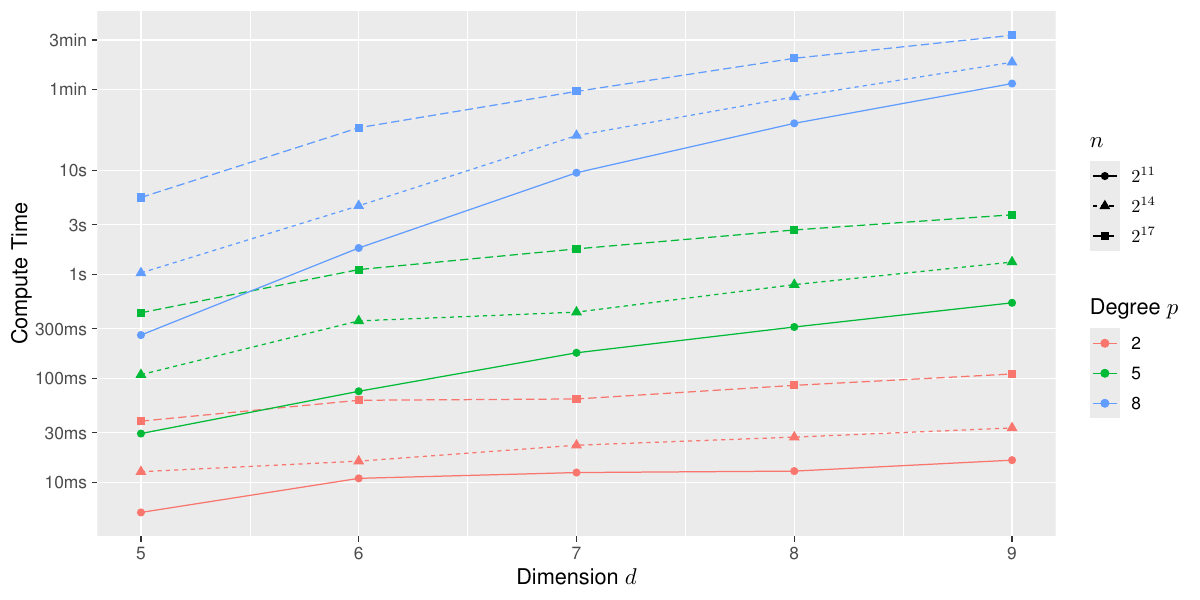}
	\caption{Average compute time for double-precision polynomial propagator estimation on L96.}
	\label{fig:compute:L96}
\end{figure}
\clearpage
\section{Further Details on the Results}\label{app:sec:details}
In the following subsections, we present the results for each experiment listed in \cref{tbl:experiments,tbl:experiments:noisy}. The title of each subsection describes the experimental setting. First, the system is specified. This is followed by a three-letter code indicating the numerical precision used for the ODE solver, the data, and the forecast method, respectively: `s' denotes single (32-bit) precision, `d' double (64-bit) precision, and `m' multi (512-bit) precision. Next, the applied data normalization method is indicated (see description in \cref{ssec:numerics} of the main text). Finally, the test mode is stated. For experiments with noise, it is additionally indicated whether the forecast initial conditions were noisy and the noise level.

\begin{table}[b!]
	\begin{center}
		\fontsize{8.0pt}{10pt}\selectfont
		\fontfamily{phv}\selectfont
		\renewcommand{\arraystretch}{1.05}
		\setlength{\tabcolsep}{0.5em}
		\rowcolors{3}{gray!20}{white}
		\begin{tabular}{lrrrcccccc}
			\toprule
			System & \multicolumn{3}{c}{Precision} & $\stepsize$ & $n$ & normalization & $p$ & test mode & Reps.\\
			\cmidrule(lr){2-4}
			& system & data & method & & & & & & \\
			\midrule
			L63 & 32 bit & 64 bit & 32 bit & $2^{-10}, \dots, 2^{-3}$ & $2^3, \dots, 2^{15}$ & full & $1, \dots, 16$ & sequential & 100\\
			L63 & 32 bit & 64 bit & 64 bit & $2^{-10}, \dots, 2^{-3}$ & $2^3, \dots, 2^{15}$ & full & $1, \dots, 16$ & sequential & 100\\
			L63 & 32 bit & 64 bit & 512 bit & $2^{-10}, \dots, 2^{-3}$ & $2^3, \dots, 2^{15}$ & none & $1, \dots, 16$ & sequential & 100\\
			L63 & 64 bit & 64 bit & 32 bit & $2^{-10}, \dots, 2^{-3}$ & $2^3, \dots, 2^{15}$ & none & $1, \dots, 16$ & sequential & 100\\
			L63 & 64 bit & 64 bit & 32 bit & $2^{-10}, \dots, 2^{-3}$ & $2^3, \dots, 2^{15}$ & diagonal & $1, \dots, 16$ & sequential & 100\\
			L63 & 64 bit & 64 bit & 32 bit & $2^{-10}, \dots, 2^{-3}$ & $2^3, \dots, 2^{15}$ & full & $1, \dots, 16$ & sequential & 100\\
			L63 & 64 bit & 64 bit & 64 bit & $2^{-10}, \dots, 2^{-3}$ & $2^3, \dots, 2^{15}$ & none & $1, \dots, 16$ & sequential & 100\\
			L63 & 64 bit & 64 bit & 64 bit & $2^{-10}, \dots, 2^{-3}$ & $2^3, \dots, 2^{15}$ & diagonal & $1, \dots, 16$ & sequential & 100\\
			L63 & 64 bit & 64 bit & 64 bit & $2^{-10}, \dots, 2^{-3}$ & $2^3, \dots, 2^{15}$ & full & $1, \dots, 16$ & sequential & 100\\
			L63 & 64 bit & 64 bit & 512 bit & $2^{-10}, \dots, 2^{-3}$ & $2^3, \dots, 2^{15}$ & none & $1, \dots, 16$ & sequential & 100\\
			L63 & 64 bit & 32 bit & 64 bit & $2^{-10}, \dots, 2^{-3}$ & $2^3, \dots, 2^{15}$ & full & $1, \dots, 16$ & sequential & 100\\
			L63 & 512 bit & 32 bit & 64 bit & $2^{-10}, \dots, 2^{-3}$ & $2^3, \dots, 2^{15}$ & full & $1, \dots, 16$ & sequential & 100\\
			L63 & 512 bit & 64 bit & 32 bit & $2^{-10}, \dots, 2^{-3}$ & $2^3, \dots, 2^{15}$ & full & $1, \dots, 16$ & sequential & 100\\
			L63 & 512 bit & 64 bit & 64 bit & $2^{-10}, \dots, 2^{-3}$ & $2^3, \dots, 2^{15}$ & full & $1, \dots, 16$ & sequential & 100\\
			L63 & 512 bit & 64 bit & 512 bit & $2^{-10}, \dots, 2^{-3}$ & $2^3, \dots, 2^{15}$ & none & $1, \dots, 16$ & sequential & 100\\
			L63 & 512 bit & 64 bit & 512 bit & $2^{-10}, \dots, 2^{-3}$ & $2^3, \dots, 2^{15}$ & none & $1, \dots, 16$ & random & 100\\
			L63 & 512 bit & 512 bit & 512 bit & $2^{-10}, \dots, 2^{-3}$ & $2^3, \dots, 2^{15}$ & none & $1, \dots, 16$ & sequential & 100\\
			L63 & x & 64 bit & 512 bit & $2^{-10}, \dots, 2^{-3}$ & $2^3, \dots, 2^{15}$ & none & $1, \dots, 16$ & sequential & 100\\
			L63 & y & 64 bit & 512 bit & $2^{-10}, \dots, 2^{-3}$ & $2^3, \dots, 2^{15}$ & none & $1, \dots, 16$ & sequential & 100\\
			L96D5 & 64 bit & 32 bit & 64 bit & $2^{-9}, \dots, 2^{-5}$ & $2^3, \dots, 2^{17}$ & full & $1, \dots, 8$ & sequential & 100\\
			L96D6 & 64 bit & 32 bit & 64 bit & $2^{-9}, \dots, 2^{-5}$ & $2^3, \dots, 2^{17}$ & full & $1, \dots, 8$ & sequential & 100\\
			L96D7 & 64 bit & 32 bit & 64 bit & $2^{-9}, \dots, 2^{-5}$ & $2^3, \dots, 2^{17}$ & full & $1, \dots, 8$ & sequential & 100\\
			L96D8 & 64 bit & 32 bit & 64 bit & $2^{-9}, \dots, 2^{-5}$ & $2^3, \dots, 2^{17}$ & full & $1, \dots, 8$ & sequential & 100\\
			L96D9 & 64 bit & 32 bit & 64 bit & $2^{-9}, \dots, 2^{-5}$ & $2^3, \dots, 2^{17}$ & full & $1, \dots, 8$ & sequential & 100\\
			L96D5 & 32 bit & 64 bit & 64 bit & $2^{-9}, \dots, 2^{-5}$ & $2^3, \dots, 2^{17}$ & full & $1, \dots, 8$ & sequential & 100\\
			L96D6 & 32 bit & 64 bit & 64 bit & $2^{-9}, \dots, 2^{-5}$ & $2^3, \dots, 2^{17}$ & full & $1, \dots, 8$ & sequential & 100\\
			L96D7 & 32 bit & 64 bit & 64 bit & $2^{-9}, \dots, 2^{-5}$ & $2^3, \dots, 2^{17}$ & full & $1, \dots, 8$ & sequential & 100\\
			L96D8 & 32 bit & 64 bit & 64 bit & $2^{-9}, \dots, 2^{-5}$ & $2^3, \dots, 2^{17}$ & full & $1, \dots, 8$ & sequential & 100\\
			L96D9 & 32 bit & 64 bit & 64 bit & $2^{-9}, \dots, 2^{-5}$ & $2^3, \dots, 2^{17}$ & full & $1, \dots, 8$ & sequential & 100\\
			TCSA & 512 bit & 64 bit & 512 bit & $2^{-6}, \dots, 2^{1}$ & $2^3, \dots, 2^{15}$ & none & $1, \dots, 16$ & sequential & 100\\
			TCSA & 512 bit & 64 bit & 512 bit & $2^{-2}$ & $2^{16}, 2^{17}$ & none & $23, 24, 25$ & sequential & 100\\
			\bottomrule
		\end{tabular}
	\end{center}
	\caption{\textbf{Settings of noise-free experiments in this study.} Data is generated by an RK4 ODE solver with time step $\stepsize_0 = 2^{-10}$ for L63 and L96, and $\stepsize_0 = 2^{-6}$ for TCSA using the precision given in the column \textit{Precision-system}; the letters \texttt{x} and \texttt{y} in this column instead indicate extreme precision solutions using a Taylor Integrator or an RK4 solver with very small step size, see Supplementary \cref{app:sec:xprec}. The data is stored at the precision given in the column \textit{Precision-data}. If this number is smaller than the system precision, then data is rounded. The stored data is sub-sampled to arrive at a time step $\stepsize$. A time series of length $n$ is used as training data. It is normalized if indicated in the column \textit{normalization}. Then the training data is presented to the polynomial propagator with degree $p$, which internally calculates at the precision given in the column \textit{Precision-method}. To measure its performance, we create a forecast starting either from the last training state in the sequential \textit{test mode} or from a randomly chosen state in the random \textit{test mode}. Each experiment is repeated 100 times with randomly chosen initial conditions of the training and testing data. Additionally, we conduct an experiment with noisy observations as described in \SuppSec{noise}.}
	\label{tbl:experiments}
\end{table}

\clearpage

\begin{table}[t!]
	\begin{center}
		\fontsize{8.0pt}{10pt}\selectfont
		\fontfamily{phv}\selectfont
		\renewcommand{\arraystretch}{1.05}
		\setlength{\tabcolsep}{0.5em}
		\rowcolors{3}{gray!20}{white}
		\begin{tabular}{lrrrcccccccc}
			\toprule
			Sys. & \multicolumn{3}{c}{Precision} & $\stepsize$ & $n$ & norm. & $p$ & test mode & \multicolumn{2}{c}{Noise} & Reps.\\
			\cmidrule(lr){2-4}
			& system & data & method & & & & & & ini.cond. & level & \\
			\midrule
			L63 & 64 bit & 64 bit & 64 bit & $2^{-10}, \dots, 2^{-3}$ & $2^3, \dots, 2^{15}$ & full & $1, \dots, 8$ & sequential & FALSE & $10^{-9}$ & 100\\
			L63 & 64 bit & 64 bit & 64 bit & $2^{-10}, \dots, 2^{-3}$ & $2^3, \dots, 2^{15}$ & full & $1, \dots, 8$ & sequential & FALSE & $10^{-7}$ & 100\\
			L63 & 64 bit & 64 bit & 64 bit & $2^{-10}, \dots, 2^{-3}$ & $2^3, \dots, 2^{15}$ & full & $1, \dots, 8$ & sequential & FALSE & $10^{-5}$ & 100\\
			L63 & 64 bit & 64 bit & 64 bit & $2^{-10}, \dots, 2^{-3}$ & $2^3, \dots, 2^{15}$ & full & $1, \dots, 8$ & sequential & FALSE & $10^{-3}$ & 100\\
			L63 & 64 bit & 64 bit & 64 bit & $2^{-10}, \dots, 2^{-3}$ & $2^3, \dots, 2^{15}$ & full & $1, \dots, 8$ & sequential & FALSE & $10^{-1}$ & 100\\
			L63 & 64 bit & 64 bit & 64 bit & $2^{-10}, \dots, 2^{-3}$ & $2^3, \dots, 2^{15}$ & full & $1, \dots, 8$ & sequential & TRUE & $10^{-9}$ & 100\\
			L63 & 64 bit & 64 bit & 64 bit & $2^{-10}, \dots, 2^{-3}$ & $2^3, \dots, 2^{15}$ & full & $1, \dots, 8$ & sequential & TRUE & $10^{-7}$ & 100\\
			L63 & 64 bit & 64 bit & 64 bit & $2^{-10}, \dots, 2^{-3}$ & $2^3, \dots, 2^{15}$ & full & $1, \dots, 8$ & sequential & TRUE & $10^{-5}$ & 100\\
			L63 & 64 bit & 64 bit & 64 bit & $2^{-10}, \dots, 2^{-3}$ & $2^3, \dots, 2^{15}$ & full & $1, \dots, 8$ & sequential & TRUE & $10^{-3}$ & 100\\
			L63 & 64 bit & 64 bit & 64 bit & $2^{-10}, \dots, 2^{-3}$ & $2^3, \dots, 2^{15}$ & full & $1, \dots, 8$ & sequential & TRUE & $10^{-1}$ & 100\\
			\bottomrule
		\end{tabular}
	\end{center}
	\caption{\textbf{Settings of noisy experiments in this study.} Same as \cref{tbl:experiments} with the following additions: The column \textit{Noise-ini.cond.} declares whether forecasts are started from noisy initial conditions (or noise-free ones). The column \textit{Noise-level} shows the noise level. See Supplementary \cref{app:sec:noise} for more details.}
	\label{tbl:experiments:noisy}
\end{table}

Each subsection includes three elements:
\begin{itemize}
	\item
	The \textit{Best Plot} displays the average $\vpt{0.5}$ (in Lyapunov times) over 100 repetitions for the best-performing polynomial propagator, plotted against the number of training samples $n$ (horizontal axis). The data time step $\stepsize$ (see \cref{tbl:timesteps}) is indicated by color. The degree $p$ of the optimal polynomial is shown in the label boxes.
	\item
	The \textit{Best Table} presents the same results in tabular form. Each cell in the body of the table reports the $\vpt{0.5}$ value (in Lyapunov times), with the corresponding optimal degree $p$ in parentheses.
	\item
	The \textit{All Plot} shows the average $\vpt{0.5}$ values (in Lyapunov times), averaged over 100 repetitions, for all tested polynomial degrees $p$ (color) and training set sizes $n$ (horizontal axis), across different time steps $\stepsize$ (separate subplots).
\end{itemize}

\etocsettocstyle{\subsection*{Subsections in This Section}}{}
\localtableofcontents

\begin{table}[h!]
    \begin{center}
        \begin{tabular}{c|cccccccc}
        $\stepsize = $ & $2^{-10}$ & $2^{-9}$ & $2^{-8}$ & $2^{-7}$ & $2^{-6}$ & $2^{-5}$ & $2^{-4}$ & $2^{-3}$\\
        \hline
        $\stepsize \approx $ & $0.00098$ & $0.0020$ & $0.0039$ & $0.0078$ & $0.016$ & $0.031$ & $0.063$ & $0.13$
    \end{tabular}
    \end{center}
    \caption{Approximations of the powers of $2$ used as time steps.}\label{tbl:timesteps}
\end{table}

\clearpage
\subsection{L63, dds, normalize none, test sequential}
\begin{figure}[ht!]
\begin{center}
\includegraphics[width=\textwidth]{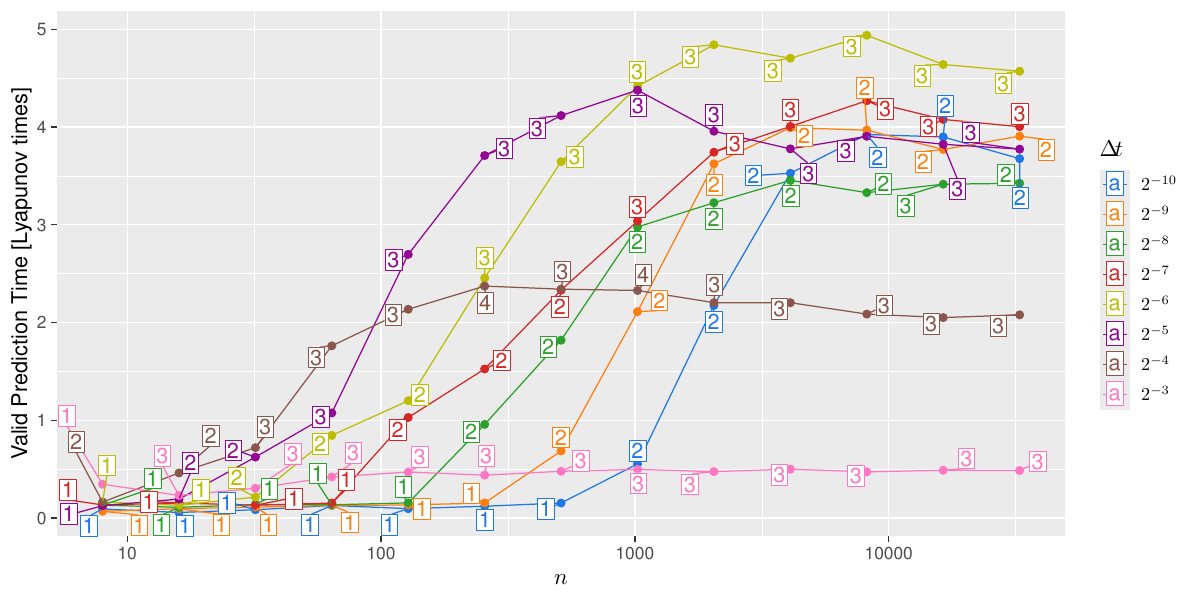}
\end{center}
\caption{\textbf{Best Plot for L63, dds, normalize none, test sequential}. See the beginning of \cref{app:sec:details} for a description.}
\end{figure}
\begin{table}[ht!]
\input{tbl/L63_dds_n_s_VPT_best_table.tex}
\caption{\textbf{Best Table for L63, dds, normalize none, test sequential}. See the beginning of \cref{app:sec:details} for a description.}
\end{table}
\begin{figure}[ht!]
\begin{center}
\includegraphics[width=\textwidth]{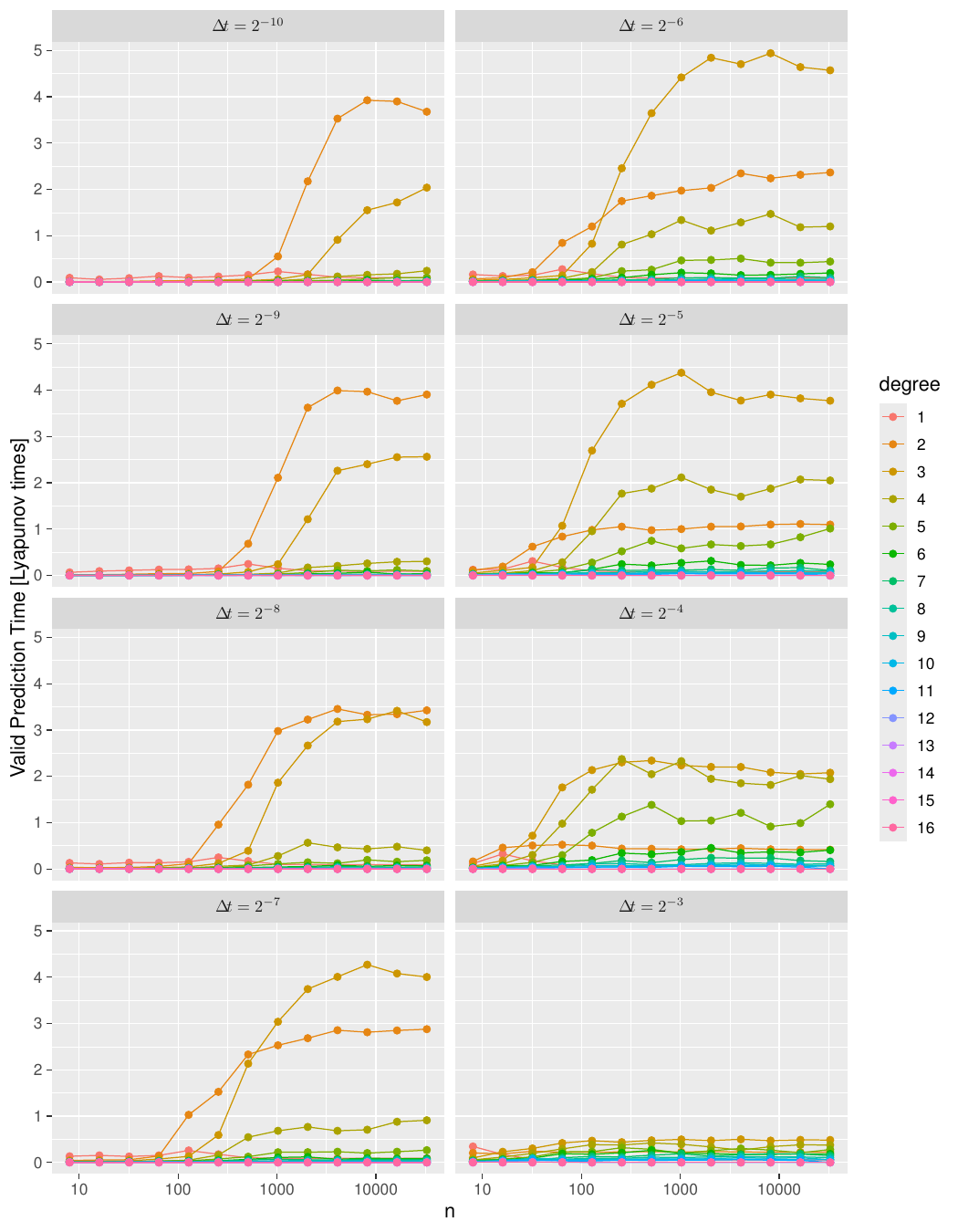}
\end{center}
\caption{\textbf{All Plot for L63, dds, normalize none, test sequential}. See the beginning of \cref{app:sec:details} for a description.}
\end{figure}

\clearpage
\subsection{L63, dds, normalize diag, test sequential}
\begin{figure}[ht!]
\begin{center}
\includegraphics[width=\textwidth]{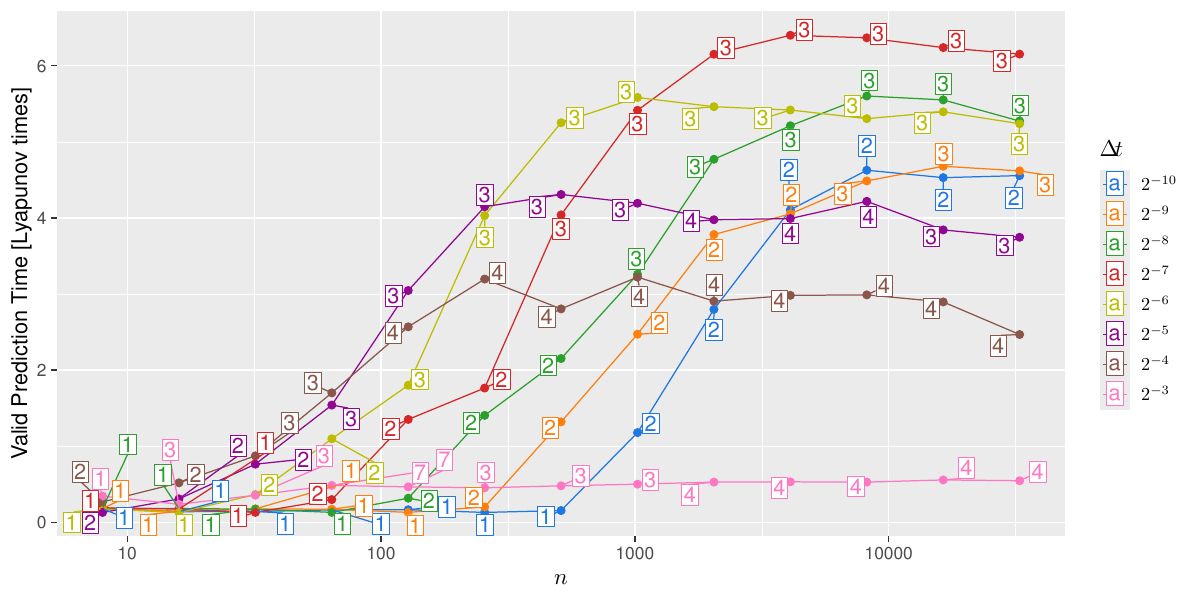}
\end{center}
\caption{\textbf{Best Plot for L63, dds, normalize diag, test sequential}. See the beginning of \cref{app:sec:details} for a description.}
\end{figure}
\begin{table}[ht!]
\input{tbl/L63_dds_d_s_VPT_best_table.tex}
\caption{\textbf{Best Table for L63, dds, normalize diag, test sequential}. See the beginning of \cref{app:sec:details} for a description.}
\end{table}
\begin{figure}[ht!]
\begin{center}
\includegraphics[width=\textwidth]{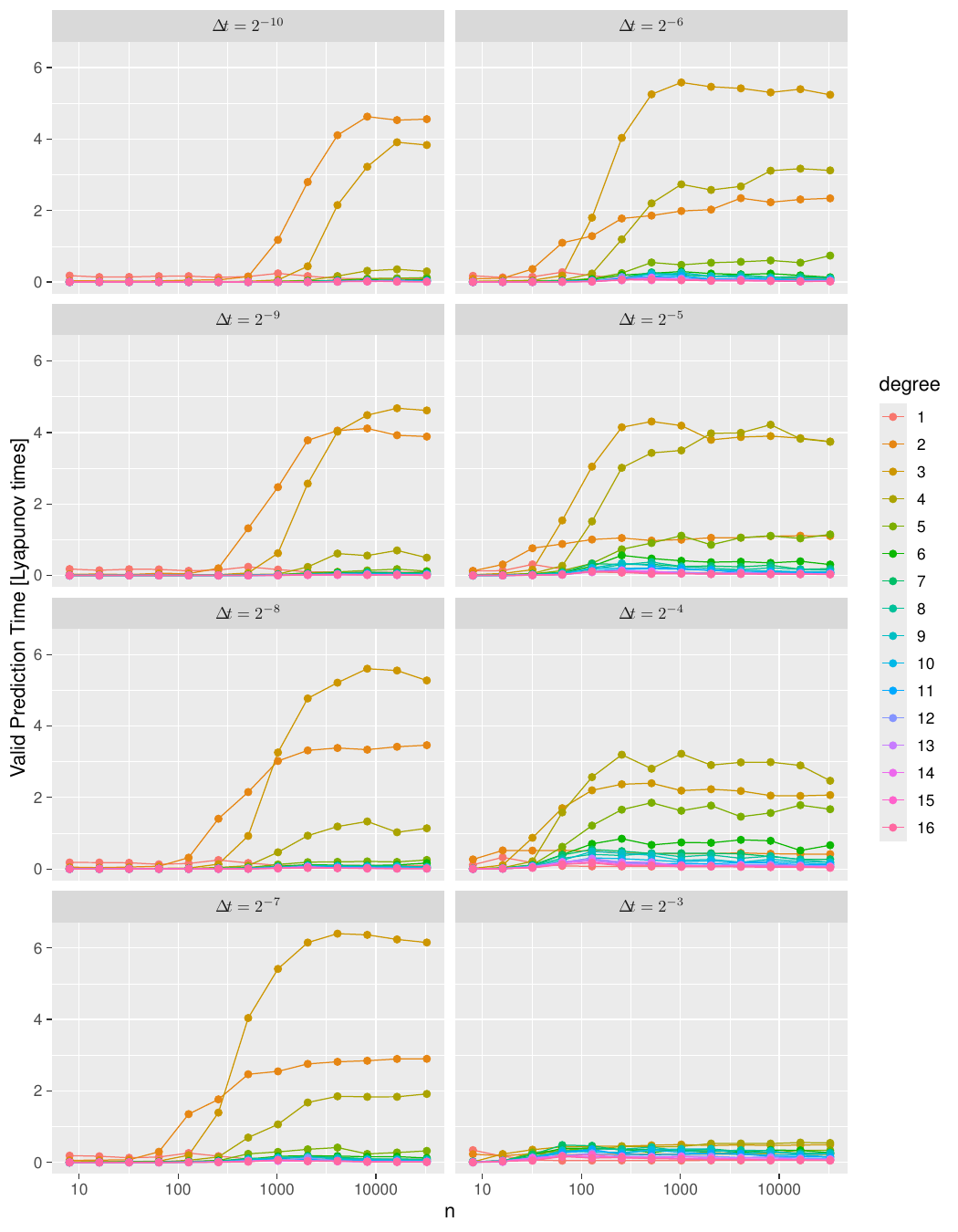}
\end{center}
\caption{\textbf{All Plot for L63, dds, normalize diag, test sequential}. See the beginning of \cref{app:sec:details} for a description.}
\end{figure}

\clearpage
\subsection{L63, dds, normalize full, test sequential}
\begin{figure}[ht!]
\begin{center}
\includegraphics[width=\textwidth]{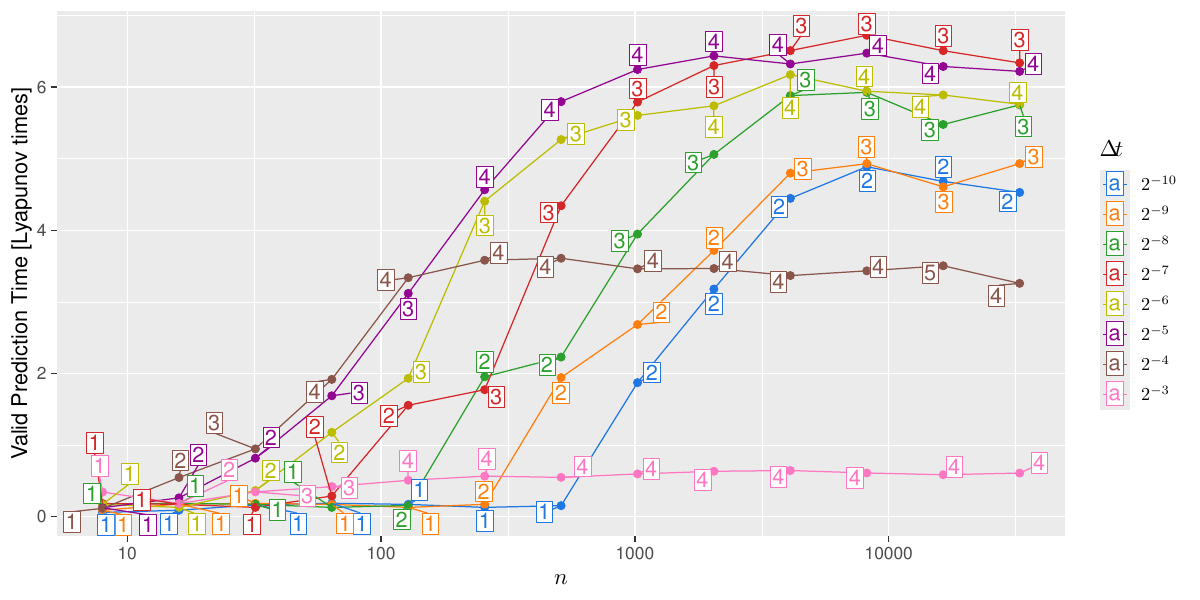}
\end{center}
\caption{\textbf{Best Plot for L63, dds, normalize full, test sequential}. See the beginning of \cref{app:sec:details} for a description.}
\end{figure}
\begin{table}[ht!]
\input{tbl/L63_dds_f_s_VPT_best_table.tex}
\caption{\textbf{Best Table for L63, dds, normalize full, test sequential}. See the beginning of \cref{app:sec:details} for a description.}
\end{table}
\begin{figure}[ht!]
\begin{center}
\includegraphics[width=\textwidth]{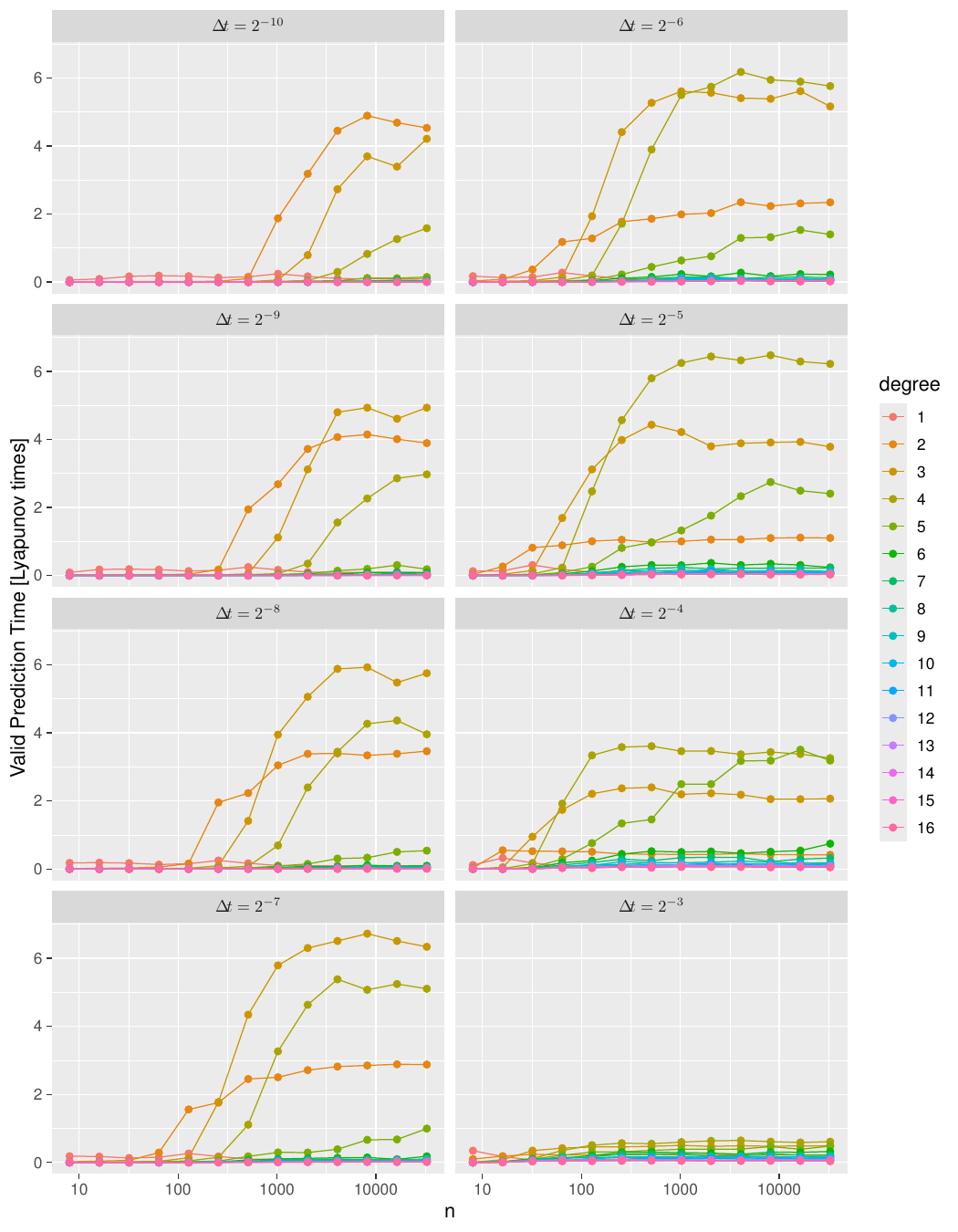}
\end{center}
\caption{\textbf{All Plot for L63, dds, normalize full, test sequential}. See the beginning of \cref{app:sec:details} for a description.}
\end{figure}

\clearpage
\subsection{L63, ddd, normalize none, test sequential}
\begin{figure}[ht!]
\begin{center}
\includegraphics[width=\textwidth]{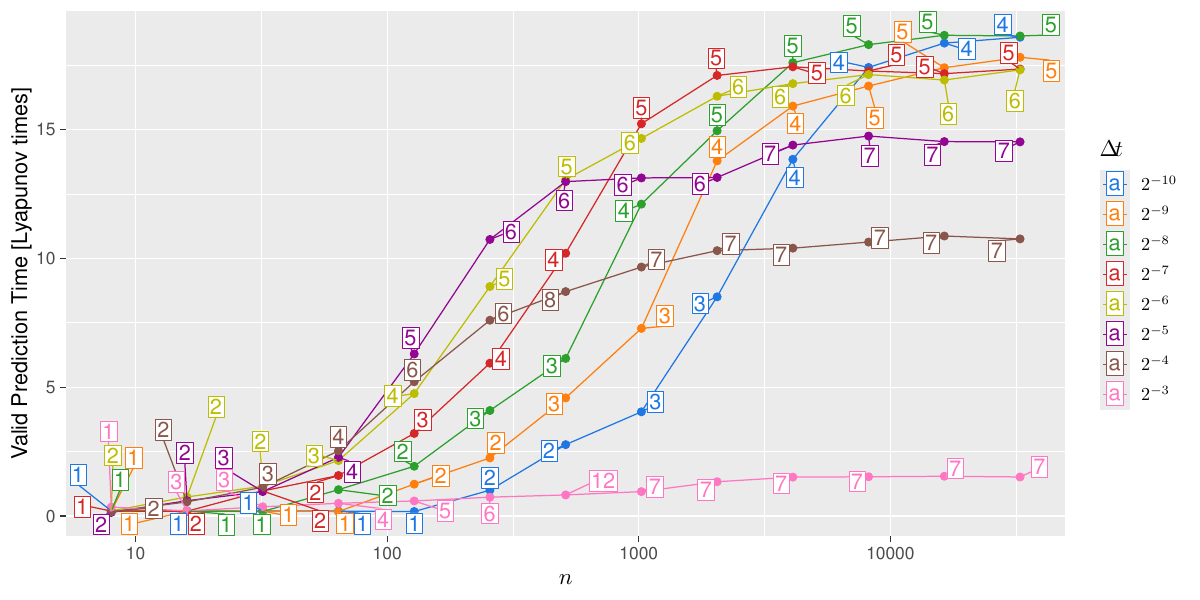}
\end{center}
\caption{\textbf{Best Plot for L63, ddd, normalize none, test sequential}. See the beginning of \cref{app:sec:details} for a description.}
\end{figure}
\begin{table}[ht!]
\input{tbl/L63_ddd_n_s_VPT_best_table.tex}
\caption{\textbf{Best Table for L63, ddd, normalize none, test sequential}. See the beginning of \cref{app:sec:details} for a description.}
\end{table}
\begin{figure}[ht!]
\begin{center}
\includegraphics[width=\textwidth]{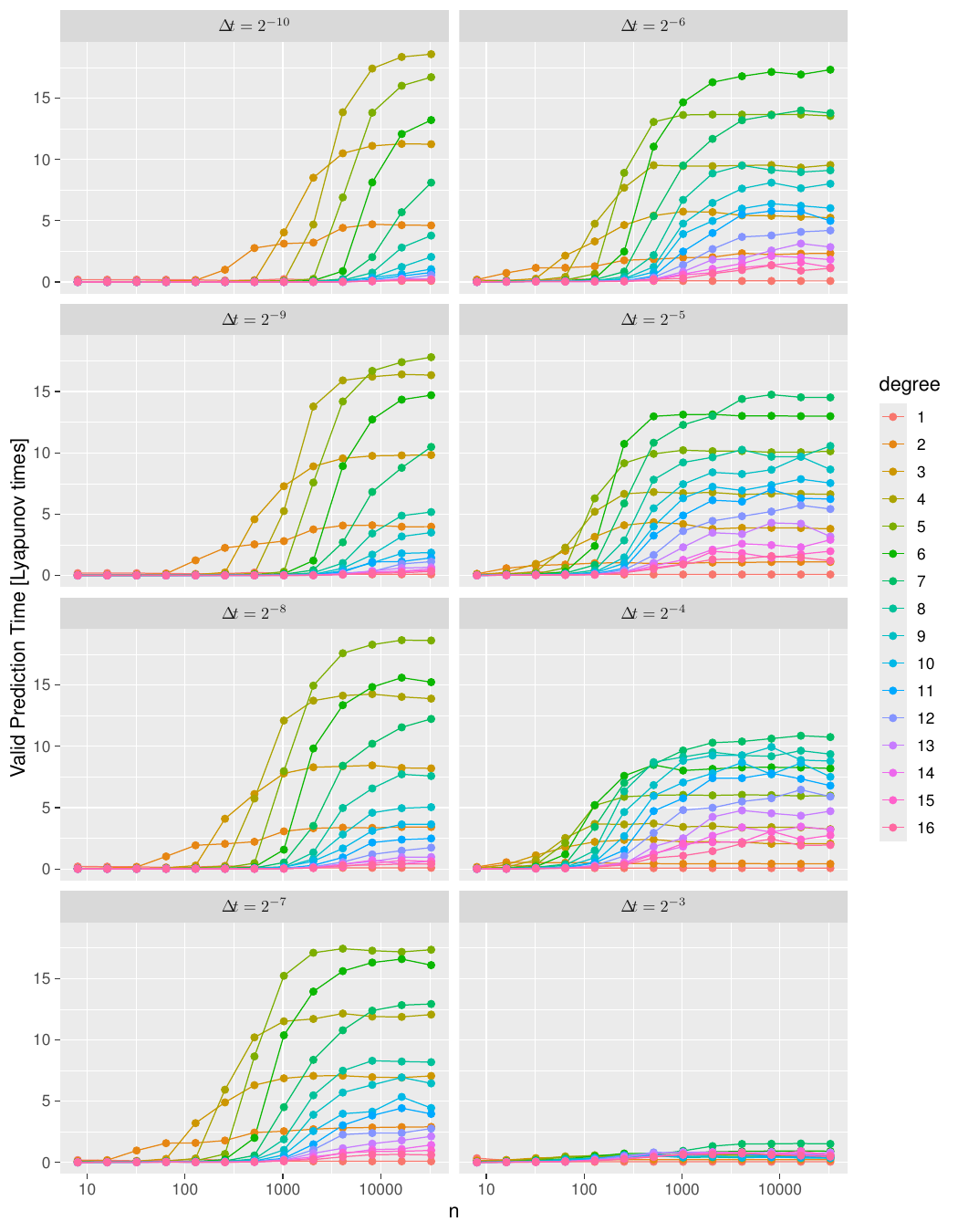}
\end{center}
\caption{\textbf{All Plot for L63, ddd, normalize none, test sequential}. See the beginning of \cref{app:sec:details} for a description.}
\end{figure}

\clearpage
\subsection{L63, ddd, normalize diag, test sequential}
\begin{figure}[ht!]
\begin{center}
\includegraphics[width=\textwidth]{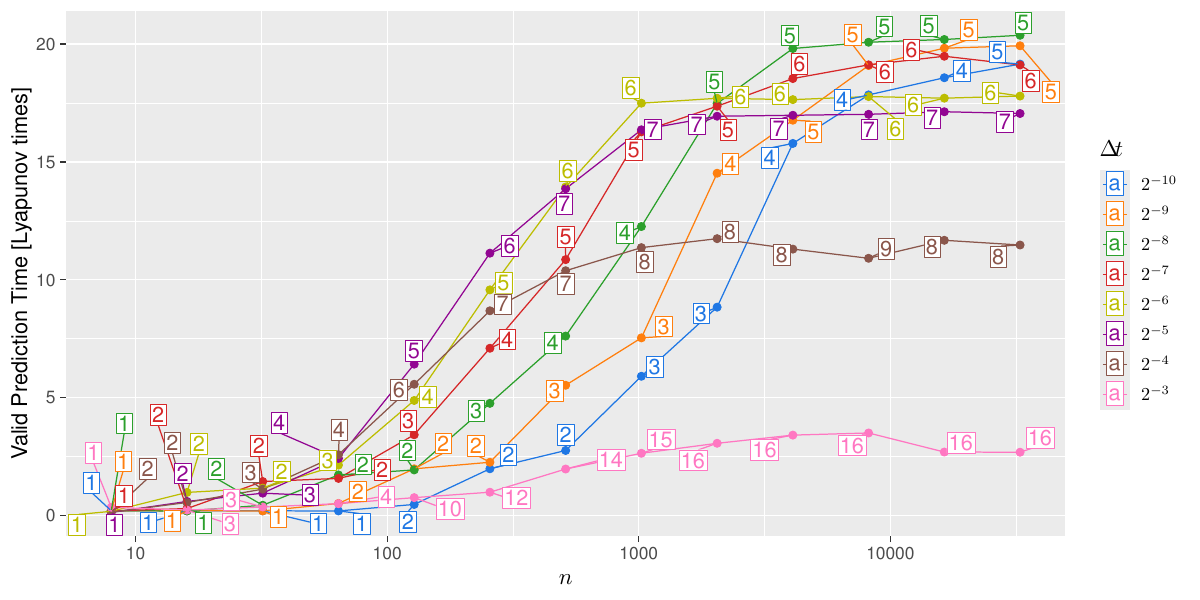}
\end{center}
\caption{\textbf{Best Plot for L63, ddd, normalize diag, test sequential}. See the beginning of \cref{app:sec:details} for a description.}
\end{figure}
\begin{table}[ht!]
\input{tbl/L63_ddd_d_s_VPT_best_table.tex}
\caption{\textbf{Best Table for L63, ddd, normalize diag, test sequential}. See the beginning of \cref{app:sec:details} for a description.}
\end{table}
\begin{figure}[ht!]
\begin{center}
\includegraphics[width=\textwidth]{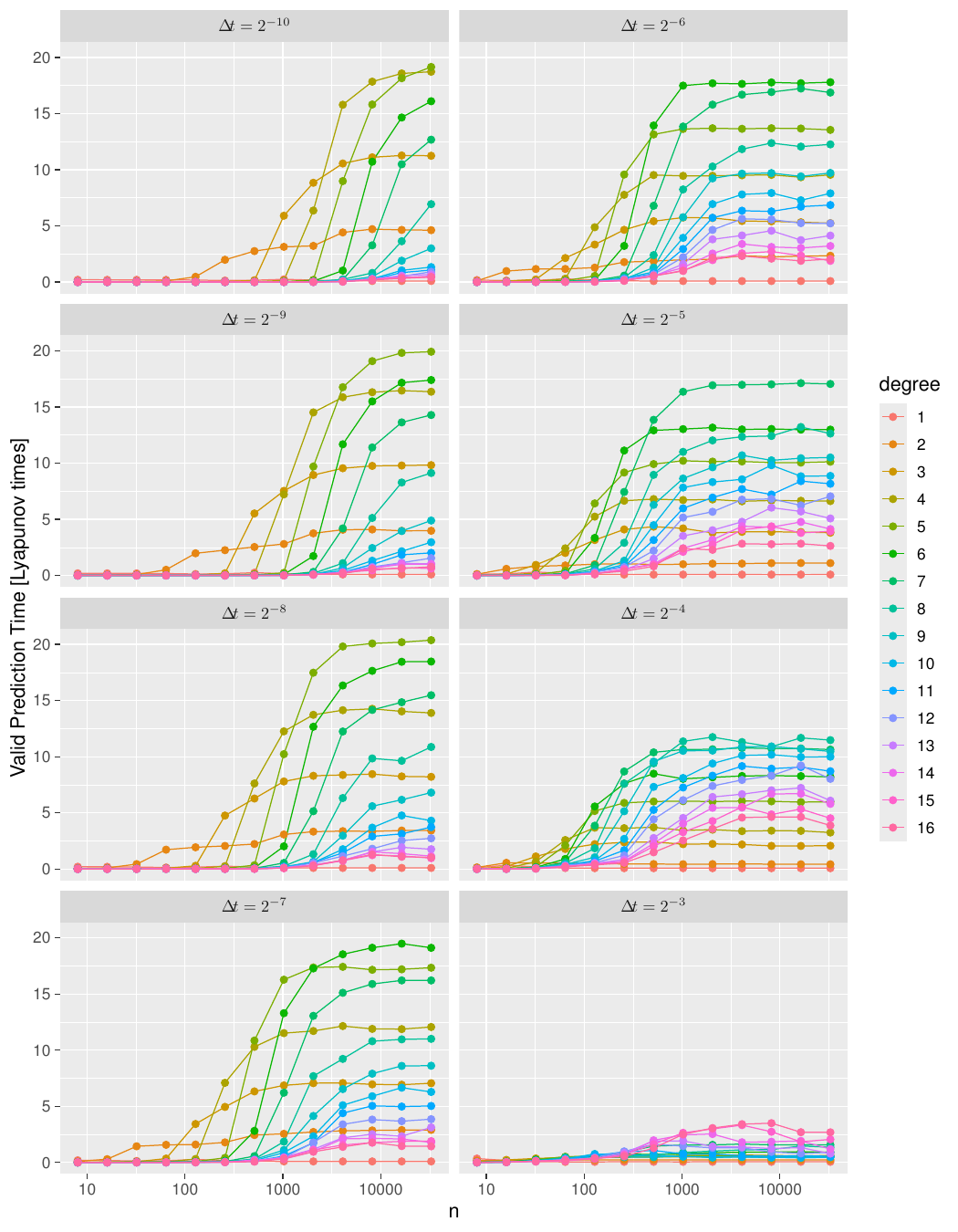}
\end{center}
\caption{\textbf{All Plot for L63, ddd, normalize diag, test sequential}. See the beginning of \cref{app:sec:details} for a description.}
\end{figure}

\clearpage
\subsection{L63, ddd, normalize full, test sequential}
\begin{figure}[ht!]
\begin{center}
\includegraphics[width=\textwidth]{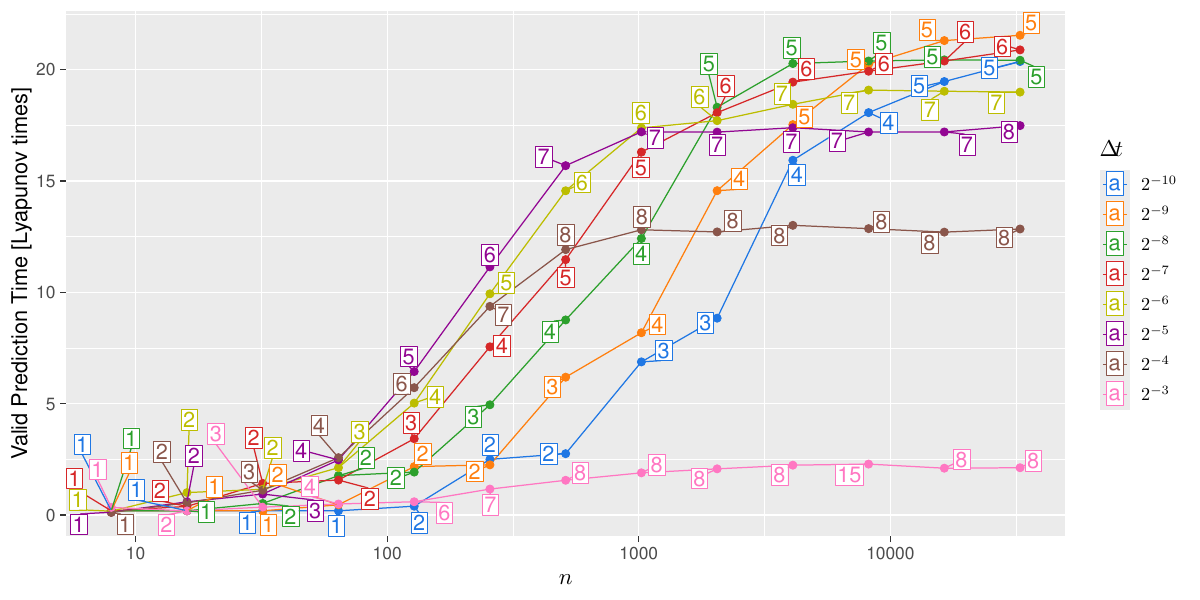}
\end{center}
\caption{\textbf{Best Plot for L63, ddd, normalize full, test sequential}. See the beginning of \cref{app:sec:details} for a description.}
\end{figure}
\begin{table}[ht!]
\input{tbl/L63_ddd_f_s_VPT_best_table.tex}
\caption{\textbf{Best Table for L63, ddd, normalize full, test sequential}. See the beginning of \cref{app:sec:details} for a description.}
\end{table}
\begin{figure}[ht!]
\begin{center}
\includegraphics[width=\textwidth]{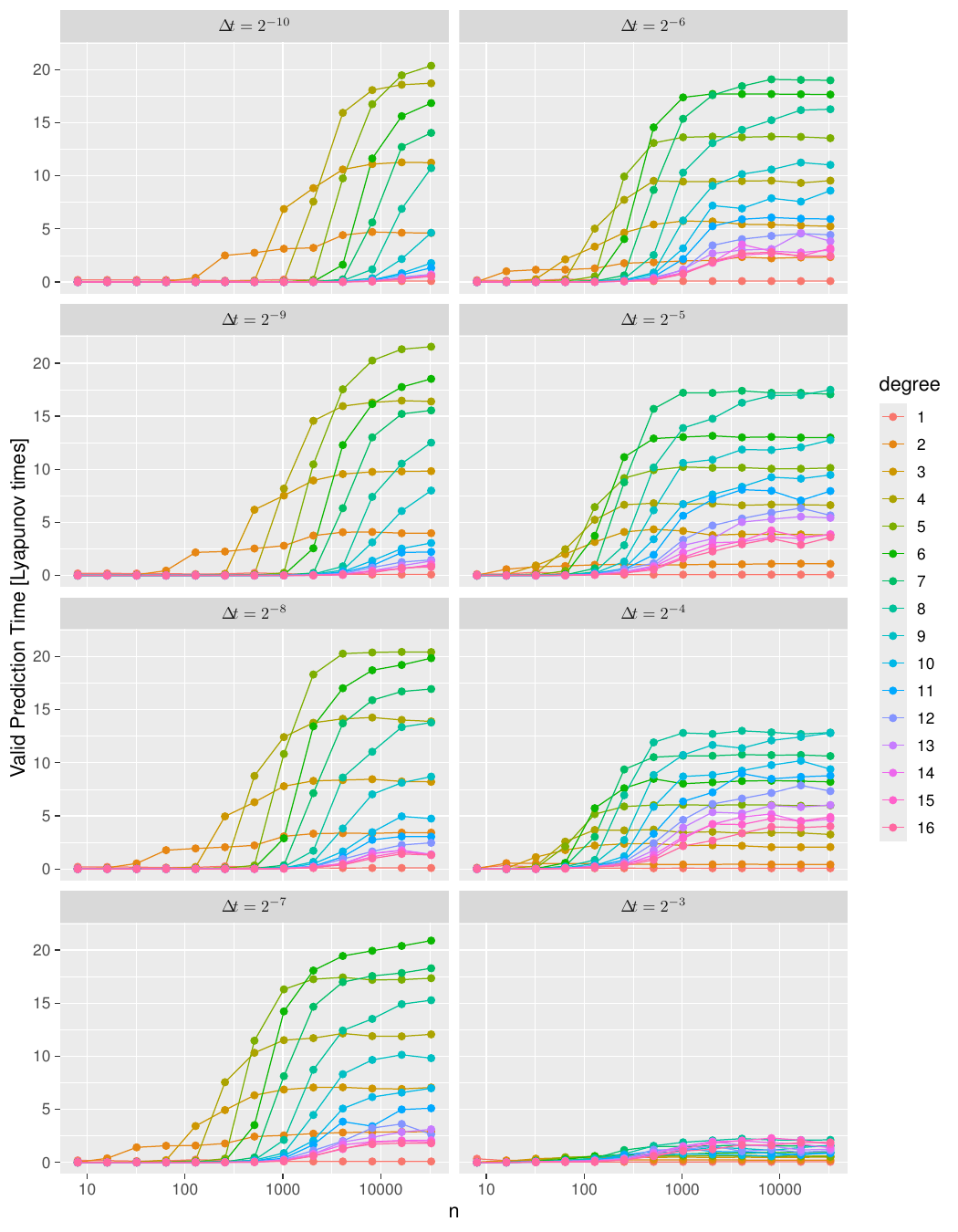}
\end{center}
\caption{\textbf{All Plot for L63, ddd, normalize full, test sequential}. See the beginning of \cref{app:sec:details} for a description.}
\end{figure}

\clearpage
\subsection{L63, sds, normalize full, test sequential}
\begin{figure}[ht!]
\begin{center}
\includegraphics[width=\textwidth]{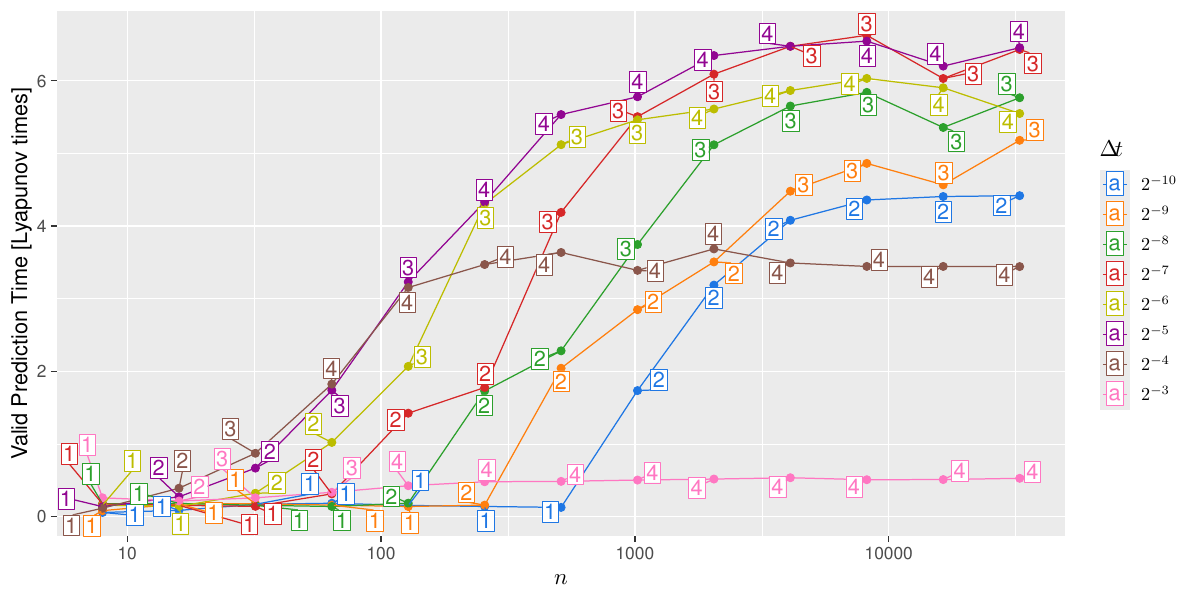}
\end{center}
\caption{\textbf{Best Plot for L63, sds, normalize full, test sequential}. See the beginning of \cref{app:sec:details} for a description.}
\end{figure}
\begin{table}[ht!]
\input{tbl/L63_sds_f_s_VPT_best_table.tex}
\caption{\textbf{Best Table for L63, sds, normalize full, test sequential}. See the beginning of \cref{app:sec:details} for a description.}
\end{table}
\begin{figure}[ht!]
\begin{center}
\includegraphics[width=\textwidth]{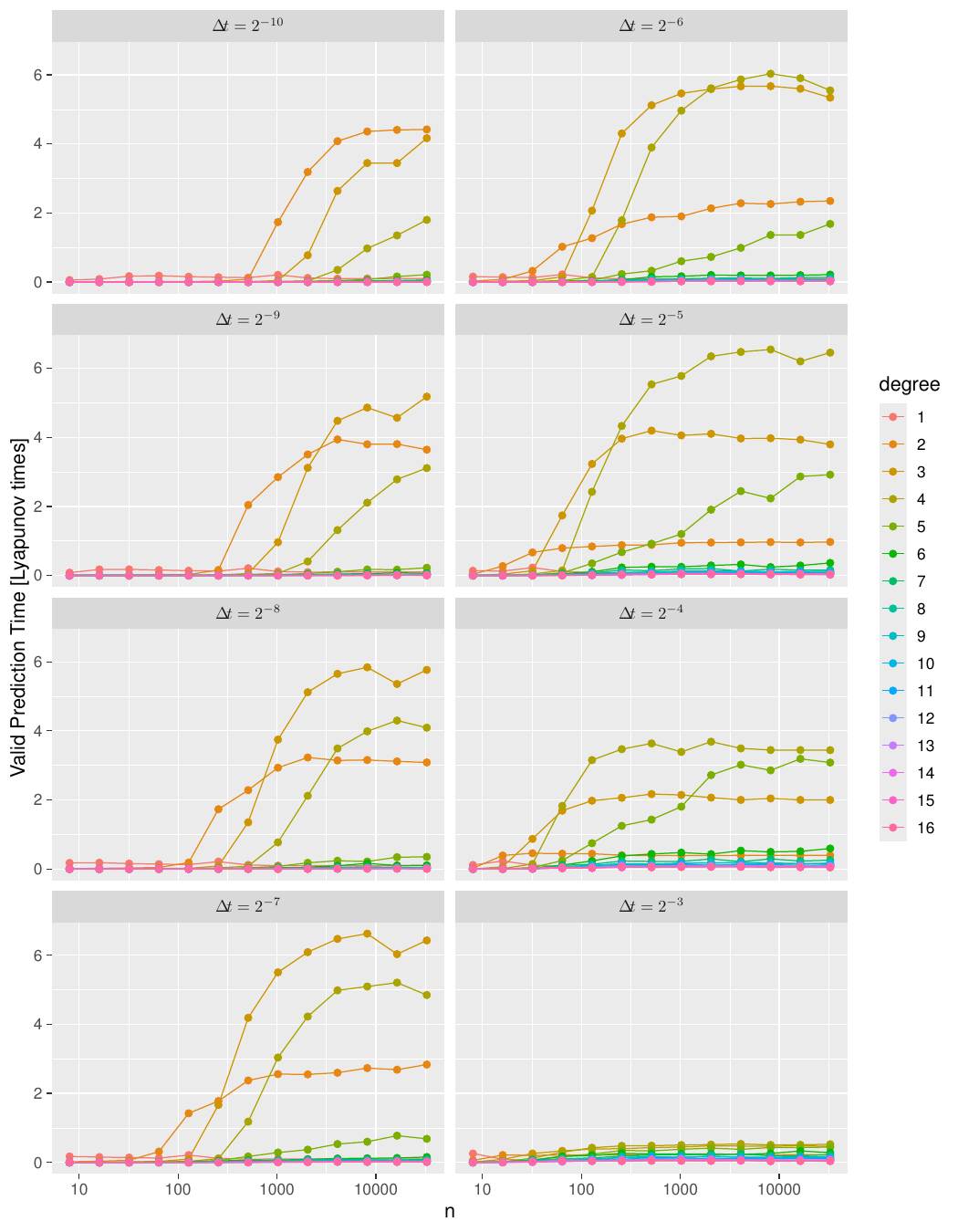}
\end{center}
\caption{\textbf{All Plot for L63, sds, normalize full, test sequential}. See the beginning of \cref{app:sec:details} for a description.}
\end{figure}

\clearpage
\subsection{L63, sdd, normalize full, test sequential}
\begin{figure}[ht!]
\begin{center}
\includegraphics[width=\textwidth]{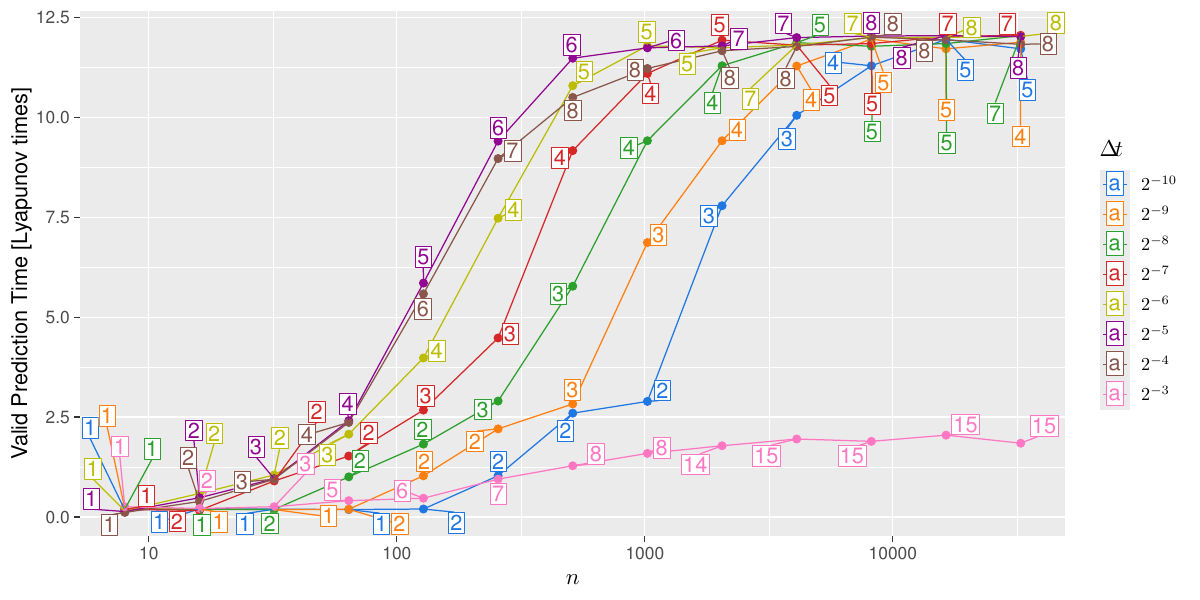}
\end{center}
\caption{\textbf{Best Plot for L63, sdd, normalize full, test sequential}. See the beginning of \cref{app:sec:details} for a description.}
\end{figure}
\begin{table}[ht!]
\input{tbl/L63_sdd_f_s_VPT_best_table.tex}
\caption{\textbf{Best Table for L63, sdd, normalize full, test sequential}. See the beginning of \cref{app:sec:details} for a description.}
\end{table}
\begin{figure}[ht!]
\begin{center}
\includegraphics[width=\textwidth]{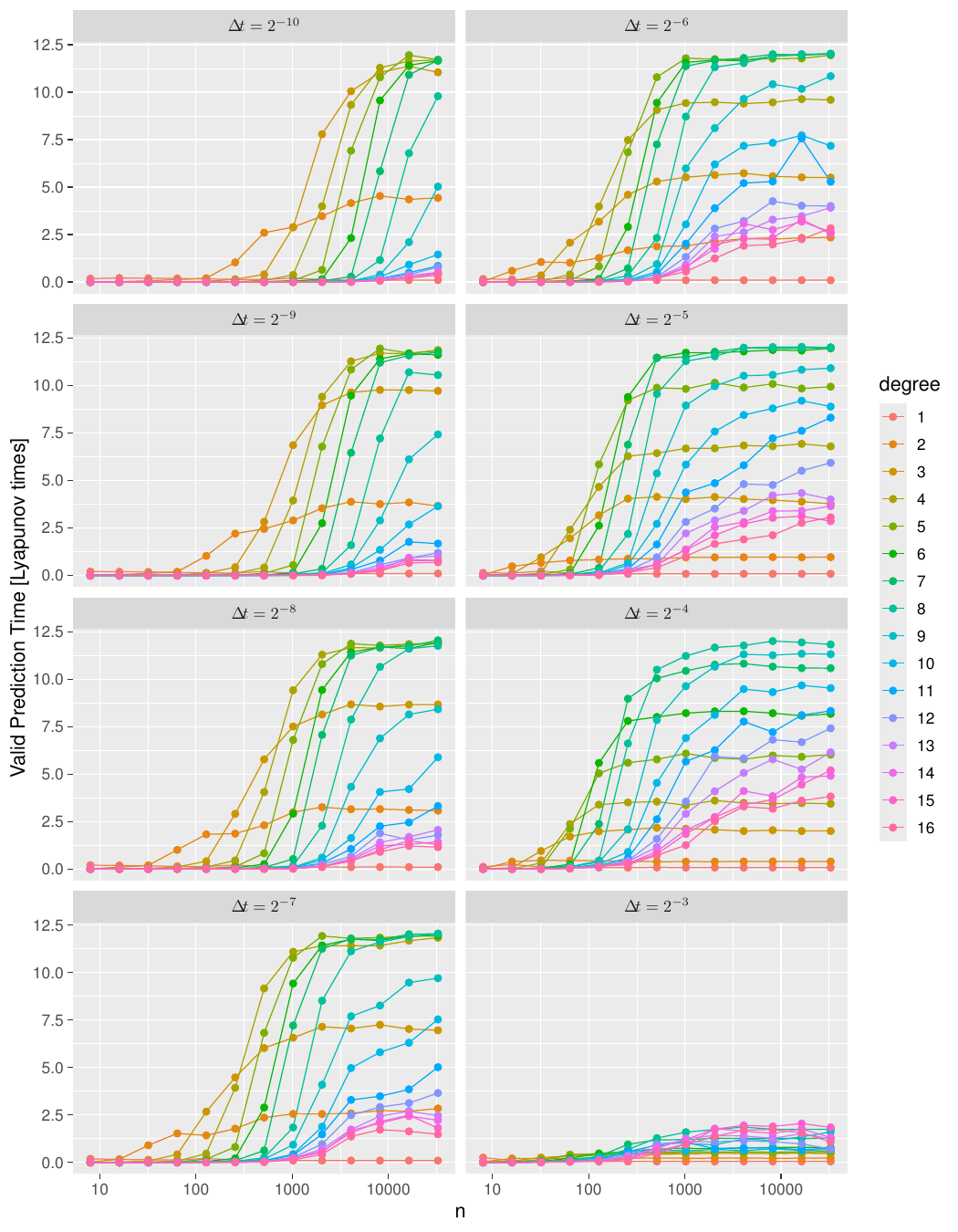}
\end{center}
\caption{\textbf{All Plot for L63, sdd, normalize full, test sequential}. See the beginning of \cref{app:sec:details} for a description.}
\end{figure}

\clearpage
\subsection{L63, mds, normalize full, test sequential}
\begin{figure}[ht!]
\begin{center}
\includegraphics[width=\textwidth]{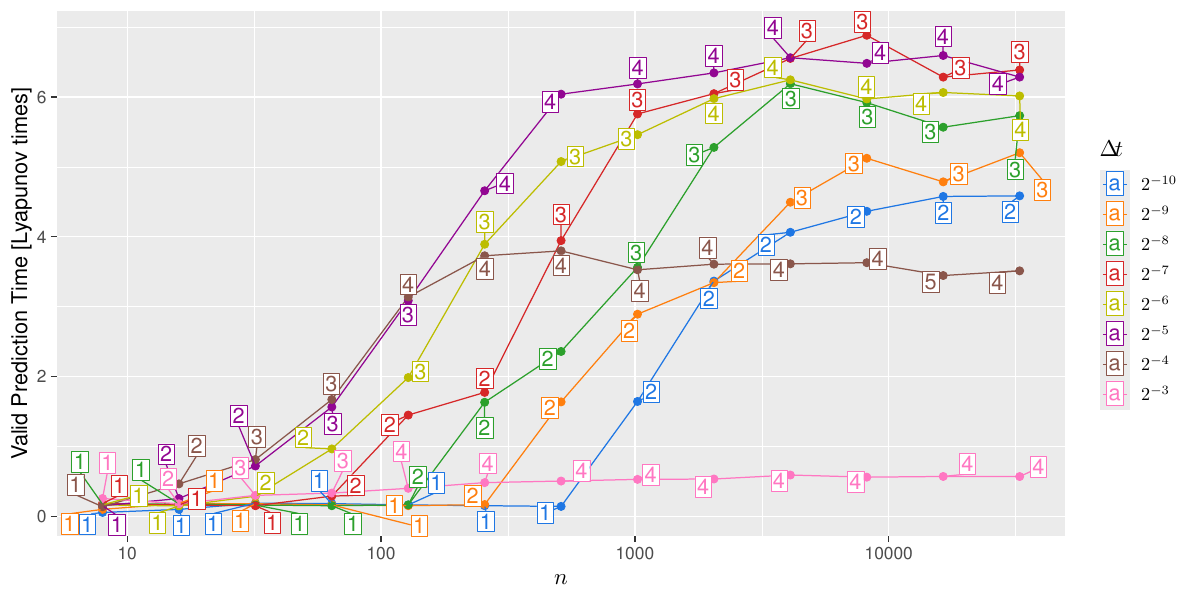}
\end{center}
\caption{\textbf{Best Plot for L63, mds, normalize full, test sequential}. See the beginning of \cref{app:sec:details} for a description.}
\end{figure}
\begin{table}[ht!]
\input{tbl/L63_mds_f_s_VPT_best_table.tex}
\caption{\textbf{Best Table for L63, mds, normalize full, test sequential}. See the beginning of \cref{app:sec:details} for a description.}
\end{table}
\begin{figure}[ht!]
\begin{center}
\includegraphics[width=\textwidth]{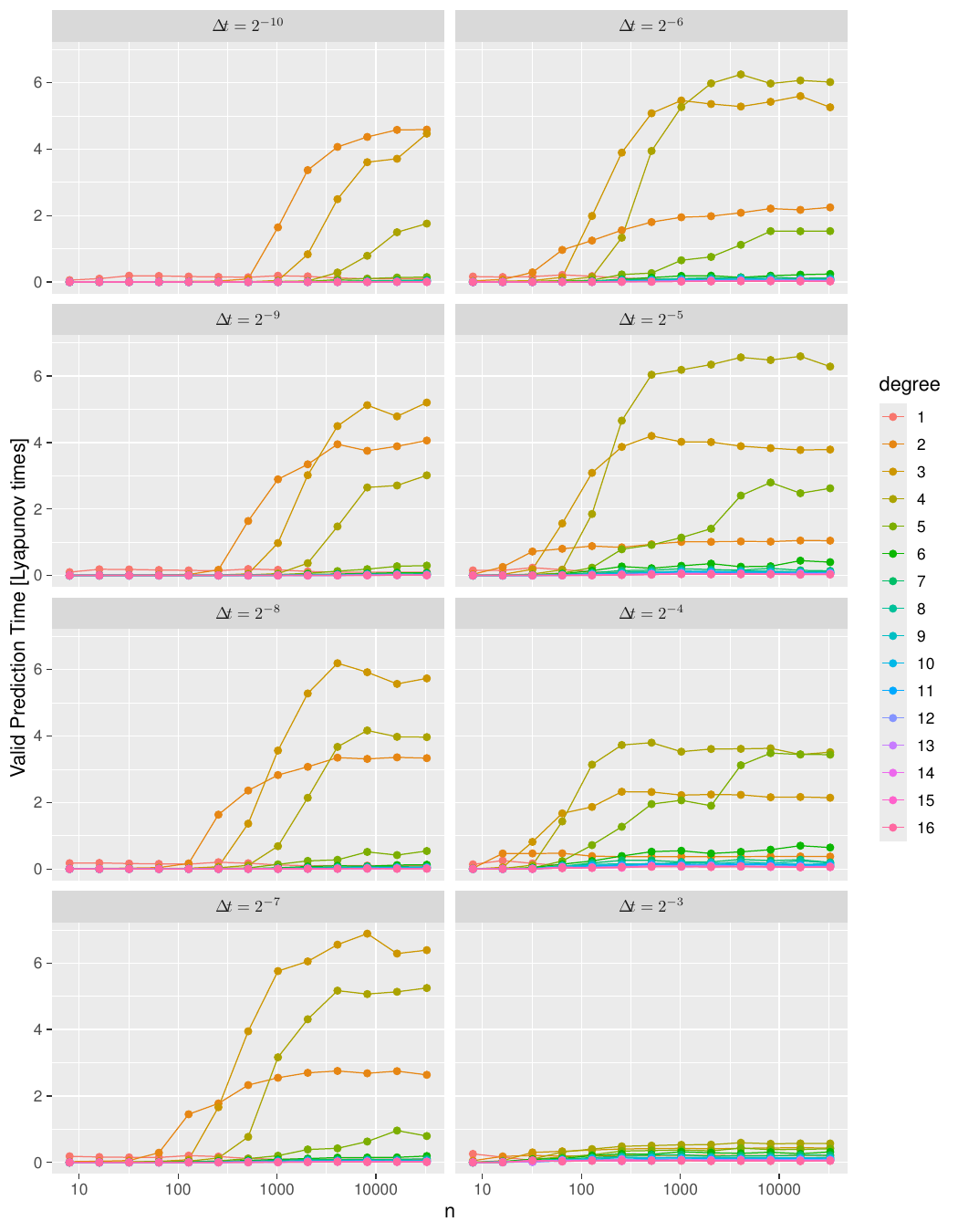}
\end{center}
\caption{\textbf{All Plot for L63, mds, normalize full, test sequential}. See the beginning of \cref{app:sec:details} for a description.}
\end{figure}

\clearpage
\subsection{L63, mdd, normalize full, test sequential}
\begin{figure}[ht!]
\begin{center}
\includegraphics[width=\textwidth]{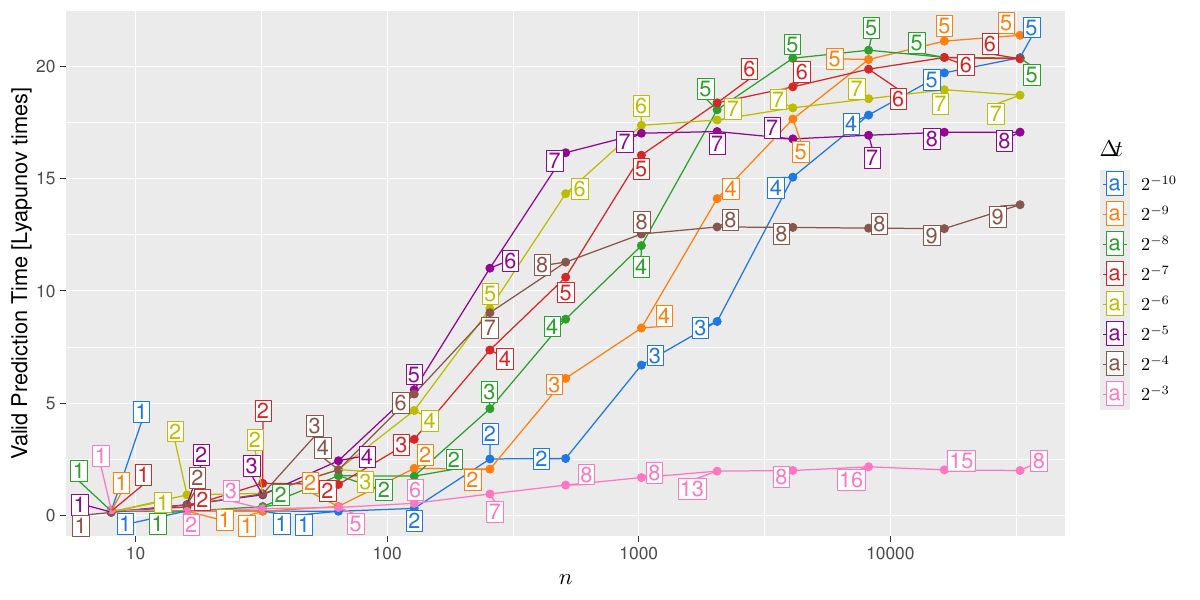}
\end{center}
\caption{\textbf{Best Plot for L63, mdd, normalize full, test sequential}. See the beginning of \cref{app:sec:details} for a description.}
\end{figure}
\begin{table}[ht!]
\input{tbl/L63_mdd_f_s_VPT_best_table.tex}
\caption{\textbf{Best Table for L63, mdd, normalize full, test sequential}. See the beginning of \cref{app:sec:details} for a description.}
\end{table}
\begin{figure}[ht!]
\begin{center}
\includegraphics[width=\textwidth]{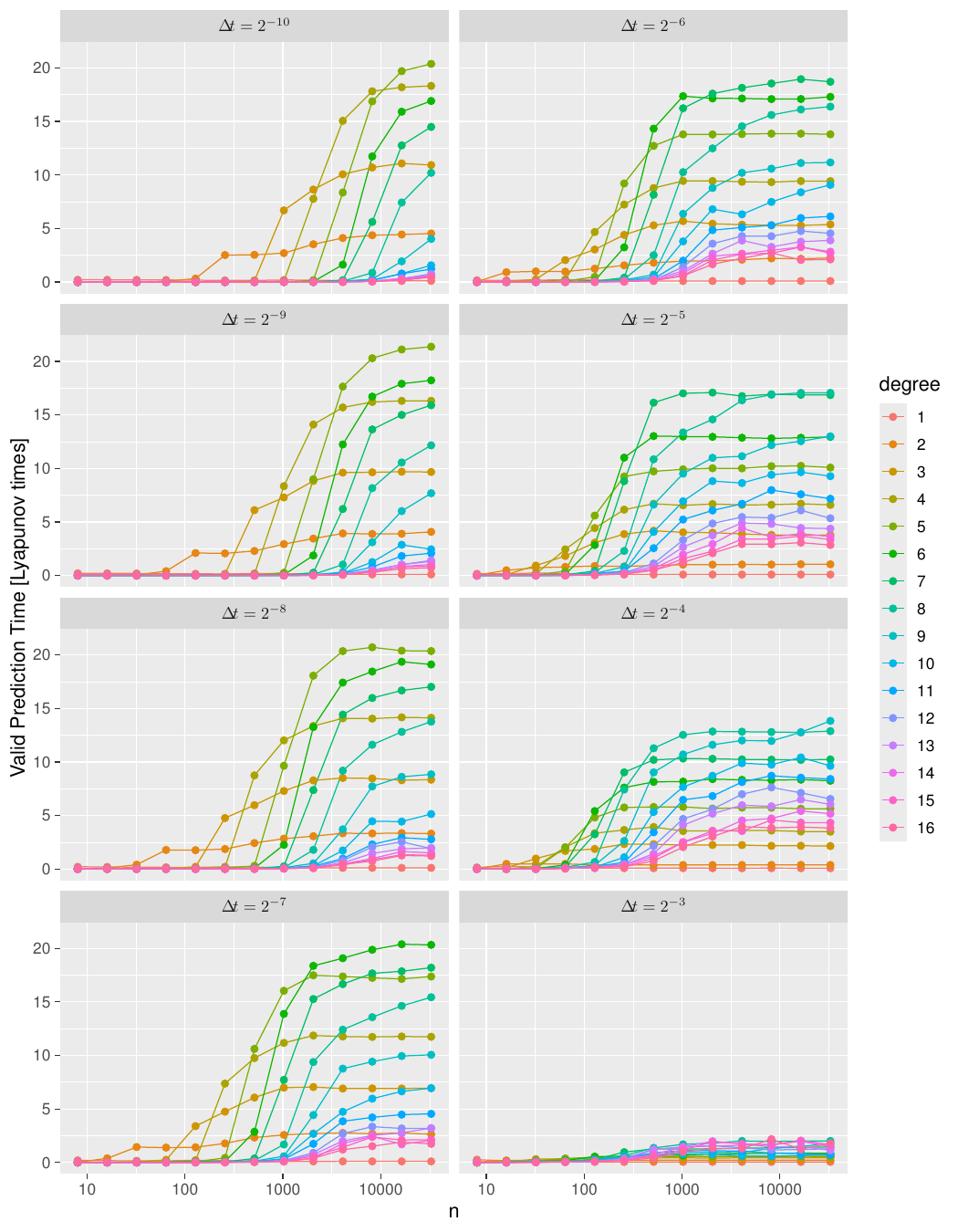}
\end{center}
\caption{\textbf{All Plot for L63, mdd, normalize full, test sequential}. See the beginning of \cref{app:sec:details} for a description.}
\end{figure}

\clearpage
\subsection{L63, dsd, normalize full, test sequential}
\begin{figure}[ht!]
\begin{center}
\includegraphics[width=\textwidth]{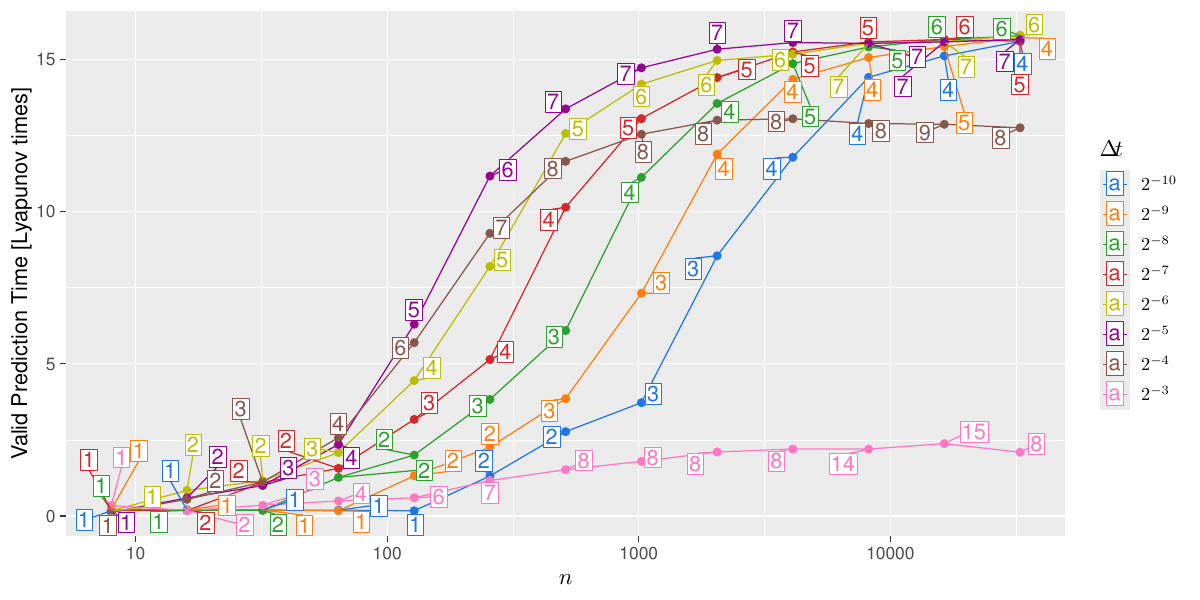}
\end{center}
\caption{\textbf{Best Plot for L63, dsd, normalize full, test sequential}. See the beginning of \cref{app:sec:details} for a description.}
\end{figure}
\begin{table}[ht!]
\input{tbl/L63_dsd_f_s_VPT_best_table.tex}
\caption{\textbf{Best Table for L63, dsd, normalize full, test sequential}. See the beginning of \cref{app:sec:details} for a description.}
\end{table}
\begin{figure}[ht!]
\begin{center}
\includegraphics[width=\textwidth]{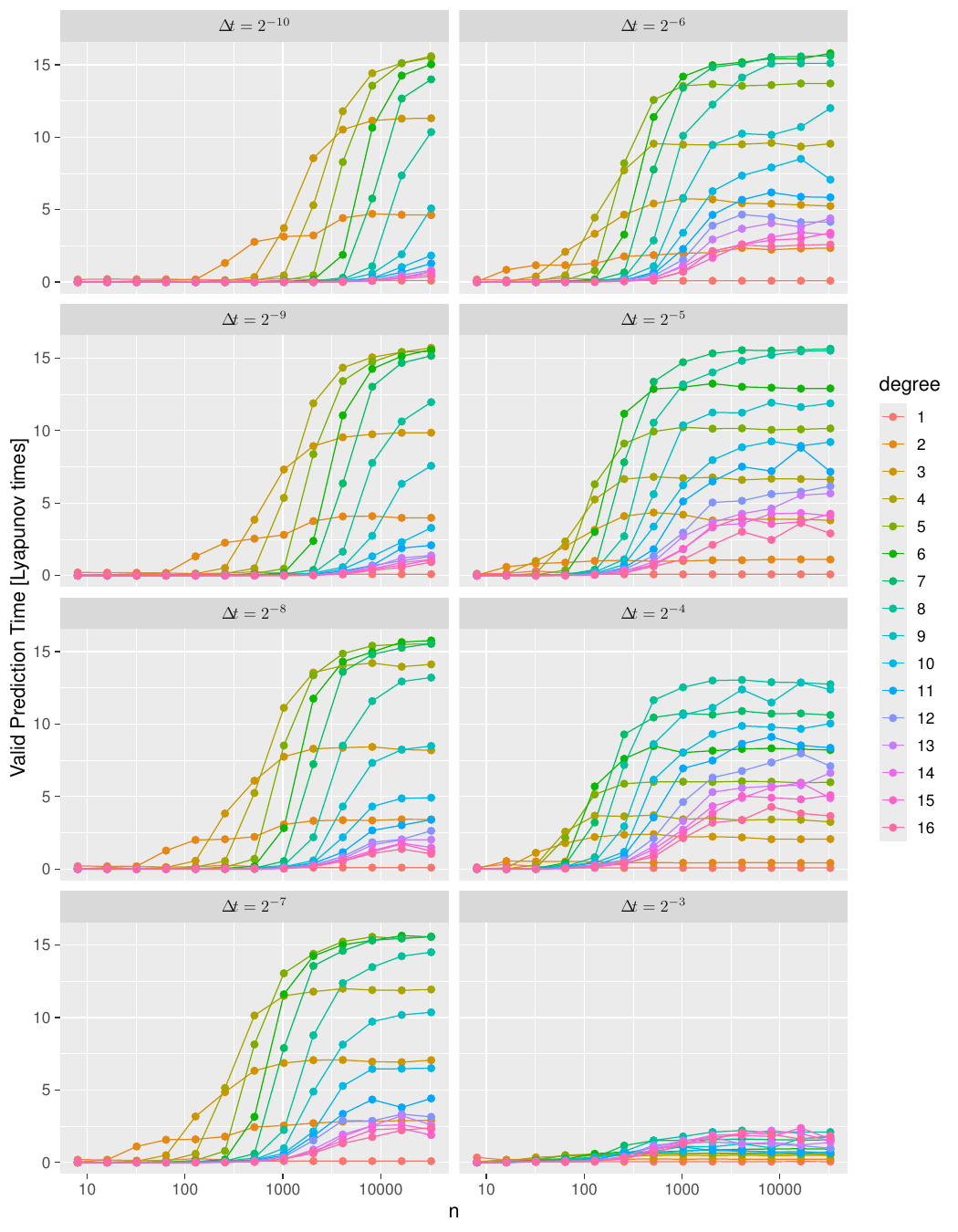}
\end{center}
\caption{\textbf{All Plot for L63, dsd, normalize full, test sequential}. See the beginning of \cref{app:sec:details} for a description.}
\end{figure}

\clearpage
\subsection{L63, msd, normalize full, test sequential}
\begin{figure}[ht!]
\begin{center}
\includegraphics[width=\textwidth]{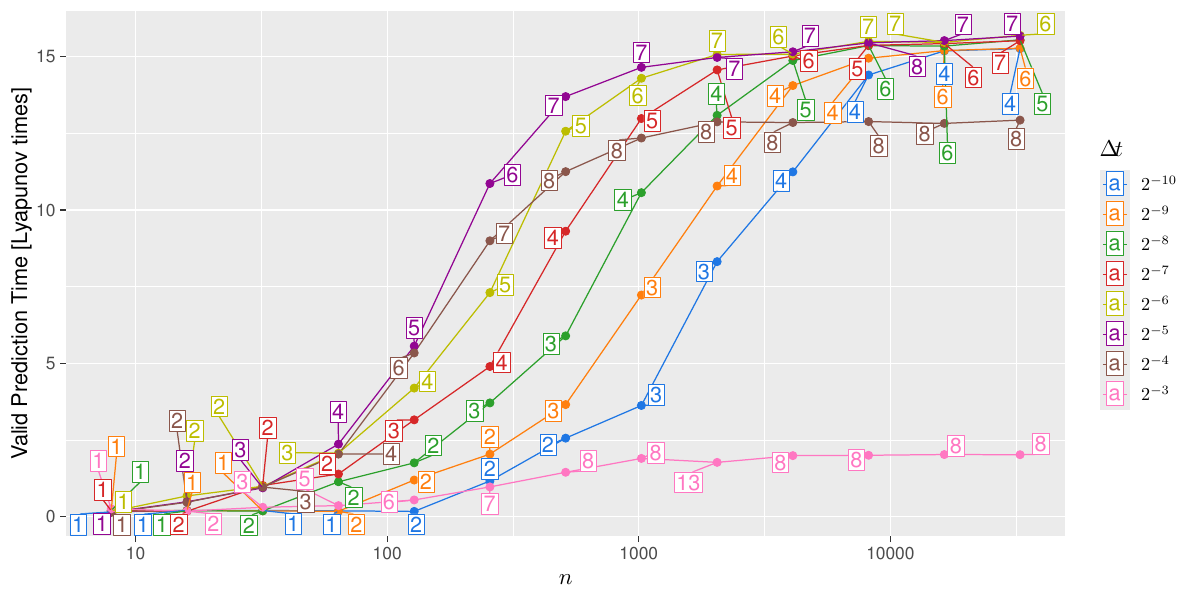}
\end{center}
\caption{\textbf{Best Plot for L63, msd, normalize full, test sequential}. See the beginning of \cref{app:sec:details} for a description.}
\end{figure}
\begin{table}[ht!]
\input{tbl/L63_msd_f_s_VPT_best_table.tex}
\caption{\textbf{Best Table for L63, msd, normalize full, test sequential}. See the beginning of \cref{app:sec:details} for a description.}
\end{table}
\begin{figure}[ht!]
\begin{center}
\includegraphics[width=\textwidth]{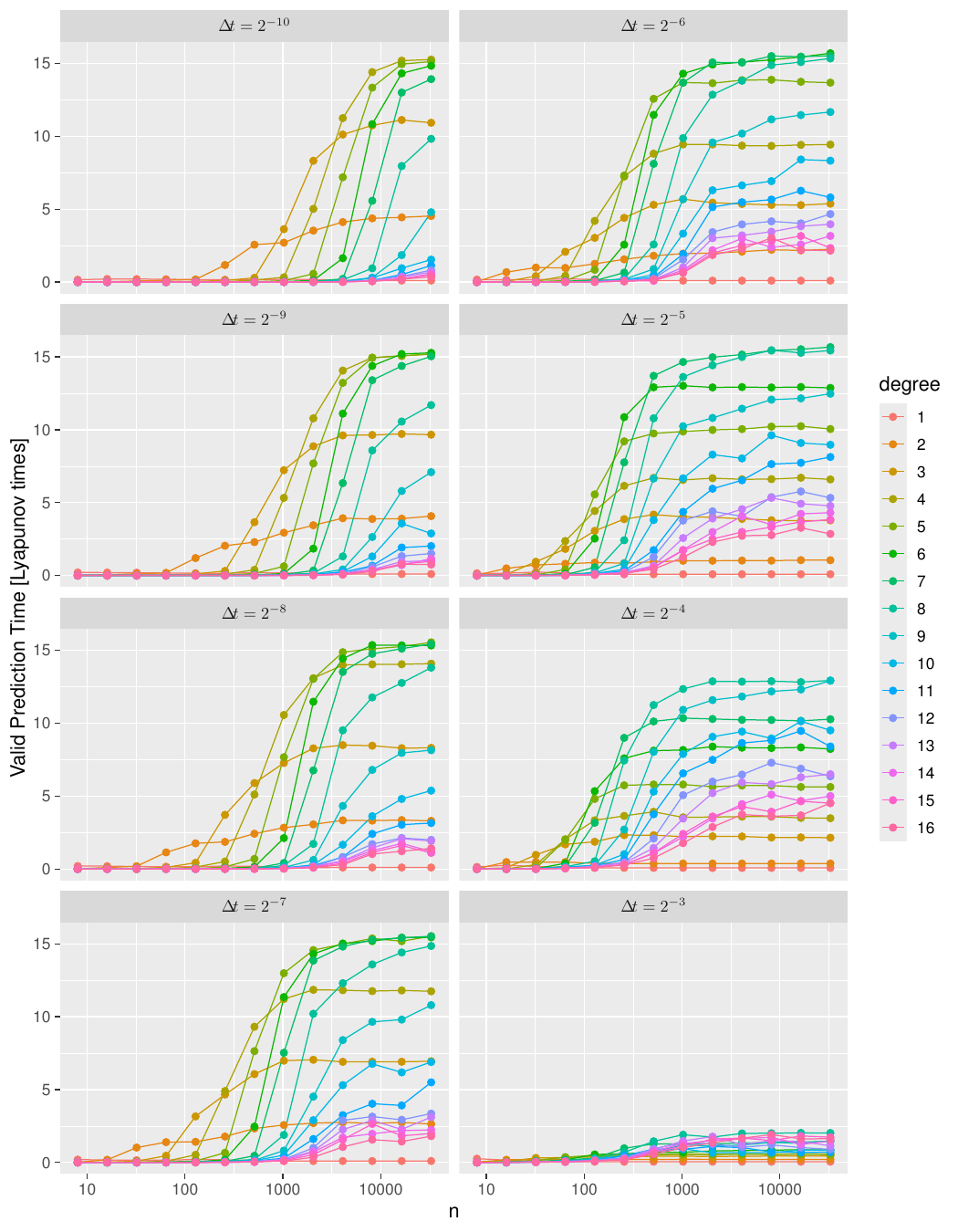}
\end{center}
\caption{\textbf{All Plot for L63, msd, normalize full, test sequential}. See the beginning of \cref{app:sec:details} for a description.}
\end{figure}

\clearpage
\subsection{L63, sdm, normalize none, test sequential}
\begin{figure}[ht!]
\begin{center}
\includegraphics[width=\textwidth]{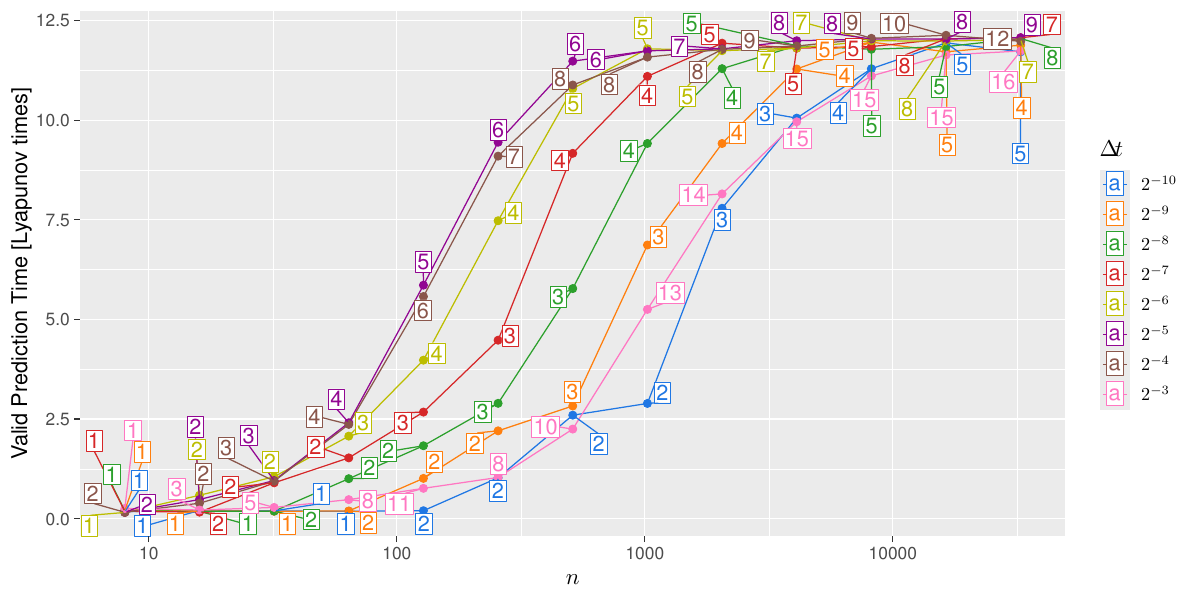}
\end{center}
\caption{\textbf{Best Plot for L63, sdm, normalize none, test sequential}. See the beginning of \cref{app:sec:details} for a description.}
\end{figure}
\begin{table}[ht!]
\input{tbl/L63_sdm_n_s_VPT_best_table.tex}
\caption{\textbf{Best Table for L63, sdm, normalize none, test sequential}. See the beginning of \cref{app:sec:details} for a description.}
\end{table}
\begin{figure}[ht!]
\begin{center}
\includegraphics[width=\textwidth]{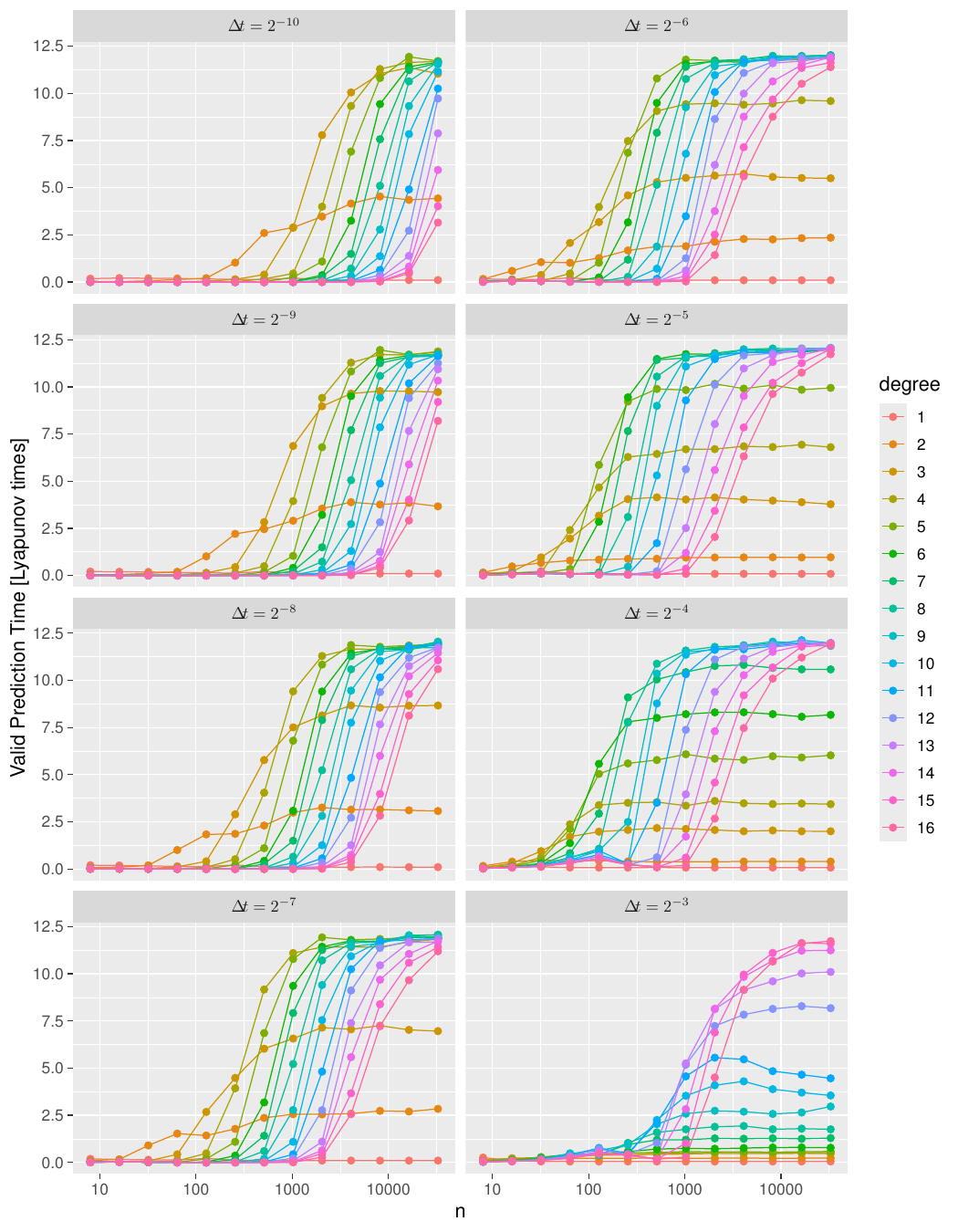}
\end{center}
\caption{\textbf{All Plot for L63, sdm, normalize none, test sequential}. See the beginning of \cref{app:sec:details} for a description.}
\end{figure}

\clearpage
\subsection{L63, ddm, normalize none, test sequential}
\begin{figure}[ht!]
\begin{center}
\includegraphics[width=\textwidth]{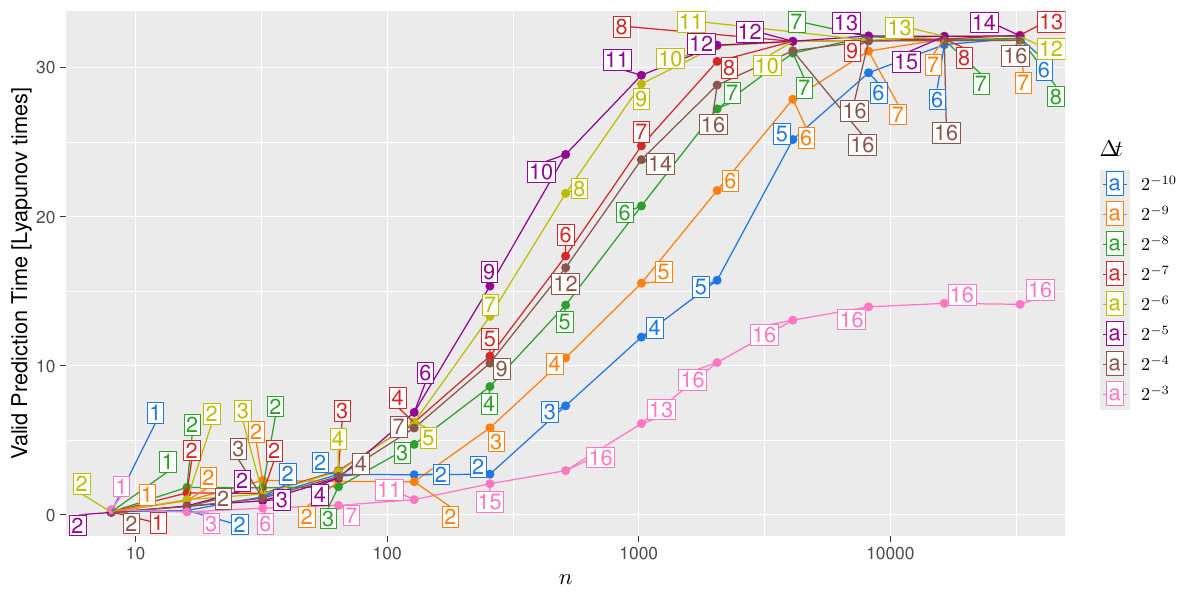}
\end{center}
\caption{\textbf{Best Plot for L63, ddm, normalize none, test sequential}. See the beginning of \cref{app:sec:details} for a description.}
\end{figure}
\begin{table}[ht!]
\input{tbl/L63_ddm_n_s_VPT_best_table.tex}
\caption{\textbf{Best Table for L63, ddm, normalize none, test sequential}. See the beginning of \cref{app:sec:details} for a description.}
\end{table}
\begin{figure}[ht!]
\begin{center}
\includegraphics[width=\textwidth]{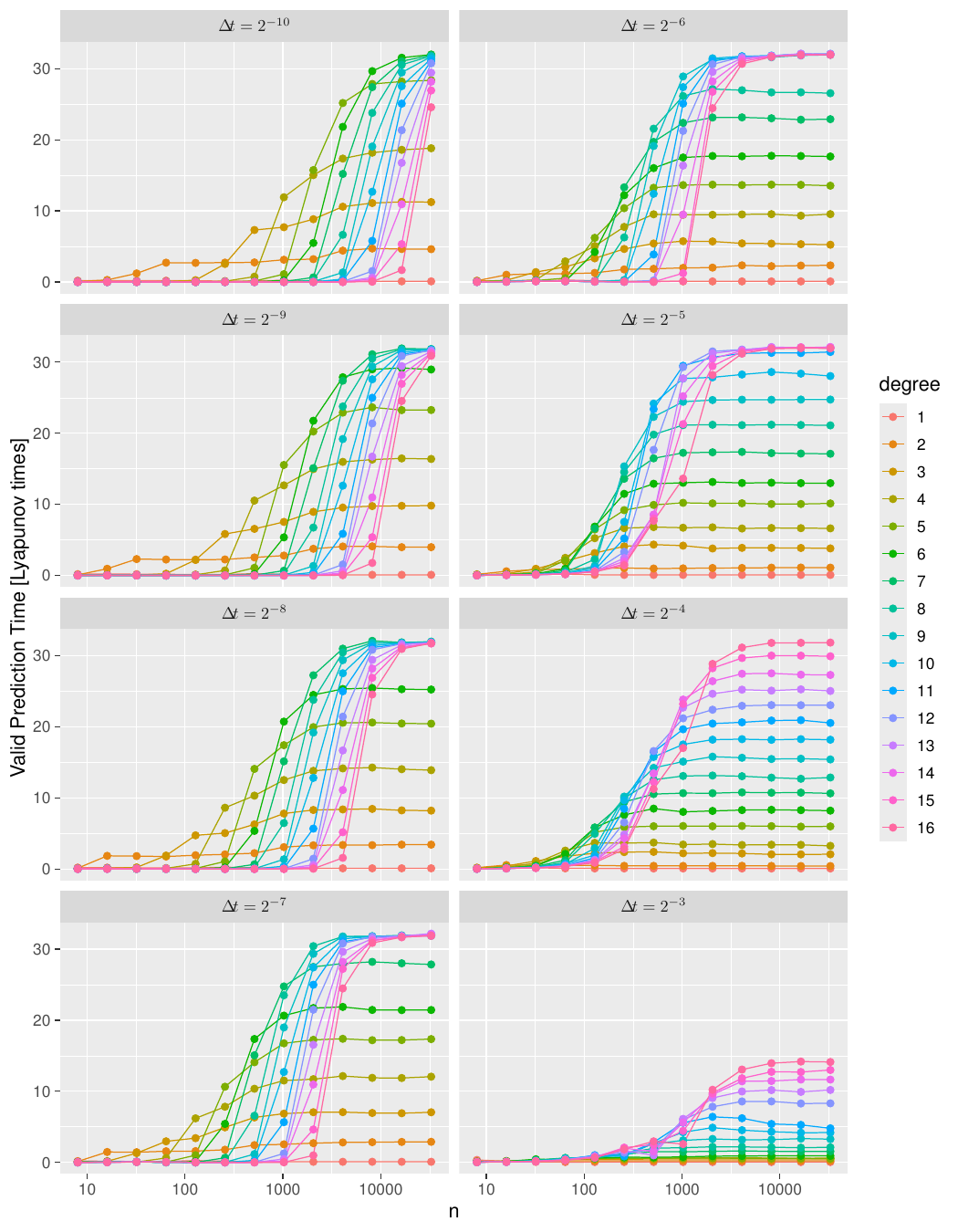}
\end{center}
\caption{\textbf{All Plot for L63, ddm, normalize none, test sequential}. See the beginning of \cref{app:sec:details} for a description.}
\end{figure}

\clearpage
\subsection{L63, mdm, normalize none, test sequential}
\begin{figure}[ht!]
\begin{center}
\includegraphics[width=\textwidth]{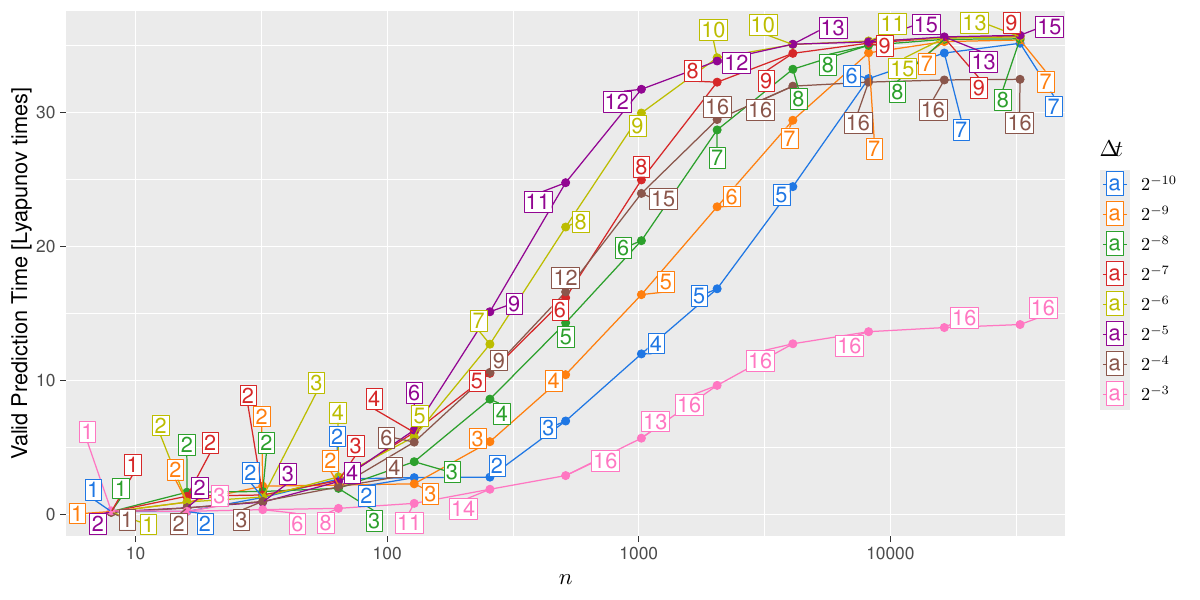}
\end{center}
\caption{\textbf{Best Plot for L63, mdm, normalize none, test sequential}. See the beginning of \cref{app:sec:details} for a description.}
\end{figure}
\begin{table}[ht!]
\input{tbl/L63_mdm_n_s_VPT_best_table.tex}
\caption{\textbf{Best Table for L63, mdm, normalize none, test sequential}. See the beginning of \cref{app:sec:details} for a description.}
\end{table}
\begin{figure}[ht!]
\begin{center}
\includegraphics[width=\textwidth]{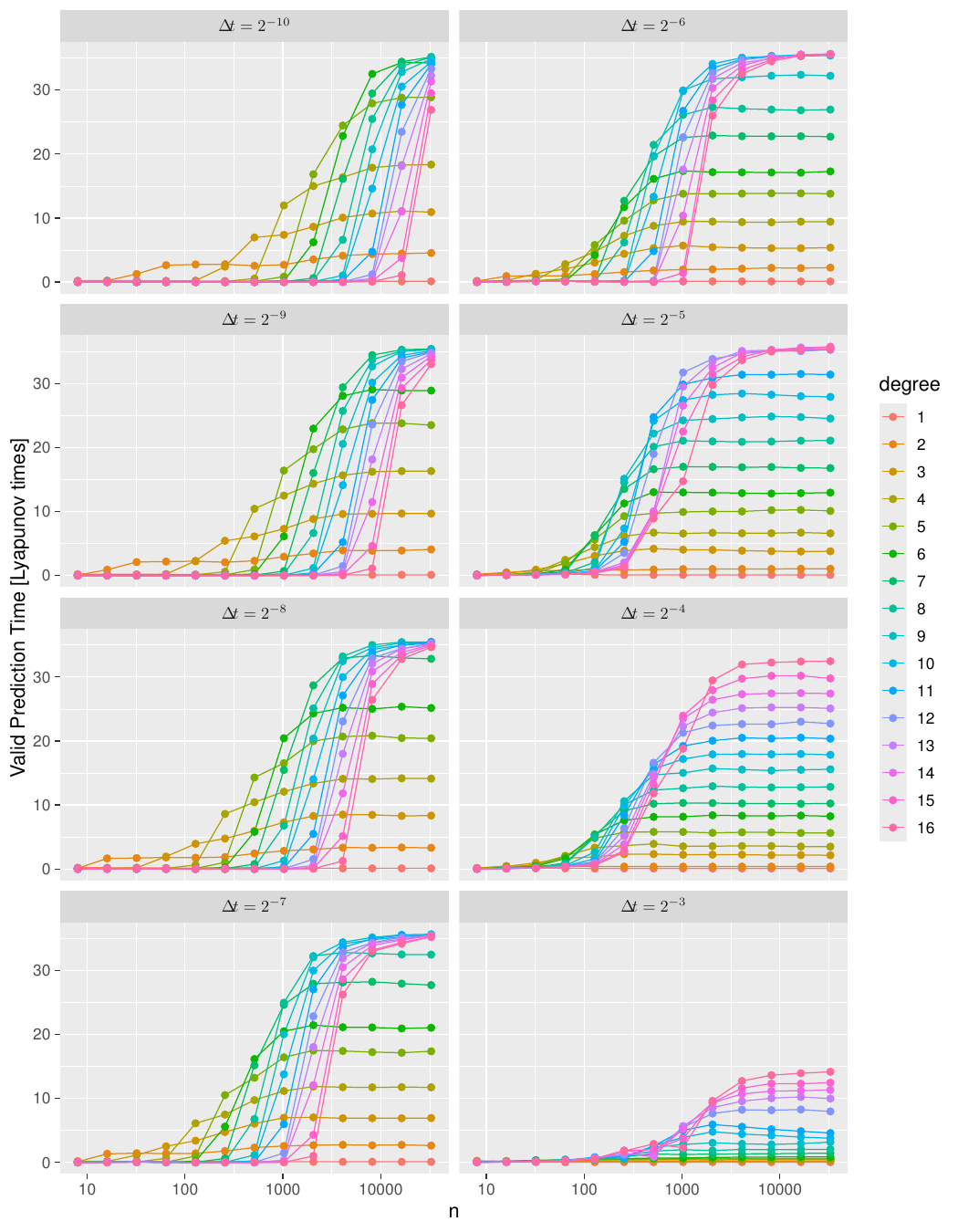}
\end{center}
\caption{\textbf{All Plot for L63, mdm, normalize none, test sequential}. See the beginning of \cref{app:sec:details} for a description.}
\end{figure}

\clearpage
\subsection{L63, mdm, normalize none, test random}
\begin{figure}[ht!]
\begin{center}
\includegraphics[width=\textwidth]{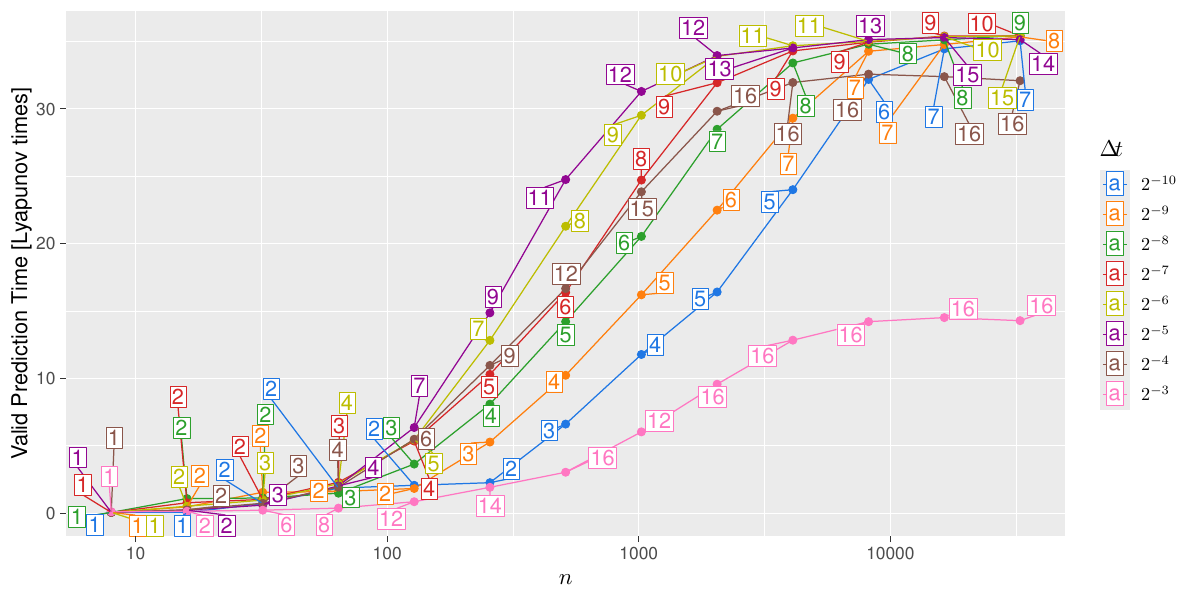}
\end{center}
\caption{\textbf{Best Plot for L63, mdm, normalize none, test random}. See the beginning of \cref{app:sec:details} for a description.}
\end{figure}
\begin{table}[ht!]
\input{tbl/L63_mdm_n_r_VPT_best_table.tex}
\caption{\textbf{Best Table for L63, mdm, normalize none, test random}. See the beginning of \cref{app:sec:details} for a description.}
\end{table}
\begin{figure}[ht!]
\begin{center}
\includegraphics[width=\textwidth]{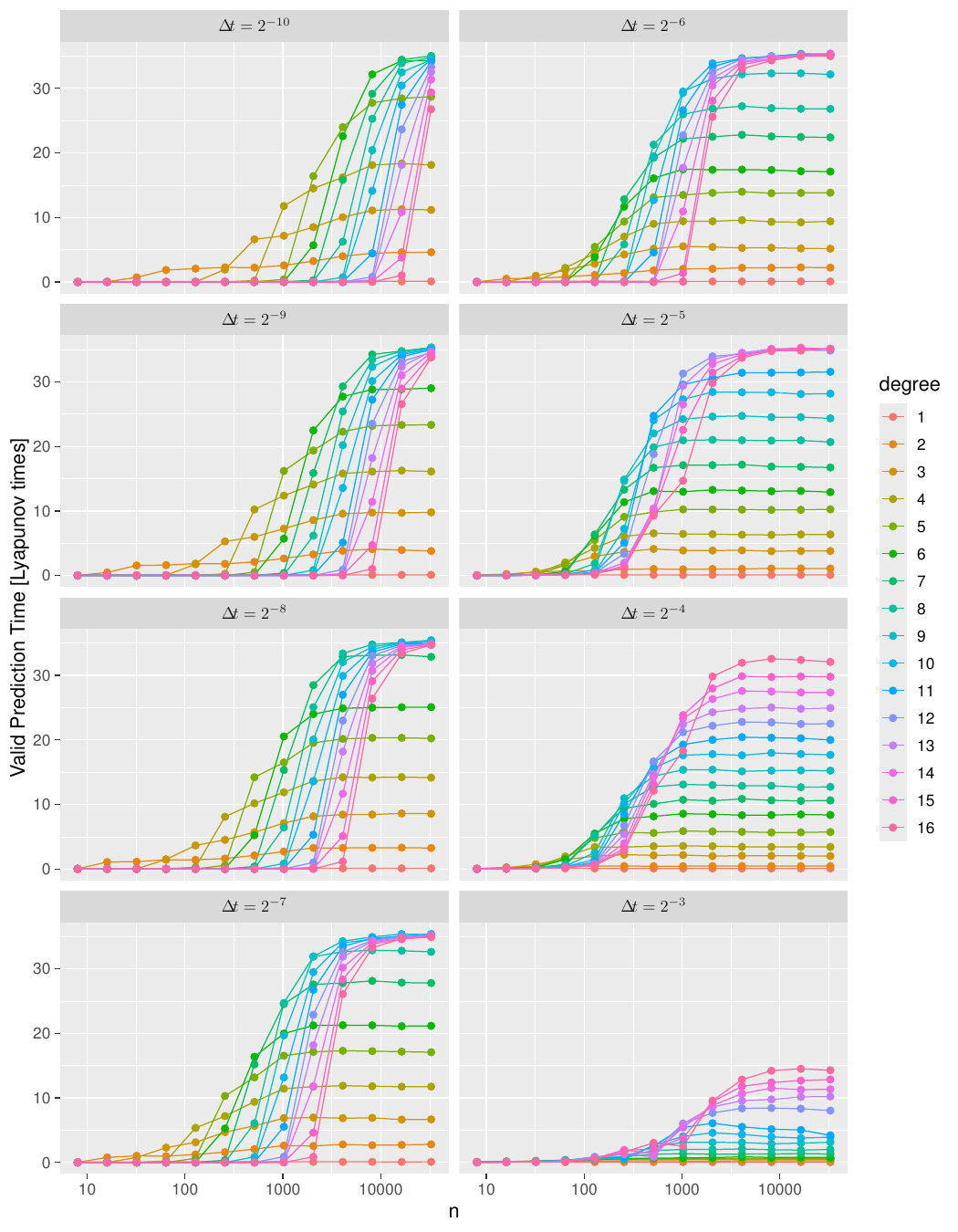}
\end{center}
\caption{\textbf{All Plot for L63, mdm, normalize none, test random}. See the beginning of \cref{app:sec:details} for a description.}
\end{figure}

\clearpage
\subsection{L63, mmm, normalize none, test sequential}
\begin{figure}[ht!]
\begin{center}
\includegraphics[width=\textwidth]{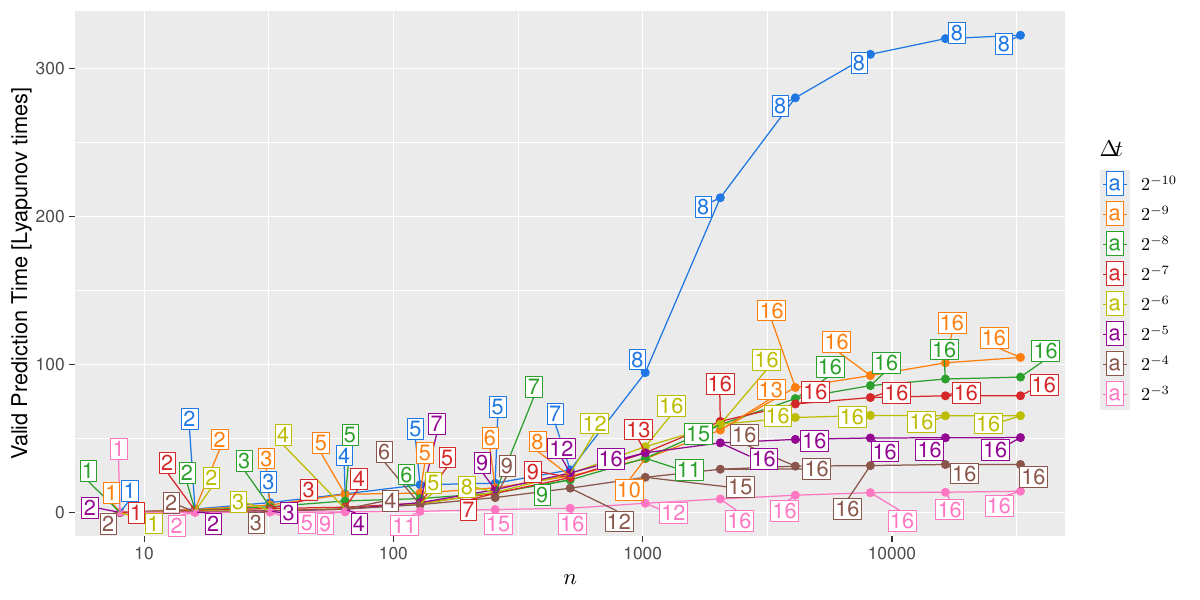}
\end{center}
\caption{\textbf{Best Plot for L63, mmm, normalize none, test sequential}. See the beginning of \cref{app:sec:details} for a description.}
\end{figure}
\begin{table}[ht!]
\input{tbl/L63_mmm_n_s_VPT_best_table.tex}
\caption{\textbf{Best Table for L63, mmm, normalize none, test sequential}. See the beginning of \cref{app:sec:details} for a description.}
\end{table}
\begin{figure}[ht!]
\begin{center}
\includegraphics[width=\textwidth]{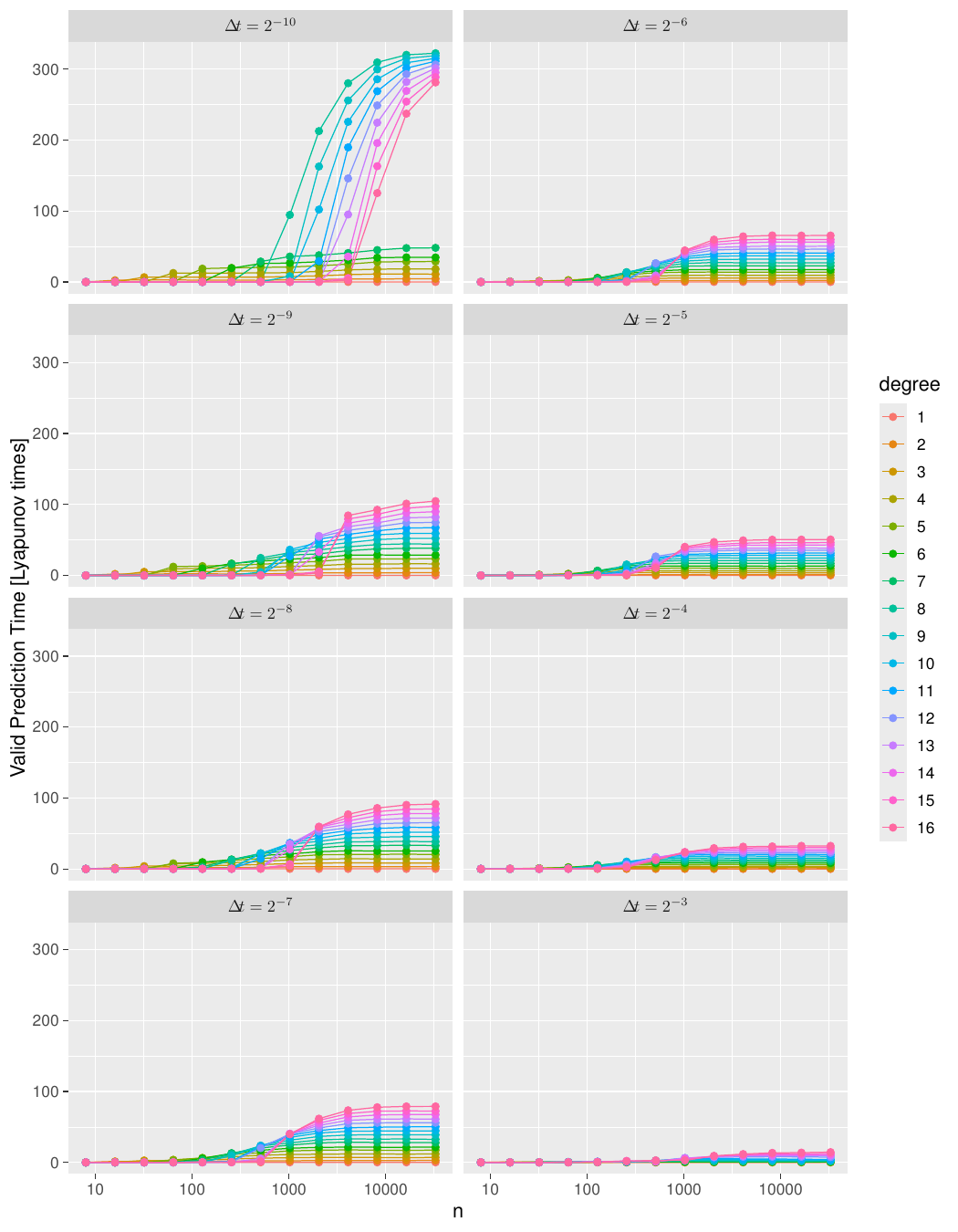}
\end{center}
\caption{\textbf{All Plot for L63, mmm, normalize none, test sequential}. See the beginning of \cref{app:sec:details} for a description.}
\end{figure}

\clearpage
\subsection{L63, xdm, normalize none, test sequential}
\begin{figure}[ht!]
\begin{center}
\includegraphics[width=\textwidth]{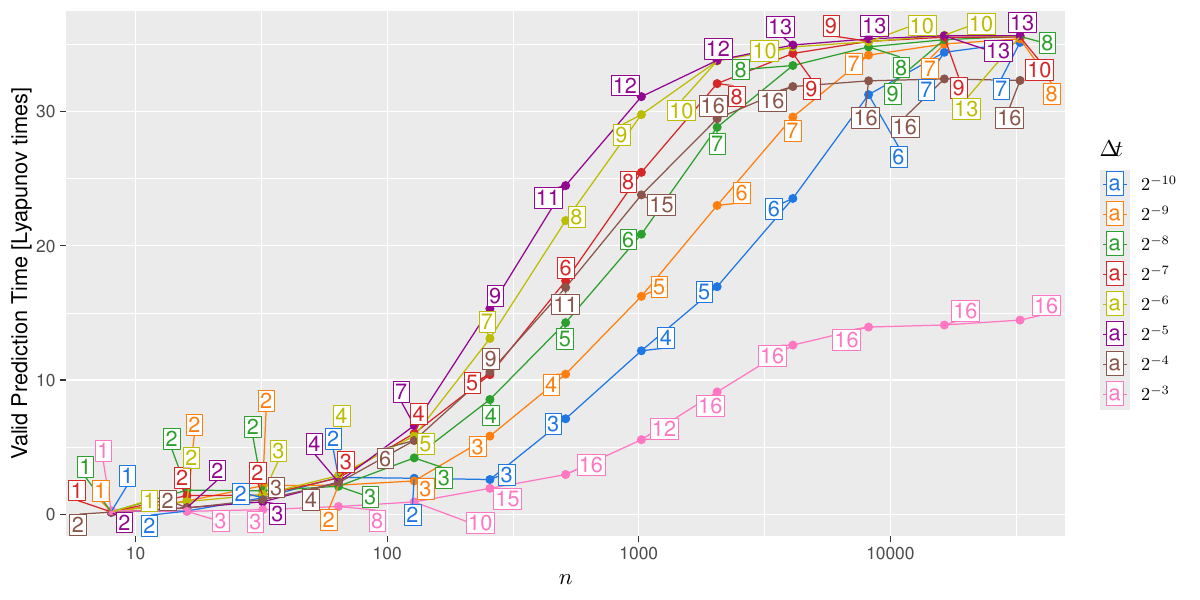}
\end{center}
\caption{\textbf{Best Plot for L63, xdm, normalize none, test sequential}. See the beginning of \cref{app:sec:details} for a description.}
\end{figure}
\begin{table}[ht!]
\input{tbl/L63_xdm_n_s_VPT_best_table.tex}
\caption{\textbf{Best Table for L63, xdm, normalize none, test sequential}. See the beginning of \cref{app:sec:details} for a description.}
\end{table}
\begin{figure}[ht!]
\begin{center}
\includegraphics[width=\textwidth]{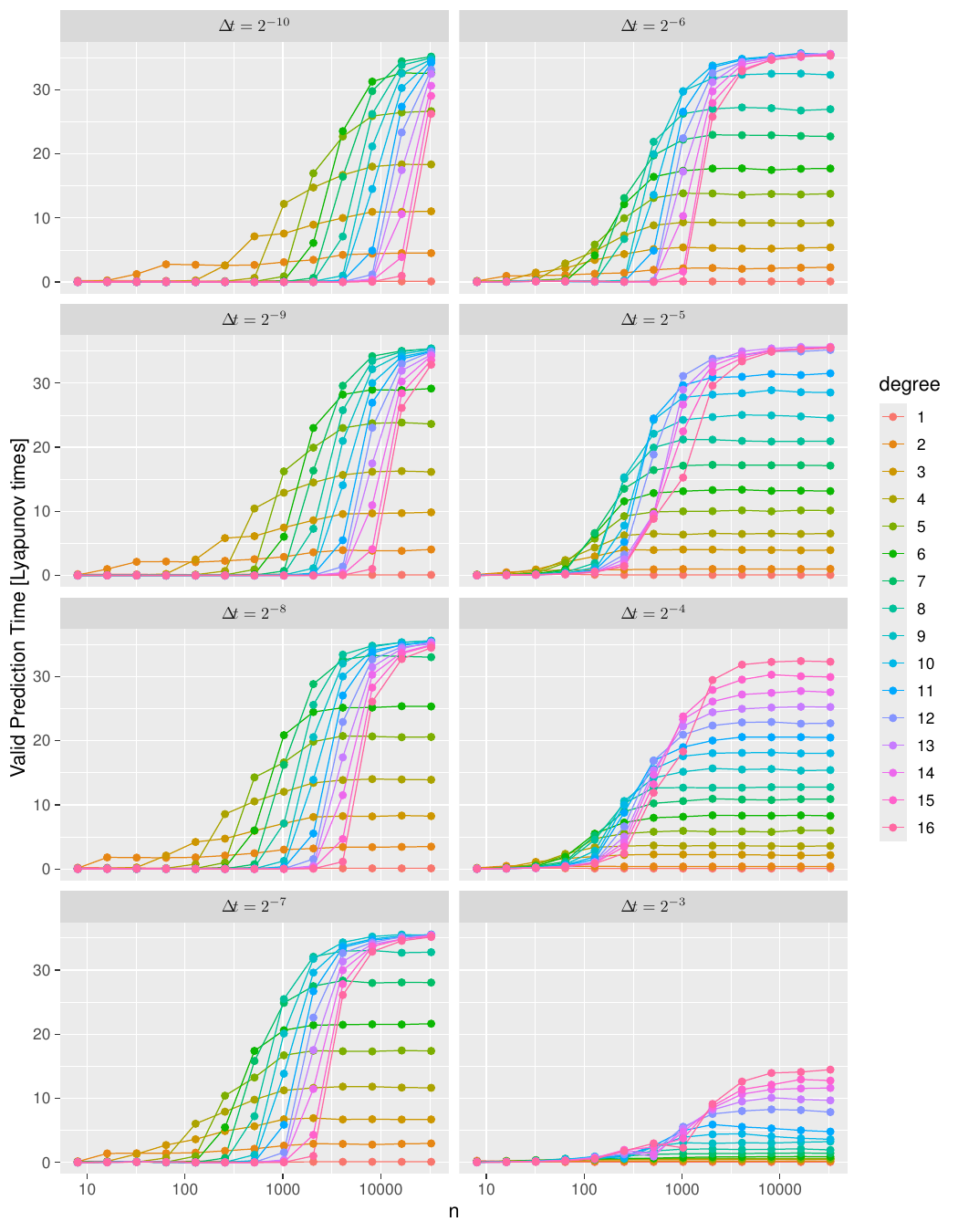}
\end{center}
\caption{\textbf{All Plot for L63, xdm, normalize none, test sequential}. See the beginning of \cref{app:sec:details} for a description.}
\end{figure}

\clearpage
\subsection{L63, ydm, normalize none, test sequential}
\begin{figure}[ht!]
\begin{center}
\includegraphics[width=\textwidth]{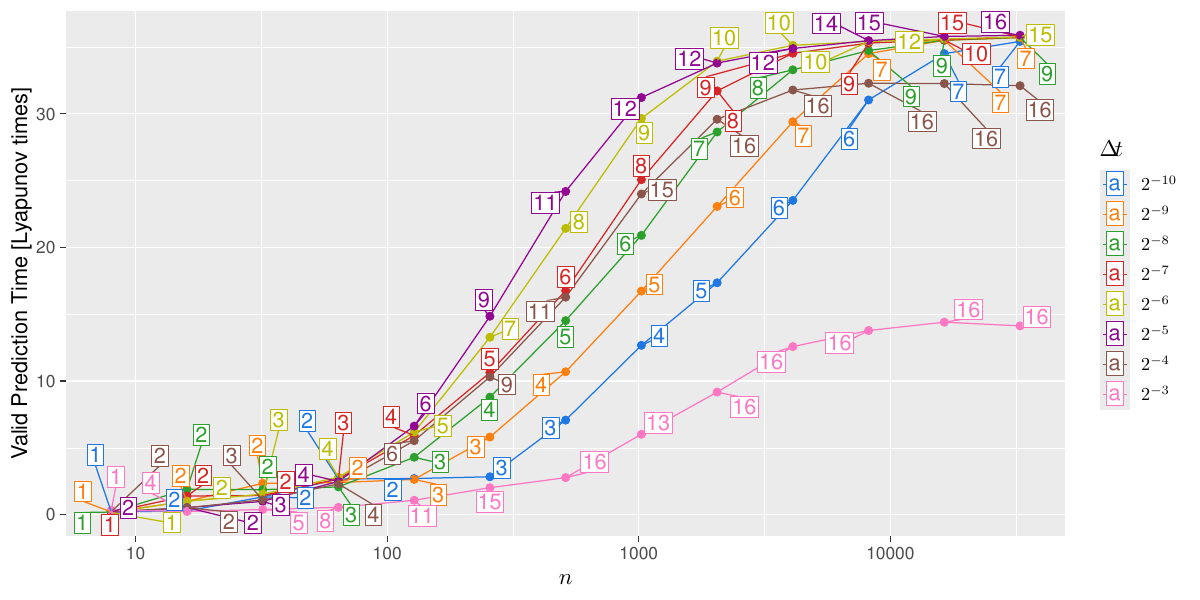}
\end{center}
\caption{\textbf{Best Plot for L63, ydm, normalize none, test sequential}. See the beginning of \cref{app:sec:details} for a description.}
\end{figure}
\begin{table}[ht!]
\input{tbl/L63_ydm_n_s_VPT_best_table.tex}
\caption{\textbf{Best Table for L63, ydm, normalize none, test sequential}. See the beginning of \cref{app:sec:details} for a description.}
\end{table}
\begin{figure}[ht!]
\begin{center}
\includegraphics[width=\textwidth]{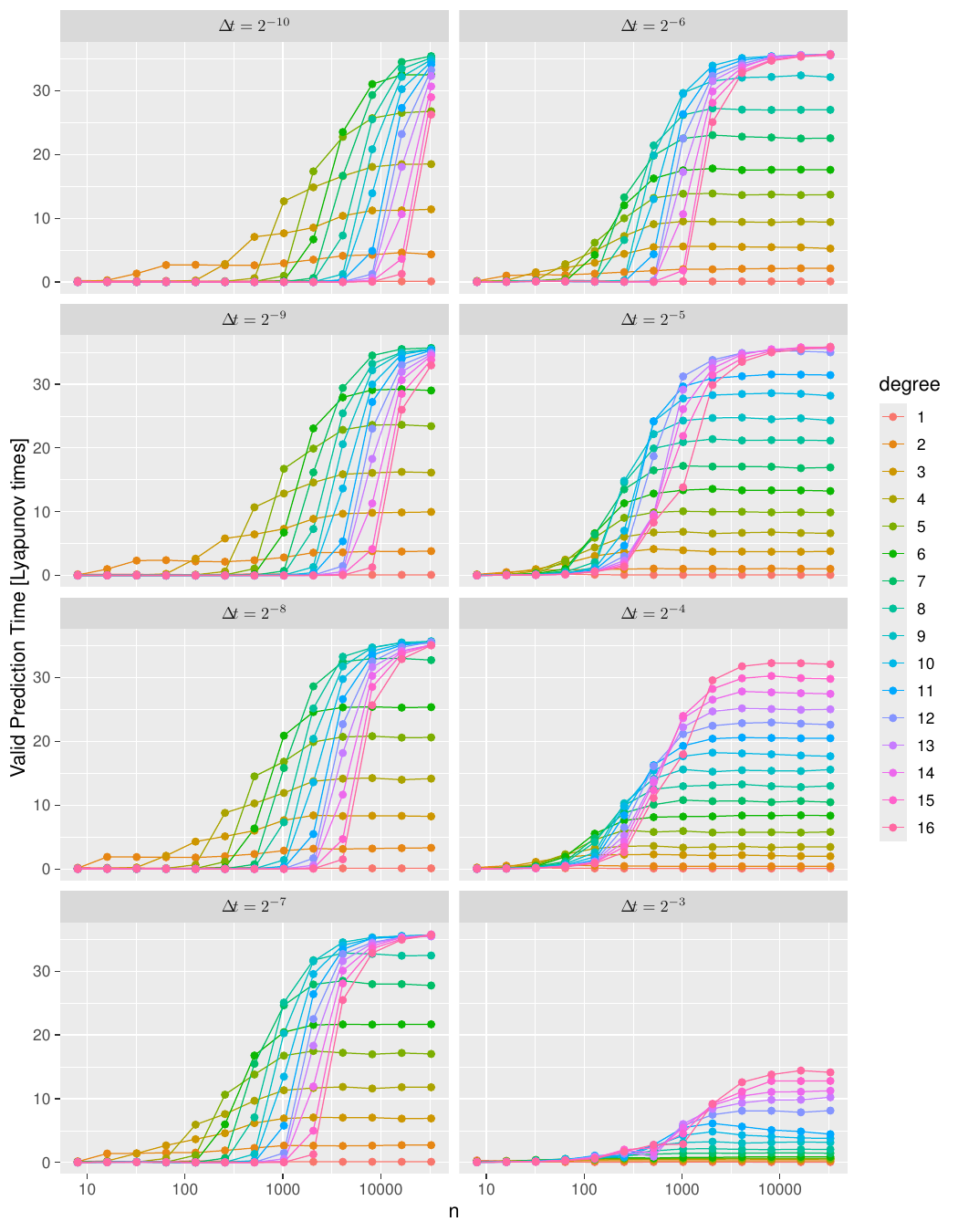}
\end{center}
\caption{\textbf{All Plot for L63, ydm, normalize none, test sequential}. See the beginning of \cref{app:sec:details} for a description.}
\end{figure}

\clearpage
\subsection{L63, ddd, normalize full, test sequential, noise-free init.cond., noise level 1e-09}
\begin{figure}[ht!]
\begin{center}
\includegraphics[width=\textwidth]{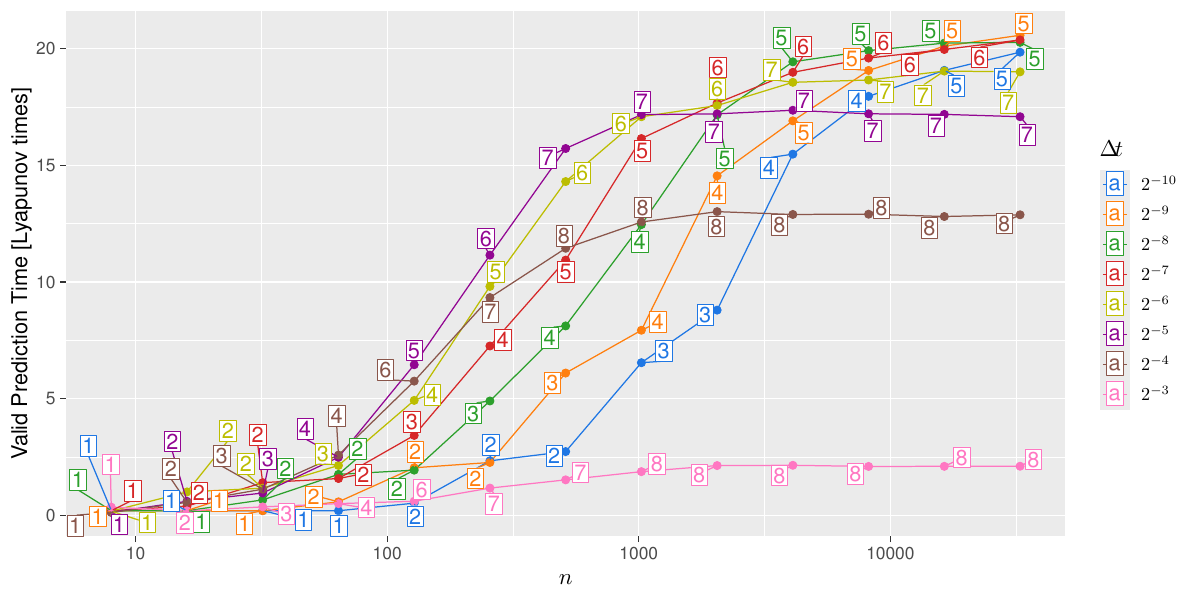}
\end{center}
\caption{\textbf{Best Plot for L63, ddd, normalize full, test sequential, noise-free init.cond., noise level 1e-09}. See the beginning of \cref{app:sec:details} for a description.}
\end{figure}
\begin{table}[ht!]
\input{tbl/L63_ddd_f_s_VPT_FALSE_9_best_table.tex}
\caption{\textbf{Best Table for L63, ddd, normalize full, test sequential, noise-free init.cond., noise level 1e-09}. See the beginning of \cref{app:sec:details} for a description.}
\end{table}
\begin{figure}[ht!]
\begin{center}
\includegraphics[width=\textwidth]{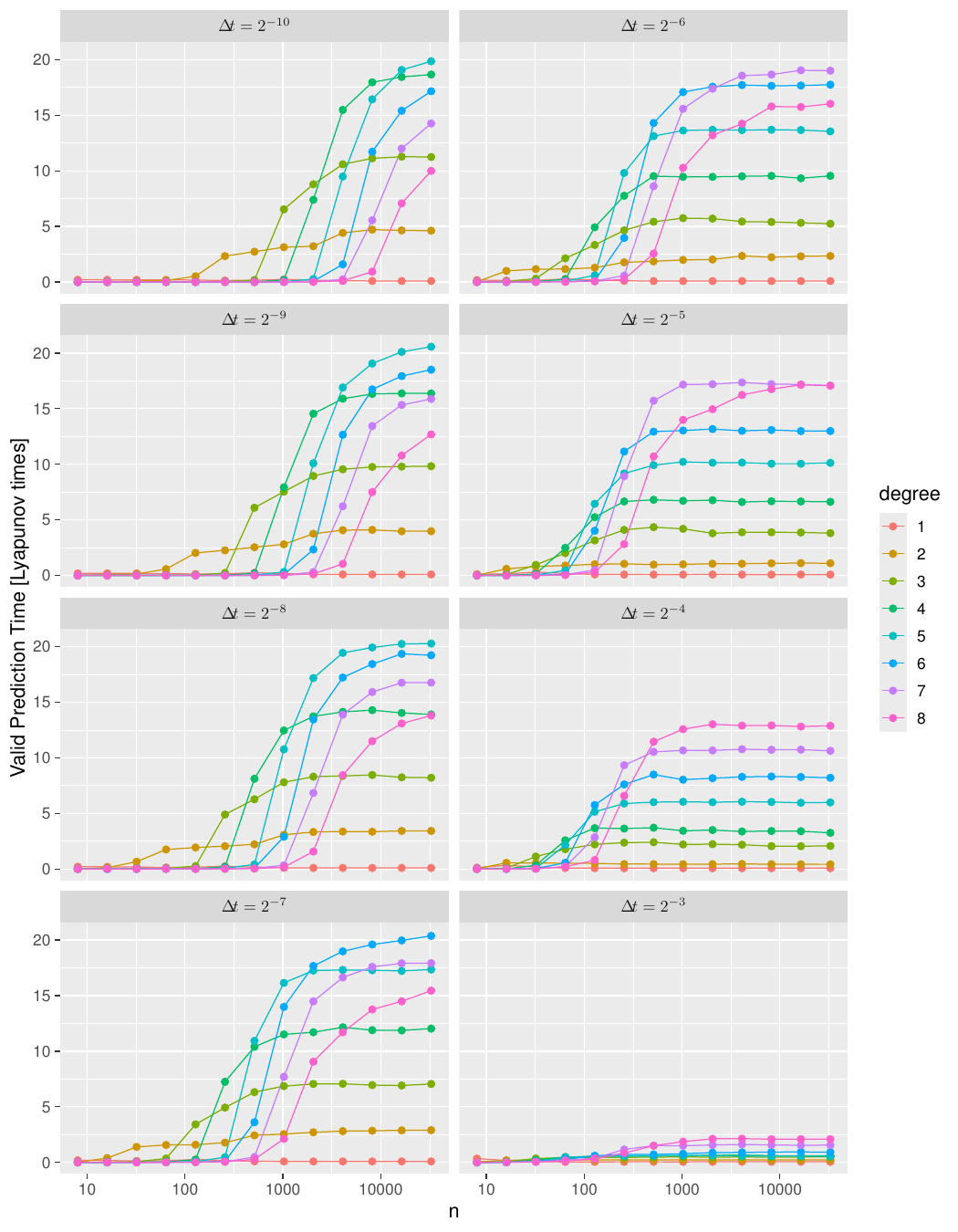}
\end{center}
\caption{\textbf{All Plot for L63, ddd, normalize full, test sequential, noise-free init.cond., noise level 1e-09}. See the beginning of \cref{app:sec:details} for a description.}
\end{figure}

\clearpage
\subsection{L63, ddd, normalize full, test sequential, noise-free init.cond., noise level 1e-07}
\begin{figure}[ht!]
\begin{center}
\includegraphics[width=\textwidth]{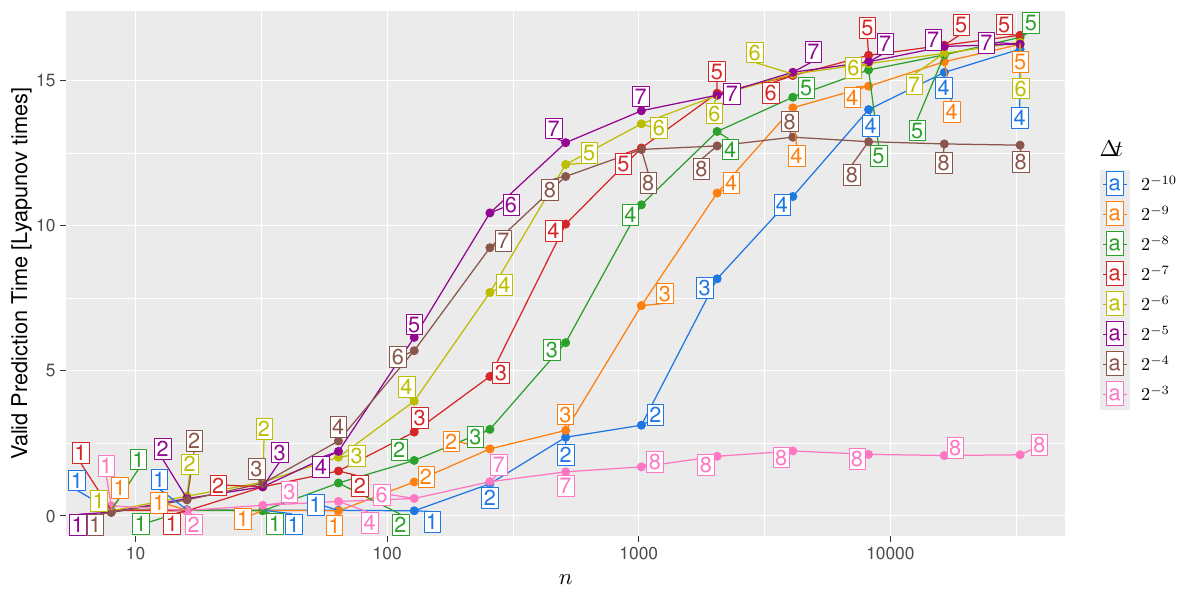}
\end{center}
\caption{\textbf{Best Plot for L63, ddd, normalize full, test sequential, noise-free init.cond., noise level 1e-07}. See the beginning of \cref{app:sec:details} for a description.}
\end{figure}
\begin{table}[ht!]
\input{tbl/L63_ddd_f_s_VPT_FALSE_7_best_table.tex}
\caption{\textbf{Best Table for L63, ddd, normalize full, test sequential, noise-free init.cond., noise level 1e-07}. See the beginning of \cref{app:sec:details} for a description.}
\end{table}
\begin{figure}[ht!]
\begin{center}
\includegraphics[width=\textwidth]{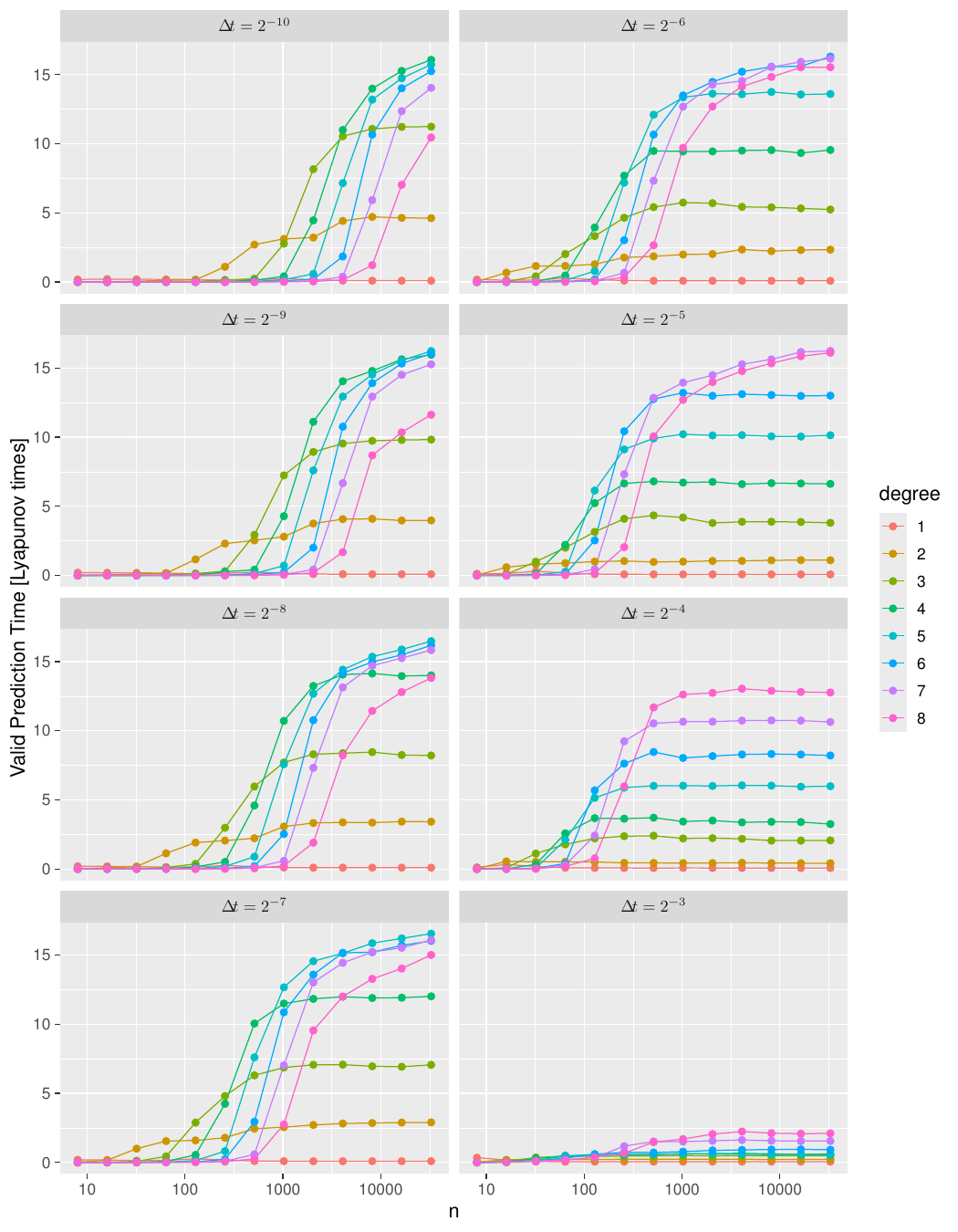}
\end{center}
\caption{\textbf{All Plot for L63, ddd, normalize full, test sequential, noise-free init.cond., noise level 1e-07}. See the beginning of \cref{app:sec:details} for a description.}
\end{figure}

\clearpage
\subsection{L63, ddd, normalize full, test sequential, noise-free init.cond., noise level 1e-05}
\begin{figure}[ht!]
\begin{center}
\includegraphics[width=\textwidth]{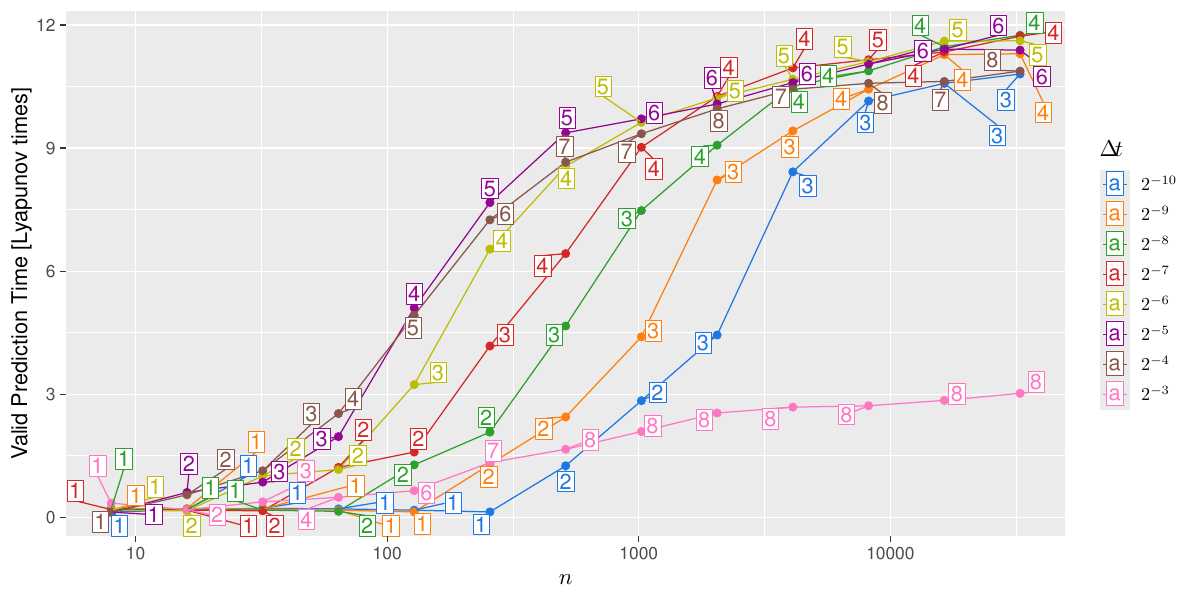}
\end{center}
\caption{\textbf{Best Plot for L63, ddd, normalize full, test sequential, noise-free init.cond., noise level 1e-05}. See the beginning of \cref{app:sec:details} for a description.}
\end{figure}
\begin{table}[ht!]
\input{tbl/L63_ddd_f_s_VPT_FALSE_5_best_table.tex}
\caption{\textbf{Best Table for L63, ddd, normalize full, test sequential, noise-free init.cond., noise level 1e-05}. See the beginning of \cref{app:sec:details} for a description.}
\end{table}
\begin{figure}[ht!]
\begin{center}
\includegraphics[width=\textwidth]{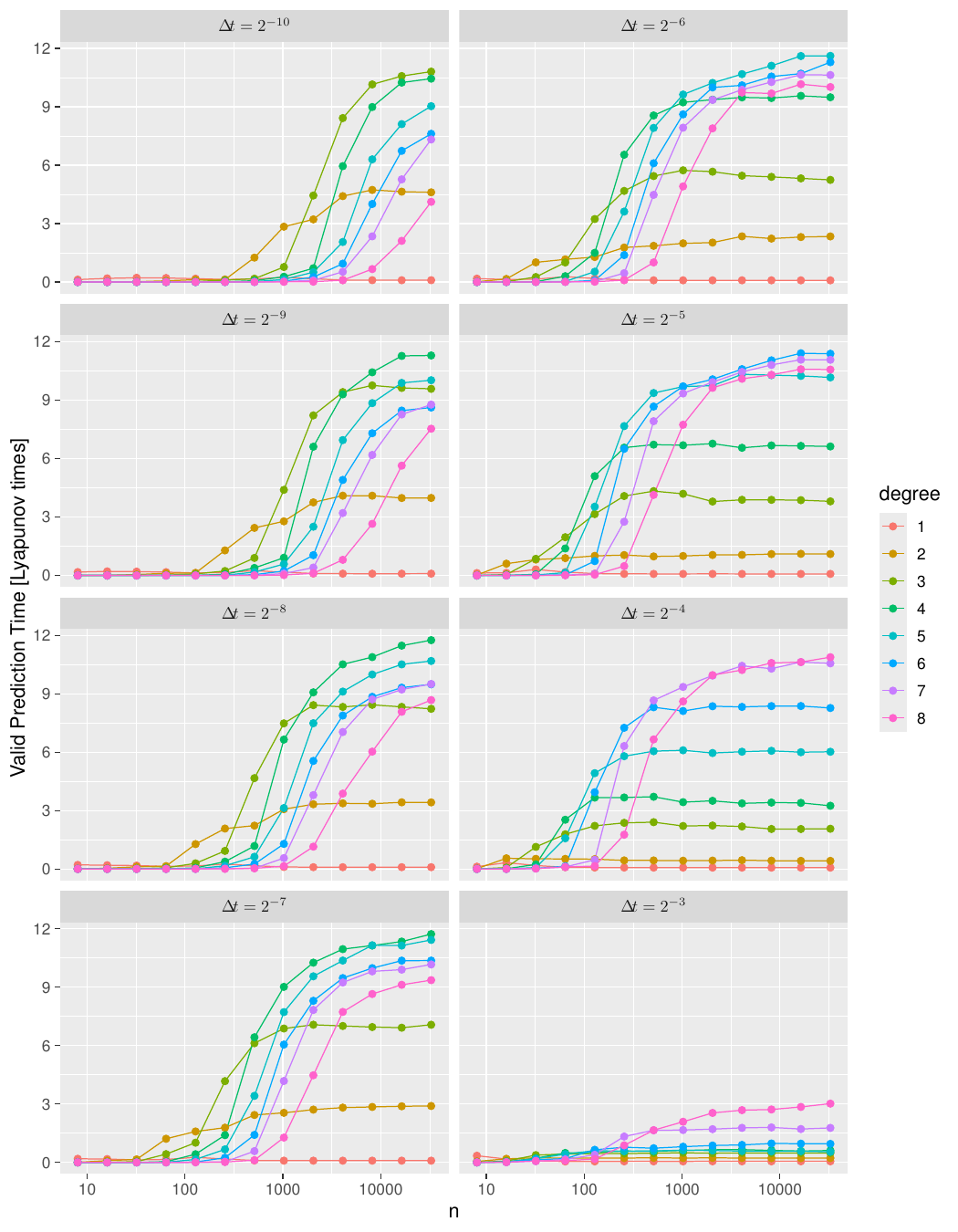}
\end{center}
\caption{\textbf{All Plot for L63, ddd, normalize full, test sequential, noise-free init.cond., noise level 1e-05}. See the beginning of \cref{app:sec:details} for a description.}
\end{figure}

\clearpage
\subsection{L63, ddd, normalize full, test sequential, noise-free init.cond., noise level 1e-03}
\begin{figure}[ht!]
\begin{center}
\includegraphics[width=\textwidth]{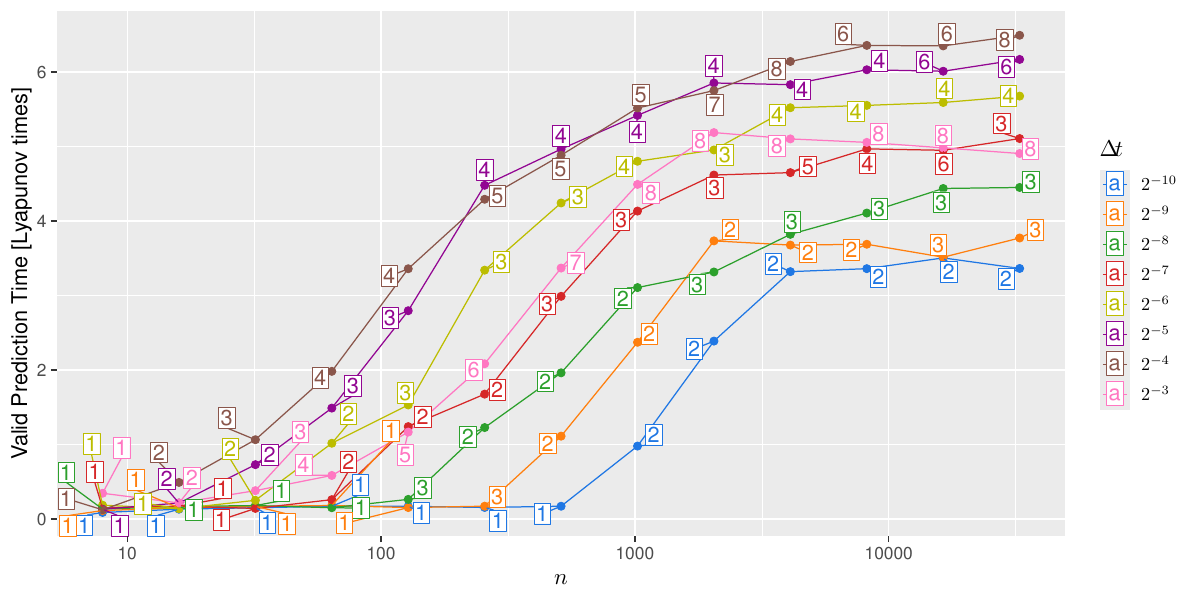}
\end{center}
\caption{\textbf{Best Plot for L63, ddd, normalize full, test sequential, noise-free init.cond., noise level 1e-03}. See the beginning of \cref{app:sec:details} for a description.}
\end{figure}
\begin{table}[ht!]
\input{tbl/L63_ddd_f_s_VPT_FALSE_3_best_table.tex}
\caption{\textbf{Best Table for L63, ddd, normalize full, test sequential, noise-free init.cond., noise level 1e-03}. See the beginning of \cref{app:sec:details} for a description.}
\end{table}
\begin{figure}[ht!]
\begin{center}
\includegraphics[width=\textwidth]{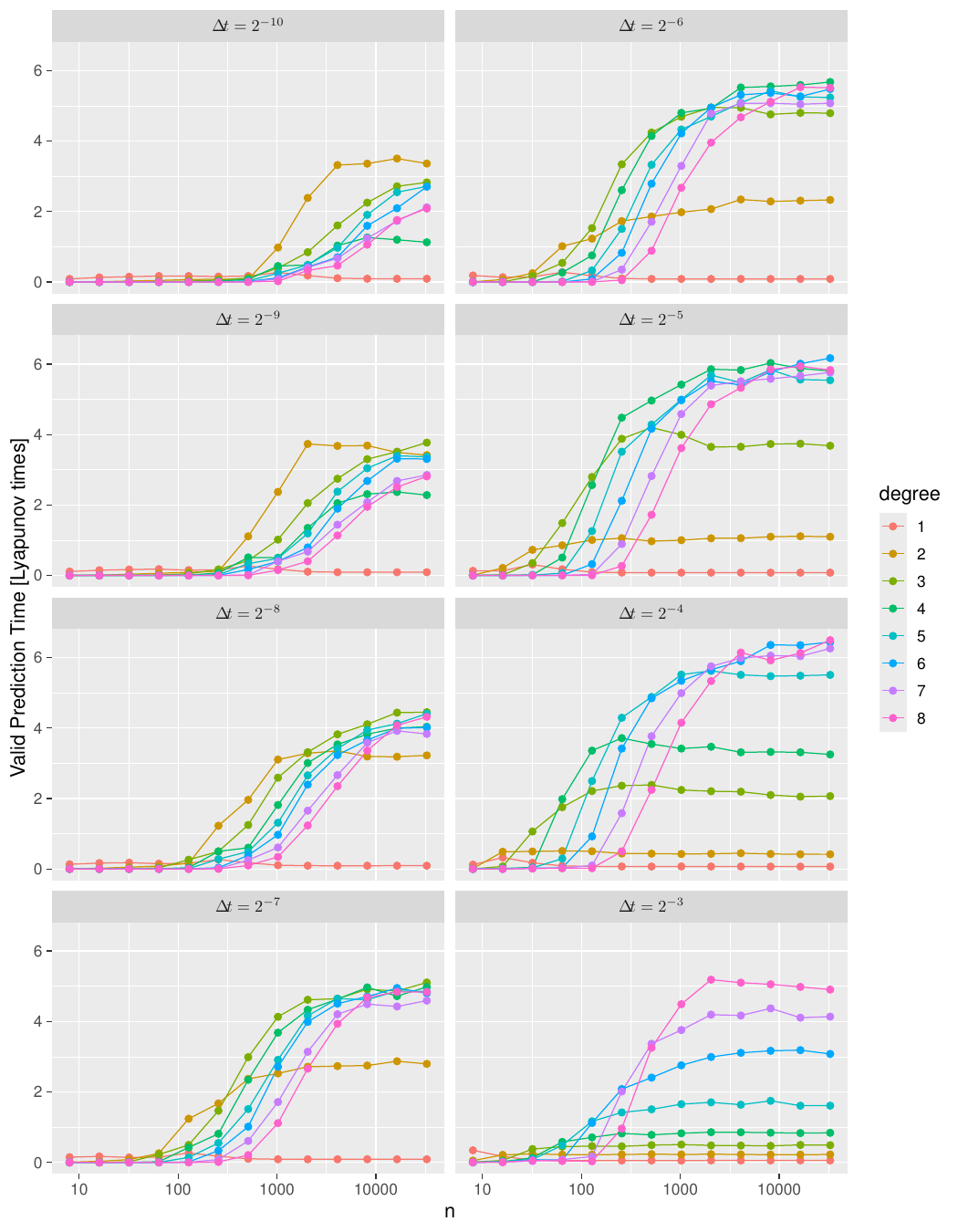}
\end{center}
\caption{\textbf{All Plot for L63, ddd, normalize full, test sequential, noise-free init.cond., noise level 1e-03}. See the beginning of \cref{app:sec:details} for a description.}
\end{figure}

\clearpage
\subsection{L63, ddd, normalize full, test sequential, noise-free init.cond., noise level 1e-01}
\begin{figure}[ht!]
\begin{center}
\includegraphics[width=\textwidth]{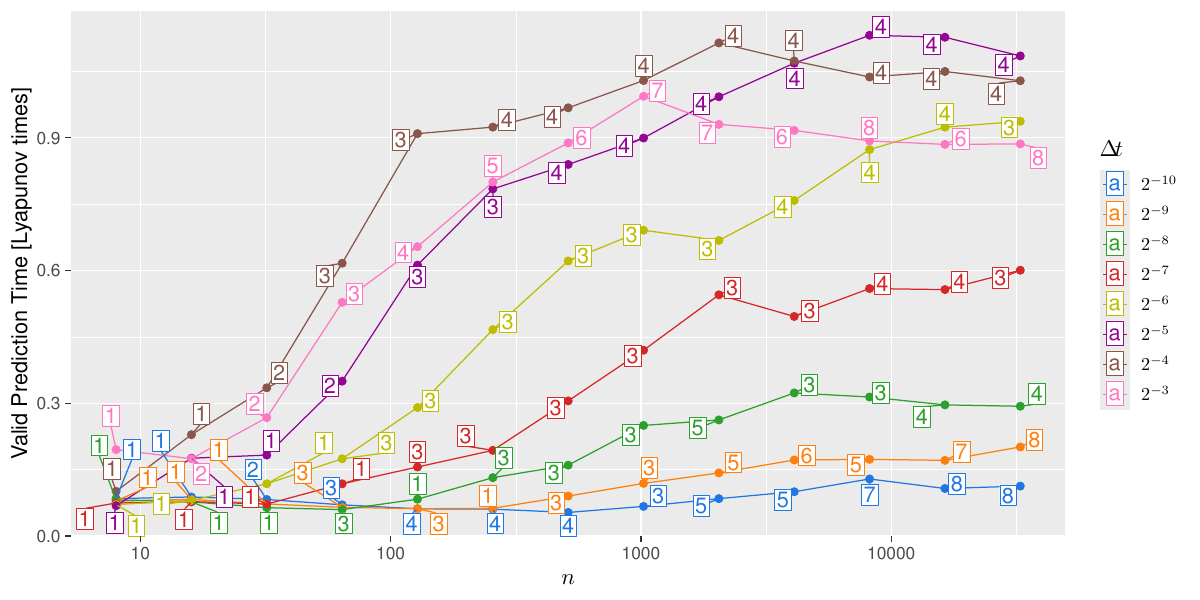}
\end{center}
\caption{\textbf{Best Plot for L63, ddd, normalize full, test sequential, noise-free init.cond., noise level 1e-01}. See the beginning of \cref{app:sec:details} for a description.}
\end{figure}
\begin{table}[ht!]
\input{tbl/L63_ddd_f_s_VPT_FALSE_1_best_table.tex}
\caption{\textbf{Best Table for L63, ddd, normalize full, test sequential, noise-free init.cond., noise level 1e-01}. See the beginning of \cref{app:sec:details} for a description.}
\end{table}
\begin{figure}[ht!]
\begin{center}
\includegraphics[width=\textwidth]{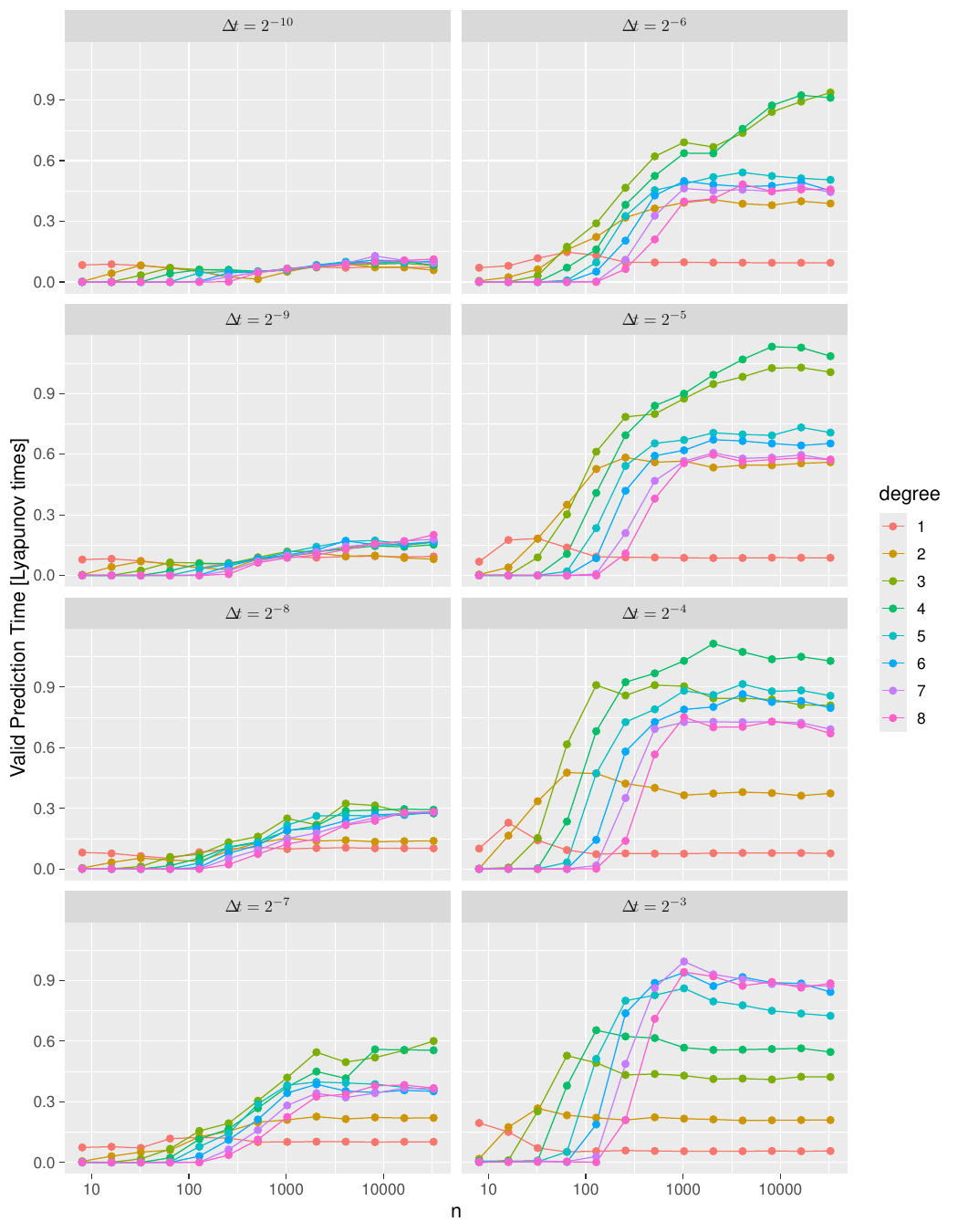}
\end{center}
\caption{\textbf{All Plot for L63, ddd, normalize full, test sequential, noise-free init.cond., noise level 1e-01}. See the beginning of \cref{app:sec:details} for a description.}
\end{figure}

\clearpage
\subsection{L63, ddd, normalize full, test sequential, noisy init.cond., noise level 1e-09}
\begin{figure}[ht!]
\begin{center}
\includegraphics[width=\textwidth]{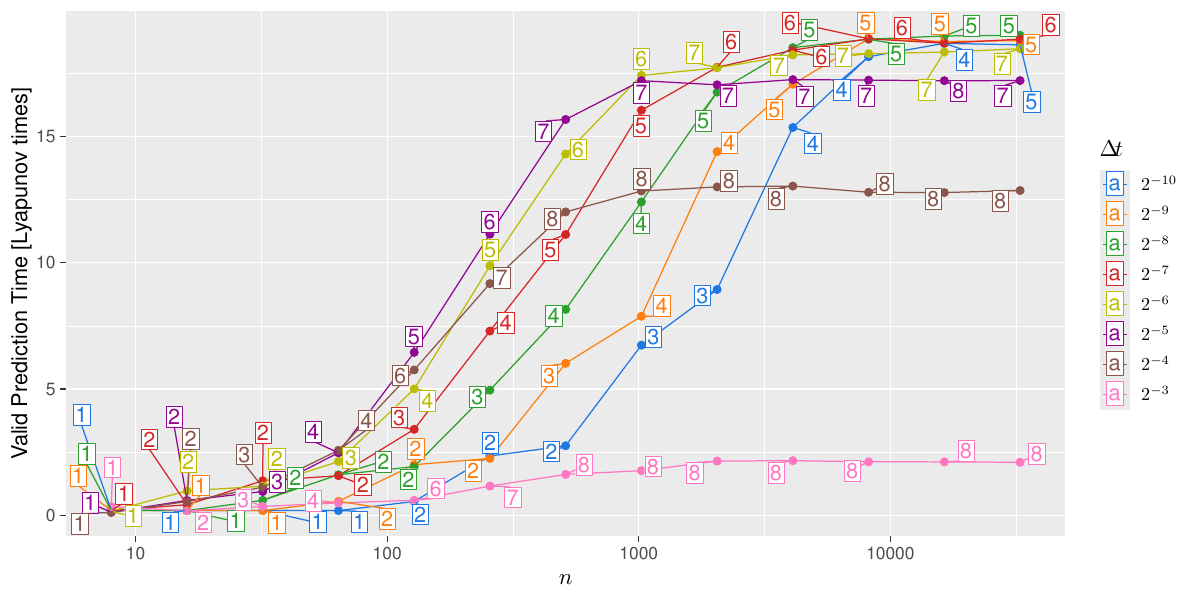}
\end{center}
\caption{\textbf{Best Plot for L63, ddd, normalize full, test sequential, noisy init.cond., noise level 1e-09}. See the beginning of \cref{app:sec:details} for a description.}
\end{figure}
\begin{table}[ht!]
\input{tbl/L63_ddd_f_s_VPT_TRUE_9_best_table.tex}
\caption{\textbf{Best Table for L63, ddd, normalize full, test sequential, noisy init.cond., noise level 1e-09}. See the beginning of \cref{app:sec:details} for a description.}
\end{table}
\begin{figure}[ht!]
\begin{center}
\includegraphics[width=\textwidth]{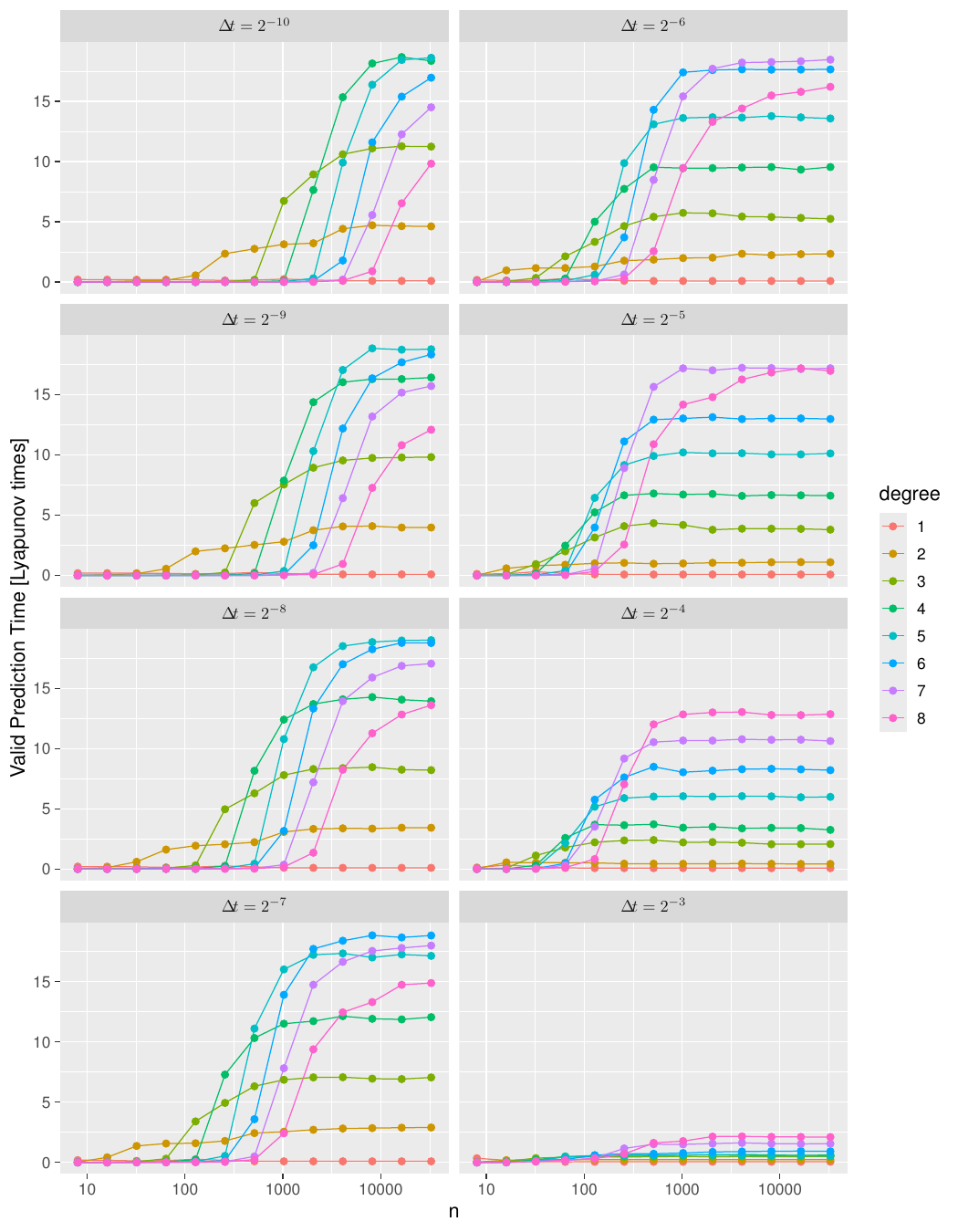}
\end{center}
\caption{\textbf{All Plot for L63, ddd, normalize full, test sequential, noisy init.cond., noise level 1e-09}. See the beginning of \cref{app:sec:details} for a description.}
\end{figure}

\clearpage
\subsection{L63, ddd, normalize full, test sequential, noisy init.cond., noise level 1e-07}
\begin{figure}[ht!]
\begin{center}
\includegraphics[width=\textwidth]{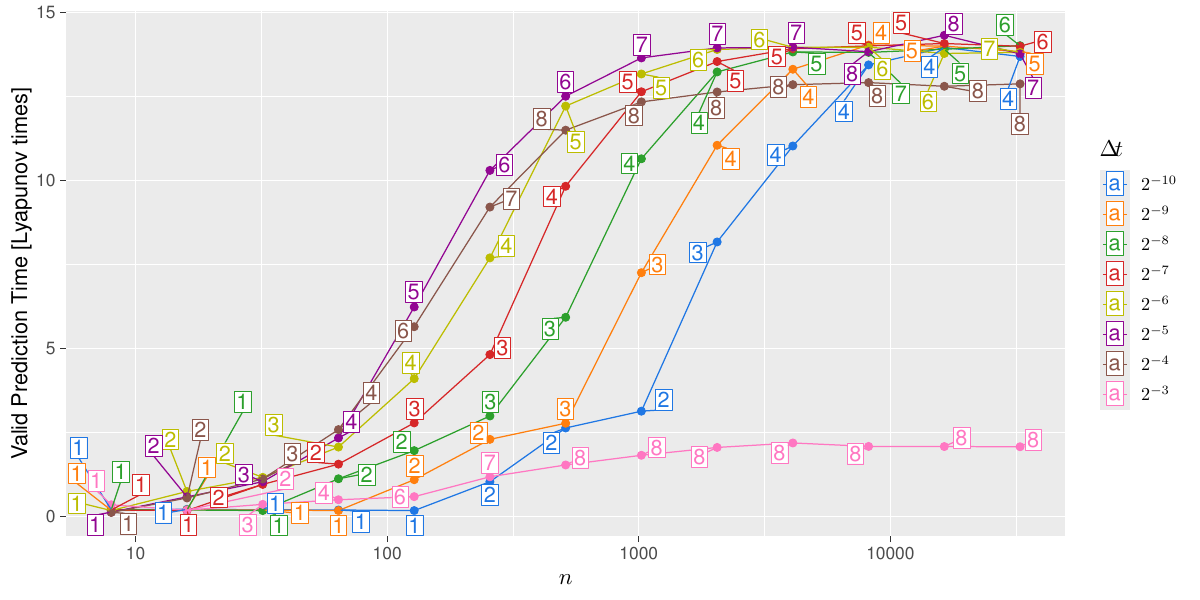}
\end{center}
\caption{\textbf{Best Plot for L63, ddd, normalize full, test sequential, noisy init.cond., noise level 1e-07}. See the beginning of \cref{app:sec:details} for a description.}
\end{figure}
\begin{table}[ht!]
\input{tbl/L63_ddd_f_s_VPT_TRUE_7_best_table.tex}
\caption{\textbf{Best Table for L63, ddd, normalize full, test sequential, noisy init.cond., noise level 1e-07}. See the beginning of \cref{app:sec:details} for a description.}
\end{table}
\begin{figure}[ht!]
\begin{center}
\includegraphics[width=\textwidth]{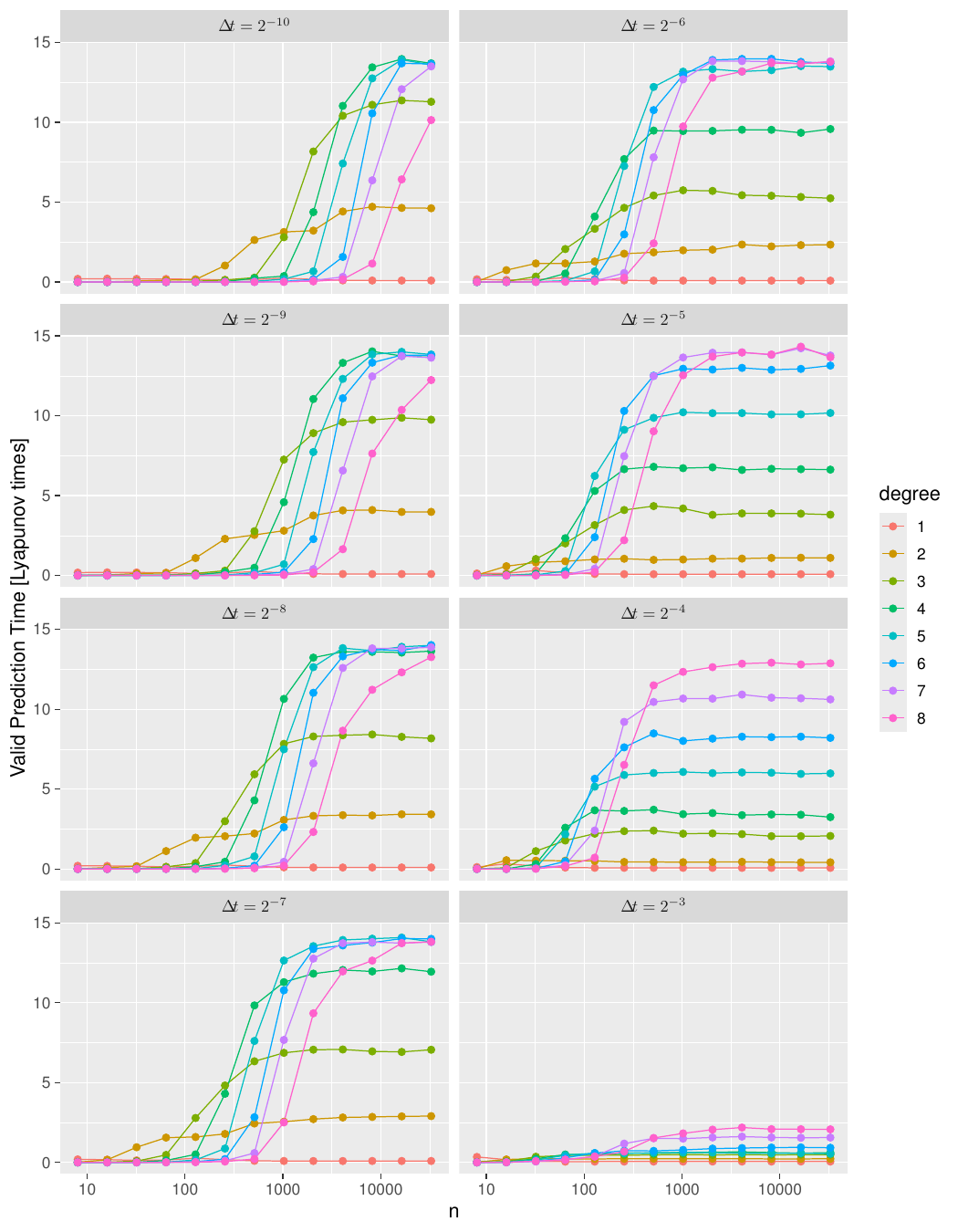}
\end{center}
\caption{\textbf{All Plot for L63, ddd, normalize full, test sequential, noisy init.cond., noise level 1e-07}. See the beginning of \cref{app:sec:details} for a description.}
\end{figure}

\clearpage
\subsection{L63, ddd, normalize full, test sequential, noisy init.cond., noise level 1e-05}
\begin{figure}[ht!]
\begin{center}
\includegraphics[width=\textwidth]{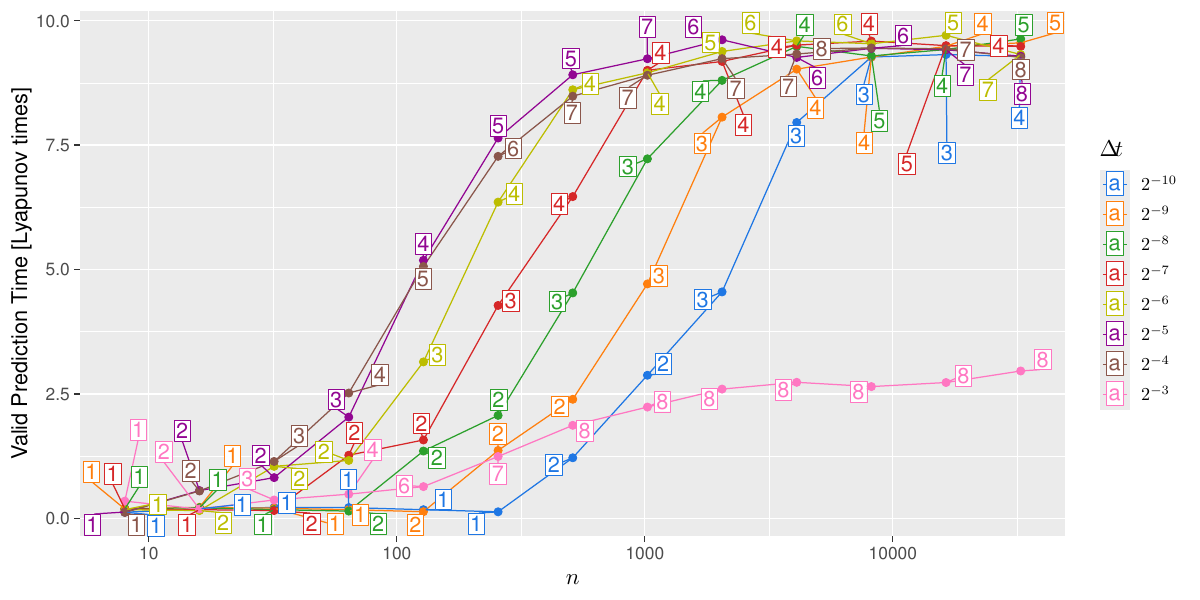}
\end{center}
\caption{\textbf{Best Plot for L63, ddd, normalize full, test sequential, noisy init.cond., noise level 1e-05}. See the beginning of \cref{app:sec:details} for a description.}
\end{figure}
\begin{table}[ht!]
\input{tbl/L63_ddd_f_s_VPT_TRUE_5_best_table.tex}
\caption{\textbf{Best Table for L63, ddd, normalize full, test sequential, noisy init.cond., noise level 1e-05}. See the beginning of \cref{app:sec:details} for a description.}
\end{table}
\begin{figure}[ht!]
\begin{center}
\includegraphics[width=\textwidth]{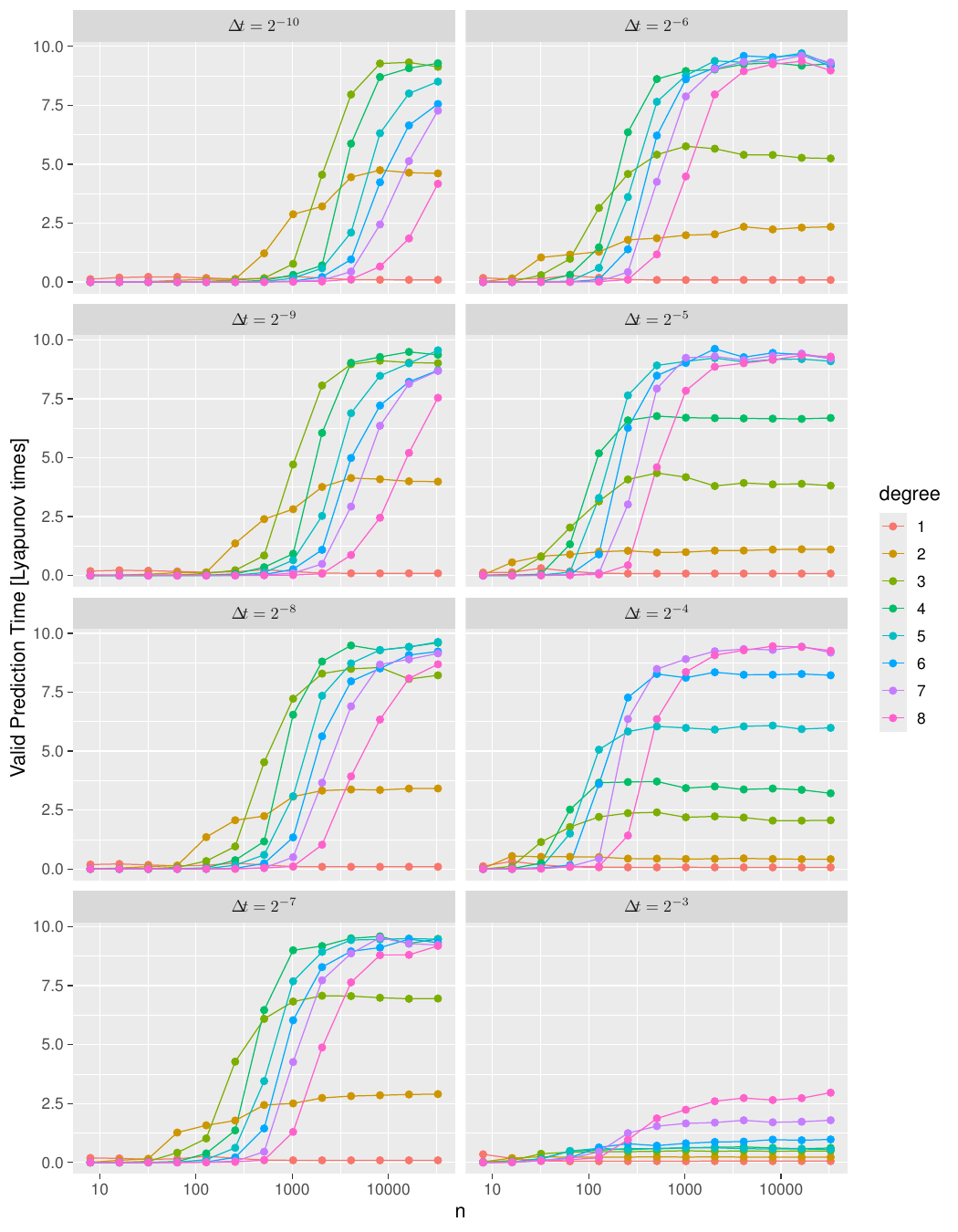}
\end{center}
\caption{\textbf{All Plot for L63, ddd, normalize full, test sequential, noisy init.cond., noise level 1e-05}. See the beginning of \cref{app:sec:details} for a description.}
\end{figure}

\clearpage
\subsection{L63, ddd, normalize full, test sequential, noisy init.cond., noise level 1e-03}
\begin{figure}[ht!]
\begin{center}
\includegraphics[width=\textwidth]{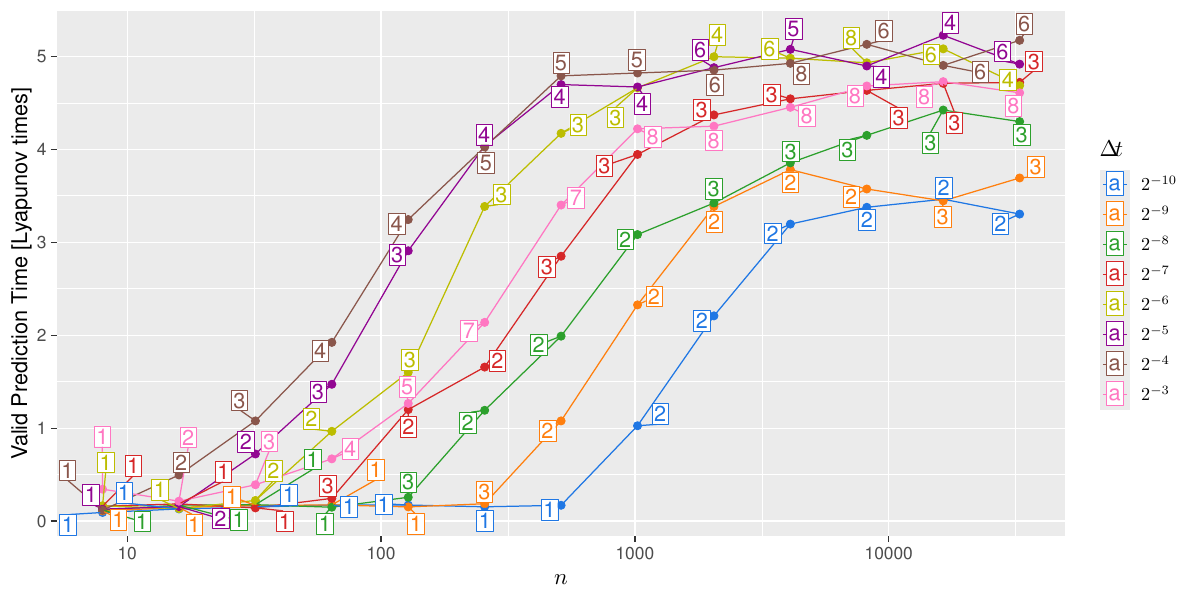}
\end{center}
\caption{\textbf{Best Plot for L63, ddd, normalize full, test sequential, noisy init.cond., noise level 1e-03}. See the beginning of \cref{app:sec:details} for a description.}
\end{figure}
\begin{table}[ht!]
\input{tbl/L63_ddd_f_s_VPT_TRUE_3_best_table.tex}
\caption{\textbf{Best Table for L63, ddd, normalize full, test sequential, noisy init.cond., noise level 1e-03}. See the beginning of \cref{app:sec:details} for a description.}
\end{table}
\begin{figure}[ht!]
\begin{center}
\includegraphics[width=\textwidth]{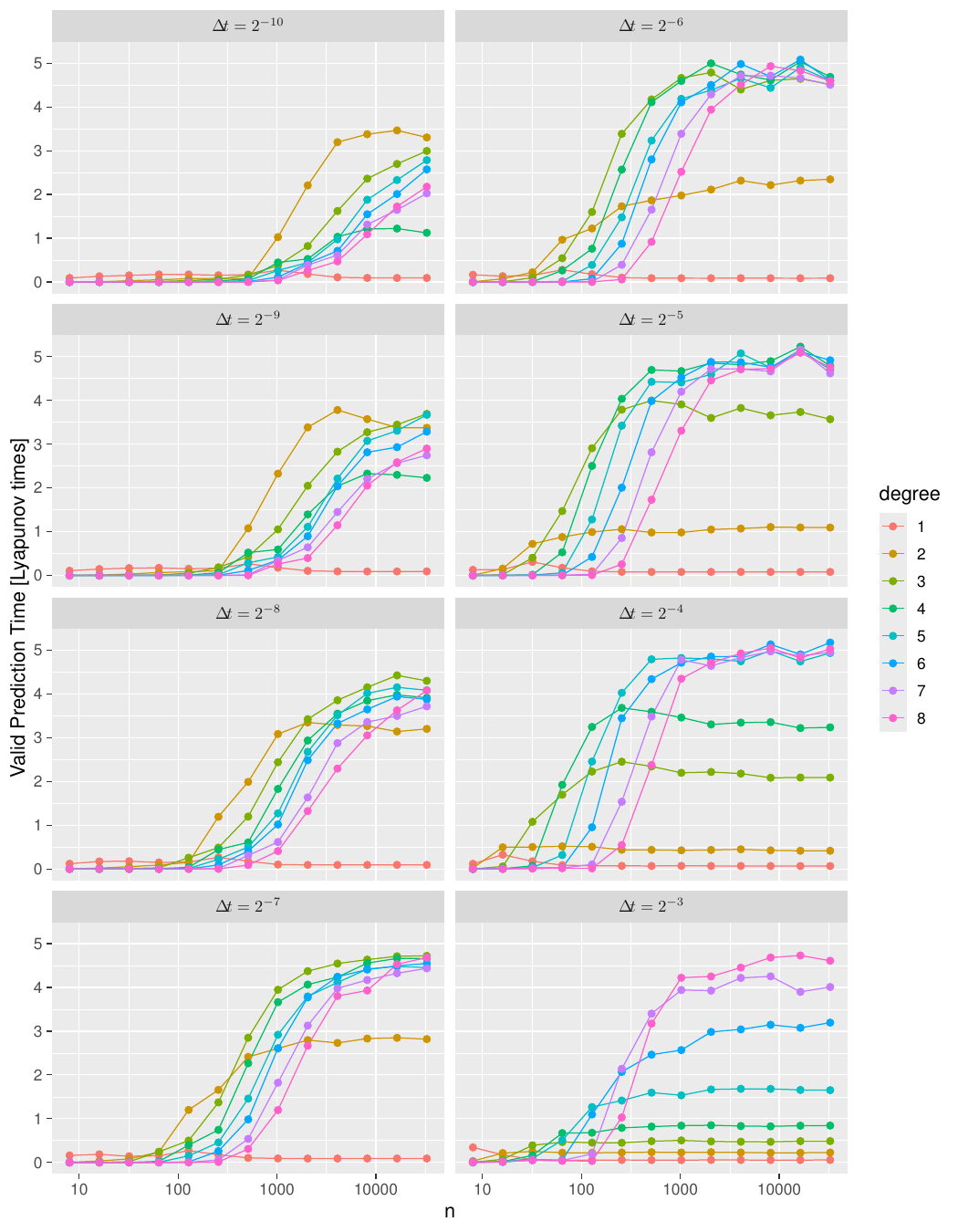}
\end{center}
\caption{\textbf{All Plot for L63, ddd, normalize full, test sequential, noisy init.cond., noise level 1e-03}. See the beginning of \cref{app:sec:details} for a description.}
\end{figure}

\clearpage
\subsection{L63, ddd, normalize full, test sequential, noisy init.cond., noise level 1e-01}
\begin{figure}[ht!]
\begin{center}
\includegraphics[width=\textwidth]{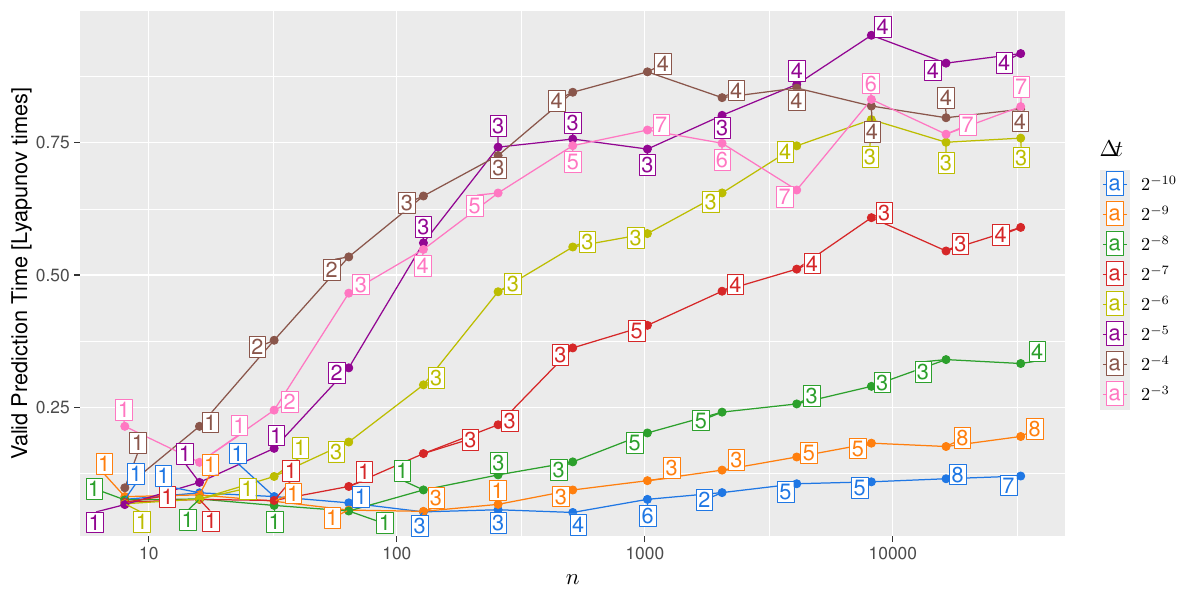}
\end{center}
\caption{\textbf{Best Plot for L63, ddd, normalize full, test sequential, noisy init.cond., noise level 1e-01}. See the beginning of \cref{app:sec:details} for a description.}
\end{figure}
\begin{table}[ht!]
\input{tbl/L63_ddd_f_s_VPT_TRUE_1_best_table.tex}
\caption{\textbf{Best Table for L63, ddd, normalize full, test sequential, noisy init.cond., noise level 1e-01}. See the beginning of \cref{app:sec:details} for a description.}
\end{table}
\begin{figure}[ht!]
\begin{center}
\includegraphics[width=\textwidth]{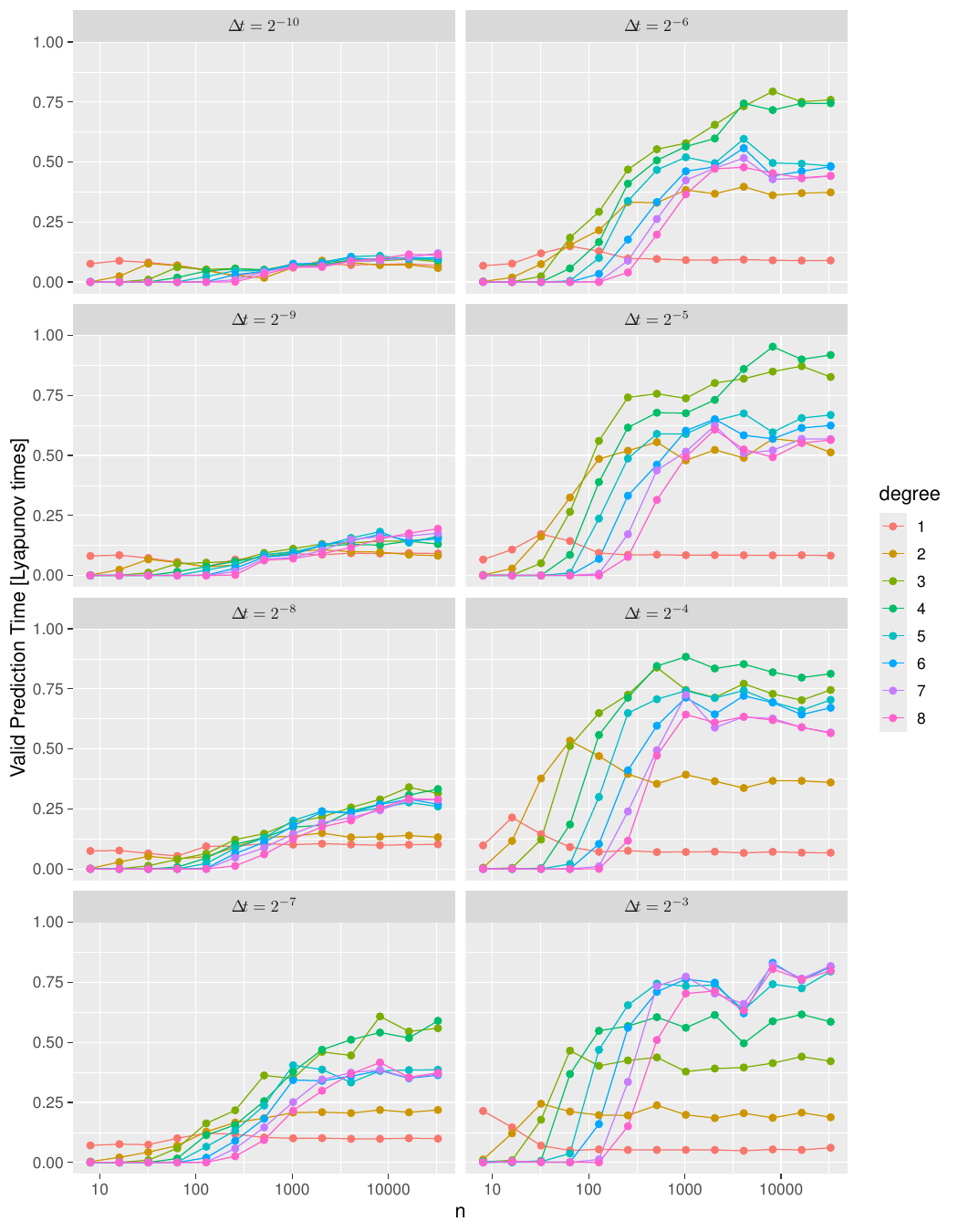}
\end{center}
\caption{\textbf{All Plot for L63, ddd, normalize full, test sequential, noisy init.cond., noise level 1e-01}. See the beginning of \cref{app:sec:details} for a description.}
\end{figure}

\clearpage
\subsection{L96D5, sdd, normalize full, test sequential}
\begin{figure}[ht!]
\begin{center}
\includegraphics[width=\textwidth]{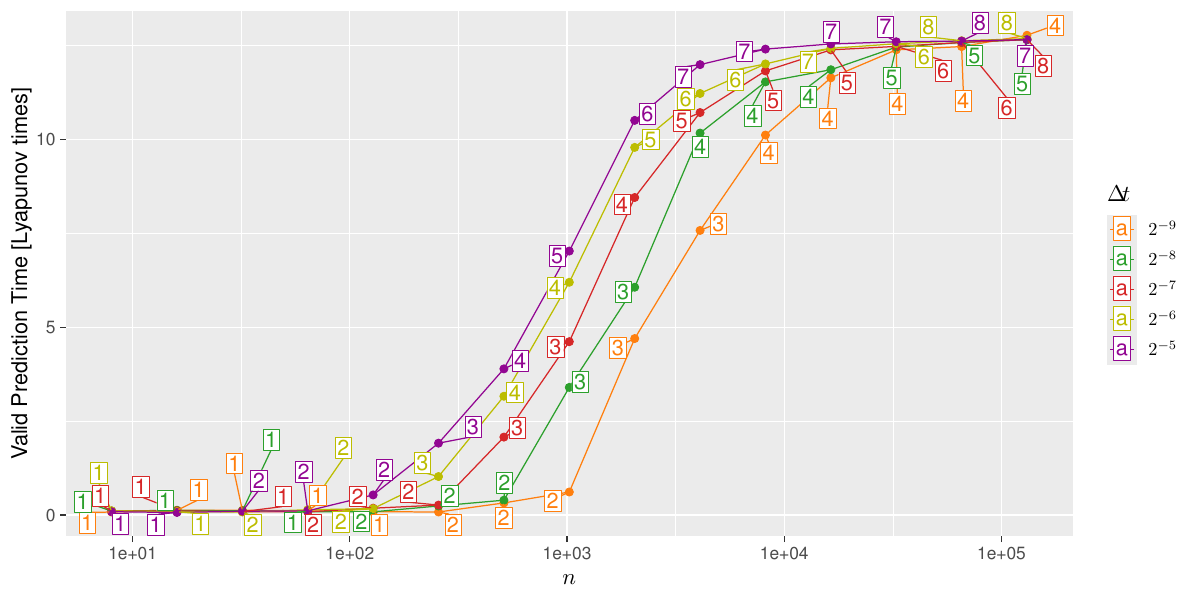}
\end{center}
\caption{\textbf{Best Plot for L96D5, sdd, normalize full, test sequential}. See the beginning of \cref{app:sec:details} for a description.}
\end{figure}
\begin{table}[ht!]
\input{tbl/L96D5_sdd_f_s_VPT_best_table.tex}
\caption{\textbf{Best Table for L96D5, sdd, normalize full, test sequential}. See the beginning of \cref{app:sec:details} for a description.}
\end{table}
\begin{figure}[ht!]
\begin{center}
\includegraphics[width=\textwidth]{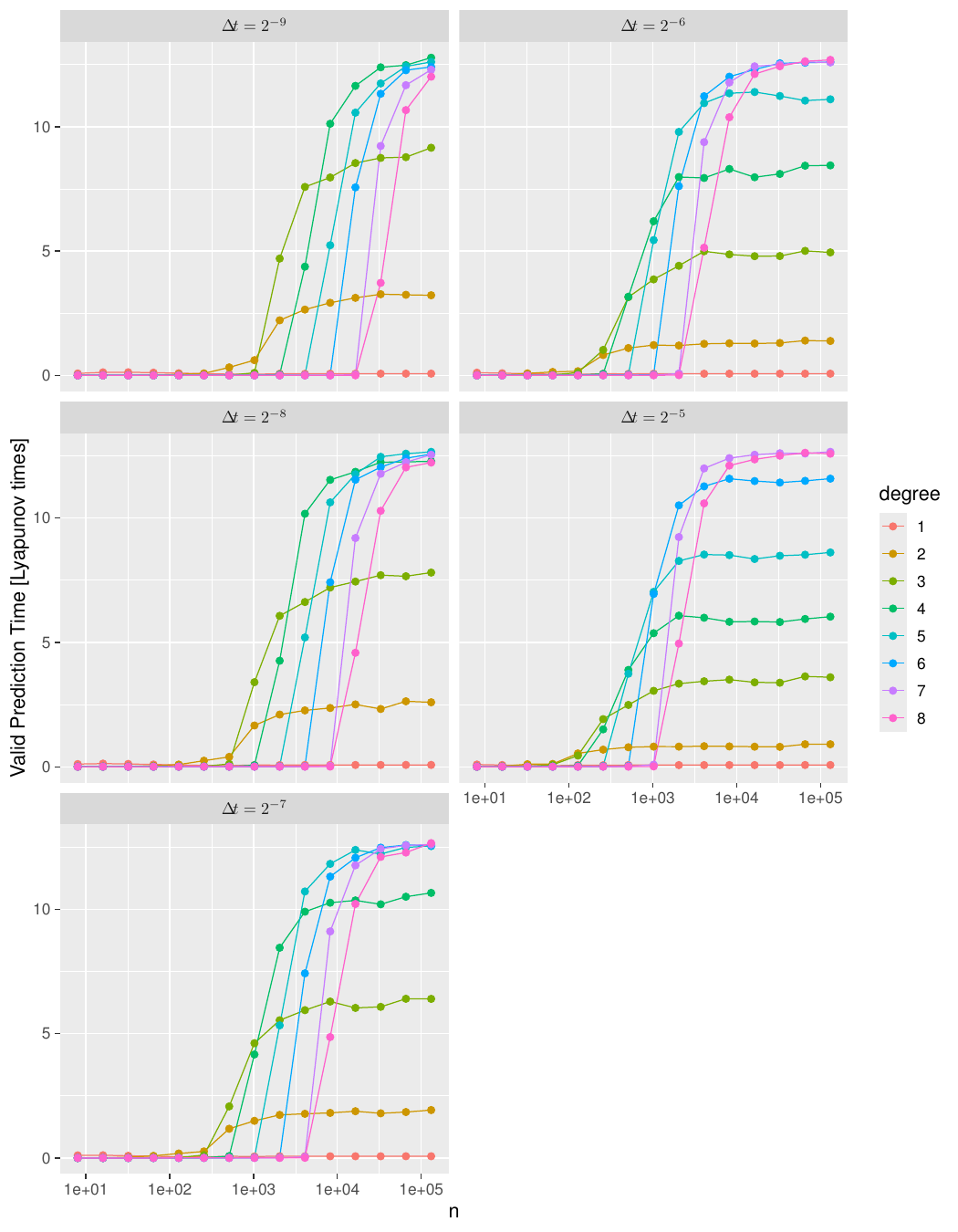}
\end{center}
\caption{\textbf{All Plot for L96D5, sdd, normalize full, test sequential}. See the beginning of \cref{app:sec:details} for a description.}
\end{figure}

\clearpage
\subsection{L96D6, sdd, normalize full, test sequential}
\begin{figure}[ht!]
\begin{center}
\includegraphics[width=\textwidth]{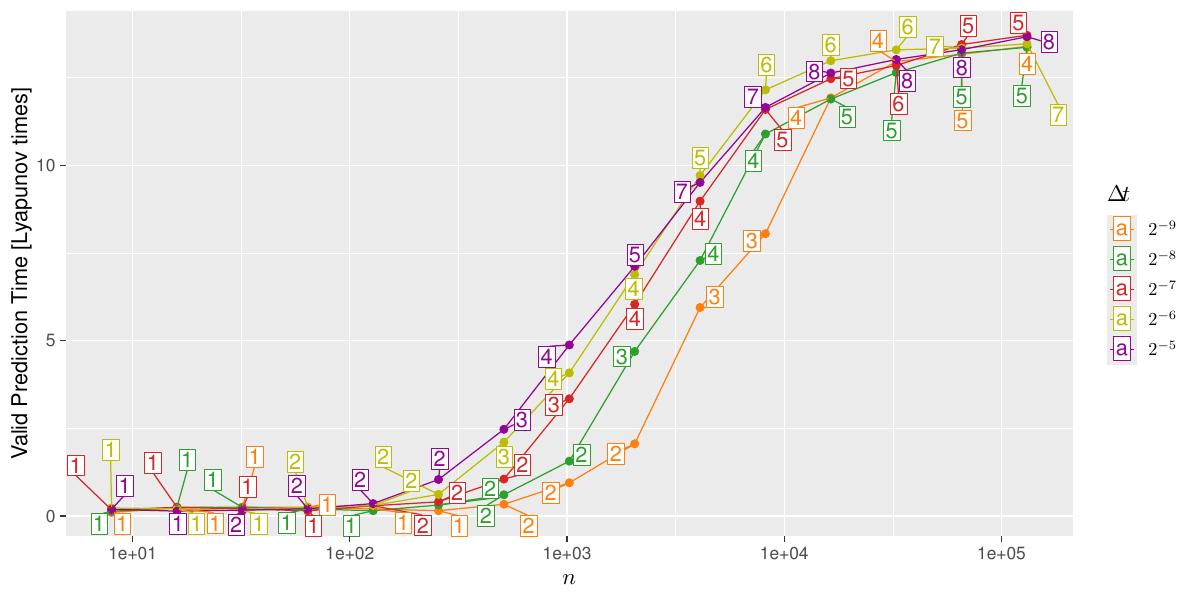}
\end{center}
\caption{\textbf{Best Plot for L96D6, sdd, normalize full, test sequential}. See the beginning of \cref{app:sec:details} for a description.}
\end{figure}
\begin{table}[ht!]
\input{tbl/L96D6_sdd_f_s_VPT_best_table.tex}
\caption{\textbf{Best Table for L96D6, sdd, normalize full, test sequential}. See the beginning of \cref{app:sec:details} for a description.}
\end{table}
\begin{figure}[ht!]
\begin{center}
\includegraphics[width=\textwidth]{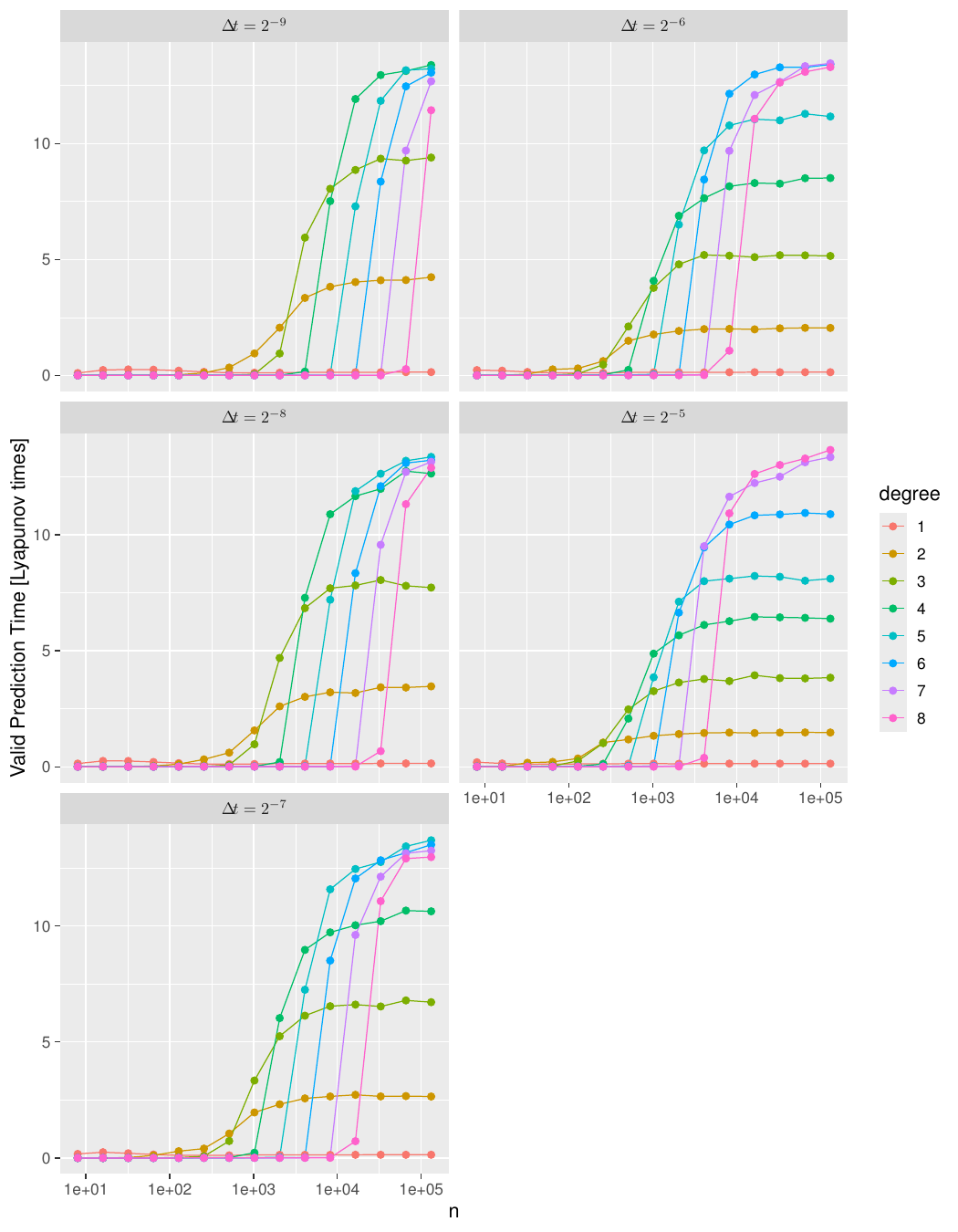}
\end{center}
\caption{\textbf{All Plot for L96D6, sdd, normalize full, test sequential}. See the beginning of \cref{app:sec:details} for a description.}
\end{figure}

\clearpage
\subsection{L96D7, sdd, normalize full, test sequential}
\begin{figure}[ht!]
\begin{center}
\includegraphics[width=\textwidth]{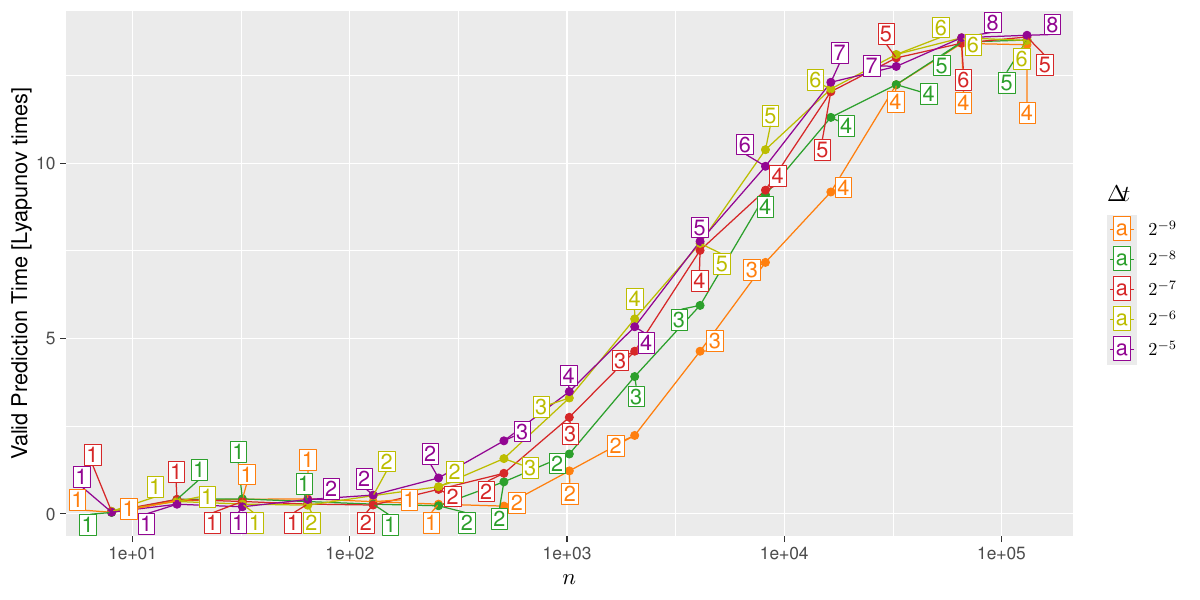}
\end{center}
\caption{\textbf{Best Plot for L96D7, sdd, normalize full, test sequential}. See the beginning of \cref{app:sec:details} for a description.}
\end{figure}
\begin{table}[ht!]
\input{tbl/L96D7_sdd_f_s_VPT_best_table.tex}
\caption{\textbf{Best Table for L96D7, sdd, normalize full, test sequential}. See the beginning of \cref{app:sec:details} for a description.}
\end{table}
\begin{figure}[ht!]
\begin{center}
\includegraphics[width=\textwidth]{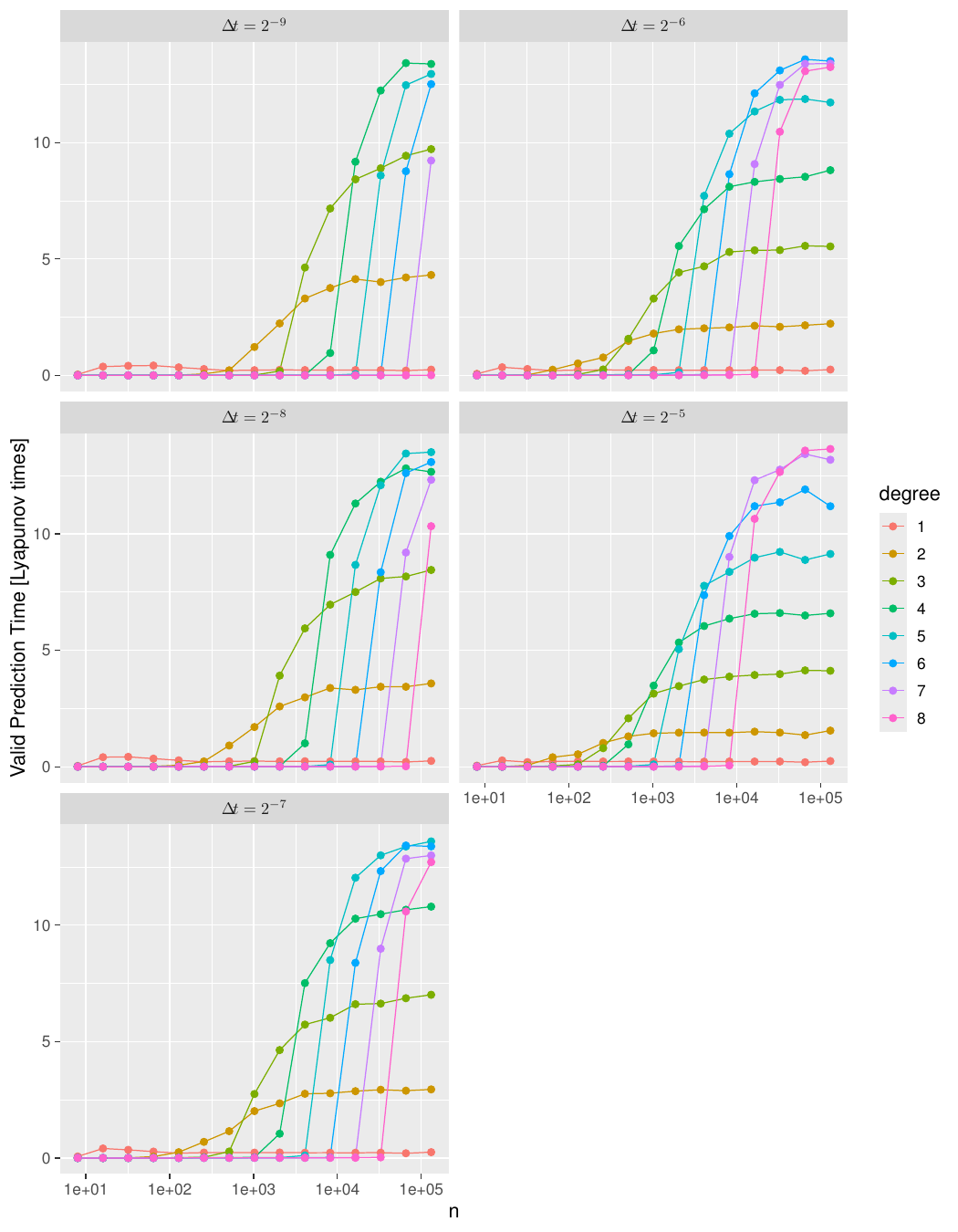}
\end{center}
\caption{\textbf{All Plot for L96D7, sdd, normalize full, test sequential}. See the beginning of \cref{app:sec:details} for a description.}
\end{figure}

\clearpage
\subsection{L96D8, sdd, normalize full, test sequential}
\begin{figure}[ht!]
\begin{center}
\includegraphics[width=\textwidth]{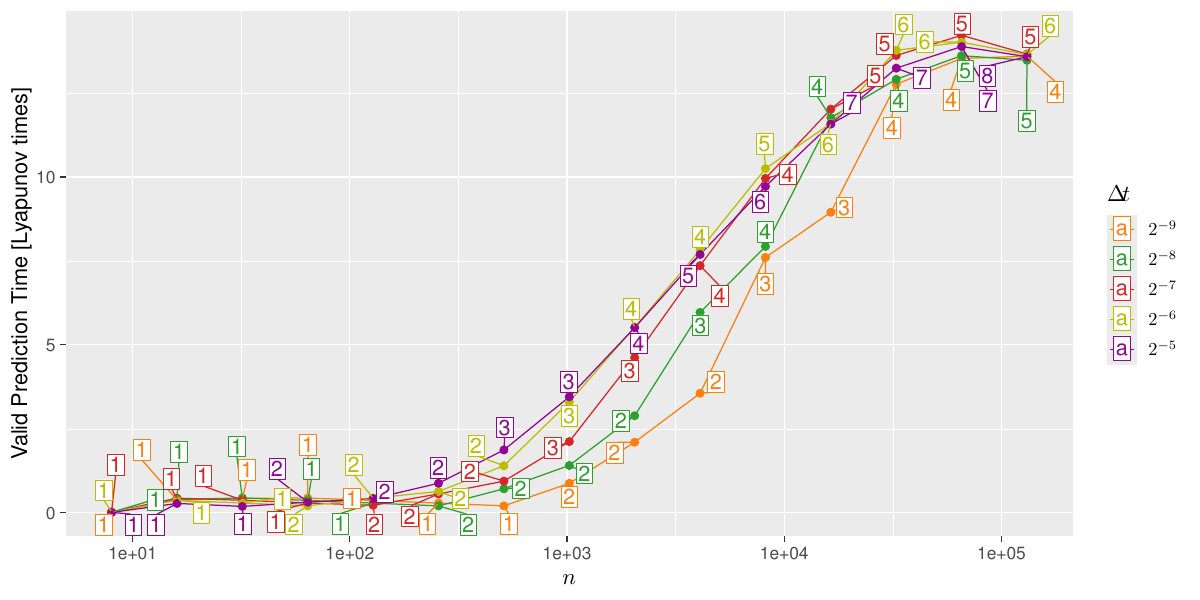}
\end{center}
\caption{\textbf{Best Plot for L96D8, sdd, normalize full, test sequential}. See the beginning of \cref{app:sec:details} for a description.}
\end{figure}
\begin{table}[ht!]
\input{tbl/L96D8_sdd_f_s_VPT_best_table.tex}
\caption{\textbf{Best Table for L96D8, sdd, normalize full, test sequential}. See the beginning of \cref{app:sec:details} for a description.}
\end{table}
\begin{figure}[ht!]
\begin{center}
\includegraphics[width=\textwidth]{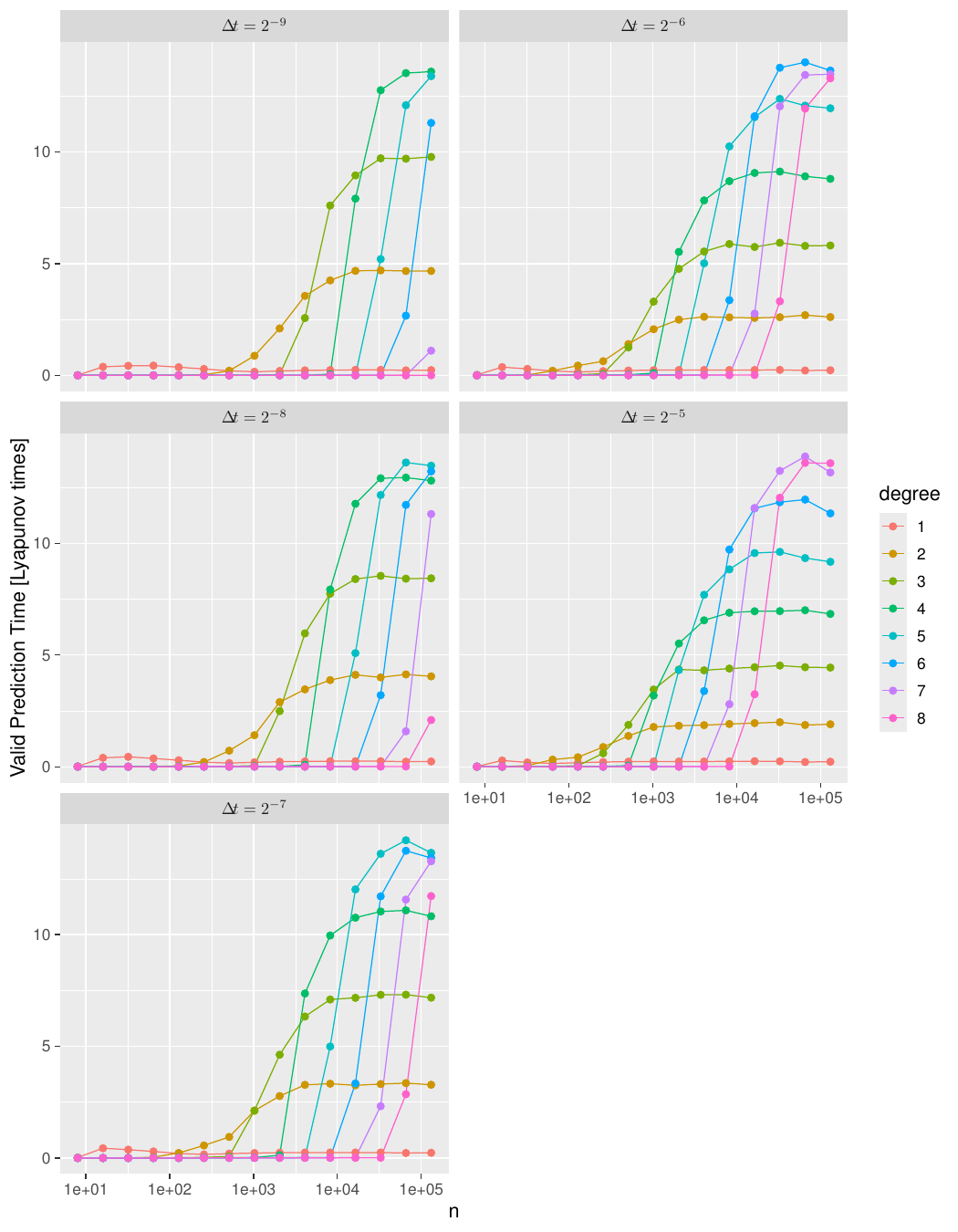}
\end{center}
\caption{\textbf{All Plot for L96D8, sdd, normalize full, test sequential}. See the beginning of \cref{app:sec:details} for a description.}
\end{figure}

\clearpage
\subsection{L96D9, sdd, normalize full, test sequential}
\begin{figure}[ht!]
\begin{center}
\includegraphics[width=\textwidth]{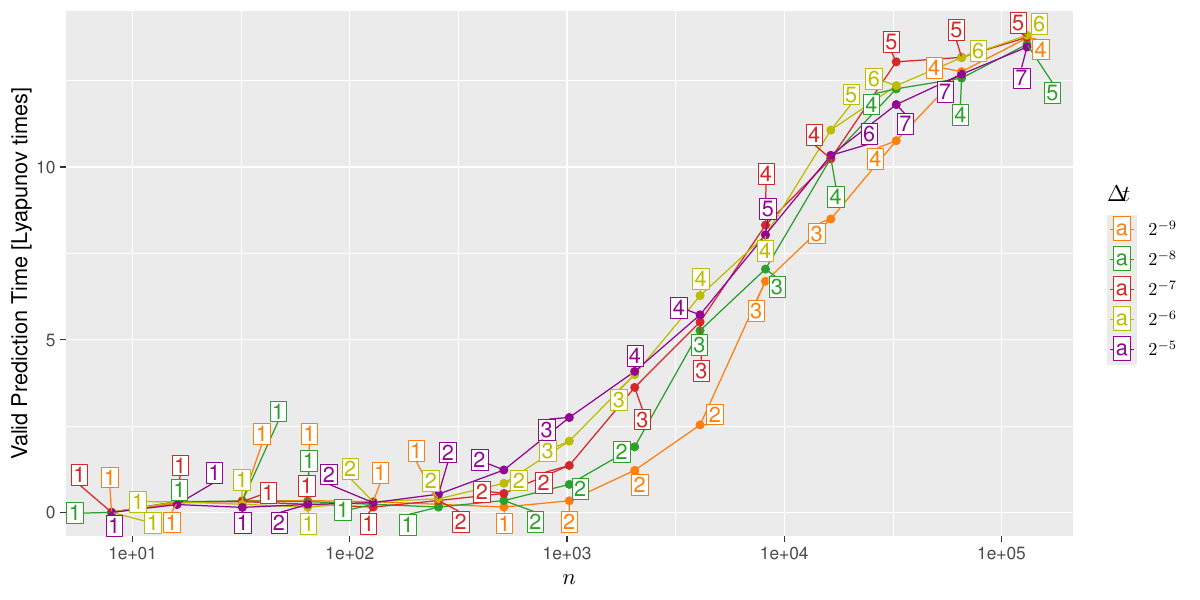}
\end{center}
\caption{\textbf{Best Plot for L96D9, sdd, normalize full, test sequential}. See the beginning of \cref{app:sec:details} for a description.}
\end{figure}
\begin{table}[ht!]
\input{tbl/L96D9_sdd_f_s_VPT_best_table.tex}
\caption{\textbf{Best Table for L96D9, sdd, normalize full, test sequential}. See the beginning of \cref{app:sec:details} for a description.}
\end{table}
\begin{figure}[ht!]
\begin{center}
\includegraphics[width=\textwidth]{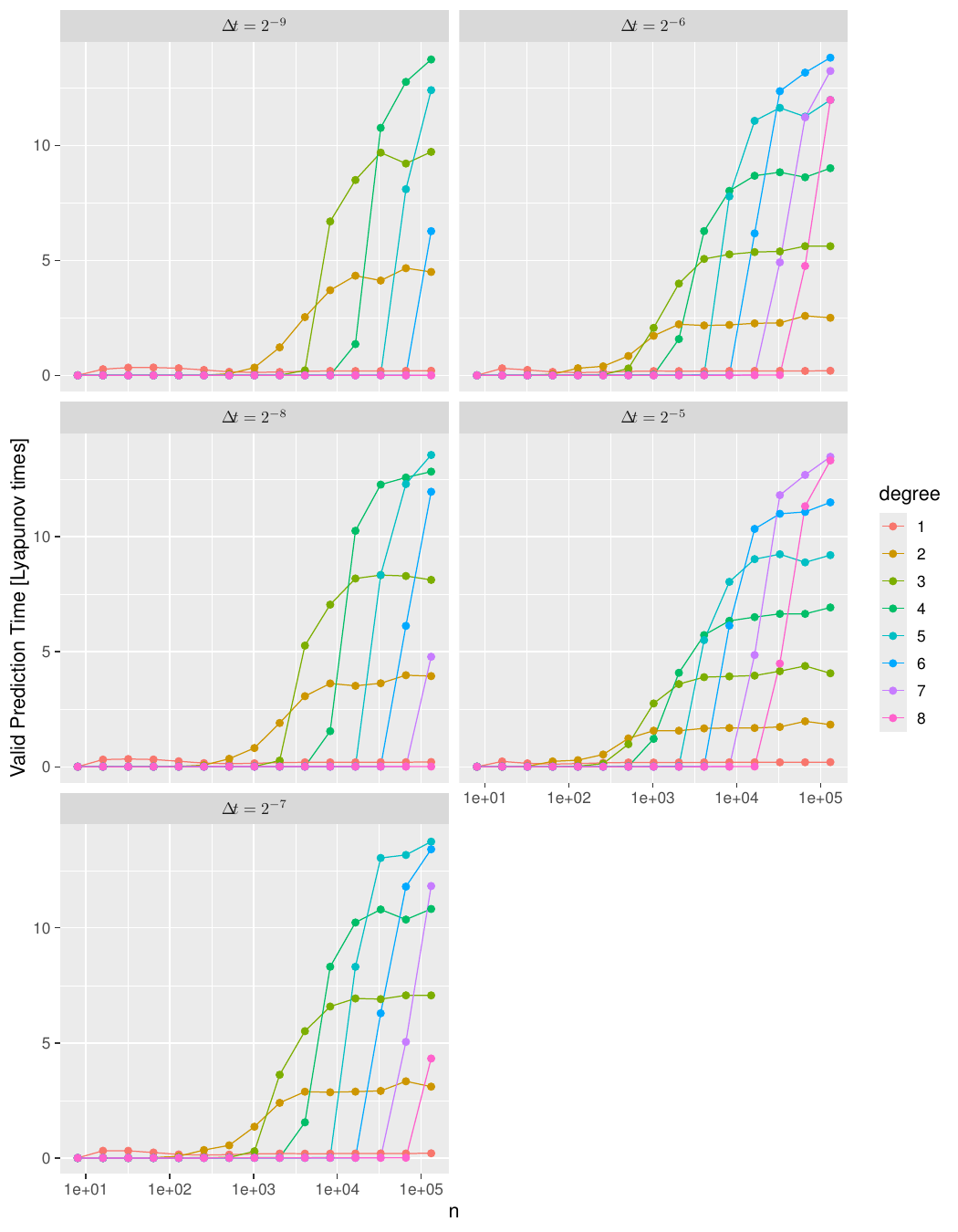}
\end{center}
\caption{\textbf{All Plot for L96D9, sdd, normalize full, test sequential}. See the beginning of \cref{app:sec:details} for a description.}
\end{figure}

\clearpage
\subsection{L96D5, dsd, normalize full, test sequential}
\begin{figure}[ht!]
\begin{center}
\includegraphics[width=\textwidth]{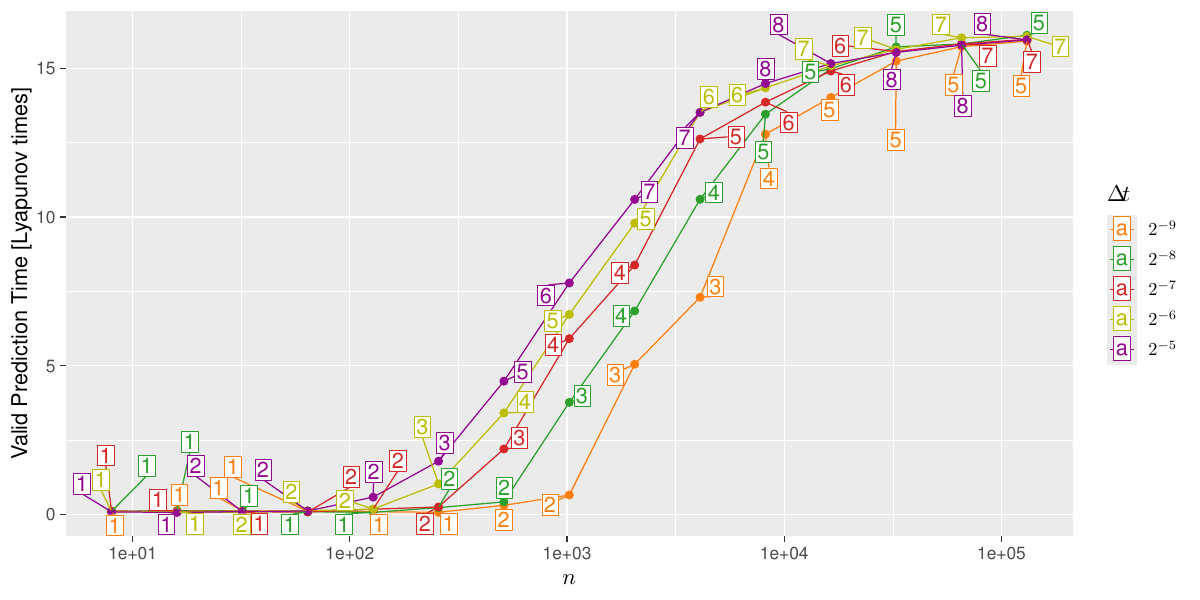}
\end{center}
\caption{\textbf{Best Plot for L96D5, dsd, normalize full, test sequential}. See the beginning of \cref{app:sec:details} for a description.}
\end{figure}
\begin{table}[ht!]
\input{tbl/L96D5_dsd_f_s_VPT_best_table.tex}
\caption{\textbf{Best Table for L96D5, dsd, normalize full, test sequential}. See the beginning of \cref{app:sec:details} for a description.}
\end{table}
\begin{figure}[ht!]
\begin{center}
\includegraphics[width=\textwidth]{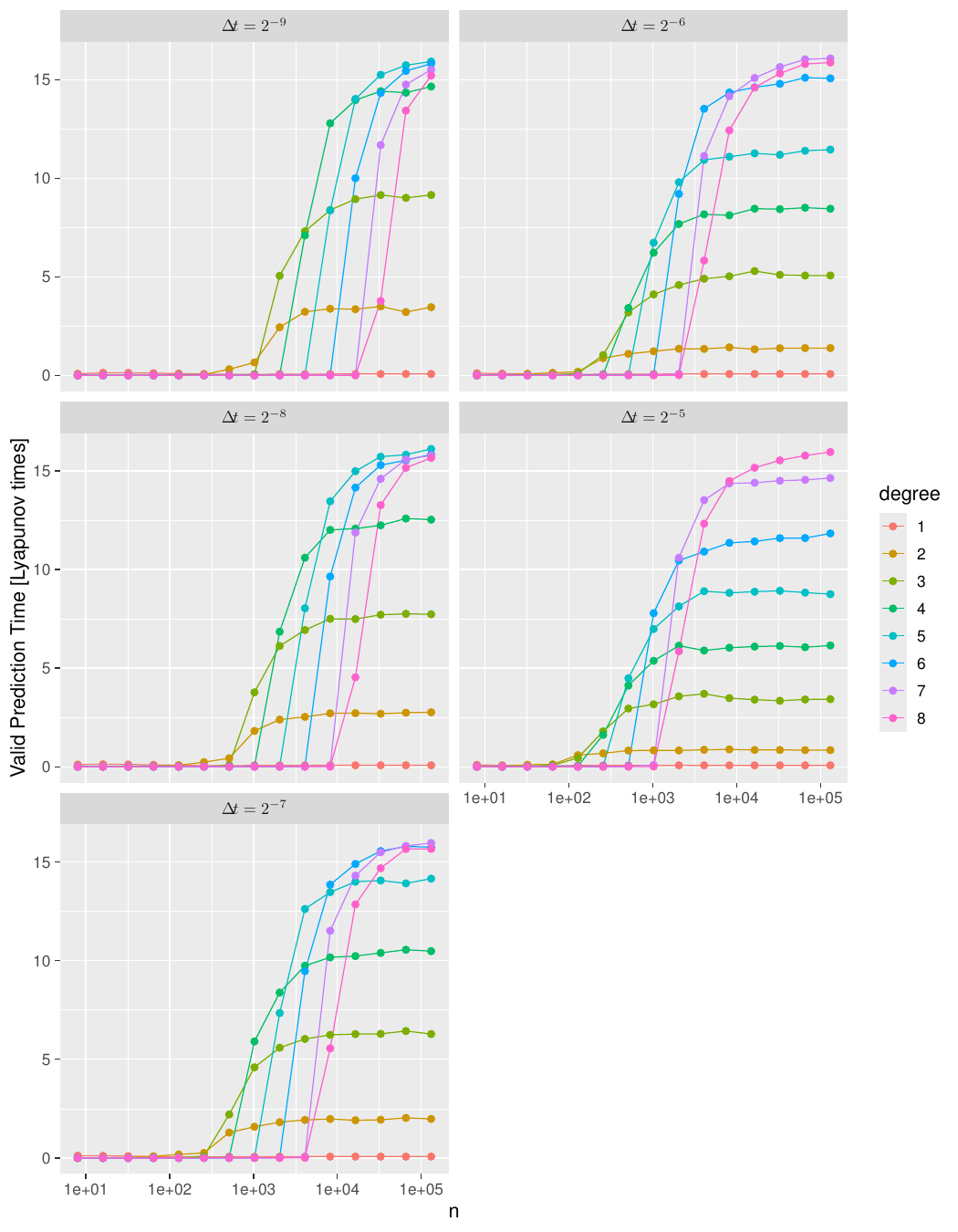}
\end{center}
\caption{\textbf{All Plot for L96D5, dsd, normalize full, test sequential}. See the beginning of \cref{app:sec:details} for a description.}
\end{figure}

\clearpage
\subsection{L96D6, dsd, normalize full, test sequential}
\begin{figure}[ht!]
\begin{center}
\includegraphics[width=\textwidth]{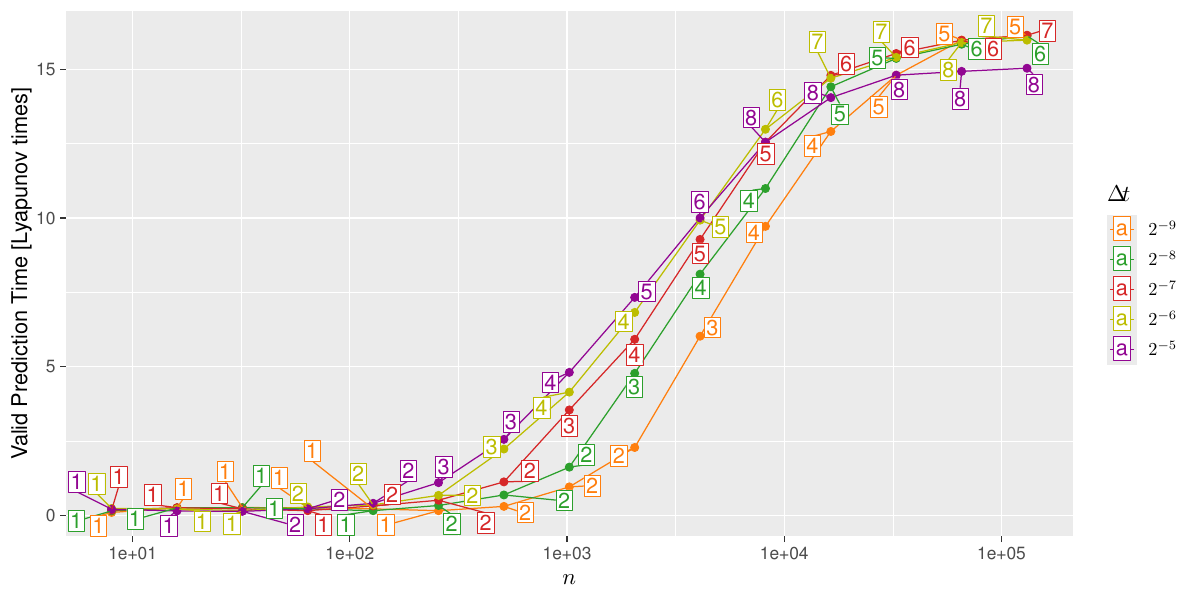}
\end{center}
\caption{\textbf{Best Plot for L96D6, dsd, normalize full, test sequential}. See the beginning of \cref{app:sec:details} for a description.}
\end{figure}
\begin{table}[ht!]
\input{tbl/L96D6_dsd_f_s_VPT_best_table.tex}
\caption{\textbf{Best Table for L96D6, dsd, normalize full, test sequential}. See the beginning of \cref{app:sec:details} for a description.}
\end{table}
\begin{figure}[ht!]
\begin{center}
\includegraphics[width=\textwidth]{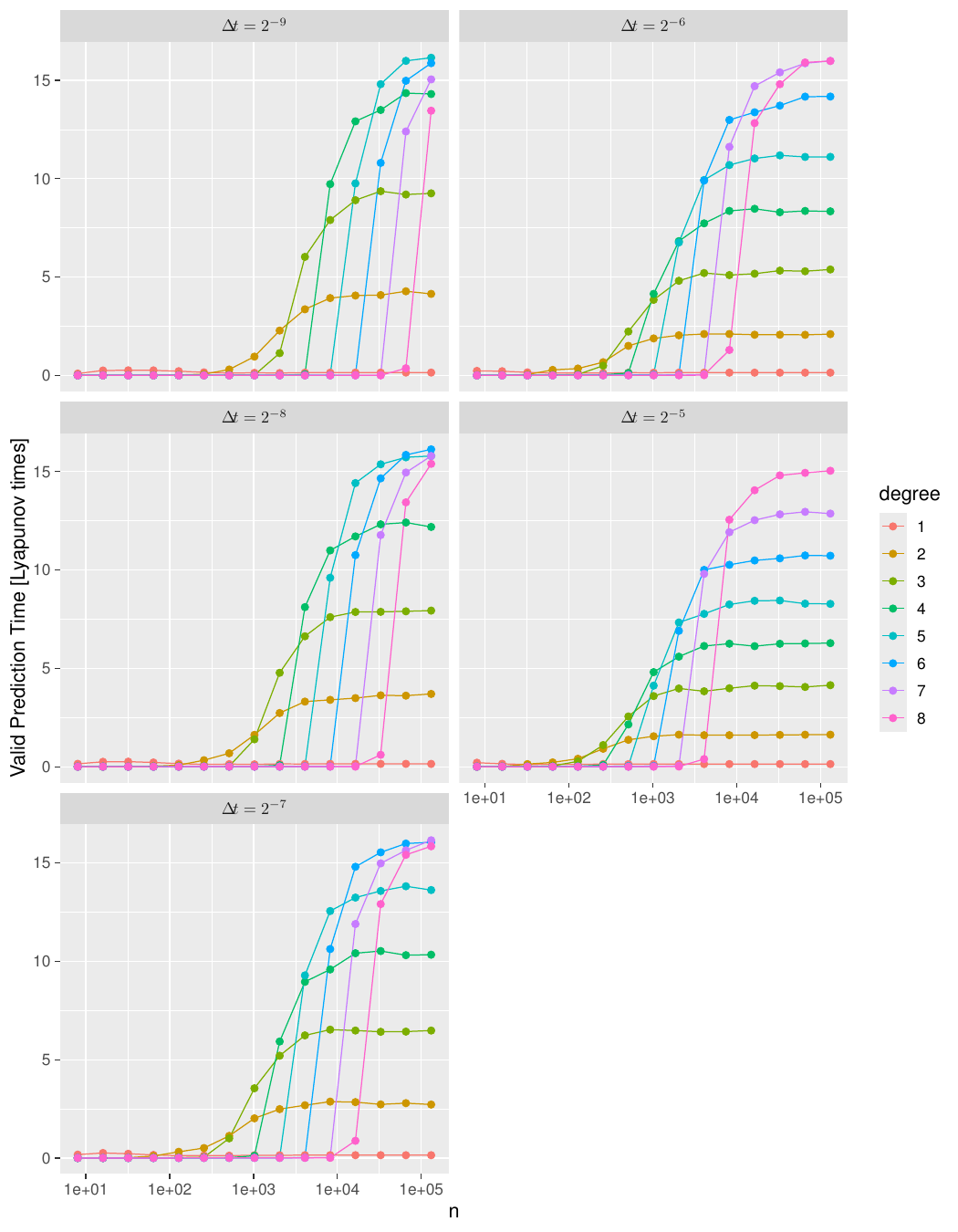}
\end{center}
\caption{\textbf{All Plot for L96D6, dsd, normalize full, test sequential}. See the beginning of \cref{app:sec:details} for a description.}
\end{figure}

\clearpage
\subsection{L96D7, dsd, normalize full, test sequential}
\begin{figure}[ht!]
\begin{center}
\includegraphics[width=\textwidth]{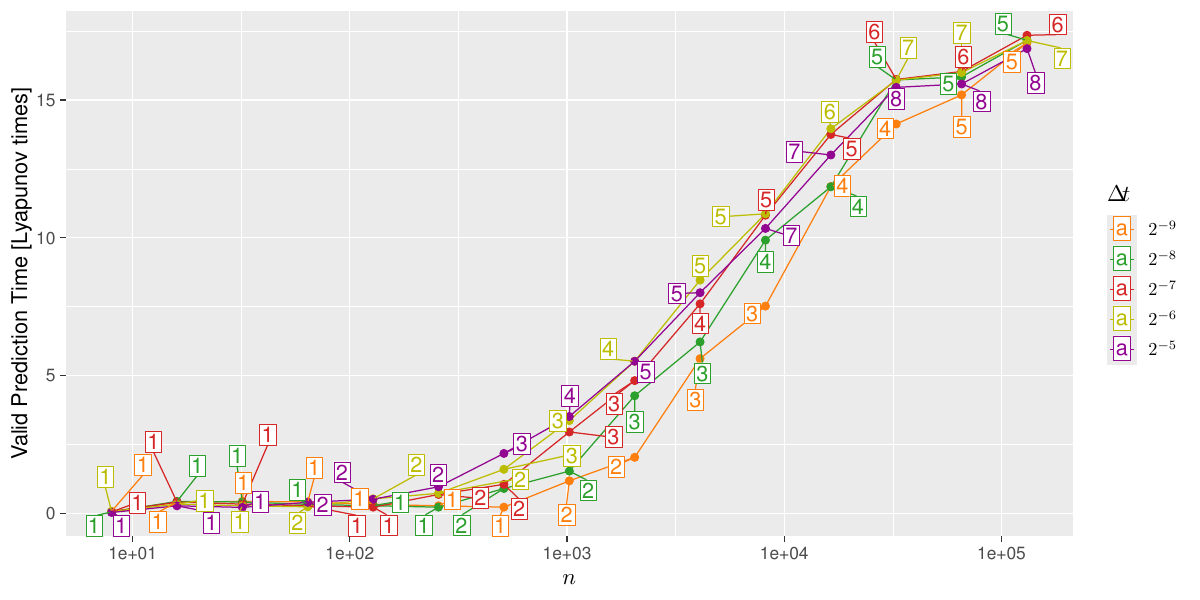}
\end{center}
\caption{\textbf{Best Plot for L96D7, dsd, normalize full, test sequential}. See the beginning of \cref{app:sec:details} for a description.}
\end{figure}
\begin{table}[ht!]
\input{tbl/L96D7_dsd_f_s_VPT_best_table.tex}
\caption{\textbf{Best Table for L96D7, dsd, normalize full, test sequential}. See the beginning of \cref{app:sec:details} for a description.}
\end{table}
\begin{figure}[ht!]
\begin{center}
\includegraphics[width=\textwidth]{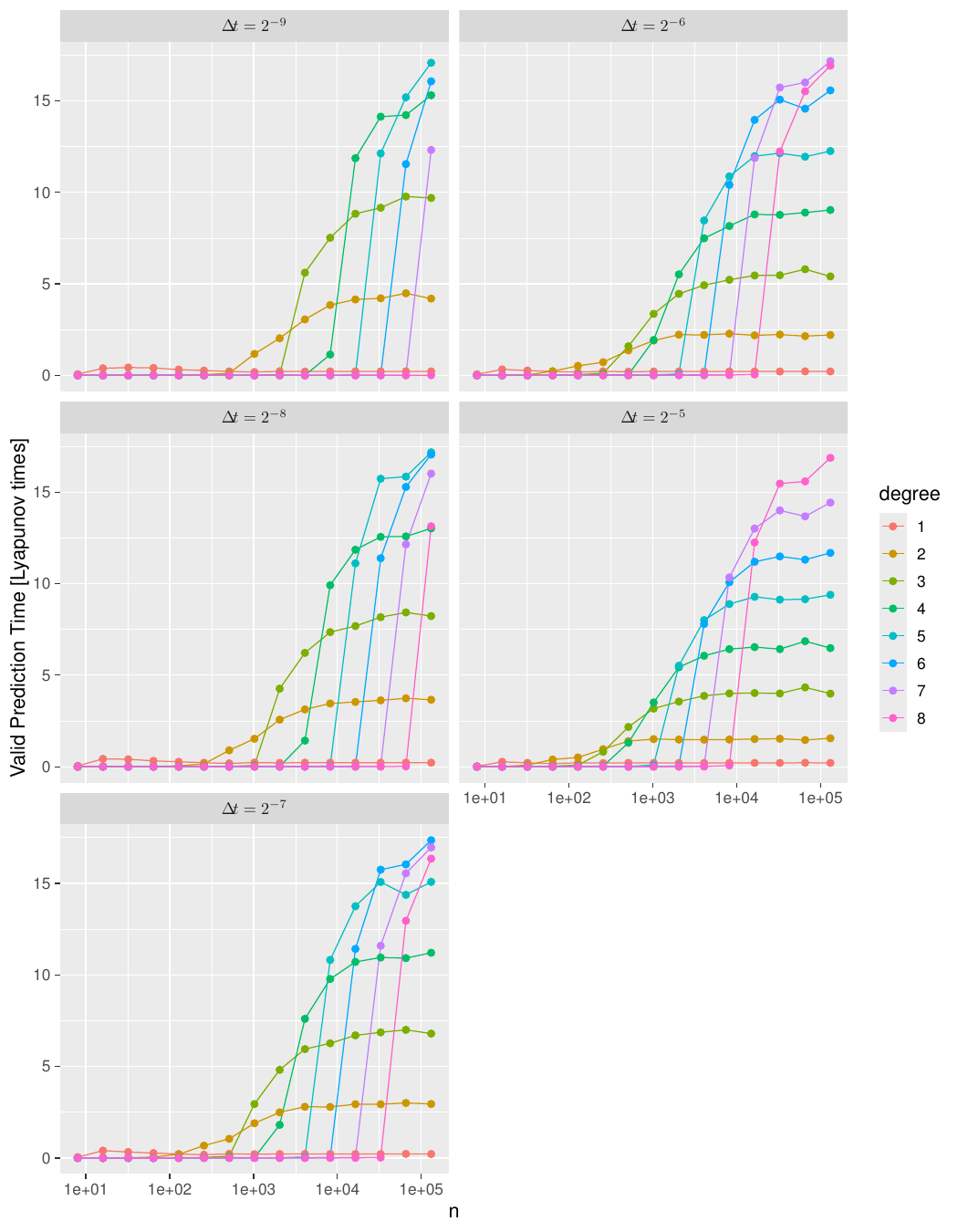}
\end{center}
\caption{\textbf{All Plot for L96D7, dsd, normalize full, test sequential}. See the beginning of \cref{app:sec:details} for a description.}
\end{figure}

\clearpage
\subsection{L96D8, dsd, normalize full, test sequential}
\begin{figure}[ht!]
\begin{center}
\includegraphics[width=\textwidth]{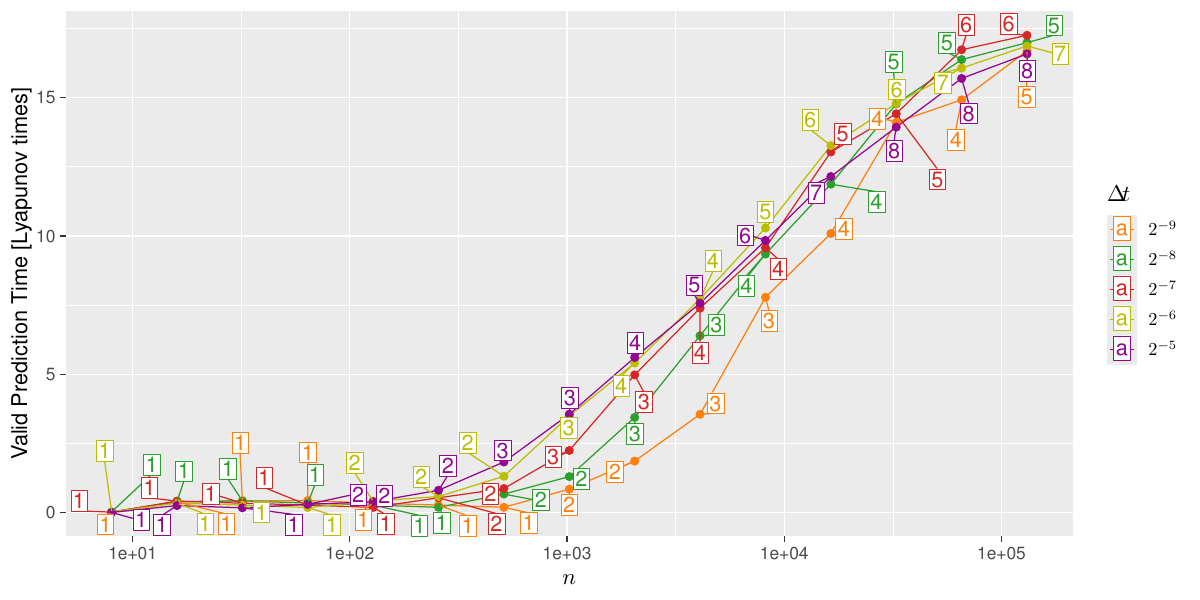}
\end{center}
\caption{\textbf{Best Plot for L96D8, dsd, normalize full, test sequential}. See the beginning of \cref{app:sec:details} for a description.}
\end{figure}
\begin{table}[ht!]
\input{tbl/L96D8_dsd_f_s_VPT_best_table.tex}
\caption{\textbf{Best Table for L96D8, dsd, normalize full, test sequential}. See the beginning of \cref{app:sec:details} for a description.}
\end{table}
\begin{figure}[ht!]
\begin{center}
\includegraphics[width=\textwidth]{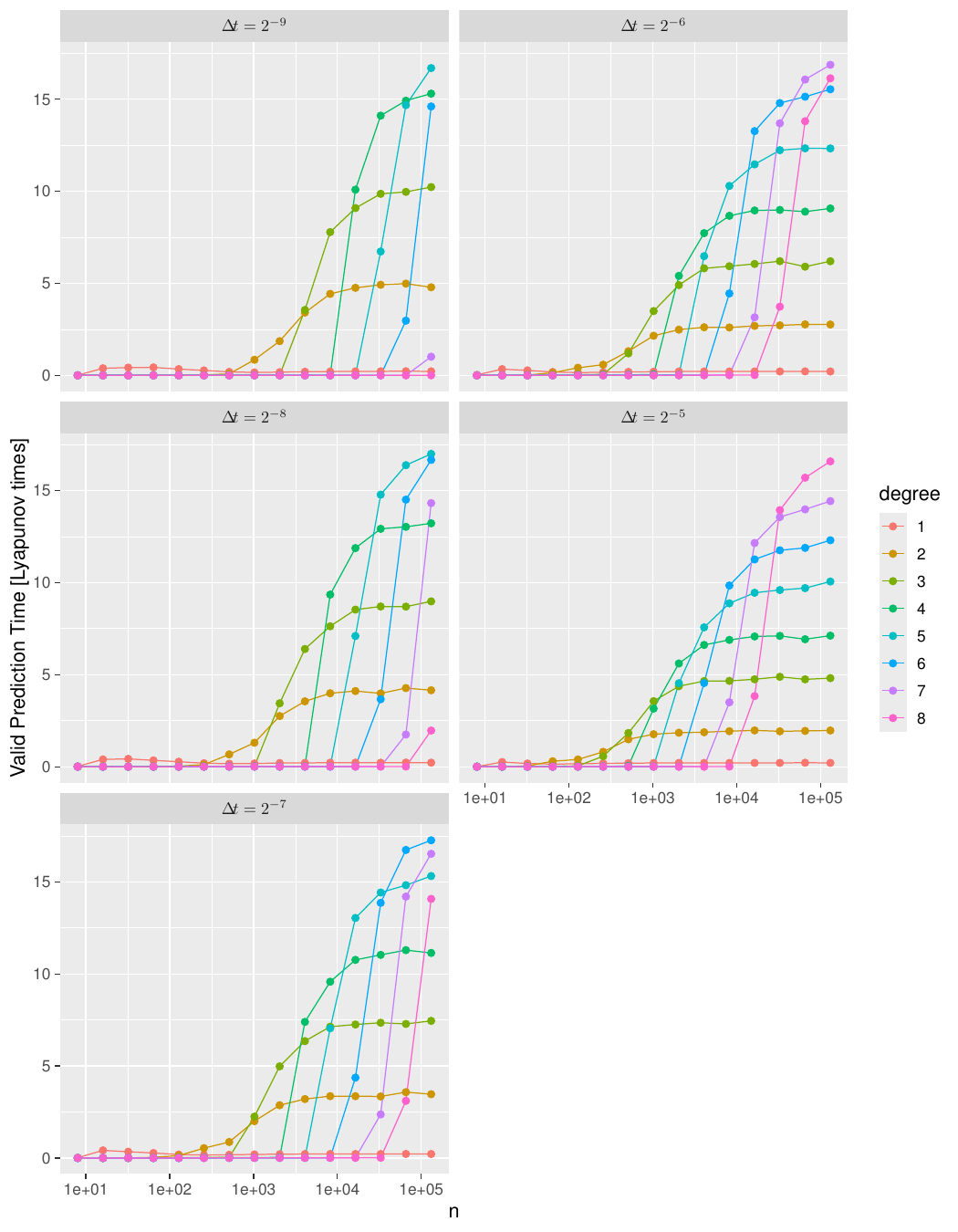}
\end{center}
\caption{\textbf{All Plot for L96D8, dsd, normalize full, test sequential}. See the beginning of \cref{app:sec:details} for a description.}
\end{figure}

\clearpage
\subsection{L96D9, dsd, normalize full, test sequential}
\begin{figure}[ht!]
\begin{center}
\includegraphics[width=\textwidth]{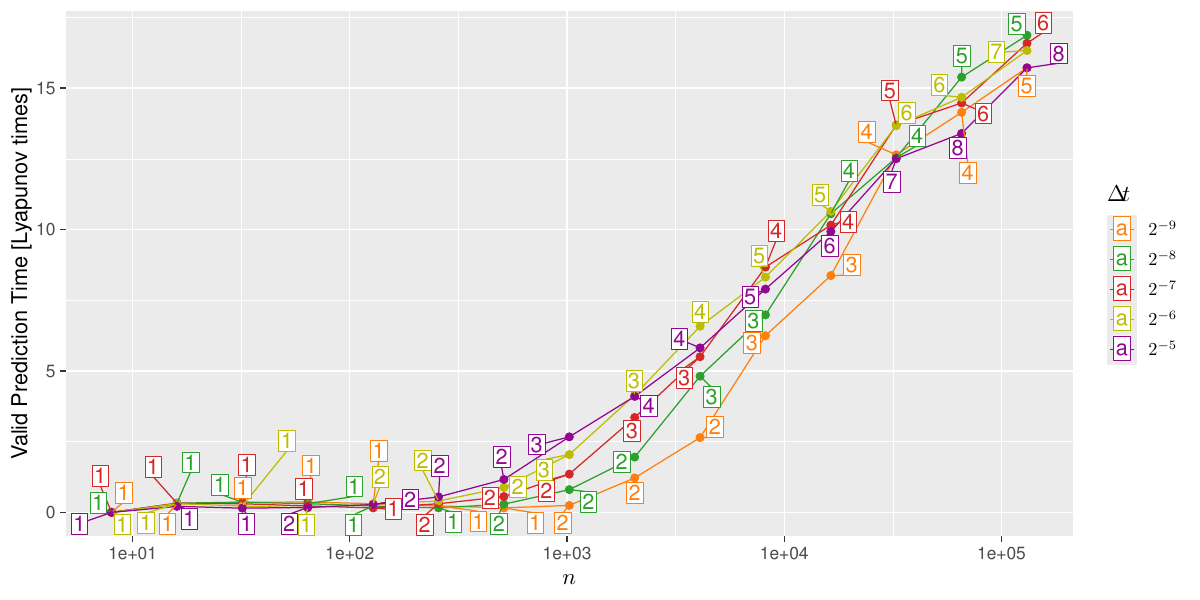}
\end{center}
\caption{\textbf{Best Plot for L96D9, dsd, normalize full, test sequential}. See the beginning of \cref{app:sec:details} for a description.}
\end{figure}
\begin{table}[ht!]
\input{tbl/L96D9_dsd_f_s_VPT_best_table.tex}
\caption{\textbf{Best Table for L96D9, dsd, normalize full, test sequential}. See the beginning of \cref{app:sec:details} for a description.}
\end{table}
\begin{figure}[ht!]
\begin{center}
\includegraphics[width=\textwidth]{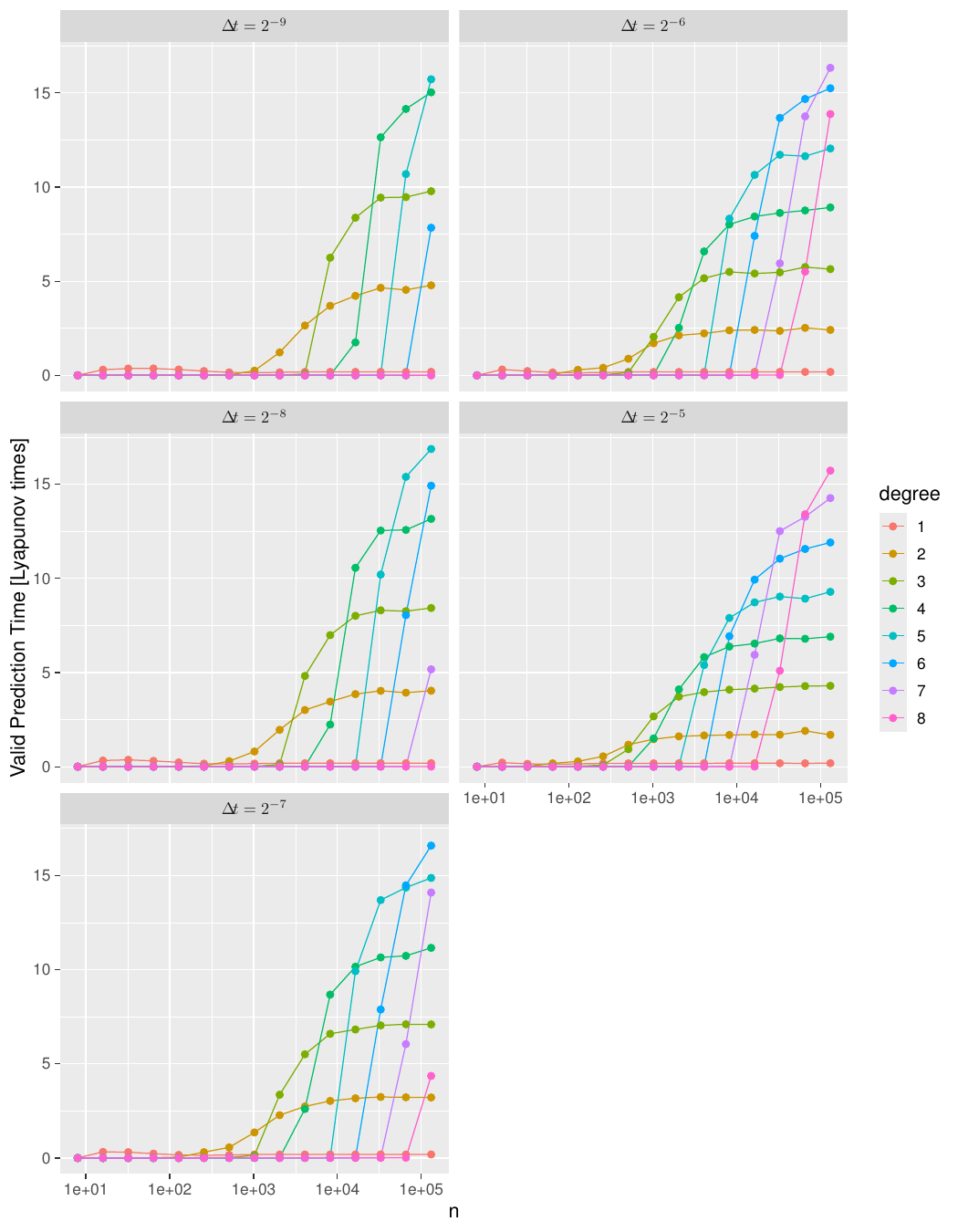}
\end{center}
\caption{\textbf{All Plot for L96D9, dsd, normalize full, test sequential}. See the beginning of \cref{app:sec:details} for a description.}
\end{figure}

\clearpage
\subsection{TCSA, mdm, normalize none, test sequential}
\begin{figure}[ht!]
\begin{center}
\includegraphics[width=\textwidth]{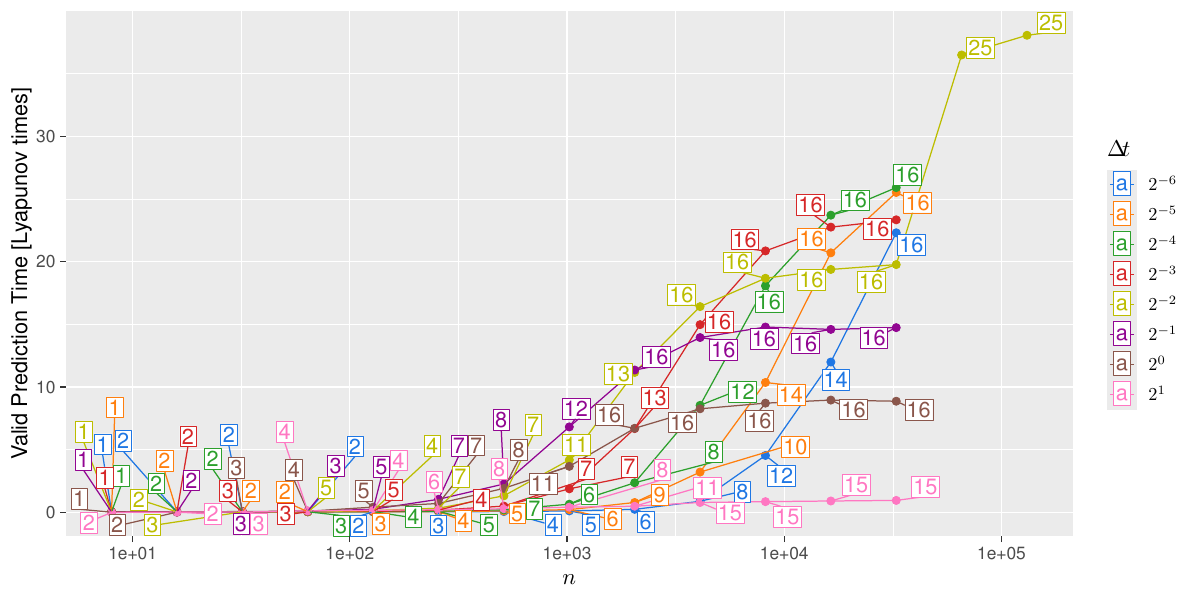}
\end{center}
\caption{\textbf{Best Plot for TCSA, mdm, normalize none, test sequential}. See the beginning of \cref{app:sec:details} for a description.}
\end{figure}
\begin{table}[ht!]
\input{tbl/TCSA_mdm_n_s_VPT_best_table.tex}
\caption{\textbf{Best Table for TCSA, mdm, normalize none, test sequential}. See the beginning of \cref{app:sec:details} for a description.}
\end{table}
\begin{figure}[ht!]
\begin{center}
\includegraphics[width=\textwidth]{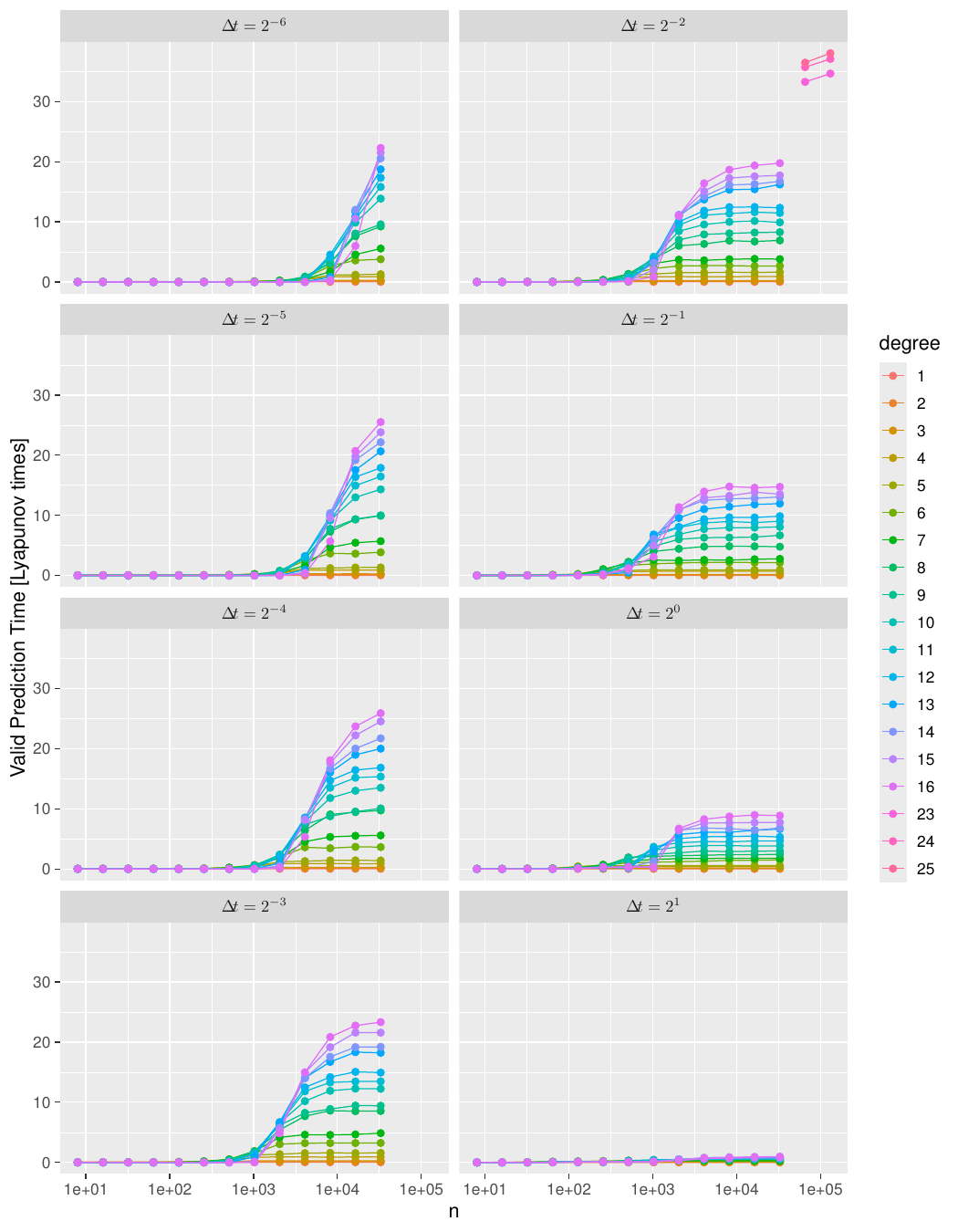}
\end{center}
\caption{\textbf{All Plot for TCSA, mdm, normalize none, test sequential}. See the beginning of \cref{app:sec:details} for a description.}
\end{figure}

\end{document}